\def\@hangfrom@section#1#2#3{\normalsize\@hangfrom{#1#2}#3}
\def\@hangfroms@section#1#2{\normalsize#1#2}
\newtheorem{definition}{Definition}
\newtheorem{proposition}{Proposition}
\newtheorem{lemma}[proposition]{Lemma}
\newtheorem{theorem}[proposition]{Theorem}
\newtheorem{corollary}[proposition]{Corollary}
\newenvironment{proof}{\noindent \textbf{{Proof~} }}{\hfill $\blacksquare$}
\newenvironment{sketchproof}{\noindent \textit{{Proof.} }}{\hfill $\square$}
\newenvironment{mytikz}{\begin{tikzpicture}[x=0.3pt,y=0.3pt,yscale=-1,xscale=1,baseline={([yshift=+0ex]current bounding box.center)}]}{\end{tikzpicture}}
\newenvironment{mytikz2}{\begin{tikzpicture}[x=0.4pt,y=0.4pt,yscale=-1,xscale=1,baseline={([yshift=+0ex]current bounding box.center)}]}{\end{tikzpicture}}
\newenvironment{mytikz3}{\begin{tikzpicture}[x=0.7pt,y=0.7pt,yscale=-1,xscale=1,baseline={([yshift=+0ex]current bounding box.center)}]}{\end{tikzpicture}}
\newenvironment{mytikz4}{\begin{tikzpicture}[x=0.6pt,y=0.6pt,yscale=-1,xscale=1,baseline={([yshift=+0ex]current bounding box.center)}]}{\end{tikzpicture}}
\newcommand{\nc}{\newcommand}
\nc{\ket}[1]{|#1\rangle}
\nc{\bra}[1]{\langle#1|}
\nc{\ketbra}[2]{|#1\rangle\!\langle#2|}
\nc{\braket}[2]{\langle#1|#2\rangle}
\nc{\braoprket}[3]{\langle#1|#2|#3\rangle}
\nc{\op}[1]{\operatorname{#1}}
\nc{\avg}[1]{\langle#1\rangle}
\nc{\ketbrasame}[1]{|#1\rangle\!\langle#1|}
\nc{\tr}{\op{tr}}
\nc{\swap}{\op{SWAP}}
\nc{\E}{\mathbb{E}}
\nc{\var}{\op{Var}}
\nc{\sxz}[1]{\textcolor{blue}{\textbf{[sxz: #1]}}}
\nc{\hk}[1]{\textcolor{violet}{\textbf{[hk: #1]}}}
\nc{\ls}[1]{\textcolor{purple}{\textbf{[ls: #1]}}}
\nc{\hknew}[1]{\textcolor{violet}{#1}}
\begin{document}
\title{Absence of barren plateaus in finite local-depth circuits with long-range entanglement}

\author{Hao-Kai Zhang}
\email{zhk20@mails.tsinghua.edu.cn}
\thanks{These two authors contributed equally to this work.}
\affiliation{Institute for Advanced Study, Tsinghua University, Beijing 100084, China}

\author{Shuo Liu}
\thanks{These two authors contributed equally to this work.}
\affiliation{Institute for Advanced Study, Tsinghua University, Beijing 100084, China}

\author{Shi-Xin Zhang}
\affiliation{Tencent Quantum Laboratory, Tencent, Shenzhen, Guangdong 518057, China}

\date{\today}

\begin{abstract}
Ground state preparation is classically intractable for general Hamiltonians. On quantum devices, shallow parametrized circuits can be effectively trained to obtain short-range entangled states under the paradigm of variational quantum eigensolver, while deep circuits are generally untrainable due to the barren plateau phenomenon. In this Letter, we give a general lower bound on the variance of circuit gradients for arbitrary quantum circuits composed of local 2-designs. Based on our unified framework, we prove the absence of barren plateaus in training finite local-depth circuits (FLDC) for the ground states of local Hamiltonians. FLDCs are allowed to be deep in the conventional circuit depth to generate long-range entangled ground states, such as topologically ordered states, but their local depths are finite, i.e., there is only a finite number of gates acting on individual qubits. This characteristic sets FLDC apart from shallow circuits: FLDC in general cannot be classically simulated to estimate local observables efficiently by existing tensor network methods in two and higher dimensions. We validate our analytical results with extensive numerical simulations and demonstrate the effectiveness of variational training using the generalized toric code model.
\end{abstract}

\maketitle

\textit{Introduction.---} 
Predicting the ground state properties of a quantum many-body system, as a central task in modern quantum physics, generally requires exponential resources for classical computers due to the curse of dimensionality: the number of parameters needed to describe a quantum system scales exponentially with the system size. Although some successful classical algorithms have been developed in past decades~\cite{Ceperley1986, White1992, Verstraete2006, Vidal2007, Schollwock2011} such as tensor networks~\cite{Verstraete2006, Vidal2007, Schollwock2011}, their respective limitations restrict the performance on general systems~\cite{Schollwock2011, Loh1990, Schuch2007, Biamonte2015}. Quantum computers bring new hope for this problem of quantum nature~\cite{Feynman1982}. Despite the limitation posed by noisy intermediate-scale quantum (NISQ) devices~\cite{Preskill2018}, there are many tentative quantum algorithms proposed. One of the representatives is the variational quantum eigensolver (VQE)~\cite{Peruzzo2014, Kandala2017, Cerezo2021a, Tilly2021_z, Zhang2021b_z, Liu2021c_z, Tabares2023}, which trains a parametrized quantum circuit (PQC) using a classical optimizer to minimize the energy. This hybrid quantum-classical paradigm is expected as one of the most promising routes toward practical quantum advantage~\cite{Arute2019, Zhong2020} in the NISQ era.

\begin{figure}
    \centering
    \includegraphics[width=0.86\linewidth]{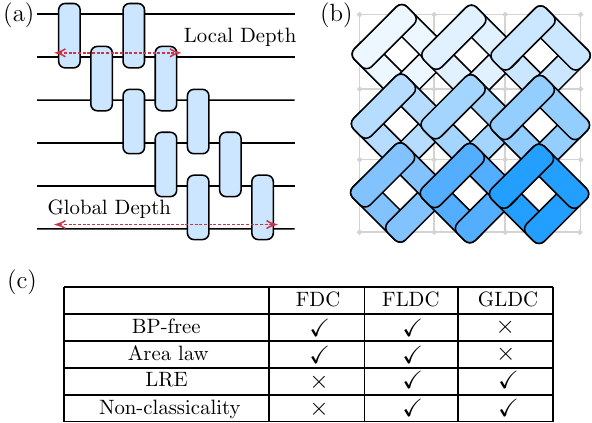}
    \caption{(a) and (b) Typical examples of finite local-depth circuits (FLDC) on 1D and 2D lattices, respectively. Darker colors in (b) indicate later action orders. (c) Compares the class of finite depth circuit (FDC), FLDC, and general linear depth circuit (GLDC) in terms of whether they are in general free from barren plateaus (BP), preserve entanglement area law, generate long-range entanglement (LRE), and can be simulated efficiently to compute local observable expectations by known classical methods (classicality). The inclusion relation is FDC $\subset$ FLDC $\subset$ GLDC.}
    \label{fig:fldc_table}
\end{figure}

However, these variational quantum algorithms including VQE still face great challenges for large-scale applications. One of the most notorious issues is the so-called barren plateau phenomenon~\cite{McClean2018}, which states that the circuit gradient vanishes exponentially with the system size under certain conditions, akin to the vanishing gradient issue in classical neural networks. The exponentially vanishing gradient will preclude the optimization progress and lead to the exponential measurement complexity. Extensive studies have been conducted to investigate barren plateau problems and possible remedies~\cite{Grimsley2019, Nakanishi2019, Ostaszewski2019, Grant2019, Arrasmith2020, Wang2021, Cerezo2021b, OrtizMarrero2021, Cerezo2021, Uvarov2021a, Pesah2021, Arrasmith2021, Holmes2021, Holmes2021a, Kim2021a, Liu2022a, Huembeli2021, Zhang2022, Grimsley2023, Liu2021a, Zhao2021a, CerveroMartin2023, Barthel2023, Miao2023, Miao2023a, Miao2023b, Liu2023c, Garcia2023, Liu2023a_z}. It is known that shallow circuits of finite or logarithmic depth are free from barren plateaus and can be trained efficiently for local Hamiltonians to obtain short-range entangled (SRE) states, while deep circuits of linear depth and beyond are in general untrainable~\cite{Cerezo2021, Uvarov2021a}. By contrast, many quantum states of physical interest exhibit long-range entanglement~\cite{Kitaev2003, Kitaev2006, Wen2003, Kitaev2006a, Gioia2022}, such as topologically ordered states, which cannot be prepared by circuits of less than linear depth~\cite{Bravyi2006, Satzinger2021, Chen2023a, Sun2023a}. Nevertheless, some evidence suggests that circuits corresponding to these long-range entangled (LRE) states possess characteristic architectures, such as sequential structures~\cite{Schon2005, Schon2007, Slattery2021, Wei2022, Chen2023a, Sun2023a, Zhang2023a, Liu2022d}, for the sake of the entanglement area law. This observation motivates us to rigorously explore the general relationship between barren plateaus, area law, and long-range entanglement.

In this Letter, we identify the critical role of the local depth as a key circuit feature that determines the trainability of PQCs. The local depth refers to the number of non-commuting gates acting on individual qubits, as illustrated in Fig.~\ref{fig:fldc_table}(a), in contrast to the conventional global depth defined by the minimum number of layers. This finding is based on our rigorously proved theorems, which establish a general lower bound on the gradient variance for arbitrary circuits composed of local 2-designs. The lower bound decays exponentially with the length and width of a certain set of paths on the circuit. For finite (or logarithmic) local-depth circuits and local Hamiltonians, the length and width can be upper bounded and hence give rise to the absence of barren plateaus. These finite local-depth circuits (FLDC) have strong expressibility to generate LRE states lacking in shallow circuits and are hard to simulate classically in two dimensions and above. This suggests that FLDC holds promise to serve as an appropriate class of ansatzes in VQE, as listed in Fig.~\ref{fig:fldc_table}(c). The absence of barren plateaus in FLDC is verified by numerical evaluations. Using the generalized 2D toric code model, we demonstrate that FLDC indeed has prominently better performance than both finite depth circuits and general linear depth circuits.

\textit{Basic setup.---} We start from the basic setup of VQE. The PQC can be written as $\mathbf{U}(\bm{\theta}) = \prod_{\mu=1}^{M} U_\mu(\theta_\mu)$, where $U_\mu(\theta_\mu)=e^{-i\Omega_\mu\theta_\mu}$ is a rotation gate, $\Omega_\mu$ is a Pauli-string generator and $\theta_\mu$ is a trainable parameter. The index $\mu$ follows the decreasing order from left to right in the product (the same below). For a given Hamiltonian $H$, the energy expectation $C(\bm{\theta}) = \avg{H} = \tr(\rho_0 \mathbf{U}^\dagger H \mathbf{U})$ is taken as the cost function, where $\rho_0=\ketbrasame{\bm{0}}$ and $\ket{\bm{0}}=\ket{0}^{\otimes N}$. $N$ is the number of qubits. We denote the Pauli decomposition of the Hamiltonian as $H=\sum_j\lambda_j h_j$ and assume the support of $H$ is within that of $\mathbf{U}$. The workflow of VQE involves running the PQC, measuring the cost function, and updating the trainable parameters iteratively using classical optimizers to minimize the cost function. In particular, the parameters are usually initialized randomly to thoroughly explore the parameter space in a probabilistic sense, rendering the PQC a random quantum circuit (RQC). A common assumption on RQCs is that the circuit is composed of blocks forming independent local $2$-designs. Here a block refers to a grouped continuous series of gates, which can be seen as the elementary unit when we construct a PQC. Grouping the $M$ gates into $M'$ blocks, the PQC can be rewritten as $\mathbf{U}=\prod_{k=1}^{M'} B_k$. The assumption of local $2$-designs will induce an ensemble of the entire circuit $\mathbf{U}$, which we denote as $\mathbb{U}$. Many statistical properties of RQCs can be analytically estimated based on $\mathbb{U}$, including the average and variance of the cost derivative $\partial_\mu C=\frac{\partial C}{\partial \theta_\mu}$. We provide preliminaries on unitary designs and the Weingarten calculus and a detailed introduction to our basic setup in Supplemental Material~\footnote{See Supplemental Material for preliminaries on unitary designs and the Weingarten calculus, basic setup, theoretical derivations including rigorous definitions and complete proofs, and additional numerical results with technical details, which includes Refs.~\cite{Collins2006, Hastings2007, Haferkamp2020, Bravyi2021, Zheng2017, Qin2020, Huang2017, Vidal2008, Malz2023, Landau2015, Chen2011, Chen2011a, Schuch2011, Zeng2019, Gray2021, Anshu2016, Schwarz2017, Ran2020a, Orus2019, Bluvstein2022, Bluvstein2023, Chu2023, Anshu2023, Bravyi2023, Xu2023, Pan2020, Pan2022, Paeckel2019, Raussendorf2005, Piroli2021, Verresen2021, Tantivasadakarn2021, Bravyi2022, Tantivasadakarn2023, Tantivasadakarn2023a}.}.

\begin{figure}
    \centering
    \includegraphics[width=0.98\linewidth]{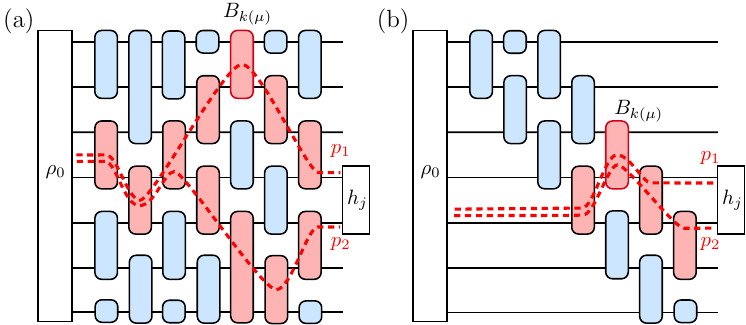}
    \caption{(a) A possible choice of path set $P_j=\{p_1,p_2\}$ on a general linear depth circuit. (b) Depicts a path set on a finite local-depth circuit correspondingly. The length of the path set in (a) grows linearly with the system size while that in (b) is bounded by the constant local depth.}
    \label{fig:path_set}
\end{figure}

\textit{General lower bound.---} We give an informal version of our general lower bound in Theorem~\ref{theorem:thm1} and leave the rigorous statement and proof to Supplemental Material~\cite{Note1}. The bound is closely related to a geometric concept of ``path,'' i.e., a time-ordered sequence of connected blocks on the circuit diagram as depicted in Fig.~\ref{fig:path_set}. For each Hamiltonian subterm $h_j$ in the causal cone of the differential block $B_{k(\mu)}$ (the block containing the differential parameter $\theta_\mu$), one can draw a collection of paths from $h_j$ to $\rho_0$, like $\{p_1, p_2\}$ in Fig.~\ref{fig:path_set}, with the right end covering $h_j$ and at least one of the paths passing through $B_{k(\mu)}$. We call it a chosen ``path set'' of $h_j$. We define two measures of the path set: length and head width. For common circuits composed of $2$-qubit blocks, the length is just the number of edges in the path set diagram and the head width is the number of blocks in the path set that are directly connected to $\rho_0$. Using these two measures, we can derive the following lower bound on the gradient variance.

\begin{theorem}\label{theorem:thm1}
{\rm \textbf{(informal version)}} The gradient variance $\var_\mathbb{U} [\partial_\mu C]$ can be lower bounded by a summation of contributions from each $h_j$ in the causal cone of $U_{\mu}$, where each contribution decays exponentially only with the length and head width of the chosen path set of $h_j$.
\end{theorem}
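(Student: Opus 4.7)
The plan is to express the gradient variance as a second moment over the circuit ensemble, expand $H$ in the Pauli basis, and then evaluate the block-wise Weingarten integrals iteratively to expose the path structure.

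First, I would split the circuit at the differential gate as $\mathbf{U}=\mathbf{U}_R U_\mu \mathbf{U}_L$ and write the derivative in Heisenberg form
\begin{equation*}
\partial_\mu C = i\tr\bigl([\Omega_\mu,\mathbf{U}_R^\dagger H\mathbf{U}_R]\,\mathbf{U}_L\rho_0\mathbf{U}_L^\dagger\bigr).
\end{equation*}
Because $B_{k(\mu)}$ is a local 2-design whose single-copy Haar average of any propagated Pauli string vanishes, one has $\E_\mathbb{U}[\partial_\mu C]=0$, so $\var_\mathbb{U}[\partial_\mu C]=\E_\mathbb{U}[(\partial_\mu C)^2]$. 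Expanding $H=\sum_j\lambda_j h_j$ yields a double sum over Hamiltonian terms; cross terms $j\neq j'$ either vanish (when the two propagated copies cannot be paired by any block integration along the causal cone) or produce non-negative contributions after the Weingarten evaluation. Dropping these gives the lower bound $\var_\mathbb{U}[\partial_\mu C]\geq \sum_j\lambda_j^2\,\E[(\partial_\mu\avg{h_j})^2]$, with $j$ restricted to the causal cone of $B_{k(\mu)}$.

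Second, I would compute each diagonal term by iterating the 2-design second-moment identity block by block, starting from the boundary operator $h_j\otimes h_j$ in the two-copy Hilbert space and moving backwards through the causal cone. On each block, the Haar integral collapses the outgoing two-copy operator onto a low-dimensional subspace labelled by whether the Pauli content on each outgoing wire is identity or not. Unrolling the recursion shows that surviving terms are indexed by time-ordered sequences of connected blocks propagating non-identity content, which are exactly the paths in Fig.~\ref{fig:path_set}; the integral on $B_{k(\mu)}$ further forces at least one surviving path to traverse it, implementing the ``chosen path set'' condition. Each internal edge contributes a Weingarten factor bounded below by a constant multiple of $1/d^2$, while each terminal block meeting $\rho_0$ contributes an additional factor from $\bra{\bm 0}P\ket{\bm 0}$, which vanishes for non-identity single-qubit Pauli $P$ and therefore forces each head-end block to produce a fresh boundary factor. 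Retaining only a single preselected path set $P_j$ per $h_j$ in the recursion and discarding the remaining non-negative contributions converts the whole expression into a product of per-block factors — the number of edges gives the decay in the length of $P_j$ and the number of terminal blocks gives the decay in the head width — and summing over $j$ yields the claimed bound.

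The main obstacle I anticipate is controlling sign patterns in the Weingarten expansion: unlike standard upper-bound BP analyses, a lower bound requires ruling out cancellations between diagonal and cross terms, and between distinct surviving paths that merge at shared blocks. I expect this to be tractable because the merging coefficients from the 2-design second-moment formula are structurally non-negative on the block's output channel labels, but a clean argument probably requires selecting path sets that are ``canonical'' in some sense (e.g.\ non-overlapping where possible) and bounding the combinatorial weights at merge points by their absolute values. The secondary subtlety is matching the geometric definition of head width to the combinatorial factor produced by the recursion at $\rho_0$: since $\ketbrasame{\bm 0}$ is highly non-Haar, the terminal-block integrals must be handled separately from the interior ones, and one must verify that the number of such factors is precisely counted by the head width rather than by the full width of the path set's leftmost layer.
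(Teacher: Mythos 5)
Your plan follows essentially the same route as the paper's proof: vanishing first moment, reduction to a sum of diagonal contributions $\sum_j\lambda_j^2\var_\mathbb{U}[\partial_\mu\avg{h_j}]$, backward two-copy Heisenberg evolution in which each $2$-design block collapses $h_j^{\otimes2}$ onto a uniformly weighted, non-negative sum of doubled Pauli strings, selection of a preferred path set with the discarded terms non-negative, and accumulation of per-edge factors (giving the length decay) plus per-head-block $Z/I$-fractions against $\ket{\bm 0}$ (giving the head-width decay). The merge/split bookkeeping you worry about is handled in the paper exactly as you anticipate, by the non-negativity and tensor-product structure of the tracked sums.

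The one step where your justification as written would fail is the cross-term argument. You propose to drop cross terms $j\neq j'$ on the grounds that they "either vanish or produce non-negative contributions," but the second alternative is not sound: the coefficient of the cross term is $\lambda_i\lambda_j$, which can be negative, and the two-copy Weingarten evaluation of $h_i\otimes h_j$ carries no sign control, so a surviving cross term could in principle be negative and ruin the lower bound. The paper closes this by showing the first alternative always holds: for distinct Pauli strings the twirling channel of the first block touching a qubit where they differ annihilates $h_i\otimes h_j$ exactly (both $\tr_{2s}(h_i\otimes h_j)$ and $\tr_{2s}[(h_i\otimes h_j)S\vert_{2s}]$ vanish because products of distinct Pauli matrices are traceless), and the preceding blocks preserve the distinction; hence $\var_\mathbb{U}[\partial_\mu\avg{H}]=\sum_j\lambda_j^2\var_\mathbb{U}[\partial_\mu\avg{h_j}]$ is an exact identity under the assumption $s(\mathbf{U})\supseteq s(H)$ and the sandwiching of $U_\mu$ by two local $2$-designs. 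With that replacement your outline matches the paper's argument.
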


The lower bound in Theorem~\ref{theorem:thm1} holds for any possible choice of path sets. The path set with the minimum length and head width gives rise to the tightest bound. We remark that Theorem~\ref{theorem:thm1} holds for any RQCs composed of local $2$-designs regardless of circuit shapes, spatial dimensions, and gate locality. Further discussions on Theorem~\ref{theorem:thm1} including its consistency with previous literature, alternative initial states, gate generators, the location of the differential gate $U_\mu$ in $B_{k(\mu)}$, and its extension to a path-integral-like tighter form and application to other space-time correlators, are elaborated in Supplemental Material~\cite{Note1}. 

We provide an intuitive physical picture behind Theorem~\ref{theorem:thm1}. It is known that local quantum information will be scrambled~\cite{Sekino2008, Lashkari2013, Mi2021} when passing through the random gates in an RQC. The more gates it passes through, the more severe the scrambling becomes. If we consider a local term $h_j$ as a piece of information, finding a short path through the RQC will allow effective information transfer, so that adjusting parameters can make an effective difference in the expectation value, resulting in non-vanishing gradients. Conversely, if such a short path does not exist, local information will be scrambled globally, leaving no useful information for optimization.

\textit{Finite local-depth circuits.---} Before presenting Theorem~\ref{theorem:thm2}, we first clarify some relevant quantities. The maximum interaction range of a Hamiltonian is the maximum value of the support sizes of all $h_j$. An $r$-local Hamiltonian means that the maximum interaction range is fixed as $r$ that does not scale with $N$. The maximum block size $\beta$ is the maximum value of the support sizes of all blocks in the circuit. The local depth of a qubit is the number of blocks (or gates) acting on the qubit. We use $\chi$ to denote the maximum value of the local depths over all qubits, distinguished from the global depth $D$ which refers to the minimum number of layers where blocks within each layer commute with each other. An FLDC is defined as a circuit whose $\chi$ does not scale with $N$, without any other constraints such as circuit shapes, spatial dimensions, and gate locality. Based on Theorem~\ref{theorem:thm1}, we have the following theorem.

\begin{theorem}\label{theorem:thm2}
Suppose the maximum local depth of $\mathbf{U}$ is $\chi$ and the maximum block size is $\beta$. Then for any $r$-local Hamiltonian, the gradient variance is lower bounded by
\begin{equation}\label{eq:thm2}
    \var_\mathbb{U} [\partial_\mu C] ~\geq~ 4^{-r \chi \beta} \sum_j 2\lambda_j^2 ,
\end{equation}
where $j$ runs over $h_j$ that is nontrivial on the support of the differential block $B_{k(\mu)}$.
\end{theorem}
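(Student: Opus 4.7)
The plan is to invoke Theorem~\ref{theorem:thm1} and convert its general geometric bound into a concrete exponent in $r$, $\chi$, and $\beta$ by choosing a specific path set for every relevant Hamiltonian term. Since Theorem~\ref{theorem:thm1} expresses $\var_\mathbb{U}[\partial_\mu C]$ as a sum of non-negative contributions over the $h_j$ in the causal cone of $U_\mu$, dropping all terms except those $h_j$ whose support intersects that of $B_{k(\mu)}$ only loosens the inequality, and this is precisely the index set appearing in~\eqref{eq:thm2}.

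For each retained $h_j$, I would build the path set from ``qubit worldlines.'' For each qubit $q\in\op{supp}(h_j)$, define its worldline as the time-ordered sequence of all blocks acting on $q$, read from $h_j$ back towards $\rho_0$; by the definition of local depth this sequence contains at most $\chi$ blocks. Taking the union over all $q\in\op{supp}(h_j)$ gives a candidate path set whose right ends cover $\op{supp}(h_j)$. Moreover, since $\op{supp}(h_j)\cap\op{supp}(B_{k(\mu)})\neq\emptyset$, the worldline along any shared qubit automatically passes through $B_{k(\mu)}$, satisfying the requirement that at least one path visits the differential block.

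Bounding the two geometric measures is then elementary: there are at most $r$ worldlines of at most $\chi$ blocks each, so the total length is at most $r\chi$ and the head width (blocks of the path set directly connected to $\rho_0$) is at most $r$. Substituting into the exponential decay from Theorem~\ref{theorem:thm1}, each block acts on at most $\beta$ qubits and, as a local $2$-design on a $2^\beta$-dimensional space, contributes a suppression factor of order $4^{-\beta}$; multiplying gives an overall factor at least $4^{-r\chi\beta}$. Summing the resulting per-term lower bounds over the retained $h_j$ then yields the stated inequality~\eqref{eq:thm2}.

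The main obstacle I anticipate is aligning the informal ``length'' and ``head width'' quantities with the precise definitions used in the formal Theorem~\ref{theorem:thm1}, especially when worldlines overlap: two qubits of $\op{supp}(h_j)$ may pass through a common block, and several worldlines may share their final block attached to $\rho_0$. I expect this only shrinks the effective length and head width without loosening the bound, but one still has to verify that the per-block $4^{-\beta}$ factor is not double-counted and that the overall coefficient collapses cleanly to the $2$ appearing in front of $\lambda_j^2$.
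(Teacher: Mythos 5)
Your proposal takes essentially the same route as the paper's proof: invoke Theorem~\ref{theorem:thm1} with the path set $P_j$ chosen as the straight qubit worldlines $\{b(q_i)\mid q_i\in s(h_j)\}$, note that any shared qubit forces a worldline through $B_{k(\mu)}$, and then bound the length and head width using at most $r$ worldlines of at most $\chi$ blocks each, with every block supported on at most $\beta$ qubits. The one point you flag but do not close is the step that actually requires care. If you bound the length of all $r\chi$ edges (including the $r$ edges attached to $\rho_0$) by $\beta$ each and the head width separately by roughly $2\beta$ per head block, the exponent in $2^{1-2l(P_j)-w(P_j)}$ comes out as $2r\chi\beta+2r\beta$, i.e.\ a prefactor $4^{-r(\chi+1)\beta}$, which is strictly weaker than the claimed $4^{-r\chi\beta}$. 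The paper resolves this by combining the $\rho_0$-edge lengths with the head widths \emph{before} relaxing: writing $l_0$ for the total length of the edges $(\rho_0,B_k)$ over head blocks, one has
\begin{equation*}
w(B_k)+2\,l(\rho_0,B_k)=\log_2\!\left[\frac{4^{|s(B_k)|}-1}{2^{|s_f(B_k)|}-1}\right]\leq 2\beta,
\end{equation*}
so the head blocks and their $\rho_0$-edges jointly cost only $2r\beta$ in the exponent, while the remaining $r(\chi-1)$ internal edges cost $2r(\chi-1)\beta$, summing to exactly $2r\chi\beta$ and yielding the coefficient $2\cdot 4^{-r\chi\beta}$. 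With that accounting made explicit your argument matches the paper's proof.
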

\begin{sketchproof}
The detailed proof is left in Supplemental Material~\cite{Note1}. The main idea is choosing the path sets in Theorem~\ref{theorem:thm1} to be the straight wires on the support of $h_j$, and hence the length and head width can be upper bounded in terms of $r,\chi$ and $\beta$. The contribution from $h_j$ that is trivial on the support of $B_{k(\mu)}$ is just neglected.
\end{sketchproof}

Theorem~\ref{theorem:thm2} elegantly integrates the factors related to barren plateaus in a concise manner, i.e., the block locality $\beta$~\cite{McClean2018}, the Hamiltonian locality $r$~\cite{Cerezo2021, Uvarov2021a} and the circuit deepness $\chi$. It is vitally important to note that the relevant quantity characterizing the circuit deepness is the local depth $\chi$, instead of the global depth $D$. These two depths may coincide~\cite{Cerezo2021, Uvarov2021a}, but they are distinct in general and can differ significantly as in Fig.~\ref{fig:fldc_table}. This implies that the circuit class free from barren plateaus can be enlarged to logarithmic local-depth circuits (Log-LDC), which is a superclass of circuit architectures proven previously, such as finite or logarithmic depth brickwall circuits~\cite{Cerezo2021, Uvarov2021a}, quantum convolutional neural networks (QCNN)~\cite{Pesah2021, Zhao2021a}, multiscale entanglement renormalization ansatzes (MERA)~\cite{Zhao2021a, Barthel2023, Miao2023, CerveroMartin2023}, tree tensor networks~\cite{Zhao2021a, Barthel2023, Miao2023, CerveroMartin2023}, matrix product states (MPS)~\cite{Liu2021a, Zhao2021a, Barthel2023, Miao2023, CerveroMartin2023}, and high-dimensional isometric tensor network states~\cite{Liu2023c}. We focus on FLDC in this work. Log-LDC class involves states beyond the area law, e.g., gapless topologically ordered states, which is also interesting to study in the future.

A significant feature of FLDCs composed of spatially local gates is that the generated quantum states satisfy the entanglement area law (or say boundary law) because the number of gates acting across any simple partition boundary entangling the two sides can be upper bounded by the local depth times the size of the boundary~\cite{Note1}. This feature makes them form a subclass of the projected entangled paired states (PEPS)~\cite{Verstraete2006, Schollwock2011} of the corresponding spatial dimension, where the local depth $\chi$ plays the role of bond dimension. Note that PEPS can represent LRE states because the non-unitary projectors in PEPS enable quantum teleportation, while FLDC relies on large global depth. Previously proposed circuits of tensor network states~\cite{Banuls2008, Ran2020, Zhou2021, Haghshenas2022, Zhao2021a, Barthel2023, Miao2023, CerveroMartin2023, Slattery2021, Wei2022, Liu2022d, Chen2023a} including sequential quantum circuits~\cite{Chen2023a}, isometric tensor network states~\cite{Zaletel2020, Soejima2020, Slattery2021, Liu2022d, Liu2023c, Haller2023}, and plaquette PEPS~\cite{Wei2022}, can all be seen as subclasses of FLDC. This implies that FLDC covers a wide range of physical ground states such as string-net states with anyons~\cite{Liu2022d, Liu2023} and fracton-ordered states~\cite{Chen2023a}.

\textit{Non-classicality of FLDC.---} A matter of recent concern is the classical simulability of the tasks with the provable absence of barren plateaus~\cite{Cerezo2023}. Previous results that are proven free from barren plateaus mainly focus on finite or logarithmic depth circuits~\cite{Cerezo2021, Uvarov2021a, Pesah2021, Zhao2021a, Barthel2023, Miao2023, CerveroMartin2023}, which can be efficiently simulated to compute local observable expectations due to the existence of small causal cones and small treewidths of the corresponding tensor networks~\cite{Markov2008}. Nevertheless, the causal cone in FLDC can be extensive due to the large global depth, and the loop structures in FLDC of two dimensions and above can lead to polynomially large treewidth, rendering FLDC in general hard to simulate classically for local observable expectations (the 1D case has constant treewidth and can be efficiently simulated via MPS methods). In fact, even for the subclasses of 2D FLDC such as isometric tensor network states~\cite{Zaletel2020, Slattery2021, Liu2022d} and plaquette PEPS~\cite{Wei2022}, there is no known efficient method to compute the expectation values of arbitrary local observables with controllable error, not to mention non-local observables of interest such as few-body long-distance correlators, non-local order parameters, dynamical correlations and so forth. In particular, a recent work~\cite{Malz2024} rigorously proves that computing local expectation values in isometric tensor networks is $\mathsf{BQP}$-complete, i.e., is hard to simulate classically unless $\mathsf{BQP}=\mathsf{BPP}$. The same naturally holds true for FLDC because the states generated by FLDC form a superclass of isometric tensor network states.

Therefore, FLDC (or Log-LDC) is a circuit class that is proven to be barren-plateau-free and at the same time generally cannot be efficiently simulated to estimate local observables by existing classical methods. On the contrary, it can be accomplished within polynomial time by running FLDCs on quantum devices and measuring corresponding observables. This suggests that FLDC is potentially relevant to quantum advantage in the ground state preparation task. A detailed discussion and a numerical demonstration of the computational overhead for contracting tensor networks of FDLC are provided in Supplemental Material~\cite{Note1}.

\begin{figure}
    \centering
    \includegraphics[width=0.98\linewidth]{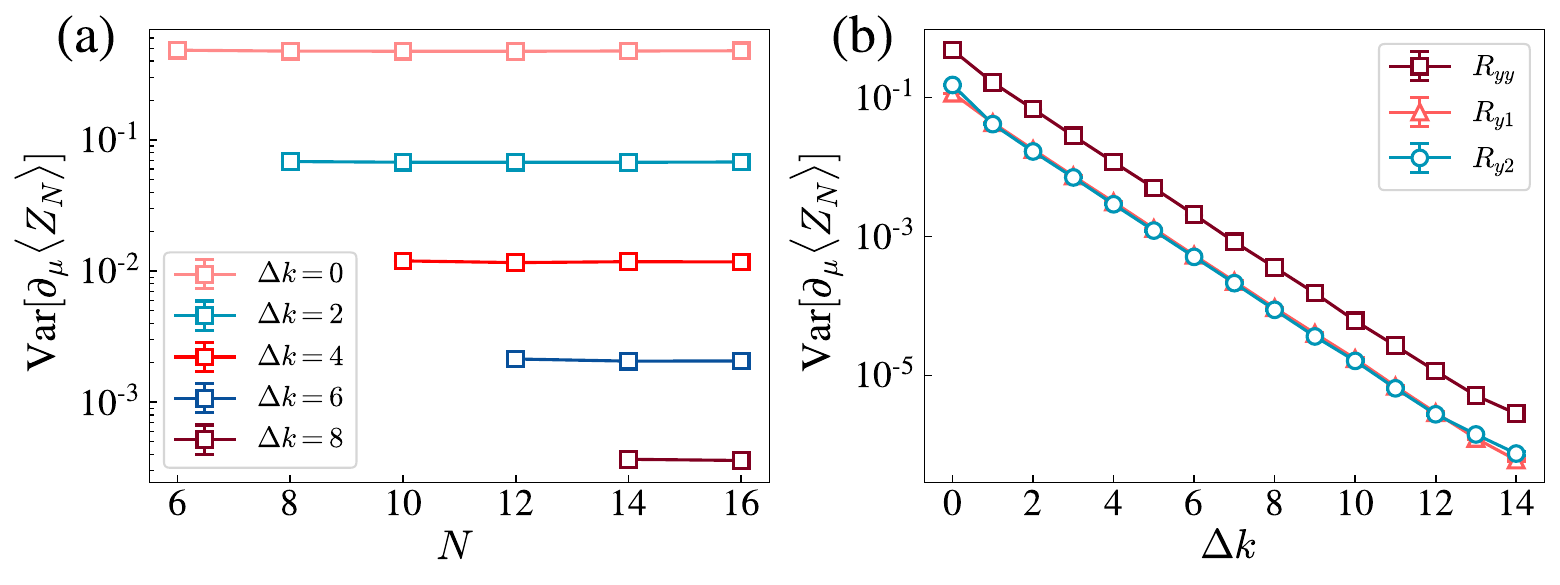}
    \caption{(a) The variance of the derivative vs the system size $N$ in a 1D FLDC instance. The observable is chosen as $Z_N$. $\Delta k$ is the distance between the differential block and the last block, proportional to the path length. (b) The variance vs $\Delta k$ by fixing $N=16$. $R_{yy}$, $R_{y1}$, and $R_{y2}$ represent the different choices of the differential gate~\cite{Note1}.}
    \label{fig:var_sample}
\end{figure}

\textit{Numerical experiments.---} FLDC has stronger expressibility than its subclass finite depth circuits (FDC), e.g., brickwall circuits of constant depth~\cite{Cerezo2021, Uvarov2021a}, because FDC can only generate SRE states such as symmetry-protected topological states~\cite{Azses2020}. FLDCs have less expressibility than its superclass general linear depth circuits (GLDC), e.g., brickwall circuits of linear depth, as typical GLDCs lead to entanglement volume law~\cite{OrtizMarrero2021}. But FLDC has better trainability than GLDC. We will compare the variational performance of the three circuit classes to see the advantages and the good trade off between trainability and expressibility brought by FLDC.

To demonstrate the absence of barren plateaus in FLDC, we estimate the cost gradient in a 1D FLDC ansatz with a ladder layout as in Fig.~\ref{fig:fldc_table}(a). The two-qubit block template is chosen as the Cartan decomposition~\cite{Note1}. The Hamiltonian is chosen as a single Pauli $Z$ operator on the last qubit. All the numerical experiments are implemented using TensorCircuit~\cite{Zhang2022_z}. As depicted in Fig.~\ref{fig:var_sample}, the gradient variance is almost constant with the system size, while it decays exponentially with the path length $\Delta k$. This resembles the phenomenon found in isometric tensor networks recently~\cite{Liu2021a, Barthel2023, Miao2023, Liu2023c}. However, we clarify that the exponential decay with $\Delta k$ does not indicate poor trainability in practice, because as long as the gradients of some circuit parameters do not vanish, the optimization could still proceed successfully. 

As an example of training FLDCs for LRE ground states, we use the generalized 2D toric code model under the external field $\bm{h}=(h^x, h^y, h^z)$ with open boundary conditions. The ground state near the zero-field limit is topologically ordered and then experiences a quantum phase transition to an SRE state with increasing $\bm{h}$~\cite{PhysRevLett.98.070602, PhysRevB.85.195104}. The ground state at $\bm{h}=0$ can be constructed by applying the Hadamard and CNOT gates sequentially~\cite{Satzinger2021, Chen2023a}, which belongs to the FLDC class. Possible generalization to $h^z\neq0$ has also been proposed~\cite{Sun2023a}. However, unlike in Ref.~\cite{Sun2023a}, we will not utilize any prior information about the exact ground state except the entanglement area law. Namely, we choose our ansatz to be an FLDC similar to Fig.~\ref{fig:fldc_table}(b), with each two-qubit block being the general Cartan decomposition. We also conducted the same simulation using typical ansatzes in FDC and GLDC for comparison. As shown in Fig.~\ref{fig:tc_xzfield}(a), the energies of FLDC almost coincide with the exact values from the exact diagonalization (ED). By contrast, the energies of GLDC severely deviate due to poor trainability. On the other hand, although FDC does not suffer from barren plateaus, it lacks the expressibility to represent LRE states faithfully, so FDC works well in the large field limit but deviates near the zero-field limit. We also show the results of the topological entanglement entropy $S_{\text{topo}}$ of these variational states in Fig.~\ref{fig:tc_xzfield}(b) correspondingly. The technical details and additional numerical results can be found in Supplemental Material~\cite{Note1}.

\begin{figure}
    \centering
    \includegraphics[width=0.98\linewidth]{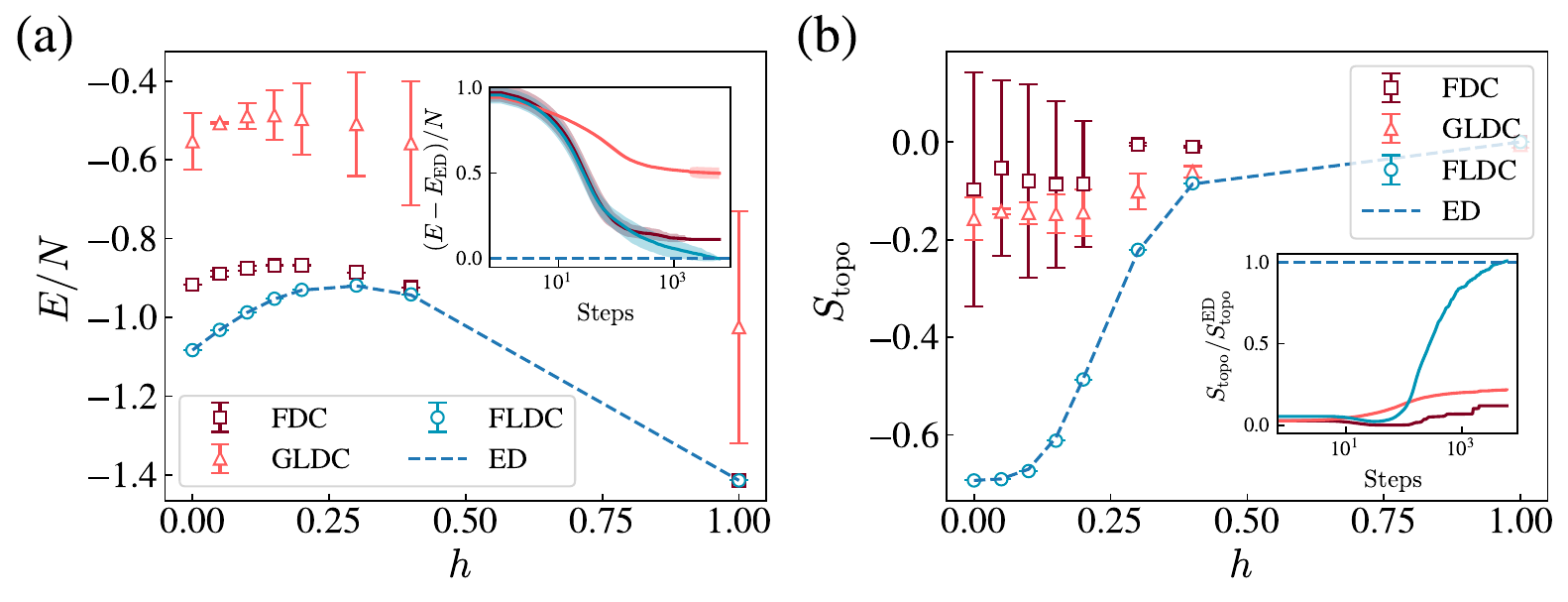}
    \caption{VQE performance comparison of the FDC, FLDC, and GLDC ansatzes using the generalized toric code model under the external field $h^z=h^x=h$ with $N=12$. The data are averaged over the best half of the $100$ training trajectories starting from different initializations. (a) The converged energy $E/N$ vs $h$. The inset depicts the energy training dynamics at $h=0.1$. The dashed lines represent the exact values obtained from ED. The (shaded) error bar represents the standard deviation. (b) The topological entanglement entropy $S_{\text{topo}}$ correspondingly.}
    \label{fig:tc_xzfield}
\end{figure}

\textit{Discussion.---} In this Letter, we prove a general lower bound on the gradient variance for arbitrary quantum circuits composed of local $2$-designs, which unifies the known gradient scaling behaviors of various architectures. An intuitive physical picture emerges that relates the non-vanishing gradients with the information scrambling in RQCs along certain path sets. We further prove the absence of barren plateaus for local Hamiltonians in a new circuit class---finite local-depth circuits, which can generate LRE states thanks to large global depths. FLDCs composed of spatially local gates preserve the entanglement area law, which makes it form a powerful and accessible subclass of PEPS that covers a wide range of physical ground states. Importantly, FLDC cannot be classically simulated efficiently in two and higher dimensions by the known tensor network methods. We remark that the indication of local depth is also instructive in developing quantum architecture search schemes~\cite{Zhang2020b_z, Du2020a_z, Lu2020_z, Zhang2021_z}. Finally, we point out that the absence of barren plateaus is a necessary but not sufficient condition for the effectiveness of training. There are other challenging issues such as the local minimum problem~\cite{Bittel2021, Zhang2023, Anschuetz2022}. Enhancing the VQE performance of FLDCs in more general systems requires further exploration in future studies. 


\begin{acknowledgements}
\textit{Acknowledgments.---} We acknowledge stimulating discussions with Xie Chen, Jing-Yu Zhao, Zhi-Yuan Wei, Yu-Jie Liu, and Zhehao Dai. This work was supported in part by the Innovation Program for Quantum Science and Technology (grant No. 2021ZD0302502). 
\end{acknowledgements}

\bibliographystyle{apsreve}
\bibliography{autoref}

\clearpage
\newpage
\widetext

\begin{center}
\textbf{\large Supplemental Material for \\``Absence of barren plateaus in finite local-depth circuits with long-range entanglement''}
\end{center}

\renewcommand{\theproposition}{S\arabic{proposition}}
\setcounter{proposition}{0}
\renewcommand{\thedefinition}{S\arabic{definition}}
\setcounter{definition}{0}

\renewcommand{\thefigure}{S\arabic{figure}}
\setcounter{figure}{0}
\renewcommand{\theequation}{S\arabic{equation}}
\setcounter{equation}{0}
\renewcommand{\thesection}{\Roman{section}}
\setcounter{section}{0}
\setcounter{secnumdepth}{4}

In this supplemental material, we first introduce the Weingarten integration techniques together with other mathematical preliminaries. Then, we provide rigorous formulations and complete proofs of the theorems mentioned in the main text. Next, we offer technical details and additional numerical results to further showcase our theorems and conclusions. Finally, we provide further discussions on the classical simulability of finite local-depth circuits and the relation with the measurement-assisted approach to prepare long-range entangled states.

\section{Preliminaries}

We start from the definition of unitary $t$-design. Consider a set $\mathbb{V}$ of unitaries $V$ on a $d$-dimensional Hilbert space. Denote $P_{t,t}(V)$ as a polynomial of degree at most $t$ in the entries of $V$ and $V^\dagger$. Then $\mathbb{V}$ is a unitary $t$-design if
\begin{equation}\label{eq:definiton_design}
    \frac{1}{|\mathbb{V}|}\sum_{V\in\mathbb{V}} P_{t,t}(V) = \int_{\mathcal{U}(d)} d\mu(V) P_{t,t}(V),
\end{equation}
for arbitrary $P_{t,t}(V)$, where $|\mathbb{V}|$ is the size of the set $\mathbb{V}$. $\mathcal{U}(d)$ is the unitary group of degree $d$ and $d\mu(V)$ is the Haar measure, or say the normalized uniform measure on $\mathcal{U}(d)$. That is to say, $P_{t,t}(V)$ averaging over $\mathbb{V}$ will yield exactly the same result as averaging over the entire unitary group $\mathcal{U}(d)$. In other words, $\mathbb{V}$ can mimic the Haar distribution up to the $t$-degree moment. We call a set an ensemble if there is a probability measure defined on it implicitly or explicitly. For example, $\mathbb{V}$ in Eq.~\eqref{eq:definiton_design} is an ensemble of unitaries since a uniform weight distribution is assigned implicitly. If $\mathbb{V}$ is continuous, the summation on the left-hand side of Eq.~\eqref{eq:definiton_design} is just replaced by the integral over $\mathbb{V}$ representing the average. Throughout this paper, all calculations only involve integrals over unitaries up to the second moment, e.g., the averages and variances of derivatives. Thus, as we will discuss below, the ensemble demanded by our theorems is just a series of local unitary $2$-designs, where ``local'' means that the unitaries just act on a few qubits instead of all of them. By contrast, a global $2$-design means that the unitaries act on all the qubits.

In order to calculate various integrals over unitaries, we introduce the notion of twirling channel. We focus on an $N$-qubit system with local Hilbert space $\mathcal{H}=\mathbb{C}^2$. The space of linear operators on $\mathcal{H}$ is denoted as $\mathcal{L}(\mathcal{H})$. The $N$ qubits are denoted as $\{q_1,q_2,\cdots,q_N\}$. A $t$-degree twirling channel, or simply a twirler, is a quantum channel $\mathcal{T}_{s}^{(t)}:\mathcal{L}(\mathcal{H}^{\otimes t})\rightarrow \mathcal{L}(\mathcal{H}^{\otimes t})$ of the form
\begin{equation}\label{eq:twirling_def}
    \mathcal{T}_{s}^{(t)}(\cdot) = \int d\mu_s(U)~ U^{\dagger\otimes t} (\cdot) U^{\otimes t},
\end{equation}
where $s$ denotes a given subset of qubits. $\mu_s$ is a given measure over the unitary group $\mathcal{U}(2^{|s|})$. $|s|$ denotes the number of qubits in the subset $s$. $U\in\mathcal{U}(2^{|s|})$ is a unitary operator which acts on $s$. Namely, $s$ is the support of $U$, which will be denoted as $s(U)$ for clarity if needed. In the following, we take $\mu_s$ as the Haar measure by default unless specially claimed. By tensor network diagrams, an instance of a twirling channel can be represented as
\begin{equation}
\begin{mytikz}

\draw    (290,250) -- (420,250) ;
\draw    (110,250) -- (240,250) ;
\draw    (290,230) -- (420,230) ;
\draw    (110,230) -- (240,230) ;
\draw    (290,240) -- (420,240) ;
\draw    (110,240) -- (240,240) ;
\draw    (290,220) -- (420,220) ;
\draw    (110,220) -- (240,220) ;
\draw    (290,170) -- (420,170) ;
\draw    (110,170) -- (240,170) ;
\draw    (290,150) -- (420,150) ;
\draw    (110,150) -- (240,150) ;
\draw    (290,160) -- (420,160) ;
\draw    (110,160) -- (240,160) ;
\draw    (290,140) -- (420,140) ;
\draw    (110,140) -- (240,140) ;
\draw  [fill={rgb, 255:red, 204; green, 230; blue, 255 }  ,fill opacity=1 ] (380,50) -- (380,280) -- (330,280) -- (330,100) -- (200,100) -- (200,280) -- (150,280) -- (150,50) -- cycle ;
\draw  [fill={rgb, 255:red, 230; green, 230; blue, 230 }, fill opacity=1] (240,120) -- (290,120) -- (290,280) -- (240,280) -- cycle ;
\draw (270,75) node [anchor=center][inner sep=0.75pt] [font=\normalsize] [align=left] {$\mathcal{T}_{s}^{(t)}$};
\end{mytikz}
\quad= \int d\mu_s(U) ~~
\begin{mytikz}
\draw    (290,250) -- (420,250) ;
\draw    (110,250) -- (240,250) ;
\draw    (290,210) -- (420,210) ;
\draw    (110,210) -- (240,210) ;
\draw    (290,230) -- (420,230) ;
\draw    (110,230) -- (240,230) ;
\draw    (290,190) -- (420,190) ;
\draw    (110,190) -- (240,190) ;
\draw    (290,140) -- (420,140) ;
\draw    (110,140) -- (240,140) ;
\draw    (290,100) -- (420,100) ;
\draw    (110,100) -- (240,100) ;
\draw    (290,120) -- (420,120) ;
\draw    (110,120) -- (240,120) ;
\draw    (290,80) -- (420,80) ;
\draw    (110,80) -- (240,80) ;
\draw  [fill={rgb, 255:red, 230; green, 230; blue, 230 }  ,fill opacity=1 ] (240,50) -- (290,50) -- (290,280) -- (240,280) -- cycle ;
\draw  [fill={rgb, 255:red, 204; green, 230; blue, 255 }  ,fill opacity=1 ] (150,90) -- (190,90) -- (190,130) -- (150,130) -- cycle ;
\draw  [fill={rgb, 255:red, 204; green, 230; blue, 255 }  ,fill opacity=1 ] (150,200) -- (190,200) -- (190,240) -- (150,240) -- cycle ;
\draw  [fill={rgb, 255:red, 204; green, 230; blue, 255 }  ,fill opacity=1 ] (340,90) -- (379,90) -- (379,130) -- (340,130) -- cycle ;
\draw  [fill={rgb, 255:red, 204; green, 230; blue, 255 }  ,fill opacity=1 ] (340,200) -- (379,200) -- (379,240) -- (340,240) -- cycle ;

\draw (170,110) node [anchor=center][inner sep=0.75pt] [font=\small] [align=left] {$U^{\dagger }$};
\draw (170,220) node [anchor=center][inner sep=0.75pt] [font=\small] [align=left] {$U^{\dagger }$};
\draw (360,110) node [anchor=center][inner sep=0.75pt] [font=\small] [align=left] {$U$};
\draw (360,220) node [anchor=center][inner sep=0.75pt] [font=\small] [align=left] {$U$};
\end{mytikz}\quad,
\end{equation}
In this instance, we take $t=2$, $N=4$ and $s=\{q_2,q_3\}$. The grey block represents the input of $\mathcal{T}^{(t)}_s(\cdot)$. It is known that if $\mu_s$ is the Haar measure, the integral in the definition of twirling channel can be analytically calculated and expressed by the Weingarten function $\op{Wg}(\cdot)$~\cite{Collins2006}, i.e.,
\begin{equation}\label{eq:weingarten}
    \mathcal{T}_{s}^{(t)}(\cdot) = \sum_{\sigma,\tau\in \mathcal{S}_{t}} \op{Wg}^{(t)}(\sigma\tau^{-1}, 2^{|s|}) ~\tr_{ts}\left[S(\tau)\vert_{ts}(\cdot)\right] \otimes S(\sigma)\vert_{ts},
\end{equation}
where $\sigma$ and $\tau$ are elements of the $t$-degree permutation group $\mathcal{S}_t$. $\op{Wg}^{(t)}(\sigma\tau^{-1}, 2^{|s|})$ is the $t$-degree Weingarten function of the permutation element $\sigma\tau^{-1}$ and the $2^{|s|}$-degree unitary group $\mathcal{U}(2^{|s|})$. $ts$ denotes the $t$ copies of support $s$. $\tr_{ts}(\cdot)$ denotes the partial tracing operation on all the $t$ copies of support $s$. $S(\sigma)\vert_{ts}$ represents the generalized swap operator which permutes the indices of the $t$ copies of support $s$ according to the permutation element $\sigma$. We use a vertical bar ``$\vert$'' in front of the support notation $ts$ to avoid confusion with other common subscripts. The tensor product ``$\otimes$'' appears because $\tr_{ts}$ is generally a partial trace when $|s|\neq N$. Note that Eq.~\eqref{eq:weingarten} can also be viewed as a definition of unitary $t$-designs. That is to say, if the $t$-degree twirling channel of some ensemble $\mathbb{V}$ coincides with that of the corresponding Haar ensemble, then the given ensemble $\mathbb{V}$ is a $t$-design. For the cases of $t=1$ and $2$, Eq.~\eqref{eq:weingarten} becomes
\begin{equation}\label{eq:twirling_1st_order}
    \mathcal{T}_{s}^{(1)}(\cdot) = \frac{1}{2^{|s|}}\tr_s(\cdot)\otimes I\vert_s, 
\end{equation}
\begin{equation}\label{eq:twirling_2st_order}
\begin{aligned}
    \mathcal{T}_{s}^{(2)}(\cdot) =& \frac{1}{2^{|2s|}-1}\left[\tr_{2s}(\cdot)\otimes I\vert_{2s} + \tr_{2s}(S\vert_{2s}\cdot)\otimes S\vert_{2s} \right] \\
    &- \frac{1}{2^{|s|}(2^{|2s|}-1)}\left[\tr_{2s}(\cdot)\otimes S\vert_{2s} + \tr_{2s}(S\vert_{2s}\cdot)\otimes I\vert_{2s} \right],
\end{aligned}
\end{equation}
where $I\vert_{s}$ is the identity on the Hilbert space of the support $s$. $I\vert_{2s}$ and $S\vert_{2s}$ are the identity and swap operators on the doubled Hilbert space of the support $s$, respectively. By tensor network diagrams, Eqs.~\eqref{eq:twirling_1st_order} and \eqref{eq:twirling_2st_order} can be represented as
\begin{equation}
\begin{mytikz}
\draw    (290,170) -- (420,170) ;
\draw    (110,170) -- (240,170) ;
\draw    (291,150) -- (420,150) ;
\draw    (110,150) -- (240,150) ;
\draw    (290,160) -- (420,160) ;
\draw    (110,160) -- (240,160) ;
\draw    (290,140) -- (420,140) ;
\draw    (110,140) -- (240,140) ;
\draw  [fill={rgb, 255:red, 204; green, 230; blue, 255 }  ,fill opacity=1 ] (380,50) -- (380,190) -- (330,190) -- (330,100) -- (200,100) -- (200,190) -- (150,190) -- (150,50) -- cycle ;
\draw  [fill={rgb, 255:red, 230; green, 230; blue, 230 }  ,fill opacity=1 ] (240,120) -- (290,120) -- (290,190) -- (240,190) -- cycle ;

\draw (270,75) node [anchor=center][inner sep=0.75pt]  [font=\normalsize] [align=left] {$\mathcal{T}_{s}^{(1)}$};
\end{mytikz}
\quad=\frac{1}{2^{|s|}}\quad
\begin{mytikz}
\draw    (170,140) -- (240,140) ;
\draw    (170,80) -- (240,80) ;
\draw    (290,140) -- (360,140) ;
\draw    (290,80) -- (360,80) ;
\draw  [fill={rgb, 255:red, 230; green, 230; blue, 230 }  ,fill opacity=1 ] (240,50) -- (290,50) -- (290,170) -- (240,170) -- cycle ;
\draw [color={rgb, 255:red, 0; green, 0; blue, 0 }  ,draw opacity=1 ]   (290,100) .. controls (315,100) and (325,95) .. (265,95) .. controls (205,95) and (220,100) .. (240,100) ;
\draw [color={rgb, 255:red, 0; green, 130; blue, 200 }  ,draw opacity=1 ]   (360,100) .. controls (330,100) and (320,90) .. (265,90) .. controls (205,90) and (200,100) .. (170,100) ;
\draw [color={rgb, 255:red, 0; green, 0; blue, 0 }  ,draw opacity=1 ]   (290,120) .. controls (317.16,120) and (320,115) .. (265,115) .. controls (205,115) and (220,120) .. (240,120) ;
\draw [color={rgb, 255:red, 0; green, 130; blue, 200 }  ,draw opacity=1 ]   (360,120) .. controls (330,120) and (320,110) .. (265,110) .. controls (205,110) and (200,120) .. (170,120) ;
\end{mytikz}\quad,
\end{equation}
\begin{equation}
\begin{aligned}
\begin{mytikz}
\draw    (290,250) -- (420,250) ;
\draw    (110,250) -- (240,250) ;
\draw    (290,230) -- (420,230) ;
\draw    (110,230) -- (240,230) ;
\draw    (290,240) -- (420,240) ;
\draw    (110,240) -- (240,240) ;
\draw    (290,220) -- (420,220) ;
\draw    (110,220) -- (240,220) ;
\draw    (290,170) -- (420,170) ;
\draw    (110,170) -- (240,170) ;
\draw    (290,150) -- (420,150) ;
\draw    (110,150) -- (240,150) ;
\draw    (290,160) -- (420,160) ;
\draw    (110,160) -- (240,160) ;
\draw    (290,140) -- (420,140) ;
\draw    (110,140) -- (240,140) ;
\draw  [fill={rgb, 255:red, 204; green, 230; blue, 255 }  ,fill opacity=1 ] (380,50) -- (380,280) -- (330,280) -- (330,100) -- (200,100) -- (200,280) -- (150,280) -- (150,50) -- cycle ;
\draw  [fill={rgb, 255:red, 230; green, 230; blue, 230 }, fill opacity=1] (240,120) -- (290,120) -- (290,280) -- (240,280) -- cycle ;
\draw (270,75) node [anchor=center][inner sep=0.75pt] [font=\normalsize] [align=left] {$\mathcal{T}_{s}^{(2)}$};
\end{mytikz}
\quad =& ~\frac{1}{2^{|2s|}-1} \left(
\begin{mytikz}
\draw    (170,250) -- (240,250) ;
\draw    (170,190) -- (240,190) ;
\draw    (290,250) -- (360,250) ;
\draw    (290,190) -- (360,190) ;
\draw    (290,140) -- (360,140) ;
\draw    (290,80) -- (360,80) ;
\draw  [fill={rgb, 255:red, 230; green, 230; blue, 230 }  ,fill opacity=1 ] (240,50) -- (290,50) -- (290,280) -- (240,280) -- cycle ;
\draw [color={rgb, 255:red, 0; green, 0; blue, 0 }  ,draw opacity=1 ]   (290,100) .. controls (317,100) and (323.89,95) .. (265,95) .. controls (205,95) and (217.88,100.2) .. (240,100) ;
\draw [color={rgb, 255:red, 0; green, 130; blue, 200 }  ,draw opacity=1 ]   (360,100) .. controls (330,100.2) and (322,90.24) .. (263.5,90.24) .. controls (204.5,90.24) and (200,100.2) .. (170,100) ;
    \draw [color={rgb, 255:red, 0; green, 0; blue, 0 }  ,draw opacity=1 ]   (290,120) .. controls (317.16,120.4) and (323.47,115) .. (265,115) .. controls (205,115) and (217.88,120) .. (240,120) ;
\draw [color={rgb, 255:red, 0; green, 130; blue, 200 }  ,draw opacity=1 ]   (360,120) .. controls (330,120) and (322,110.24) .. (263.5,110.24) .. controls (204.5,110.24) and (200,120) .. (170,120) ;
\draw [color={rgb, 255:red, 0; green, 0; blue, 0 }  ,draw opacity=1 ]   (290,210) .. controls (317,210.4) and (323.89,205.29) .. (265,205.29) .. controls (205,205.29) and (217.88,210.2) .. (240,210) ;
\draw [color={rgb, 255:red, 0; green, 130; blue, 200 }  ,draw opacity=1 ]   (360,210) .. controls (330,210.2) and (322,200.24) .. (263.5,200.24) .. controls (204.5,200.24) and (200,210.2) .. (170,210) ;
\draw [color={rgb, 255:red, 0; green, 0; blue, 0 }  ,draw opacity=1 ]   (290,230) .. controls (317.16,230.4) and (323.47,224.8) .. (265,224.8) .. controls (205,224.8) and (217.88,230.2) .. (240,230) ;
\draw [color={rgb, 255:red, 0; green, 130; blue, 200 }  ,draw opacity=1 ]   (360,230) .. controls (330,230.2) and (322,220.24) .. (263.5,220.24) .. controls (204.5,220.24) and (200,230.2) .. (170,230) ;
\draw    (170,140) -- (240,140) ;
\draw    (170,80) -- (240,80) ;
\end{mytikz}+
\begin{mytikz}
\draw    (170,250) -- (240,250) ;
\draw    (170,190) -- (240,190) ;
\draw    (290,250) -- (360,250) ;
\draw    (290,190) -- (360,190) ;
\draw    (290,140) -- (360,140) ;
\draw    (290,80) -- (360,80) ;
\draw  [fill={rgb, 255:red, 230; green, 230; blue, 230 }  ,fill opacity=1 ] (240,50) -- (290,50) -- (290,280) -- (240,280) -- cycle ;
\draw [color={rgb, 255:red, 0; green, 0; blue, 0 }  ,draw opacity=1 ]   (290,210) .. controls (317.2,210.4) and (307.2,190.04) .. (264.8,155.64) .. controls (222.4,121.24) and (217.88,100.2) .. (240,100) ;
\draw [color={rgb, 255:red, 0; green, 130; blue, 200 }  ,draw opacity=1 ]   (360,210) .. controls (330,210.2) and (330,210.44) .. (264.8,155.64) .. controls (199.6,100.84) and (200,100.2) .. (170,100) ;
\draw [color={rgb, 255:red, 0; green, 0; blue, 0 }  ,draw opacity=1 ]   (290,230) .. controls (317.2,230.44) and (308,210.84) .. (264.8,175.64) .. controls (221.6,140.44) and (217.88,120) .. (240,120) ;
\draw [color={rgb, 255:red, 0; green, 130; blue, 200 }  ,draw opacity=1 ]   (360,230.44) .. controls (330,230.64) and (330.8,230.84) .. (264.8,175.64) .. controls (198.8,120) and (200,120) .. (170,120) ;
\draw [color={rgb, 255:red, 0; green, 0; blue, 0 }  ,draw opacity=1 ]   (290.4,99.24) .. controls (317.6,99.64) and (305.6,121.24) .. (264.8,155.64) .. controls (224,190.04) and (217.88,210.2) .. (240,210) ;
\draw [color={rgb, 255:red, 0; green, 130; blue, 200 }  ,draw opacity=1 ]   (360,100.04) .. controls (330,100.24) and (330,100.04) .. (264.8,155.64) .. controls (199.6,211.24) and (200,210.2) .. (170,210) ;
\draw [color={rgb, 255:red, 0; green, 0; blue, 0 }  ,draw opacity=1 ]   (290,120) .. controls (317.2,120.04) and (305.2,141.64) .. (264.8,175.64) .. controls (224.4,209.64) and (217.88,230.2) .. (240,230) ;
\draw [color={rgb, 255:red, 0; green, 130; blue, 200 }  ,draw opacity=1 ]   (360,120) .. controls (330,120) and (330,120) .. (264.8,175.64) .. controls (199.6,230.84) and (200,230.2) .. (170,230) ;
\draw    (170,140) -- (240,140) ;
\draw    (170,80) -- (240,80) ;
\end{mytikz}\right)\\
&-\frac{1}{2^{|s|}(2^{|2s|}-1)}\left(
\begin{mytikz}
\draw    (170,250) -- (240,250) ;
\draw    (170,190) -- (240,190) ;
\draw    (290,250) -- (360,250) ;
\draw    (290,190) -- (360,190) ;
\draw    (290,140) -- (360,140) ;
\draw    (290,80) -- (360,80) ;
\draw  [fill={rgb, 255:red, 230; green, 230; blue, 230 }  ,fill opacity=1 ] (240,50) -- (290,50) -- (290,280) -- (240,280) -- cycle ;
\draw [color={rgb, 255:red, 0; green, 130; blue, 200 }  ,draw opacity=1 ]   (360,210) .. controls (330,210.2) and (330,210.44) .. (264.8,155.64) .. controls (199.6,100.84) and (200,100.2) .. (170,100) ;
\draw [color={rgb, 255:red, 0; green, 130; blue, 200 }  ,draw opacity=1 ]   (360,230.44) .. controls (330,230.64) and (330.8,230.84) .. (264.8,175.64) .. controls (198.8,120) and (200,120) .. (170,120) ;
\draw [color={rgb, 255:red, 0; green, 130; blue, 200 }  ,draw opacity=1 ]   (360,100.04) .. controls (330,100.24) and (330,100.04) .. (264.8,155.64) .. controls (199.6,211.24) and (200,210.2) .. (170,210) ;
\draw [color={rgb, 255:red, 0; green, 130; blue, 200 }  ,draw opacity=1 ]   (360,120) .. controls (330,120) and (330,120) .. (264.8,175.64) .. controls (199.6,230.84) and (200,230.2) .. (170,230) ;
\draw    (170,140) -- (240,140) ;
\draw    (170,80) -- (240,80) ;
\draw [color={rgb, 255:red, 0; green, 0; blue, 0 }  ,draw opacity=1 ]   (290,100) .. controls (312.24,100.12) and (323.89,95.29) .. (265.09,95.29) .. controls (206.29,95.29) and (217.88,100.2) .. (240,100) ;
\draw [color={rgb, 255:red, 0; green, 0; blue, 0 }  ,draw opacity=1 ]   (290,120) .. controls (310.24,120.12) and (323.89,115.29) .. (265.09,115.29) .. controls (206.29,115.29) and (217.88,120) .. (240,120) ;
\draw [color={rgb, 255:red, 0; green, 0; blue, 0 }  ,draw opacity=1 ]   (290,210) .. controls (311.84,210.12) and (323.89,205.29) .. (265.09,205.29) .. controls (206.29,205.29) and (217.88,210.2) .. (240,210) ;
\draw [color={rgb, 255:red, 0; green, 0; blue, 0 }  ,draw opacity=1 ]   (290,230) .. controls (312.24,230.12) and (323.89,225.29) .. (265.09,225.29) .. controls (206.29,225.29) and (217.88,230.2) .. (240,230) ;
\end{mytikz}
+
\begin{mytikz}
\draw    (170,250) -- (240,250) ;
\draw    (170,190) -- (240,190) ;
\draw    (290,250) -- (360,250) ;
\draw    (290,190) -- (360,190) ;
\draw    (290,140) -- (360,140) ;
\draw    (290,80) -- (360,80) ;
\draw  [fill={rgb, 255:red, 230; green, 230; blue, 230 }  ,fill opacity=1 ] (240,50) -- (290,50) -- (290,280) -- (240,280) -- cycle ;
\draw [color={rgb, 255:red, 0; green, 0; blue, 0 }  ,draw opacity=1 ]   (290,210) .. controls (317.2,210.4) and (307.2,190.04) .. (264.8,155.64) .. controls (222.4,121.24) and (217.88,100.2) .. (240,100) ;
\draw [color={rgb, 255:red, 0; green, 0; blue, 0 }  ,draw opacity=1 ]   (290,230) .. controls (317.2,230.44) and (308,210.84) .. (264.8,175.64) .. controls (221.6,140.44) and (217.88,120) .. (240,120) ;
\draw [color={rgb, 255:red, 0; green, 0; blue, 0 }  ,draw opacity=1 ]   (290,100) .. controls (317.2,100.4) and (305.6,121.24) .. (264.8,155.64) .. controls (224,190.04) and (217.88,210.2) .. (240,210) ;
\draw [color={rgb, 255:red, 0; green, 0; blue, 0 }  ,draw opacity=1 ]   (290,120) .. controls (317.2,120.04) and (305.2,141.64) .. (264.8,175.64) .. controls (224.4,209.64) and (217.88,230.2) .. (240,230) ;
\draw    (170,140) -- (240,140) ;
\draw    (170,80) -- (240,80) ;
\draw [color={rgb, 255:red, 0; green, 130; blue, 200 }  ,draw opacity=1 ]   (360,99.76) .. controls (330,100) and (322,90) .. (263.5,90) .. controls (204.5,90) and (200,100) .. (170,99.76) ;
\draw [color={rgb, 255:red, 0; green, 130; blue, 200 }  ,draw opacity=1 ]   (360,120) .. controls (330,120) and (322,110) .. (263.5,110) .. controls (204.5,110) and (200,120) .. (170,120) ;
\draw [color={rgb, 255:red, 0; green, 130; blue, 200 }  ,draw opacity=1 ]   (360,210) .. controls (330,210) and (322,200) .. (263.5,200) .. controls (204.5,200) and (200,210) .. (170,210) ;
\draw [color={rgb, 255:red, 0; green, 130; blue, 200 }  ,draw opacity=1 ]   (360,230) .. controls (330,230) and (322,220) .. (263.5,220) .. controls (204.5,220) and (200,230) .. (170,230) ;
\end{mytikz}
\right)\quad,
\end{aligned}
\end{equation}
where again we take $N=4$ and $s=\{q_2,q_3\}$ for instance. Note that these formulas can be directly generalized to qudit systems with local Hilbert space dimension $d$ by replacing the base $2$ with $d$. It is worth mentioning that Eqs.~\eqref{eq:twirling_1st_order} and \eqref{eq:twirling_2st_order} only hold for unitary $2$-designs while twirling channels in Eq.~\eqref{eq:twirling_def} can be defined for arbitrary ensembles. As we will demonstrate below, utilizing Eqs.~\eqref{eq:twirling_1st_order} and \eqref{eq:twirling_2st_order} repeatedly, one can estimate the average and variance of a given observable with respect to a random quantum circuit composed of independent unitary $2$-designs.

\section{Theorems and Proofs}
In this section, we present detailed statements and proofs of the theorems mentioned in the main text. We will first introduce some basic facts and lemmas, based on which we shall further develop our theorems and proofs.

A typical setting of variational quantum eigensolver (VQE) regarding an $N$-qubit system consists of an initial state $\rho_0=\ketbrasame{\bm{0}}$ with $\ket{\bm{0}}=\ket{0}^{\otimes N}$, a given Hamiltonian $H$ and a parametrized quantum circuit (PQC, also known as ansatz) $\mathbf{U}(\bm{\theta})$ with $M$ trainable parameters $\bm{\theta}=\{\theta_\mu\}_{\mu=1}^M$ of the form
\begin{equation}
    \mathbf{U}(\bm{\theta}) = \prod_{\mu=1}^{M} U_\mu(\theta_\mu) = U_M(\theta_M)\cdots U_2(\theta_2)U_1(\theta_1),
\end{equation}
where $U_\mu(\theta_\mu)=e^{-i\Omega_\mu\theta_\mu}$ and $\{\Omega_\mu\}_{\mu=1}^M$ is a set of Hermitian generators such as Pauli strings. Note that the quantum gates without trainable parameters such as the CNOT gate can be obtained simply by fixing the corresponding parameters in certain rotation gates. The energy expectation with respect to the output state $C(\bm{\theta})=\braoprket{\bm{0}}{\mathbf{U}^\dagger H \mathbf{U}}{\bm{0}}$ is usually taken as the cost function. To prepare the ground state of the given Hamiltonian, a standard workflow of VQE involves running the PQC on a quantum device, measuring the cost function, and updating the trainable parameters iteratively to minimize the cost function using classical optimizers such as gradient descent algorithms. In particular, at the beginning of this workflow, the trainable parameters are often randomly initialized so that the whole parameter space is explored probabilistically to search for the optimum. Equipped with the probability measure originated from random initialization, the PQC $\mathbf{U}(\bm{\theta})$ becomes a random quantum circuit, where some statistical quantities, such as the average and variance of the cost derivative, can be evaluated analytically for certain probability measures. Considering both computability and practicality, a common assumption on random quantum circuits is that the circuit is composed of blocks forming independent local $2$-designs. Here the term ``block'' refers to a grouped continuous series of gates close to each other in the PQC, which also can be seen as the elementary unit when we construct a PQC, as shown in the following example
\begin{equation}\label{eq:block_template}
\begin{mytikz2}
\draw  [fill={rgb, 255:red, 204; green, 230; blue, 255 }  ,fill opacity=0.35 ][dash pattern={on 3pt off 1.5pt}][line width=0.75]  (110,107.52) .. controls (110,97.84) and (117.84,90) .. (127.52,90) -- (322.48,90) .. controls (332.16,90) and (340,97.84) .. (340,107.52) -- (340,192.48) .. controls (340,202.16) and (332.16,210) .. (322.48,210) -- (127.52,210) .. controls (117.84,210) and (110,202.16) .. (110,192.48) -- cycle ;
\draw [line width=0.75]    (100,180) -- (350,180) ;
\draw [line width=0.75]    (100,120) -- (350,120) ;
\draw  [fill={rgb, 255:red, 204; green, 230; blue, 255 }  ,fill opacity=1 ][line width=0.75]  (140,100.4) .. controls (140,100.4) and (140,100.4) .. (140,100.4) -- (180,100.4) .. controls (180,100.4) and (180,100.4) .. (180,100.4) -- (180,140.4) .. controls (180,140.4) and (180,140.4) .. (180,140.4) -- (140,140.4) .. controls (140,140.4) and (140,140.4) .. (140,140.4) -- cycle ;
\draw  [fill={rgb, 255:red, 204; green, 230; blue, 255 }  ,fill opacity=1 ][line width=0.75]  (140,160.4) .. controls (140,160.4) and (140,160.4) .. (140,160.4) -- (180,160.4) .. controls (180,160.4) and (180,160.4) .. (180,160.4) -- (180,200.4) .. controls (180,200.4) and (180,200.4) .. (180,200.4) -- (140,200.4) .. controls (140,200.4) and (140,200.4) .. (140,200.4) -- cycle ;
\draw  [fill={rgb, 255:red, 204; green, 230; blue, 255 }  ,fill opacity=1 ][line width=0.75]  (200,100) .. controls (200,100) and (200,100) .. (200,100) -- (240,100) .. controls (240,100) and (240,100) .. (240,100) -- (240,200) .. controls (240,200) and (240,200) .. (240,200) -- (200,200) .. controls (200,200) and (200,200) .. (200,200) -- cycle ;
\draw  [fill={rgb, 255:red, 204; green, 230; blue, 255 }  ,fill opacity=1 ][line width=0.75]  (260,100) .. controls (260,100) and (260,100) .. (260,100) -- (300,100) .. controls (300,100) and (300,100) .. (300,100) -- (300,140) .. controls (300,140) and (300,140) .. (300,140) -- (260,140) .. controls (260,140) and (260,140) .. (260,140) -- cycle ;
\draw  [fill={rgb, 255:red, 204; green, 230; blue, 255 }  ,fill opacity=1 ][line width=0.75]  (260,160) .. controls (260,160) and (260,160) .. (260,160) -- (300,160) .. controls (300,160) and (300,160) .. (300,160) -- (300,200) .. controls (300,200) and (300,200) .. (300,200) -- (260,200) .. controls (260,200) and (260,200) .. (260,200) -- cycle ;

\draw (145,110) node [anchor=north west][inner sep=0.75pt]  [font=\normalsize] [align=left] {$U_{1}$};
\draw (145,170) node [anchor=north west][inner sep=0.75pt]  [font=\normalsize] [align=left] {$U_{2}$};
\draw (205,140) node [anchor=north west][inner sep=0.75pt]  [font=\normalsize] [align=left] {$U_{3}$};
\draw (265,110) node [anchor=north west][inner sep=0.75pt]  [font=\normalsize] [align=left] {$U_{4}$};
\draw (265,170) node [anchor=north west][inner sep=0.75pt]  [font=\normalsize] [align=left] {$U_{5}$};
\draw (305,141) node [anchor=north west][inner sep=0.75pt]  [font=\normalsize] [align=left] {$B_{1}$};
\end{mytikz2}\quad.
\end{equation}
In this example, $B_1$ is a block composed of the gates $U_1, U_2, \cdots, U_5$. As thus, the PQC can be rewritten as 
\begin{equation}\label{eq:u_bbb}
    \mathbf{U}=\prod_{k=1}^{M'} B_k = B_{M'}\cdots B_{2} B_{1},
\end{equation}
where the dependence on $\bm{\theta}$ is omitted for simplicity. Namely, the $M$ gates are grouped into $M'$ blocks. For clarity, we will use $k(\mu)$ to denote the index of the block $B_{k(\mu)}$ that contains the gate $U_\mu$ when needed. Since each block $B_k$ consists of multiple gates, its randomness is larger than a single parametrized gate $U_\mu=e^{-i\Omega_\mu\theta_\mu}$ with a randomly assigned value of the parameter $\theta_\mu$. When the number of gates in one block is sufficiently large, the block together with the induced probability measure from the initialization can be regarded as a local $2$-design or approximate $2$-design on $s(B_k)$. Remember that $s(B_k)$ denotes the union set of qubits on which the gates in $B_k$ are applied. A typical template of two-qubit blocks is the Cartan decomposition of $\mathcal{SU}(4)$
\begin{equation}\label{eq:cartan}
\begin{mytikz2}
\draw  [fill={rgb, 255:red, 204; green, 230; blue, 255 }  ,fill opacity=0.35 ][dash pattern={on 3pt off 1.5pt}][line width=0.75]  (114.41,107.52) .. controls (114.41,97.84) and (122.25,90) .. (131.93,90) -- (428.27,90) .. controls (437.95,90) and (445.79,97.84) .. (445.79,107.52) -- (445.79,192.48) .. controls (445.79,202.16) and (437.95,210) .. (428.27,210) -- (131.93,210) .. controls (122.25,210) and (114.41,202.16) .. (114.41,192.48) -- cycle ;
\draw [line width=0.75]    (100,180) -- (460.2,180) ;
\draw [line width=0.75]    (100,120) -- (460.2,120) ;
\draw  [fill={rgb, 255:red, 204; green, 230; blue, 255 }  ,fill opacity=1 ][line width=0.75]  (140,100.4) .. controls (140,100.4) and (140,100.4) .. (140,100.4) -- (180,100.4) .. controls (180,100.4) and (180,100.4) .. (180,100.4) -- (180,140.4) .. controls (180,140.4) and (180,140.4) .. (180,140.4) -- (140,140.4) .. controls (140,140.4) and (140,140.4) .. (140,140.4) -- cycle ;
\draw  [fill={rgb, 255:red, 204; green, 230; blue, 255 }  ,fill opacity=1 ][line width=0.75]  (140,160.4) .. controls (140,160.4) and (140,160.4) .. (140,160.4) -- (180,160.4) .. controls (180,160.4) and (180,160.4) .. (180,160.4) -- (180,200.4) .. controls (180,200.4) and (180,200.4) .. (180,200.4) -- (140,200.4) .. controls (140,200.4) and (140,200.4) .. (140,200.4) -- cycle ;
\draw  [fill={rgb, 255:red, 204; green, 230; blue, 255 }  ,fill opacity=1 ][line width=0.75]  (200,100) .. controls (200,100) and (200,100) .. (200,100) -- (240,100) .. controls (240,100) and (240,100) .. (240,100) -- (240,200) .. controls (240,200) and (240,200) .. (240,200) -- (200,200) .. controls (200,200) and (200,200) .. (200,200) -- cycle ;
\draw  [fill={rgb, 255:red, 204; green, 230; blue, 255 }  ,fill opacity=1 ][line width=0.75]  (380,100) .. controls (380,100) and (380,100) .. (380,100) -- (420,100) .. controls (420,100) and (420,100) .. (420,100) -- (420,140) .. controls (420,140) and (420,140) .. (420,140) -- (380,140) .. controls (380,140) and (380,140) .. (380,140) -- cycle ;
\draw  [fill={rgb, 255:red, 204; green, 230; blue, 255 }  ,fill opacity=1 ][line width=0.75]  (380,160) .. controls (380,160) and (380,160) .. (380,160) -- (420,160) .. controls (420,160) and (420,160) .. (420,160) -- (420,200) .. controls (420,200) and (420,200) .. (420,200) -- (380,200) .. controls (380,200) and (380,200) .. (380,200) -- cycle ;
\draw  [fill={rgb, 255:red, 204; green, 230; blue, 255 }  ,fill opacity=1 ][line width=0.75]  (260,100) .. controls (260,100) and (260,100) .. (260,100) -- (300,100) .. controls (300,100) and (300,100) .. (300,100) -- (300,200) .. controls (300,200) and (300,200) .. (300,200) -- (260,200) .. controls (260,200) and (260,200) .. (260,200) -- cycle ;
\draw  [fill={rgb, 255:red, 204; green, 230; blue, 255 }  ,fill opacity=1 ][line width=0.75]  (320,100) .. controls (320,100) and (320,100) .. (320,100) -- (360,100) .. controls (360,100) and (360,100) .. (360,100) -- (360,200) .. controls (360,200) and (360,200) .. (360,200) -- (320,200) .. controls (320,200) and (320,200) .. (320,200) -- cycle ;

\draw (145,110) node [anchor=north west][inner sep=0.75pt]  [font=\normalsize] [align=left] {$R_{3}$};
\draw (145,170) node [anchor=north west][inner sep=0.75pt]  [font=\normalsize] [align=left] {$R_{3}$};
\draw (197,140) node [anchor=north west][inner sep=0.75pt]  [font=\normalsize] [align=left] {$R_{xx}$};
\draw (385,110) node [anchor=north west][inner sep=0.75pt]  [font=\normalsize] [align=left] {$R_{3}$};
\draw (385,170) node [anchor=north west][inner sep=0.75pt]  [font=\normalsize] [align=left] {$R_{3}$};
\draw (257,140) node [anchor=north west][inner sep=0.75pt]  [font=\normalsize] [align=left] {$R_{yy}$};
\draw (317,140) node [anchor=north west][inner sep=0.75pt]  [font=\normalsize] [align=left] {$R_{zz}$};
\end{mytikz2}\quad,
\end{equation}
where $R_{xx}$, $R_{yy}$ and $R_{zz}$ are the two-qubit rotation gates with generators $X\otimes X$, $Y\otimes Y$ and $Z\otimes Z$, respectively. $R_3$ is a universal single-qubit gate, e.g., $R_zR_yR_z$ where $R_{y}$ and $R_{z}$ represent the single-qubit rotation gates with generators $Y$ and $Z$. This block template from the Cartan decomposition can represent any two-qubit gate and approximate a $2$-design when the parameters are randomized uniformly over $[0,2\pi)$~\cite{Uvarov2021a}. 

In the following, we suppose that each block $B_k$ in the circuit $\mathbf{U}$ forms a local $2$-design independently. This will induce an ensemble of the circuit $\mathbf{U}$ according to Eq.~\eqref{eq:u_bbb}, which we denote as $\mathbb{U}$. We will estimate the average and variance of the derivatives of the cost function with respect to this circuit ensemble $\mathbb{U}$. Note that we do not assume any configuration among these blocks, such as the block sizes and the relative locations. One of the most commonly used block configurations is the alternating layered ansatz, also known as the brickwall ansatz or checkerboard ansatz, which features a periodic layered structure and alternates the block pattern within one period. A typical 1D example is shown as follows
\begin{equation}
\begin{mytikz2}
\draw [line width=0.75]    (100,141.93) -- (300.5,141.93) ;
\draw [line width=0.75]    (100,175.47) -- (300.5,175.46) ;
\draw [line width=0.75]    (100,108.39) -- (300.5,108.39) ;
\draw  [fill={rgb, 255:red, 204; green, 230; blue, 255 }  ,fill opacity=1 ][line width=0.75]  (116.72,105.02) .. controls (116.72,102.25) and (118.96,100) .. (121.73,100) -- (128.42,100) .. controls (131.19,100) and (133.44,102.25) .. (133.44,105.02) -- (133.44,145.29) .. controls (133.44,148.06) and (131.19,150.31) .. (128.42,150.31) -- (121.73,150.31) .. controls (118.96,150.31) and (116.72,148.06) .. (116.72,145.29) -- cycle ;
\draw [line width=0.75]    (100,242.54) -- (300.5,242.54) ;
\draw [line width=0.75]    (100,276.08) -- (300.5,276.07) ;
\draw [line width=0.75]    (100,209) -- (300.5,209) ;
\draw  [fill={rgb, 255:red, 204; green, 230; blue, 255 }  ,fill opacity=1 ][line width=0.75]  (141.93,138.56) .. controls (141.93,135.79) and (144.18,133.54) .. (146.95,133.54) -- (153.64,133.54) .. controls (156.41,133.54) and (158.65,135.79) .. (158.65,138.56) -- (158.65,178.83) .. controls (158.65,181.6) and (156.41,183.85) .. (153.64,183.85) -- (146.95,183.85) .. controls (144.18,183.85) and (141.93,181.6) .. (141.93,178.83) -- cycle ;
\draw  [fill={rgb, 255:red, 204; green, 230; blue, 255 }  ,fill opacity=1 ][line width=0.75]  (167.09,172.09) .. controls (167.09,169.32) and (169.34,167.08) .. (172.11,167.08) -- (178.8,167.08) .. controls (181.57,167.08) and (183.81,169.32) .. (183.81,172.09) -- (183.81,212.37) .. controls (183.81,215.14) and (181.57,217.38) .. (178.8,217.38) -- (172.11,217.38) .. controls (169.34,217.38) and (167.09,215.14) .. (167.09,212.37) -- cycle ;
\draw  [fill={rgb, 255:red, 204; green, 230; blue, 255 }  ,fill opacity=1 ][line width=0.75]  (192.25,205.63) .. controls (192.25,202.86) and (194.5,200.61) .. (197.27,200.61) -- (203.96,200.61) .. controls (206.73,200.61) and (208.97,202.86) .. (208.97,205.63) -- (208.97,245.9) .. controls (208.97,248.67) and (206.73,250.92) .. (203.96,250.92) -- (197.27,250.92) .. controls (194.5,250.92) and (192.25,248.67) .. (192.25,245.9) -- cycle ;
\draw  [fill={rgb, 255:red, 204; green, 230; blue, 255 }  ,fill opacity=1 ][line width=0.75]  (217.41,239.17) .. controls (217.41,236.4) and (219.66,234.15) .. (222.43,234.15) -- (229.12,234.15) .. controls (231.89,234.15) and (234.13,236.4) .. (234.13,239.17) -- (234.13,279.44) .. controls (234.13,282.21) and (231.89,284.46) .. (229.12,284.46) -- (222.43,284.46) .. controls (219.66,284.46) and (217.41,282.21) .. (217.41,279.44) -- cycle ;
\draw  [fill={rgb, 255:red, 204; green, 230; blue, 255 }  ,fill opacity=1 ][line width=0.75]  (167.09,105.02) .. controls (167.09,102.25) and (169.34,100) .. (172.11,100) -- (178.8,100) .. controls (181.57,100) and (183.81,102.25) .. (183.81,105.02) -- (183.81,145.29) .. controls (183.81,148.06) and (181.57,150.31) .. (178.8,150.31) -- (172.11,150.31) .. controls (169.34,150.31) and (167.09,148.06) .. (167.09,145.29) -- cycle ;
\draw  [fill={rgb, 255:red, 204; green, 230; blue, 255 }  ,fill opacity=1 ][line width=0.75]  (192.25,138.56) .. controls (192.25,135.79) and (194.5,133.54) .. (197.27,133.54) -- (203.96,133.54) .. controls (206.73,133.54) and (208.97,135.79) .. (208.97,138.56) -- (208.97,178.83) .. controls (208.97,181.6) and (206.73,183.85) .. (203.96,183.85) -- (197.27,183.85) .. controls (194.5,183.85) and (192.25,181.6) .. (192.25,178.83) -- cycle ;
\draw  [fill={rgb, 255:red, 204; green, 230; blue, 255 }  ,fill opacity=1 ][line width=0.75]  (217.41,172.09) .. controls (217.41,169.32) and (219.66,167.08) .. (222.43,167.08) -- (229.12,167.08) .. controls (231.89,167.08) and (234.13,169.32) .. (234.13,172.09) -- (234.13,212.37) .. controls (234.13,215.14) and (231.89,217.38) .. (229.12,217.38) -- (222.43,217.38) .. controls (219.66,217.38) and (217.41,215.14) .. (217.41,212.37) -- cycle ;
\draw  [fill={rgb, 255:red, 204; green, 230; blue, 255 }  ,fill opacity=1 ][line width=0.75]  (242.57,205.63) .. controls (242.57,202.86) and (244.82,200.61) .. (247.59,200.61) -- (254.28,200.61) .. controls (257.05,200.61) and (259.29,202.86) .. (259.29,205.63) -- (259.29,245.9) .. controls (259.29,248.67) and (257.05,250.92) .. (254.28,250.92) -- (247.59,250.92) .. controls (244.82,250.92) and (242.57,248.67) .. (242.57,245.9) -- cycle ;
\draw  [fill={rgb, 255:red, 204; green, 230; blue, 255 }  ,fill opacity=1 ][line width=0.75]  (267.73,239.17) .. controls (267.73,236.4) and (269.98,234.15) .. (272.75,234.15) -- (279.44,234.15) .. controls (282.21,234.15) and (284.45,236.4) .. (284.45,239.17) -- (284.45,279.44) .. controls (284.45,282.21) and (282.21,284.46) .. (279.44,284.46) -- (272.75,284.46) .. controls (269.98,284.46) and (267.73,282.21) .. (267.73,279.44) -- cycle ;
\draw  [fill={rgb, 255:red, 204; green, 230; blue, 255 }  ,fill opacity=1 ][line width=0.75]  (116.77,172.11) .. controls (116.77,169.34) and (119.02,167.1) .. (121.79,167.1) -- (128.48,167.1) .. controls (131.25,167.1) and (133.49,169.34) .. (133.49,172.11) -- (133.49,212.39) .. controls (133.49,215.16) and (131.25,217.4) .. (128.48,217.4) -- (121.79,217.4) .. controls (119.02,217.4) and (116.77,215.16) .. (116.77,212.39) -- cycle ;
\draw  [fill={rgb, 255:red, 204; green, 230; blue, 255 }  ,fill opacity=1 ][line width=0.75]  (141.93,205.65) .. controls (141.93,202.88) and (144.18,200.63) .. (146.95,200.63) -- (153.64,200.63) .. controls (156.41,200.63) and (158.65,202.88) .. (158.65,205.65) -- (158.65,245.92) .. controls (158.65,248.69) and (156.41,250.94) .. (153.64,250.94) -- (146.95,250.94) .. controls (144.18,250.94) and (141.93,248.69) .. (141.93,245.92) -- cycle ;
\draw  [fill={rgb, 255:red, 204; green, 230; blue, 255 }  ,fill opacity=1 ][line width=0.75]  (116.77,239.18) .. controls (116.77,236.41) and (119.02,234.16) .. (121.79,234.16) -- (128.48,234.16) .. controls (131.25,234.16) and (133.49,236.41) .. (133.49,239.18) -- (133.49,279.45) .. controls (133.49,282.22) and (131.25,284.47) .. (128.48,284.47) -- (121.79,284.47) .. controls (119.02,284.47) and (116.77,282.22) .. (116.77,279.45) -- cycle ;
\draw  [fill={rgb, 255:red, 204; green, 230; blue, 255 }  ,fill opacity=1 ][line width=0.75]  (167.09,239.18) .. controls (167.09,236.41) and (169.34,234.16) .. (172.11,234.16) -- (178.8,234.16) .. controls (181.57,234.16) and (183.81,236.41) .. (183.81,239.18) -- (183.81,279.45) .. controls (183.81,282.22) and (181.57,284.47) .. (178.8,284.47) -- (172.11,284.47) .. controls (169.34,284.47) and (167.09,282.22) .. (167.09,279.45) -- cycle ;
\draw  [fill={rgb, 255:red, 204; green, 230; blue, 255 }  ,fill opacity=1 ][line width=0.75]  (217.41,105.01) .. controls (217.41,102.24) and (219.66,99.99) .. (222.43,99.99) -- (229.12,99.99) .. controls (231.89,99.99) and (234.13,102.24) .. (234.13,105.01) -- (234.13,145.28) .. controls (234.13,148.05) and (231.89,150.3) .. (229.12,150.3) -- (222.43,150.3) .. controls (219.66,150.3) and (217.41,148.05) .. (217.41,145.28) -- cycle ;
\draw  [fill={rgb, 255:red, 204; green, 230; blue, 255 }  ,fill opacity=1 ][line width=0.75]  (242.57,138.55) .. controls (242.57,135.78) and (244.82,133.53) .. (247.59,133.53) -- (254.28,133.53) .. controls (257.05,133.53) and (259.29,135.78) .. (259.29,138.55) -- (259.29,178.82) .. controls (259.29,181.59) and (257.05,183.84) .. (254.28,183.84) -- (247.59,183.84) .. controls (244.82,183.84) and (242.57,181.59) .. (242.57,178.82) -- cycle ;
\draw  [fill={rgb, 255:red, 204; green, 230; blue, 255 }  ,fill opacity=1 ][line width=0.75]  (267.73,172.08) .. controls (267.73,169.31) and (269.98,167.07) .. (272.75,167.07) -- (279.44,167.07) .. controls (282.21,167.07) and (284.45,169.31) .. (284.45,172.08) -- (284.45,212.36) .. controls (284.45,215.13) and (282.21,217.37) .. (279.44,217.37) -- (272.75,217.37) .. controls (269.98,217.37) and (267.73,215.13) .. (267.73,212.36) -- cycle ;
\draw  [fill={rgb, 255:red, 204; green, 230; blue, 255 }  ,fill opacity=1 ][line width=0.75]  (267.73,105.01) .. controls (267.73,102.24) and (269.98,99.99) .. (272.75,99.99) -- (279.44,99.99) .. controls (282.21,99.99) and (284.45,102.24) .. (284.45,105.01) -- (284.45,145.28) .. controls (284.45,148.05) and (282.21,150.3) .. (279.44,150.3) -- (272.75,150.3) .. controls (269.98,150.3) and (267.73,148.05) .. (267.73,145.28) -- cycle ;
\end{mytikz2}\quad.
\end{equation}
It is known that when the depth of this circuit reaches $\Omega(N)$, i.e., linear with the system size or beyond, the whole circuit approximates a global $2$-design and hence the training landscape would exhibit barren plateaus~\cite{McClean2018}, i.e., the variance of the cost derivative vanishing exponentially with the system size. Otherwise, if the circuit depth is only of order $\mathcal{O}(\log N)$ or even finite, i.e., does not scale with the qubit count $N$, the variance of the cost derivative would not vanish exponentially in the case of local cost functions~\cite{Cerezo2021, Uvarov2021a, Pesah2021}. However, these circuits of finite depth have limited expressibility and fail to capture long-range entanglement, which motivates us to explore other circuit structures, e.g., finite local-depth circuits as we will discuss below.

We start with the expression of the cost derivative. The derivative of the energy expectation with respect to the parameter $\theta_\mu$ can be expressed as
\begin{equation}\label{eq:cost_derivative_expression}
\begin{aligned}
    \partial_\mu C = \frac{\partial C}{\partial \theta_\mu} 
    & = \tr\left( U_{(\mu-1)\leftarrow 1} \rho_0 U_{(\mu-1)\leftarrow 1}^\dagger \left[ i\Omega_\mu, U_{M\leftarrow \mu}^\dagger H U_{M\leftarrow \mu} \right] \right)\\
    & = \tr\left( U_{\mu\leftarrow 1} \rho_0 U_{\mu\leftarrow 1}^\dagger \left[ i\Omega_\mu, U_{M\leftarrow (\mu+1)}^\dagger H U_{M\leftarrow (\mu+1)} \right] \right), \\
\end{aligned}
\end{equation}
where we have used the notation $U_{\nu'\leftarrow \nu}=\prod_{\nu''=\nu}^{\nu'} U_{\nu''}=U_{\nu'}\cdots U_{\nu+1}U_{\nu}$ if $\nu'\geq\nu$. Otherwise if $\nu'<\nu$, we just set $U_{\nu'\leftarrow \nu}$ as the identity. For clarity, we will use the term ``derivative'' to refer to a single component of the gradient and use ``gradient'' to refer to the entire multi-component vector throughout the paper. Using twirling channels corresponding to the $2$-design blocks, the average and variance of the cost derivative can be written as
\begin{align}
    \E_\mathbb{U}\left[\partial_\mu C\right] &= \tr\left[\rho_0 \mathcal{T}^{(1)}_{s(B_1)}\circ\mathcal{T}^{(1)}_{s(B_2)}\circ\cdots\circ \mathcal{D}_\mu \circ\cdots\circ \mathcal{T}^{(1)}_{s(B_{M'})}(H) \right], \label{eq:exp_expression_by_twirling}\\
    \var_\mathbb{U}\left[\partial_\mu C\right] &= \tr\left[\rho_0^{\otimes 2} \mathcal{T}^{(2)}_{s(B_1)}\circ\mathcal{T}^{(2)}_{s(B_2)}\circ\cdots\circ \mathcal{D}_\mu^{\otimes 2} \circ\cdots\circ \mathcal{T}^{(2)}_{s(B_{M'})}(H^{\otimes 2}) \right] - \left(\E_\mathbb{U}\left[\partial_\mu C\right]\right)^2,\label{eq:var_expression_by_twirling}
\end{align}
where $\mathcal{D}_\mu(\cdot)=[i\Omega_\mu,(\cdot)]$ is the quantum operation induced by the commutator. For convenience, we say that $\theta_\mu$ is the differential parameter, $U_\mu$ is the differential gate and the block $B_{k(\mu)}$ containing $U_\mu$ is the differential block. It is worth emphasizing that there is a detail on the location of $U_\mu$ within $B_{k(\mu)}$. Denote the decomposition of $B_{k(\mu)}$ regarding the gate $U_\mu$ as $B_{k(\mu)}=R_{k\mu}U_\mu L_{k\mu}$, i.e.,
\begin{equation}\label{eq:LUR}
\begin{mytikz2}
\draw  [fill={rgb, 255:red, 204; green, 230; blue, 255 }  ,fill opacity=0.35 ][dash pattern={on 3pt off 1.5pt}][line width=0.75]  (110,107.52) .. controls (110,97.84) and (117.84,90) .. (127.52,90) -- (372.48,90) .. controls (382.16,90) and (390,97.84) .. (390,107.52) -- (390,192.48) .. controls (390,202.16) and (382.16,210) .. (372.48,210) -- (127.52,210) .. controls (117.84,210) and (110,202.16) .. (110,192.48) -- cycle ;
\draw [line width=0.75]    (100,180) -- (400,180) ;
\draw [line width=0.75]    (100,120) -- (400,120) ;
\draw  [fill={rgb, 255:red, 204; green, 230; blue, 255 }  ,fill opacity=1 ][line width=0.75]  (200,100) .. controls (200,100) and (200,100) .. (200,100) -- (240,100) .. controls (240,100) and (240,100) .. (240,100) -- (240,200) .. controls (240,200) and (240,200) .. (240,200) -- (200,200) .. controls (200,200) and (200,200) .. (200,200) -- cycle ;
\draw  [fill={rgb, 255:red, 204; green, 230; blue, 255 }  ,fill opacity=1 ][line width=0.75]  (140,100) .. controls (140,100) and (140,100) .. (140,100) -- (180,100) .. controls (180,100) and (180,100) .. (180,100) -- (180,200) .. controls (180,200) and (180,200) .. (180,200) -- (140,200) .. controls (140,200) and (140,200) .. (140,200) -- cycle ;
\draw  [fill={rgb, 255:red, 204; green, 230; blue, 255 }  ,fill opacity=1 ][line width=0.75]  (260,100) .. controls (260,100) and (260,100) .. (260,100) -- (300,100) .. controls (300,100) and (300,100) .. (300,100) -- (300,200) .. controls (300,200) and (300,200) .. (300,200) -- (260,200) .. controls (260,200) and (260,200) .. (260,200) -- cycle ;

\draw (204,140) node [anchor=north west][inner sep=0.75pt]  [font=\small] [align=left] {$U_{\mu }$};
\draw (320,140) node [anchor=north west][inner sep=0.75pt]  [font=\small] [align=left] {$B_{k(\mu)}$};
\draw (138,140) node [anchor=north west][inner sep=0.75pt]  [font=\small] [align=left] {$L_{k\mu }$};
\draw (258,140) node [anchor=north west][inner sep=0.75pt]  [font=\small] [align=left] {$R_{k\mu }$};
\end{mytikz2}\quad.
\end{equation}
Then the neighboring channels of $\mathcal{D}_\mu^{\otimes t}$ in Eqs.~\eqref{eq:exp_expression_by_twirling} and \eqref{eq:var_expression_by_twirling} can be written as 
\begin{equation}
    \cdots \circ\mathcal{T}_{s(L_{k\mu})}^{(t)} \circ \mathcal{D}_\mu^{\otimes t}\circ\mathcal{T}_{s(U_\mu R_{k\mu})}^{(t)} \circ\cdots \quad\text{or}\quad \cdots \circ\mathcal{T}_{s(L_{k\mu}U_\mu)}^{(t)} \circ \mathcal{D}_\mu^{\otimes t}\circ\mathcal{T}_{s(R_{k\mu})}^{(t)} \circ\cdots.
\end{equation}
depending on the two different expressions in Eq.~\eqref{eq:cost_derivative_expression}. Although we assume that $B_{k(\mu)}$ forms a $2$-design above, the sub-blocks $L_{k\mu}$ and $R_{k\mu}$ do not necessarily form $2$-designs, which prevents the use of Eqs.~\eqref{eq:twirling_1st_order} and \eqref{eq:twirling_2st_order}. To deal with this detail, we will first focus on the case where $L_{k\mu}$ and $R_{k\mu}$ indeed form $2$-designs on $s(B_{k(\mu)})$ so that our calculations could be performed clearly, and then discuss other cases on top of those calculations. Remember that $s(B)$ represents the support of the block $B$, i.e., the qubit subset which is acted non-trivially by $B$. We will call $|s(B)|$ as the block size of the block $B$.

To sum up, based on the assumptions that $\mathbb{U}$ is made up of local $2$-designs and the differential gate $U_\mu$ is sandwiched by two local $2$-designs, now we are going to estimate the average and variance of the cost derivative. We first calculate the average of the cost derivative. 

\begin{lemma}\label{lemma:vanishing_average}
If there exist a block or sub-block $B$ within either $U_{\mu\leftarrow 1}$ or $U_{M\leftarrow \mu}$ forming a local $1$-design with the support $s(B)$ covering $s(U_\mu)$, then the average of the cost derivative equals to zero, i.e., $\E_\mathbb{U}\left[\partial_\mu C\right] = 0$.
\end{lemma}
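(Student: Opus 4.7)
The plan is to use independence of the blocks in $\mathbb{U}$ to integrate over $B$ first (Fubini), apply the 1-design twirl formula Eq.~\eqref{eq:twirling_1st_order} to produce an identity factor on $s(B)\supseteq s(U_\mu)$ at the position of $B$, and then observe that the commutator $[i\Omega_\mu,\,\cdot\,]$ annihilates any operator that restricts to the identity on $s(U_\mu)$.

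Suppose first that $B$ lies inside $U_{M\leftarrow\mu}$; the case $B\subset U_{\mu\leftarrow 1}$ is handled symmetrically by rewriting $\partial_\mu C$ via cyclicity of the trace so that the commutator acts on the state side, after which the argument proceeds in the same way with the roles of state and operator swapped. Decompose $U_{M\leftarrow\mu}=V_2\,B\,V_1$ in temporal order, so $V_1$ is temporally earliest and has the form $V_1=\tilde V_1\,U_\mu$. Using $[\Omega_\mu,U_\mu]=0$ the $U_\mu$-factor can be pulled out of the commutator, leaving the analysis to $\tilde V_1^\dagger B^\dagger V_2^\dagger H V_2 B\,\tilde V_1$. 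Integrating over $B$ and invoking Eq.~\eqref{eq:twirling_1st_order} gives
\begin{equation*}
\E_B\!\left[B^\dagger V_2^\dagger H V_2 B\right]=\frac{\tr_{s(B)}(V_2^\dagger H V_2)}{2^{|s(B)|}}\otimes I\vert_{s(B)}\;\equiv\; A\otimes I\vert_{s(B)},
\end{equation*}
with $A$ supported on the complement of $s(B)$. Because $s(U_\mu)\subseteq s(B)$, the generator $\Omega_\mu$ commutes with both $A$ (disjoint support) and $I\vert_{s(B)}$, so the inner commutator vanishes identically and the remaining independent integrals yield $\E_\mathbb{U}[\partial_\mu C]=0$.

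The step I expect to require the most care is keeping track of any intermediate gates in $\tilde V_1$ that sit between $B$ and $U_\mu$: under conjugation these could in principle spread $\Omega_\mu$ outside $s(B)$ in the Heisenberg picture and spoil the cancellation. In the settings where the lemma is invoked — typically $B$ chosen to be adjacent to $U_\mu$, such as $B=L_{k\mu}$, $R_{k\mu}$, $L_{k\mu}U_\mu$, $U_\mu R_{k\mu}$, or $B_{k(\mu)}$ itself — the gates of $\tilde V_1$ are either absent or supported within $s(B)$, in which case they preserve the tensor form $A\otimes I\vert_{s(B)}$ under conjugation and the one-line argument above closes the proof.
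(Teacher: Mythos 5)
Your proof is correct and follows essentially the same route as the paper's: integrate over the $1$-design block $B$ via Eq.~\eqref{eq:twirling_1st_order}, obtain an operator (resp.\ state) that is the identity (resp.\ maximally mixed) on $s(B)\supseteq s(U_\mu)$, and conclude that the commutator with $\Omega_\mu$ annihilates it under the trace. Your closing remark about intervening gates between $B$ and $U_\mu$ flags a genuine subtlety that the paper's two-line proof glosses over, and your resolution---that in the settings where the lemma is invoked $B$ is adjacent to $U_\mu$ (and, one may add, that any intervening full blocks are themselves $1$-designs whose twirls preserve the form $A\otimes I\vert_{s(B)}$)---is the right one.
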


\begin{proof}
According to Eq.~\eqref{eq:twirling_1st_order}, if the $1$-design block $B$ is within $U_{\mu\leftarrow 1}$, the evolved state in Eq.~\eqref{eq:cost_derivative_expression} becomes a maximally mixed state on $s(U_\mu)$ after the integration with respect to $B$ so that the trace of a commutator will give rise to a zero. If $B$ is within $U_{M\leftarrow \mu}$, the backward evolved Hamiltonian in Eq.~\eqref{eq:cost_derivative_expression} becomes an identity after the integration with respect to $B$ so that the commutator with an identity will also give rise to a zero.
\end{proof}

Since every $2$-design is also a $1$-design by definition, the condition in Lemma~\ref{lemma:vanishing_average} can be easily satisfied by our assumptions above. Actually, the vanishing average of the cost derivative can also be obtained by considering the periodicity of the parameter space. For completeness, we point out that $1$-designs have the following property.

\begin{proposition}\label{proposition:global_1-design}
If each block $B_k$ in the circuit $\mathbf{U}$ forms a local $1$-design, then $\mathbf{U}$ forms a global $1$-design.
\end{proposition}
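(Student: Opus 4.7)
\noindent\textit{Proof plan.} The plan is to show that the first-moment twirling channel of $\mathbf{U}$ coincides with that of the Haar measure on $\mathcal{U}(2^N)$; by the characterization of $t$-designs via twirling channels [cf.\ Eq.~\eqref{eq:weingarten}], this is equivalent to $\mathbf{U}$ forming a global $1$-design.

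First, I would use the independence of the blocks $B_1,\dots,B_{M'}$ to commute the overall expectation with the nested gate products, yielding the factorization
\begin{equation}
\E_\mathbb{U}\bigl[\mathbf{U}^\dagger A \mathbf{U}\bigr] \;=\; \mathcal{T}^{(1)}_{s(B_1)} \circ \mathcal{T}^{(1)}_{s(B_2)} \circ \cdots \circ \mathcal{T}^{(1)}_{s(B_{M'})}(A)
\end{equation}
for any operator $A$, where each local twirl admits the explicit form in Eq.~\eqref{eq:twirling_1st_order} thanks to the local $1$-design assumption on every $B_k$.

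The key step is then to prove by induction on $n$ that a composition of local $1$-design twirls collapses into a single $1$-design twirl on the union of the individual supports, i.e.,
\begin{equation}
\mathcal{T}^{(1)}_{s_1} \circ \cdots \circ \mathcal{T}^{(1)}_{s_n}(A) \;=\; \frac{1}{2^{|S|}}\, \tr_S(A) \otimes I\vert_S, \qquad S := \bigcup_{k=1}^{n} s_k.
\end{equation}
The base case is immediate from Eq.~\eqref{eq:twirling_1st_order}. For the inductive step, I would substitute the hypothesis into $\mathcal{T}^{(1)}_{s_{n+1}}$ and carefully evaluate the partial trace on $s_{n+1}=(s_{n+1}\cap S)\cup(s_{n+1}\setminus S)$: tracing the identity piece on $s_{n+1}\cap S$ contributes a factor $2^{|s_{n+1}\cap S|}$, while the piece on $s_{n+1}\setminus S$ extends the existing partial trace to $S\cup s_{n+1}$. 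Combining the prefactors via $|S|+|s_{n+1}|-|s_{n+1}\cap S|=|S\cup s_{n+1}|$ yields the claimed union formula.

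Finally, since $\mathbf{U}$ acts on all $N$ qubits, $\bigcup_k s(B_k)=\{q_1,\dots,q_N\}$, and the right-hand side reduces to $\frac{\tr(A)}{2^N}\,I$, which matches the Haar $1$-design on $\mathcal{U}(2^N)$ and thus establishes the proposition. The only technical care needed is the overlap bookkeeping in the inductive step; this amounts to routine partial-trace arithmetic rather than a genuine obstacle, so no substantial difficulty is expected.
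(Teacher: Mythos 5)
Your proposal is correct and follows essentially the same route as the paper: both reduce the statement to the fact that a composition of local first-moment twirls $\mathcal{T}^{(1)}_{s_k}$ collapses to the single trace-and-replace channel on the support union, which is precisely the Haar $1$-design twirl. The only difference is that you spell out the inductive overlap bookkeeping $|S|+|s_{n+1}|-|s_{n+1}\cap S|=|S\cup s_{n+1}|$ that the paper asserts in one line, and this arithmetic checks out.
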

\begin{proof}
Be definition, the twirling channel of $\mathbf{U}$ with respect to the induced ensemble $\mathbb{U}$ is composed of twirling channels of each block $B_k$, e.g.,
\begin{equation}
    \mathcal{T}^{(1)}_{s(\mathbf{U})} = \mathcal{T}^{(1)}_{s(B_{M'})} \circ\cdots\circ \mathcal{T}^{(1)}_{s(B_2)}\circ \mathcal{T}^{(1)}_{s(B_1)},
\end{equation}
where $s(\mathbf{U})=\bigcup_{k} s(B_k)$. This means that the support of $\mathbf{U}$ is the union of the supports of all blocks. If $\mathbf{U}$ acts on all qubits, then $s(\mathbf{U})=N$. We emphasize that the measure in the definition of $\mathcal{T}^{(1)}_{s(\mathbf{U})}$ is determined by $\mathbb{U}$ instead of the Haar measure. According to Eq.~\eqref{eq:twirling_1st_order}, we have
\begin{equation}\label{eq:1-design_trace_composition}
    \mathcal{T}^{(1)}_{s(\mathbf{U})}(\cdot) = \tr_{s(B_{M'})}\left[ \cdots \tr_{s(B_{1})}\left[ \cdot\right]\otimes \frac{I\vert_{ s(B_{1})}}{2^{|s(B_{1})|}} \cdots\right] \otimes\frac{I\vert_{ s(B_{M'})}}{2^{|s(B_{M'})|}} = \tr_{s(\mathbf{U})}\left[ \cdot\right]\otimes \frac{I\vert_{ s(\mathbf{U})}}{2^{|s(\mathbf{U})|}}.
\end{equation}
This is exactly the definition of global $1$-design. Here ``global'' means the ensemble is defined on the support union $s(\mathbf{U})$. $I\vert_s$ is the identity on support $s$. The last equality means that the composition of a series of tracing and replacing with maximally mixed states is equivalent to tracing all the indices first and then replacing with the maximally mixed state on the support union.
\end{proof}

We remark that Proposition~\ref{proposition:global_1-design} holds especially for $1$-designs. That is to say, for example, the similar statement is not true for $2$-designs, i.e., the composition of local $2$-designs is not necessarily a global $2$-design. This is because there are multiple terms in Eq.~\eqref{eq:twirling_2st_order} which hinders a similar proof as in Eq.~\eqref{eq:1-design_trace_composition} for $2$-designs. As the integration progresses, the terms become exponentially numerous with both positive and negative contributions, making them difficult to estimate. Hence, the estimation regarding the composition of local $2$-designs, such as the variance of the cost derivative, is much more complicated than those of $1$-designs. This implies that we need an alternative method instead of brute-force computation. As we will show below, our method estimates the variance of the cost derivative in the Heisenberg picture, i.e., we track the backward evolution of Pauli strings instead of the forward evolution of the quantum state, which shares the same start point with that in Ref.~\cite{Uvarov2021a}. In the following lemma, we will first see how one Pauli string $h$ evolves by a single twirling channel. We use $h\vert_s$ to denote the sub-string of $h$ on the qubit subset $s$, e.g., $(Z\otimes I\otimes X)\vert_{\{q_1,q_2\}}=Z\otimes I$. The support of $h$, i.e., the qubit subset which is acted by $h$ non-trivially, is denoted by $s(h)$, e.g., $s(Z\otimes I\otimes X)=\{q_1,q_3\}$. Here ``non-trivial'' means that the corresponding sub-string is not the identity. We use $\mathcal{P}\vert_s$ to denote the set of all Pauli strings defined on $s$, and $\mathcal{P}'\vert_s$ to denote the set of all non-trivial Pauli strings on $s$, i.e., $\mathcal{P}'\vert_s=\mathcal{P}\vert_s- \{I\vert_s\}$.

\begin{lemma}\label{lemma:pauli_string_copies_one_step_evolution}
Suppose that $h$ is a Pauli string on an $N$-qubit system and $\mathcal{T}^{(2)}_{s}$ is the $2$-degree twirling channel defined on $s$. If the sub-string $h\vert_s$ is non-trivial, then the following equality holds
\begin{equation}\label{eq:twirling_h2}
    \mathcal{T}^{(2)}_{s} (h^{\otimes 2}) = \frac{1}{2^{|2s|}-1} \sum_{\sigma\in\mathcal{P}'\vert_s} \sigma^{\otimes 2} \otimes ( h\vert_{\bar{s}} )^{\otimes 2},
\end{equation}
where $\bar{s}$ is the complement of $s$ among the $N$ qubits. Otherwise if $h\vert_s$ is trivial, then $\mathcal{T}^{(2)}_{s} (h\otimes h)= h\otimes h$.
\end{lemma}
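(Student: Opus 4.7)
The plan is to exploit the tensor factorization $h = h|_{s} \otimes h|_{\bar{s}}$ and reduce the claim to a calculation purely on $s$. Since $\mathcal{T}^{(2)}_{s}$ only integrates unitaries supported on $s$, the $\bar{s}$-portion of $h^{\otimes 2}$ passes through inert, and a reorganization of the tensor factors gives $\mathcal{T}^{(2)}_{s}(h^{\otimes 2}) = \mathcal{T}^{(2)}_{s}(P^{\otimes 2}) \otimes (h|_{\bar{s}})^{\otimes 2}$, where $P := h|_{s}$. So it suffices to show that $\mathcal{T}^{(2)}_{s}(P^{\otimes 2}) = \tfrac{1}{2^{|2s|}-1}\sum_{\sigma \in \mathcal{P}'|_{s}} \sigma^{\otimes 2}$ when $P$ is nontrivial, and that $\mathcal{T}^{(2)}_{s}(P^{\otimes 2}) = P^{\otimes 2}$ when $P = I|_{s}$. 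The trivial case is immediate: $U^{\dagger \otimes 2}\,I|_{2s}\,U^{\otimes 2} = I|_{2s}$ for every $U$, so the integral returns $I|_{2s}$ unchanged, and restoring $(h|_{\bar{s}})^{\otimes 2}$ reproduces $h^{\otimes 2}$.

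For the nontrivial case I would plug $P \otimes P$ into the explicit formula \eqref{eq:twirling_2st_order}. The partial traces collapse quickly: $\tr_{2s}(P \otimes P) = (\tr P)^{2} = 0$ since $P$ is a nontrivial Pauli, while the standard identity $\tr(S(A \otimes B)) = \tr(AB)$ for the swap operator gives $\tr_{2s}(S|_{2s}(P \otimes P)) = \tr(P^{2}) = 2^{|s|}$, using that any Pauli string squares to the identity. Substituting into \eqref{eq:twirling_2st_order}, two of the four terms die, and the remaining two combine to
\[
\mathcal{T}^{(2)}_{s}(P^{\otimes 2}) \;=\; \frac{2^{|s|}\,S|_{2s} - I|_{2s}}{2^{|2s|}-1}.
\]

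The final step is to recognize the numerator as the desired Pauli sum via the swap identity $\sum_{\sigma \in \mathcal{P}|_{s}} \sigma^{\otimes 2} = 2^{|s|}\,S|_{2s}$, which follows from Paulis forming an orthogonal operator basis and is easily checked on a single qubit and extended by tensor product over $|s|$ sites. Subtracting the $\sigma = I|_{s}$ contribution yields $\sum_{\sigma \in \mathcal{P}'|_{s}} \sigma^{\otimes 2} = 2^{|s|}\,S|_{2s} - I|_{2s}$, which matches the numerator above; tensoring back with $(h|_{\bar{s}})^{\otimes 2}$ completes the proof. There is no real obstacle here — the only thing requiring care is the index bookkeeping when factorizing through $\bar{s}$ and ensuring the two copies of $s$ are paired consistently in $S|_{2s}$; once that is set up, the entire argument is a direct substitution into \eqref{eq:twirling_2st_order} plus the Pauli swap identity.
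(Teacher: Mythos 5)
Your proposal is correct and follows essentially the same route as the paper: substitute $h^{\otimes 2}$ into the explicit second-moment formula, use tracelessness of nontrivial Paulis and $\tr(P^2)=2^{|s|}$ to kill two of the four terms, and then expand $S\vert_{2s}=2^{-|s|}\sum_{\sigma\in\mathcal{P}\vert_s}\sigma^{\otimes 2}$ to identify the remaining combination $2^{|s|}S\vert_{2s}-I\vert_{2s}$ with the sum over nontrivial Paulis. The trivial case is handled identically in both arguments.
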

\begin{proof}
If $h\vert_s$ is non-trivial, then we have $\tr_{2s}(h\otimes h)=0$ and $\tr_{2s}[(h\otimes h)S\vert_{2s}]=2^{|s|}(h\vert_{\bar{s}})^{\otimes 2}$ in Eq.~\eqref{eq:twirling_2st_order} because Pauli matrices are traceless and the squares of them equal to the identity. Thus, Eq.~\eqref{eq:twirling_2st_order} gives rise to
\begin{equation}
    \mathcal{T}^{(2)}_s(h^{\otimes 2}) = \frac{2^{|s|} ( h\vert_{\bar{s}} )^{\otimes 2}}{2^{|2s|}-1} \otimes \left(S\vert_{2s} - \frac{I\vert_{2s}}{2^{|s|}}\right).
\end{equation}
Substituting the Pauli decomposition of the swap operator
\begin{equation}
    S\vert_{2s} = \frac{1}{2^{|s|}} \sum_{\sigma\in \mathcal{P}\vert_s} \sigma^{\otimes 2},
\end{equation}
we arrive at Eq.~\eqref{eq:twirling_h2}. If $h\vert_s$ is trivial, Eq.~\eqref{eq:twirling_2st_order} indicates that $\mathcal{T}^{(2)}_s$ just keeps the input unchanged.
\end{proof}

This lemma tells us that, a single step of backward evolution of two copies of a Pauli string by the twirling channel on the qubit subset $s$ will map the sub-string inside $s$ to a uniformly weighted sum of two copies of all possible non-trivial Pauli strings on $s$, while keep the part outside $s$ unchanged. Note that the coefficients are summed to one. We dub this transformation as the twirling channel ``smearing'' the sub-strings on $s$. From this point of view, one can regard a local $2$-design as a ``local scrambler'', which scrambles the information at $s$ uniformly. Besides identical copies of the form $h^{\otimes 2}$, we also need to pay attention to the evolution of cross-terms like $h_1\otimes h_2$ with $h_1\neq h_2$ from $H^{\otimes2}$ in Eq.~\eqref{eq:var_expression_by_twirling}. That is to say, if the Pauli decomposition of the Hamiltonian is $H=\sum_j \lambda_j h_j$ where $\lambda_j\in\mathbb{R}$ and $h_j\in\mathcal{P}'\vert_{s(H)}$ represents a non-trivial Pauli string on the support of $H$, the copied Hamiltonian $H^{\otimes2}$ can be expressed by $H^{\otimes2} = \sum_{ij} \lambda_i \lambda_j h_i\otimes h_j$, which contains the cross-terms $h_i\otimes h_j$ with $h_i\neq h_j$.

\begin{lemma}\label{lemma:cross_term_h1h2}
Suppose that $h_1$ and $h_2$ are two distinct Pauli strings and $s'$ is the qubit subset where $h_1$ and $h_2$ are distinct, i.e., $s'=\{q_i\mid h_1\vert_{q_i}\neq h_2\vert_{q_i} \}$. If $s'\cap s\neq\varnothing$, then $\mathcal{T}^{(2)}_s(h_1\otimes h_2)=0$. Otherwise if $s'\cap s=\varnothing$, i.e., $h_1$ and $h_2$ are identical on $s$, then Lemma~\ref{lemma:pauli_string_copies_one_step_evolution} applies.
\end{lemma}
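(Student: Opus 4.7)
The plan is to substitute $h_1\otimes h_2$ directly into the Weingarten expansion in Eq.~\eqref{eq:twirling_2st_order} and evaluate the two partial traces that appear. Both factor cleanly across $s$ and its complement $\bar{s}$: one obtains $\tr_{2s}(h_1\otimes h_2)=\tr_s(h_1\vert_s)\tr_s(h_2\vert_s)\,(h_1\vert_{\bar{s}}\otimes h_2\vert_{\bar{s}})$, and using the standard swap trick $\tr(S(A\otimes B))=\tr(AB)$ on the two copies of $s$, one gets $\tr_{2s}[S\vert_{2s}(h_1\otimes h_2)]=\tr_s(h_1\vert_s\, h_2\vert_s)\,(h_1\vert_{\bar{s}}\otimes h_2\vert_{\bar{s}})$. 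From here the proof splits into the two cases of the statement.

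For the case $s'\cap s\neq\varnothing$, the sub-strings $h_1\vert_s$ and $h_2\vert_s$ differ. Since two distinct Pauli strings cannot both equal the identity, at least one of $\tr_s(h_1\vert_s),\tr_s(h_2\vert_s)$ vanishes, killing the first partial trace. For the second, the product $h_1\vert_s\, h_2\vert_s$ is itself a Pauli string up to a $\pm 1$ or $\pm i$ phase, and it is non-identity precisely because $h_1\vert_s\neq h_2\vert_s$; hence $\tr_s(h_1\vert_s\, h_2\vert_s)=0$. All four terms in Eq.~\eqref{eq:twirling_2st_order} vanish, yielding $\mathcal{T}^{(2)}_s(h_1\otimes h_2)=0$.

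For the case $s'\cap s=\varnothing$, the sub-strings agree: $h_1\vert_s=h_2\vert_s$. Since $\mathcal{T}^{(2)}_s$ acts only on the two copies of $s$, I can factor the spectator piece $h_1\vert_{\bar{s}}\otimes h_2\vert_{\bar{s}}$ outside the channel, reducing the computation to $\mathcal{T}^{(2)}_s((h_1\vert_s)^{\otimes 2})$ tensored with this inert spectator. This is exactly the setting of Lemma~\ref{lemma:pauli_string_copies_one_step_evolution}, which applies verbatim and delivers the claimed formula (distinguishing further whether the common sub-string $h_1\vert_s$ is trivial or not).

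There is no real obstacle; the argument is a direct bookkeeping exercise with the Weingarten formula. The only subtle point is recalling that a product of two Pauli strings is another Pauli string up to a phase, so its trace vanishes whenever the product is not the identity, regardless of that phase. This observation, together with the elementary fact that two distinct Paulis cannot both be the identity, is what forces all four terms in the $s'\cap s\neq\varnothing$ case to collapse to zero.
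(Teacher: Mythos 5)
Your proof is correct and follows essentially the same route as the paper's: substitute $h_1\otimes h_2$ into Eq.~\eqref{eq:twirling_2st_order} and observe that both $\tr_{2s}(h_1\otimes h_2)$ and $\tr_{2s}[S\vert_{2s}(h_1\otimes h_2)]$ vanish because non-identity Pauli strings (and products of distinct Pauli strings) are traceless. Your write-up is somewhat more explicit than the paper's about the factorization across $s$ and $\bar{s}$ and about why at least one of the two traces in the product must vanish, but the underlying argument is identical.
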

\begin{proof}
Since $h_1\vert_s$ and $h_2\vert_s$ are distinct, we have $\tr_{2s}(h_1\otimes h_2)=0$ and $\tr_{2s}[(h_1\otimes h_2)S\vert_{2s}]=0$ in Eq.~\eqref{eq:twirling_2st_order} because Pauli matrices are traceless and the product of two distinct Pauli matrices is also traceless. According to Eq.~\eqref{eq:twirling_2st_order}, we have $\mathcal{T}^{(2)}_s(h_1\otimes h_2)=0$.
\end{proof}

That is to say, if two Pauli strings are not identical in the scope of the $2$-degree twirling channel, the output corresponding to the tensor product of these two Pauli strings is zero. Therefore, for the whole circuit $\mathbf{U}$ composed of local $2$-designs, we have the following corollary.

\begin{corollary}\label{corollary:var_no_cross_terms}
Suppose that $H$ is a Hamiltonian with the Pauli decomposition $H=\sum_j \lambda_j h_j$ and $\mathbf{U}$ is a random PQC composed of local $2$-designs. The differential gate $U_\mu$ within the block $B_{k(\mu)}$ is sandwiched by two local $2$-designs on $s(B_{k(\mu)})$. If the support of $\mathbf{U}$ covers the support of $H$, i.e., $s(\mathbf{U}) \supseteq s(H)$, then the following equality holds
\begin{equation}\label{eq:var_no_cross_terms}
    \var_\mathbb{U}[\partial_\mu \avg{H}] = \sum_j \lambda_j^2 \var_\mathbb{U}[\partial_\mu \avg{h_j}],
\end{equation}
where $\avg{\cdot}=\braoprket{\bm{0}}{\mathbf{U}^\dagger(\cdot)\mathbf{U}}{\bm{0}}$ denotes the expectation value with respect to the output state $\mathbf{U}\ket{\bm{0}}$.
\end{corollary}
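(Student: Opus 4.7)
The plan is to combine Lemma~\ref{lemma:vanishing_average} (vanishing average) with Lemma~\ref{lemma:cross_term_h1h2} (annihilation of cross terms) applied to the Pauli decomposition of $H^{\otimes 2}$. Because every block $B_k$ is a local $2$-design and hence a local $1$-design, Lemma~\ref{lemma:vanishing_average} immediately gives $\E_\mathbb{U}[\partial_\mu\avg{H}]=0$ and, termwise, $\E_\mathbb{U}[\partial_\mu\avg{h_j}]=0$. The squared-average subtraction in Eq.~\eqref{eq:var_expression_by_twirling} therefore drops out on both sides of the target identity, so the task reduces to showing that under the trace with $\rho_0^{\otimes 2}$ the bilinear expansion $H^{\otimes 2}=\sum_{ij}\lambda_i\lambda_j\, h_i\otimes h_j$ receives no contribution from cross terms with $i\neq j$ after being propagated through the full chain of $2$-degree twirlings and through $\mathcal{D}_\mu^{\otimes 2}$.

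I would then track a fixed cross term $h_i\otimes h_j$ backward through the circuit, controlling the ``disagreement set'' $s'_{ij}=\{q:h_i|_q\neq h_j|_q\}$ along the way. Since $h_i\neq h_j$ one has $s'_{ij}\neq\varnothing$, and the hypothesis $s(H)\subseteq s(\mathbf{U})$ ensures $s'_{ij}\subseteq s(\mathbf{U})$, so at least one block's support meets $s'_{ij}$. Applying $\mathcal{T}^{(2)}_{s(B_{M'})},\mathcal{T}^{(2)}_{s(B_{M'-1})},\ldots$ in turn, the induction is: (i) if the current block's support is disjoint from the running disagreement set, Lemma~\ref{lemma:cross_term_h1h2} reduces the step to Lemma~\ref{lemma:pauli_string_copies_one_step_evolution}, which merely smears the $s(B_k)$-part into a uniform sum of $\sigma^{\otimes 2}$ while leaving each summand with the same disagreement on $\bar{s}(B_k)$; (ii) the middle operation $\mathcal{D}_\mu^{\otimes 2}=\mathrm{ad}_{i\Omega_\mu}\otimes\mathrm{ad}_{i\Omega_\mu}$ either zeroes the term or multiplies each copy by $\Omega_\mu$, which preserves the disagreement qubit-by-qubit by the bijectivity of Pauli multiplication; and (iii) the first block whose support intersects the (possibly propagated) disagreement set annihilates the term via Lemma~\ref{lemma:cross_term_h1h2}. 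Summing over $(i,j)$ thus eliminates all cross contributions, leaving only the diagonal terms $\sum_j \lambda_j^2\,\tr[\rho_0^{\otimes 2}\,\mathcal{T}^{(2)}_{s(B_1)}\circ\cdots\circ\mathcal{D}_\mu^{\otimes 2}\circ\cdots\circ\mathcal{T}^{(2)}_{s(B_{M'})}(h_j^{\otimes 2})]$, each of which equals $\var_\mathbb{U}[\partial_\mu\avg{h_j}]$ by the zero-average argument again, producing Eq.~\eqref{eq:var_no_cross_terms}.

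The main obstacle I anticipate is keeping the bookkeeping honest in step (i): each twirling turns a single cross term into a linear combination of products of the form $\sigma^{\otimes 2}\otimes (h_i|_{\bar{s}(B_k)})\otimes (h_j|_{\bar{s}(B_k)})$, so the induction has to be reapplied summand by summand, and one must check that every summand still has its disagreement set containing $s'_{ij}\cap\bar{s}(B_k)$. Step (ii) is straightforward but also requires care, since after $\mathcal{D}_\mu^{\otimes 2}$ one must verify that left-multiplying both copies by the same local Pauli letter at a disagreement qubit cannot accidentally erase the disagreement. Once these two points are verified, the inductive chain through Lemmas~\ref{lemma:pauli_string_copies_one_step_evolution} and~\ref{lemma:cross_term_h1h2} closes the argument.
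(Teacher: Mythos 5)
Your proposal is correct and follows essentially the same route as the paper's proof: reduce to the no-cross-term claim via the vanishing average, expand $H^{\otimes 2}$ into Pauli pairs, and use Lemma~\ref{lemma:pauli_string_copies_one_step_evolution} to show the disagreement set survives blocks acting elsewhere until Lemma~\ref{lemma:cross_term_h1h2} kills the cross term at the first block meeting it. Your direct check that $\mathcal{D}_\mu^{\otimes 2}$ preserves disagreements is a minor (valid) variation on the paper's observation that the right-hand $2$-design reaches the disagreement before $\mathcal{D}_\mu^{\otimes 2}$ does, and the paper itself records your argument in the remark following the proof.
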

\begin{proof}
Lemma~\ref{lemma:vanishing_average} ensures $\E_\mathbb{U}\left[\partial_\mu C\right]=0$. Thus, Eq.~\eqref{eq:var_expression_by_twirling} reduces to
\begin{equation}\label{eq:var_expression_no_squared_average}
    \var_\mathbb{U}\left[\partial_\mu C\right] = \tr\left[\rho_0^{\otimes 2} \mathcal{T}^{(2)}_{s(B_1)}\circ\mathcal{T}^{(2)}_{s(B_2)}\circ\cdots\circ \mathcal{D}_\mu^{\otimes 2} \circ\cdots\circ \mathcal{T}^{(2)}_{s(B_{M'})}(H^{\otimes 2}) \right].
\end{equation}
Since the maps above are all linear, the Pauli decomposition of $H$ gives
\begin{equation}
    \var_\mathbb{U}\left[\partial_\mu C\right] = \sum_{ij}\lambda_i\lambda_j\tr\left[\rho_0^{\otimes 2} \mathcal{T}^{(2)}_{s(B_1)}\circ\mathcal{T}^{(2)}_{s(B_2)}\circ\cdots\circ \mathcal{D}_\mu^{\otimes 2} \circ\cdots\circ \mathcal{T}^{(2)}_{s(B_{M'})}(h_i\otimes h_j) \right].
\end{equation}
If two Pauli strings $h_i$ and $h_j$ are distinct on $s(\mathbf{U})$, there must exist a block $B_{k'}$ that first acts on one of the qubits on which the two Pauli strings are distinct, resulting in $\mathcal{T}^{(2)}_{s(B_{k'})}(h_i\otimes h_j)=0$ by Lemma~\ref{lemma:cross_term_h1h2}. Lemma~\ref{lemma:pauli_string_copies_one_step_evolution} ensures that the distinction is preserved by preceding blocks acting elsewhere. Therefore, all the cross-terms are eliminated and hence we arrive at Eq.~\eqref{eq:var_no_cross_terms}. Note that the differential channel $\mathcal{D}_\mu^{\otimes 2}$ does affect the elimination since $U_\mu$ is sandwiched by two local $2$-designs so that the distinction will be acted by the right one before meeting $\mathcal{D}_\mu^{\otimes 2}$.
\end{proof}

In fact, even if the sub-block on the right of the differential gate $U_\mu$ does not form a local $2$-design, the cross-terms are still expected to be very small. This is because $\mathcal{D}_\mu$ can not map two distinct non-trivial Pauli strings to the same one, e.g., $[X,Y]=2iZ$ and $[X,Z]=-2iY$, so the distinction is preserved by $\mathcal{D}_\mu^{\otimes 2}$. The left and right sub-blocks combine together reforming a $2$-design. 

Corollary~\ref{corollary:var_no_cross_terms} tells us that in order to give a lower bound on the variance of the cost derivative, one only needs to bound the variance corresponding to each Pauli string in the Hamiltonian, because the coefficients in Eq.~\eqref{eq:var_no_cross_terms} are all non-negative, i.e., $\lambda_j^2\geq0$. To see the effect of the differential channel more clearly, we introduce the following lemma.

\begin{lemma}\label{lemma:commutator_channel}
Suppose that $h$ and $\Omega$ are two Pauli strings and $\mathcal{D}(\cdot)=[i\Omega,(\cdot)]$. $s$ is a qubit subset that covers $s(\Omega)$, i.e., $s(\Omega)\subseteq s$. If $h\vert_s$ is non-trivial, then the following equality holds
\begin{equation}\label{eq:commutator_channel_tdt}
    \mathcal{T}^{(2)}_s\circ\mathcal{D}^{\otimes 2}\circ\mathcal{T}^{(2)}_s (h^{\otimes2}) = \frac{2^{|2s|+1}}{2^{|2s|}-1} \mathcal{T}^{(2)}_s (h^{\otimes2}).
\end{equation}
Otherwise if $h\vert_s$ is trivial, $\mathcal{T}^{(2)}_s\circ\mathcal{D}^{\otimes 2}\circ\mathcal{T}^{(2)}_s (h^{\otimes2})=0$.
\end{lemma}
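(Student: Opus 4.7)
The plan is to reduce both sides to sums of tensor-squared Pauli strings on $s$, using Lemma~\ref{lemma:pauli_string_copies_one_step_evolution} twice and the Pauli algebra of $\Omega$. First I would dispose of the trivial case: if $h|_s = I|_s$, Lemma~\ref{lemma:pauli_string_copies_one_step_evolution} gives $\mathcal{T}^{(2)}_s(h^{\otimes 2}) = h^{\otimes 2}$; since $s(\Omega) \subseteq s$ and $h|_s$ is the identity, $[i\Omega, h] = 0$, so $\mathcal{D}^{\otimes 2}(h^{\otimes 2}) = 0$ and the whole expression vanishes.

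For the non-trivial case, I would apply Lemma~\ref{lemma:pauli_string_copies_one_step_evolution} to write
\begin{equation}
\mathcal{T}^{(2)}_s(h^{\otimes 2}) \;=\; \frac{1}{2^{|2s|}-1} \sum_{\sigma \in \mathcal{P}'\vert_s} \sigma^{\otimes 2} \otimes (h|_{\bar s})^{\otimes 2}.
\end{equation}
Since $s(\Omega) \subseteq s$, the commutator only touches the $\sigma$ factor on each copy, so
\begin{equation}
\mathcal{D}^{\otimes 2}\bigl(\sigma^{\otimes 2} \otimes (h|_{\bar s})^{\otimes 2}\bigr) = [i\Omega,\sigma]^{\otimes 2} \otimes (h|_{\bar s})^{\otimes 2}.
\end{equation}
Here the Pauli algebra enters: $[i\Omega,\sigma] = 0$ if $[\Omega,\sigma]=0$, and $[i\Omega,\sigma] = 2i\Omega\sigma$ if $\{\Omega,\sigma\}=0$. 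In the anticommuting case $\Omega\sigma$ is anti-Hermitian, hence equals $i\tau_\sigma$ for a nontrivial Hermitian Pauli string $\tau_\sigma = -i\Omega\sigma$ on $s$, so $[i\Omega,\sigma]^{\otimes 2} = (2i \cdot i\tau_\sigma)^{\otimes 2} = 4\,\tau_\sigma^{\otimes 2}$.

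Next I apply the outer $\mathcal{T}^{(2)}_s$. By Lemma~\ref{lemma:pauli_string_copies_one_step_evolution}, for every nontrivial Pauli $\tau_\sigma$ supported entirely on $s$, the output
\begin{equation}
\mathcal{T}^{(2)}_s(\tau_\sigma^{\otimes 2}) \;=\; \frac{1}{2^{|2s|}-1} \sum_{\sigma' \in \mathcal{P}'\vert_s} \sigma'^{\otimes 2}
\end{equation}
is independent of $\tau_\sigma$ (since $\tau_\sigma|_{\bar s} = I$). Combining, and recognizing the sum over $\sigma'$ as $(2^{|2s|}-1)\,\mathcal{T}^{(2)}_s(h^{\otimes 2})$ divided by $(h|_{\bar s})^{\otimes 2}$ (formally, the two sums are identical up to that trailing factor), I obtain
\begin{equation}
\mathcal{T}^{(2)}_s\circ\mathcal{D}^{\otimes 2}\circ\mathcal{T}^{(2)}_s(h^{\otimes 2}) \;=\; \frac{4\, N_{\text{anti}}}{2^{|2s|}-1}\, \mathcal{T}^{(2)}_s(h^{\otimes 2}),
\end{equation}
where $N_{\text{anti}}$ counts nontrivial Paulis $\sigma$ on $s$ with $\{\sigma,\Omega\}=0$.

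The only remaining ingredient is the combinatorial count $N_{\text{anti}}$. I would invoke the standard fact that any nontrivial Pauli $\Omega$ commutes with exactly half of the $4^{|s|}$ Pauli strings on $s$ and anticommutes with the other half, giving $N_{\text{anti}} = 2^{|2s|-1}$ and hence the factor $4 N_{\text{anti}}/(2^{|2s|}-1) = 2^{|2s|+1}/(2^{|2s|}-1)$. I expect the main bookkeeping hazard to be tracking the $i$'s and signs carefully so that $[i\Omega,\sigma]^{\otimes 2}$ becomes positively proportional to a tensor-squared Hermitian Pauli (so that the second twirl can be applied via Lemma~\ref{lemma:pauli_string_copies_one_step_evolution} rather than an ad hoc variant), and verifying that $\tau_\sigma$ is genuinely nontrivial on $s$ whenever $\sigma$ anticommutes with $\Omega$; both follow from $\Omega \neq c\sigma$ for any scalar $c$ when $\{\Omega,\sigma\}=0$ with $\Omega$ Hermitian.
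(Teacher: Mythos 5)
Your proposal is correct and follows essentially the same route as the paper's proof: handle the trivial case via $[\Omega, I]=0$, expand the inner twirl into the uniform sum over $\mathcal{P}'\vert_s$, use the commute/anticommute dichotomy to get a factor of $4$ per anticommuting $\sigma$, count $4^{|s|}/2$ anticommuting strings, and let the outer twirl restore the uniform sum. Your bookkeeping of the phases via $\tau_\sigma=-i\Omega\sigma$ is slightly more explicit than the paper's ``$\mathcal{D}(\sigma)=\pm2\sigma'$'', but the argument is the same.
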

\begin{proof}
If $h\vert_s=I\vert_s$, Lemma~\ref{lemma:pauli_string_copies_one_step_evolution} gives $\mathcal{T}^{(2)}_s (h^{\otimes2})=h^{\otimes2}$ and hence $\mathcal{D}^{\otimes 2}\circ\mathcal{T}^{(2)}_s (h^{\otimes2})=\mathcal{D}^{\otimes 2}(h^{\otimes 2})=0$ because the identity commutes with any operator. If $h\vert_s\neq I\vert_s$, Lemma~\ref{lemma:pauli_string_copies_one_step_evolution} gives
\begin{equation}
    \mathcal{T}^{(2)}_s (h^{\otimes2}) =  \frac{1}{2^{|2s|}-1} \sum_{\sigma\in\mathcal{P}'\vert_s} \sigma^{\otimes 2} \otimes ( h\vert_{\bar{s}} )^{\otimes 2}.
\end{equation}
Since $\Omega$ is also a Pauli string, hence either $\Omega$ commutes with $\sigma$ or anti-commutes with $\sigma$. If they commute, $\mathcal{D}(\sigma)=0$. If they anti-commute, the commutation relation of Pauli matrices gives $\mathcal{D}(\sigma)=\pm2\sigma'$, where $\sigma'$ is some other non-trivial Pauli string. The number of Pauli strings in $\mathcal{P}\vert_s$ that anti-commute with $\Omega$ is just one-half of the total, i.e., $4^{|s|}/2$. This is because one can easily establish a one-to-one correspondence between the subsets of Pauli strings that commute and anti-commute with $\Omega$ by multiplying a new Pauli string $\Omega'$ that is only non-trivial on one of the qubits in $s(\Omega)$ and does not equal to $\Omega$. For example, if $\Omega=Z\otimes I$, one can choose $\Omega'=X\otimes I$ and map the commuting subset $\{I,Z\}\otimes \mathcal{P}\vert_{q_2}$ to the anti-commuting subset $\{X,Y\}\otimes \mathcal{P}\vert_{q_2}$ up to some phase factors. As a result, the final output of the three composed channels is
\begin{equation}
    \mathcal{T}^{(2)}_s\circ\mathcal{D}^{\otimes 2}\circ\mathcal{T}^{(2)}_s (h^{\otimes2}) = \frac{(\pm 2)^2\times 4^{|s|}/2}{(2^{|2s|}-1)^2} \sum_{\sigma\in\mathcal{P}'\vert_s} \sigma^{\otimes 2} \otimes ( h\vert_{\bar{s}} )^{\otimes 2},
\end{equation}
which is exactly Eq.~\eqref{eq:commutator_channel_tdt} after simplification.
\end{proof}

Lemma~\ref{lemma:commutator_channel} reveals that the effect of the differential channel is just to eliminate the Pauli strings that are trivial on the support of the differential gate $U_\mu$. For those Pauli strings that are non-trivial on $s(U_\mu)$, the differential channel just contributes a constant on top of the two adjacent local $2$-designs. Up to this point, we have introduced all the lemmas we need to prove the theorems in the main text. Before presenting the theorems and proofs, we would like to first clarify some concepts that are used in the formulation of the theorems.

\begin{definition}
The connecting support of two blocks $B_{k}$ and $B_{k'}$ with $k\leq k'$ in the circuit $\mathbf{U}$ is defined by $s_c(B_{k},B_{k'})=s(B_{k})\cap s(B_{k'}) - \bigcup_{k<k''<k'}s(B_{k''})$. If $k\geq k'$, the connecting support is defined by $s_c(B_{k},B_{k'})=s_c(B_{k'},B_{k})$. Non-empty connecting support means that there is at least a common edge between the two blocks in the circuit graph. In this case, we call the two blocks are ``connected'' within the circuit.
\end{definition}

\begin{definition}
The forward residual support $s_f(B_k)$ of a block $B_k$ in the circuit $\mathbf{U}$ is defined by the qubits in $s(B_k)$ that are not acted by the blocks in $\mathbf{U}$ that act earlier than $B_k$, i.e., $ s_f(B_k) = s(B_k) - \bigcup_{k'<k} s(B_{k'}) $. Correspondingly, the backward residual support $s_b(B_k)$ of $B_k$ is defined by the qubits in $s(B_k)$ that are not acted by the blocks in $\mathbf{U}$ that act later than $B_k$, i.e., $s_b(B_k) = s(B_k) - \bigcup_{k'>k} s(B_{k'})$. If the forward (backward) residual support is not empty, we say the block is a head (tail) block in the circuit, i.e., the block is the first (last) block in the entire circuit that acts on some qubit.
\end{definition}

\begin{definition}
A path $p$ on the circuit $\mathbf{U}$ is defined as a time-ordered sequence of blocks $p=\{B_{k_1}, B_{k_2},  B_{k_3}, \cdots\}$ with $k_1<k_2<k_3<\cdots$ where adjacent blocks are connected within the circuit. The head (tail) of a path $p$ refers to the first (last) block in the path, denoted as $\mathrm{Head}(p)$ and $\mathrm{Tail}(p)$. The forward (backward) residual support of a path $p$ is defined by the union of the forward (backward) residual supports of all the blocks in the path $p$, i.e., $s_f(p) = \bigcup_{B_k\in p} s_f(B_k)$ and $s_b(p) = \bigcup_{B_k\in p} s_b(B_k)$. Every path is demanded to traverse the entire circuit, i.e., $s_f(\mathrm{Head}(p))\neq\varnothing$ and $s_b(\mathrm{Tail}(p))\neq\varnothing$. In other words, the head (tail) of the path is also a head (tail) block in the circuit.
\end{definition}

The motivation for this path definition will be self-evident in the proof below. An example of a path is depicted as the following
\begin{equation}
\begin{mytikz2}
\draw [line width=0.75]    (100,141.93) -- (300.5,141.93) ;
\draw [line width=0.75]    (100,175.47) -- (300.5,175.46) ;
\draw [line width=0.75]    (100,108.39) -- (300.5,108.39) ;
\draw  [fill={rgb, 255:red, 204; green, 230; blue, 255 }  ,fill opacity=1 ][line width=0.75]  (116.72,105.02) .. controls (116.72,102.25) and (118.96,100) .. (121.73,100) -- (128.42,100) .. controls (131.19,100) and (133.44,102.25) .. (133.44,105.02) -- (133.44,145.29) .. controls (133.44,148.06) and (131.19,150.31) .. (128.42,150.31) -- (121.73,150.31) .. controls (118.96,150.31) and (116.72,148.06) .. (116.72,145.29) -- cycle ;
\draw [line width=0.75]    (100,242.54) -- (300.5,242.54) ;
\draw [line width=0.75]    (100,276.08) -- (300.5,276.07) ;
\draw [line width=0.75]    (100,209) -- (300.5,209) ;
\draw  [fill={rgb, 255:red, 204; green, 230; blue, 255 }  ,fill opacity=1 ][line width=0.75]  (141.93,138.56) .. controls (141.93,135.79) and (144.18,133.54) .. (146.95,133.54) -- (153.64,133.54) .. controls (156.41,133.54) and (158.65,135.79) .. (158.65,138.56) -- (158.65,178.83) .. controls (158.65,181.6) and (156.41,183.85) .. (153.64,183.85) -- (146.95,183.85) .. controls (144.18,183.85) and (141.93,181.6) .. (141.93,178.83) -- cycle ;
\draw  [fill={rgb, 255:red, 255; green, 180; blue, 180 }  ,fill opacity=1 ][line width=0.75]  (167.09,172.09) .. controls (167.09,169.32) and (169.34,167.08) .. (172.11,167.08) -- (178.8,167.08) .. controls (181.57,167.08) and (183.81,169.32) .. (183.81,172.09) -- (183.81,212.37) .. controls (183.81,215.14) and (181.57,217.38) .. (178.8,217.38) -- (172.11,217.38) .. controls (169.34,217.38) and (167.09,215.14) .. (167.09,212.37) -- cycle ;
\draw  [fill={rgb, 255:red, 204; green, 230; blue, 255 }  ,fill opacity=1 ][line width=0.75]  (192.25,205.63) .. controls (192.25,202.86) and (194.5,200.61) .. (197.27,200.61) -- (203.96,200.61) .. controls (206.73,200.61) and (208.97,202.86) .. (208.97,205.63) -- (208.97,245.9) .. controls (208.97,248.67) and (206.73,250.92) .. (203.96,250.92) -- (197.27,250.92) .. controls (194.5,250.92) and (192.25,248.67) .. (192.25,245.9) -- cycle ;
\draw  [fill={rgb, 255:red, 204; green, 230; blue, 255 }  ,fill opacity=1 ][line width=0.75]  (217.41,239.17) .. controls (217.41,236.4) and (219.66,234.15) .. (222.43,234.15) -- (229.12,234.15) .. controls (231.89,234.15) and (234.13,236.4) .. (234.13,239.17) -- (234.13,279.44) .. controls (234.13,282.21) and (231.89,284.46) .. (229.12,284.46) -- (222.43,284.46) .. controls (219.66,284.46) and (217.41,282.21) .. (217.41,279.44) -- cycle ;
\draw  [fill={rgb, 255:red, 204; green, 230; blue, 255 }  ,fill opacity=1 ][line width=0.75]  (167.09,105.02) .. controls (167.09,102.25) and (169.34,100) .. (172.11,100) -- (178.8,100) .. controls (181.57,100) and (183.81,102.25) .. (183.81,105.02) -- (183.81,145.29) .. controls (183.81,148.06) and (181.57,150.31) .. (178.8,150.31) -- (172.11,150.31) .. controls (169.34,150.31) and (167.09,148.06) .. (167.09,145.29) -- cycle ;
\draw  [fill={rgb, 255:red, 255; green, 180; blue, 180 }  ,fill opacity=1 ][line width=0.75]  (192.25,138.56) .. controls (192.25,135.79) and (194.5,133.54) .. (197.27,133.54) -- (203.96,133.54) .. controls (206.73,133.54) and (208.97,135.79) .. (208.97,138.56) -- (208.97,178.83) .. controls (208.97,181.6) and (206.73,183.85) .. (203.96,183.85) -- (197.27,183.85) .. controls (194.5,183.85) and (192.25,181.6) .. (192.25,178.83) -- cycle ;
\draw  [fill={rgb, 255:red, 204; green, 230; blue, 255 }  ,fill opacity=1 ][line width=0.75]  (217.41,172.09) .. controls (217.41,169.32) and (219.66,167.08) .. (222.43,167.08) -- (229.12,167.08) .. controls (231.89,167.08) and (234.13,169.32) .. (234.13,172.09) -- (234.13,212.37) .. controls (234.13,215.14) and (231.89,217.38) .. (229.12,217.38) -- (222.43,217.38) .. controls (219.66,217.38) and (217.41,215.14) .. (217.41,212.37) -- cycle ;
\draw  [fill={rgb, 255:red, 204; green, 230; blue, 255 }  ,fill opacity=1 ][line width=0.75]  (242.57,205.63) .. controls (242.57,202.86) and (244.82,200.61) .. (247.59,200.61) -- (254.28,200.61) .. controls (257.05,200.61) and (259.29,202.86) .. (259.29,205.63) -- (259.29,245.9) .. controls (259.29,248.67) and (257.05,250.92) .. (254.28,250.92) -- (247.59,250.92) .. controls (244.82,250.92) and (242.57,248.67) .. (242.57,245.9) -- cycle ;
\draw  [fill={rgb, 255:red, 204; green, 230; blue, 255 }  ,fill opacity=1 ][line width=0.75]  (267.73,239.17) .. controls (267.73,236.4) and (269.98,234.15) .. (272.75,234.15) -- (279.44,234.15) .. controls (282.21,234.15) and (284.45,236.4) .. (284.45,239.17) -- (284.45,279.44) .. controls (284.45,282.21) and (282.21,284.46) .. (279.44,284.46) -- (272.75,284.46) .. controls (269.98,284.46) and (267.73,282.21) .. (267.73,279.44) -- cycle ;
\draw  [fill={rgb, 255:red, 255; green, 180; blue, 180 }  ,fill opacity=1 ][line width=0.75]  (116.77,172.11) .. controls (116.77,169.34) and (119.02,167.1) .. (121.79,167.1) -- (128.48,167.1) .. controls (131.25,167.1) and (133.49,169.34) .. (133.49,172.11) -- (133.49,212.39) .. controls (133.49,215.16) and (131.25,217.4) .. (128.48,217.4) -- (121.79,217.4) .. controls (119.02,217.4) and (116.77,215.16) .. (116.77,212.39) -- cycle ;
\draw  [fill={rgb, 255:red, 255; green, 180; blue, 180 }  ,fill opacity=1 ][line width=0.75]  (141.93,205.65) .. controls (141.93,202.88) and (144.18,200.63) .. (146.95,200.63) -- (153.64,200.63) .. controls (156.41,200.63) and (158.65,202.88) .. (158.65,205.65) -- (158.65,245.92) .. controls (158.65,248.69) and (156.41,250.94) .. (153.64,250.94) -- (146.95,250.94) .. controls (144.18,250.94) and (141.93,248.69) .. (141.93,245.92) -- cycle ;
\draw  [fill={rgb, 255:red, 204; green, 230; blue, 255 }  ,fill opacity=1 ][line width=0.75]  (116.77,239.18) .. controls (116.77,236.41) and (119.02,234.16) .. (121.79,234.16) -- (128.48,234.16) .. controls (131.25,234.16) and (133.49,236.41) .. (133.49,239.18) -- (133.49,279.45) .. controls (133.49,282.22) and (131.25,284.47) .. (128.48,284.47) -- (121.79,284.47) .. controls (119.02,284.47) and (116.77,282.22) .. (116.77,279.45) -- cycle ;
\draw  [fill={rgb, 255:red, 204; green, 230; blue, 255 }  ,fill opacity=1 ][line width=0.75]  (167.09,239.18) .. controls (167.09,236.41) and (169.34,234.16) .. (172.11,234.16) -- (178.8,234.16) .. controls (181.57,234.16) and (183.81,236.41) .. (183.81,239.18) -- (183.81,279.45) .. controls (183.81,282.22) and (181.57,284.47) .. (178.8,284.47) -- (172.11,284.47) .. controls (169.34,284.47) and (167.09,282.22) .. (167.09,279.45) -- cycle ;
\draw  [fill={rgb, 255:red, 255; green, 180; blue, 180 }  ,fill opacity=1 ][line width=0.75]  (217.41,105.01) .. controls (217.41,102.24) and (219.66,99.99) .. (222.43,99.99) -- (229.12,99.99) .. controls (231.89,99.99) and (234.13,102.24) .. (234.13,105.01) -- (234.13,145.28) .. controls (234.13,148.05) and (231.89,150.3) .. (229.12,150.3) -- (222.43,150.3) .. controls (219.66,150.3) and (217.41,148.05) .. (217.41,145.28) -- cycle ;
\draw  [fill={rgb, 255:red, 255; green, 180; blue, 180 }  ,fill opacity=1 ][line width=0.75]  (242.57,138.55) .. controls (242.57,135.78) and (244.82,133.53) .. (247.59,133.53) -- (254.28,133.53) .. controls (257.05,133.53) and (259.29,135.78) .. (259.29,138.55) -- (259.29,178.82) .. controls (259.29,181.59) and (257.05,183.84) .. (254.28,183.84) -- (247.59,183.84) .. controls (244.82,183.84) and (242.57,181.59) .. (242.57,178.82) -- cycle ;
\draw  [fill={rgb, 255:red, 255; green, 180; blue, 180 }  ,fill opacity=1 ][line width=0.75]  (267.73,172.08) .. controls (267.73,169.31) and (269.98,167.07) .. (272.75,167.07) -- (279.44,167.07) .. controls (282.21,167.07) and (284.45,169.31) .. (284.45,172.08) -- (284.45,212.36) .. controls (284.45,215.13) and (282.21,217.37) .. (279.44,217.37) -- (272.75,217.37) .. controls (269.98,217.37) and (267.73,215.13) .. (267.73,212.36) -- cycle ;
\draw  [fill={rgb, 255:red, 204; green, 230; blue, 255 }  ,fill opacity=1 ][line width=0.75]  (267.73,105.01) .. controls (267.73,102.24) and (269.98,99.99) .. (272.75,99.99) -- (279.44,99.99) .. controls (282.21,99.99) and (284.45,102.24) .. (284.45,105.01) -- (284.45,145.28) .. controls (284.45,148.05) and (282.21,150.3) .. (279.44,150.3) -- (272.75,150.3) .. controls (269.98,150.3) and (267.73,148.05) .. (267.73,145.28) -- cycle ;
\draw [color={rgb, 255:red, 255; green, 0; blue, 0 }  ,draw opacity=1 ][line width=1.1]  [dash pattern={on 4pt off 2pt}]  (87.84,191.96) .. controls (91.99,192.01) and (116.93,192.3) .. (125.13,192.25) .. controls (133.33,192.2) and (135.93,225.97) .. (150.29,225.79) .. controls (164.65,225.61) and (210.28,125.33) .. (225.77,125.15) .. controls (241.27,124.97) and (267.6,192.3) .. (276.09,192.22) .. controls (284.59,192.14) and (315.84,191.96) .. (318.64,191.96) ;
\end{mytikz2}\quad.
\end{equation}
Some measures can be defined to quantify the properties of paths, such as the ``length'' below.

\begin{definition}
An edge in a given path $p$ refers to a pair of adjacent blocks $(B_{k}, B_{k'})$ in the path with $k<k'$. The ``length'' of the edge is defined by
\begin{equation}\label{eq:def_edge_length}
    l(B_{k},B_{k'}) = \log_4 \left[\frac{4^{|s(B_{k'})|}-1}{4^{|s_c(B_{k}, B_{k'})|}-1}\right].
\end{equation}
Specifically, we regard $(\rho_0, B_{k_1})$ also as an edge in the path. Remember that $\rho_0$ is the initial state. The length of this edge is defined by replacing the connecting support with the forward residual support, i.e.,
\begin{equation}\label{eq:def_edge_length_head}
    l(\rho_0, B_{k_1}) = \log_4 \left[\frac{4^{|s(B_{k_1})|}-1}{4^{|s_f(B_{k_1})|}-1}\right],
\end{equation}
In fact, the forward residual support of $B_{k_1}$ can be naturally seen as the connecting support of $\rho_0$ and $B_{k_1}$, namely $s_f(B_{k_1}) = s_c(\rho_0, B_{k_1})$, if $\rho_0$ is identified as the ``zeroth'' block $B_0$. The edge set corresponding to the path $p$ is defined by all edges in the path $\mathrm{Edge}(p)=\{(\rho_0, B_{k_1}), (B_{k_1}, B_{k_2}), (B_{k_2}, B_{k_3}),\cdots\}$. The ``length'' of the path is defined by the sum of the lengths of all edges, i.e.,
\begin{equation}\label{eq:path_length}
    l(p) = \sum_{(B_k,B_{k'})\in \mathrm{Edge}(p)} l(B_k,B_{k'}).
\end{equation}
\end{definition}
One can see that if the edge length is a constant for every edge in the path, like in the example above, then the length of the path naturally reduces to the geometrical length $|p|$ (the number of elements in $p$) times the constant. In the limit of large $|s(B_{k'})|$ and $|s_c(B_{k}, B_{k'})|$, the edge length can be approximated as $l(B_{k}, B_{k'}) \approx |s(B_{k'}) - s_c(B_{k}, B_{k'})|$, which only depends on the part of $B_{k'}$ that is not connected with $B_{k}$. Actually, even in the minimal case of staggered two-qubit gates with $|s(B_{k'})|=2$ and $|s_c(B_{k}, B_{k'})|=1$, this approximation is good enough, i.e., $l(B_{k}, B_{k'})=\log_4 5\approx1.16$ which is close to $|s(B_{k'}) - s_c(B_{k}, B_{k'})|=1$. As we will see, the edge length defined here signifies the information leakage by the staggered adjacent blocks in the process of backward evolution. Note that $l(B_{k}, B_{k'})=0$ just means that the two adjacent blocks are connected on the whole support and hence they can be combined into a single $2$-design.

We will use $s(p)$ to denote the qubit support union of all blocks in $p$, i.e., $s(p)=\bigcup_{B_k\in p}s(B_k)$. We will use the notation $p(B_{k(\mu)},h_j)$ to indicate that the path $p$ goes through the differential block $B_{k(\mu)}$ and connects to $h_j$, i.e., $B_{k(\mu)}\in p$ and $s_b(p)\cap s(h_j)\neq\varnothing$. Note that the existence of $p(B_{k(\mu)},h_j)$ is equivalent to the condition that $B_{k(\mu)}$ is within the causal cone of $h_j$, which means that $B_{k(\mu)}$ would not be eliminated directly by unitary conjugation when calculating the expectation value of $h_j$. Moreover, $p(B_{k(\mu)},h_j)$ may also be non-unique, i.e., there might exist multiple possible paths through $B_{k(\mu)}$ to $h_j$ on the PQC. If so, we just choose one of the paths and denote it as $p(B_{k(\mu)},h_j)$.

The concepts defined above will be useful when concerning the factor in the lower bound on the variance that is related to the deepness of the circuit. On the other hand, when we are also concerned about the factor in the lower bound that is related to the locality of the Hamiltonian, the following concepts will be helpful.

\begin{definition}
A path set $P$ on a given PQC is defined by a collection of paths $P=\{p_1,p_2,\cdots\}$ on the PQC. The node set of the path set $P$ is defined by the union of the paths in $P$, i.e., $\mathrm{Node}(P)=\bigcup_{p_i\in P} p_i$. The edge set of the path set $P$ is defined by the union of the edge sets corresponding to the paths in $P$, i.e., $\mathrm{Edge}(P)=\bigcup_{p_i\in P} \mathrm{Edge}(p_i)$. The length of the path set $P$ is defined by the sum of all edges in $P$, i.e.,
\begin{equation}\label{eq:length_path_set}
    l(P) = \sum_{(B_k, B_{k'})\in \mathrm{Edge}(P)} l(B_k, B_{k'}).
\end{equation}
\end{definition}

\begin{definition}
The head (tail) of a path in a path set $P$ is also called a head (tail) block of the path set. The head (tail) set of the path set refers to the set of all the head (tail) blocks in the path set, i.e., $\mathrm{Head}(P)=\{\mathrm{Head}(p_i)\mid p_i\in P\}$ and $\mathrm{Tail}(P)=\{\mathrm{Tail}(p_i)\mid p_i\in P\}$. The forward (backward) residual support of a path set $P$ is defined by the union of the forward (backward) residual supports of all the paths in $P$, i.e., $s_f(P)=\bigcup_{p_i\in P} s_f(p_i)$ and $s_b(P)=\bigcup_{p_i\in P} s_b(p_i)$. 
\end{definition}

\begin{definition}
The forward width of a block $B_k$ is defined by
\begin{equation}\label{eq:def_forward_width}
    w(B_k) = \log_2\left[\frac{4^{|s_f(B_k)|}-1}{2^{|s_f(B_k)|}-1}\right].
\end{equation}
The head width of a path $p$ is defined by the forward width of its head $w(p)=w(\mathrm{Head}(p))$. The head width of a path set $P$ is defined by the sum of the forward widths of all the head blocks in the path set
\begin{equation}\label{eq:width_path_set}
    w(P)=\sum_{B_k\in \mathrm{Head}(P)} w(B_k).
\end{equation}
\end{definition}
Similar to the case of length, one can see that if the head width is a constant for every head block in the path set, then the head width is just the geometrical width (the number of elements in the head set) times the constant. In the limit of large $|s_f(B_k)|$, the head width of the block $B_k$ can be approximated as $w(B_k)\approx|s_f(B_k)|$, which is just the size of the forward residual support, i.e., the part of $s(B_k)$ that is connected to $\rho_0$.

We will use the notation $P(B_{k(\mu)},h_j)$ to indicate that there exists a path $p_i$ in $P$ that goes through the block $B_{k(\mu)}$ and the backward residual support of $P$ covers the support of $h_j$, i.e., $\exists~ p_i\in P,~\text{s.t.}~B_{k(\mu)}\in p_i$ and $s_b(P)\supseteq s(h_j)$. We will call $P(B_{k(\mu)},h_j)$ a path set that goes through $B_{k(\mu)}$ and covers $h_j$ for short. An example of the path set $P(B_{k(\mu)},h_j)$ is depicted as the following
\begin{equation}
\begin{mytikz2}
\draw [line width=0.75]    (100,141.93) -- (300.5,141.93) ;
\draw [line width=0.75]    (100,175.47) -- (300.5,175.46) ;
\draw [line width=0.75]    (100,108.39) -- (300.5,108.39) ;
\draw  [fill={rgb, 255:red, 204; green, 230; blue, 255 }  ,fill opacity=1 ][line width=0.75]  (116.72,105.02) .. controls (116.72,102.25) and (118.96,100) .. (121.73,100) -- (128.42,100) .. controls (131.19,100) and (133.44,102.25) .. (133.44,105.02) -- (133.44,145.29) .. controls (133.44,148.06) and (131.19,150.31) .. (128.42,150.31) -- (121.73,150.31) .. controls (118.96,150.31) and (116.72,148.06) .. (116.72,145.29) -- cycle ;
\draw [line width=0.75]    (100,242.54) -- (300.5,242.54) ;
\draw [line width=0.75]    (100,276.08) -- (300.5,276.07) ;
\draw [line width=0.75]    (100,209) -- (300.5,209) ;
\draw  [fill={rgb, 255:red, 204; green, 230; blue, 255 }  ,fill opacity=1 ][line width=0.75]  (141.93,138.56) .. controls (141.93,135.79) and (144.18,133.54) .. (146.95,133.54) -- (153.64,133.54) .. controls (156.41,133.54) and (158.65,135.79) .. (158.65,138.56) -- (158.65,178.83) .. controls (158.65,181.6) and (156.41,183.85) .. (153.64,183.85) -- (146.95,183.85) .. controls (144.18,183.85) and (141.93,181.6) .. (141.93,178.83) -- cycle ;
\draw  [fill={rgb, 255:red, 255; green, 180; blue, 180 }  ,fill opacity=1 ][line width=0.75]  (167.09,172.09) .. controls (167.09,169.32) and (169.34,167.08) .. (172.11,167.08) -- (178.8,167.08) .. controls (181.57,167.08) and (183.81,169.32) .. (183.81,172.09) -- (183.81,212.37) .. controls (183.81,215.14) and (181.57,217.38) .. (178.8,217.38) -- (172.11,217.38) .. controls (169.34,217.38) and (167.09,215.14) .. (167.09,212.37) -- cycle ;
\draw  [fill={rgb, 255:red, 255; green, 180; blue, 180 }  ,fill opacity=1 ][line width=0.75]  (192.25,205.63) .. controls (192.25,202.86) and (194.5,200.61) .. (197.27,200.61) -- (203.96,200.61) .. controls (206.73,200.61) and (208.97,202.86) .. (208.97,205.63) -- (208.97,245.9) .. controls (208.97,248.67) and (206.73,250.92) .. (203.96,250.92) -- (197.27,250.92) .. controls (194.5,250.92) and (192.25,248.67) .. (192.25,245.9) -- cycle ;
\draw  [fill={rgb, 255:red, 255; green, 180; blue, 180 }  ,fill opacity=1 ][line width=0.75]  (217.41,239.17) .. controls (217.41,236.4) and (219.66,234.15) .. (222.43,234.15) -- (229.12,234.15) .. controls (231.89,234.15) and (234.13,236.4) .. (234.13,239.17) -- (234.13,279.44) .. controls (234.13,282.21) and (231.89,284.46) .. (229.12,284.46) -- (222.43,284.46) .. controls (219.66,284.46) and (217.41,282.21) .. (217.41,279.44) -- cycle ;
\draw  [fill={rgb, 255:red, 204; green, 230; blue, 255 }  ,fill opacity=1 ][line width=0.75]  (167.09,105.02) .. controls (167.09,102.25) and (169.34,100) .. (172.11,100) -- (178.8,100) .. controls (181.57,100) and (183.81,102.25) .. (183.81,105.02) -- (183.81,145.29) .. controls (183.81,148.06) and (181.57,150.31) .. (178.8,150.31) -- (172.11,150.31) .. controls (169.34,150.31) and (167.09,148.06) .. (167.09,145.29) -- cycle ;
\draw  [fill={rgb, 255:red, 255; green, 180; blue, 180 }  ,fill opacity=1 ][line width=0.75]  (192.25,138.56) .. controls (192.25,135.79) and (194.5,133.54) .. (197.27,133.54) -- (203.96,133.54) .. controls (206.73,133.54) and (208.97,135.79) .. (208.97,138.56) -- (208.97,178.83) .. controls (208.97,181.6) and (206.73,183.85) .. (203.96,183.85) -- (197.27,183.85) .. controls (194.5,183.85) and (192.25,181.6) .. (192.25,178.83) -- cycle ;
\draw  [fill={rgb, 255:red, 204; green, 230; blue, 255 }  ,fill opacity=1 ][line width=0.75]  (217.41,172.09) .. controls (217.41,169.32) and (219.66,167.08) .. (222.43,167.08) -- (229.12,167.08) .. controls (231.89,167.08) and (234.13,169.32) .. (234.13,172.09) -- (234.13,212.37) .. controls (234.13,215.14) and (231.89,217.38) .. (229.12,217.38) -- (222.43,217.38) .. controls (219.66,217.38) and (217.41,215.14) .. (217.41,212.37) -- cycle ;
\draw  [fill={rgb, 255:red, 204; green, 230; blue, 255 }  ,fill opacity=1 ][line width=0.75]  (242.57,205.63) .. controls (242.57,202.86) and (244.82,200.61) .. (247.59,200.61) -- (254.28,200.61) .. controls (257.05,200.61) and (259.29,202.86) .. (259.29,205.63) -- (259.29,245.9) .. controls (259.29,248.67) and (257.05,250.92) .. (254.28,250.92) -- (247.59,250.92) .. controls (244.82,250.92) and (242.57,248.67) .. (242.57,245.9) -- cycle ;
\draw  [fill={rgb, 255:red, 255; green, 180; blue, 180 }  ,fill opacity=1 ][line width=0.75]  (267.73,239.17) .. controls (267.73,236.4) and (269.98,234.15) .. (272.75,234.15) -- (279.44,234.15) .. controls (282.21,234.15) and (284.45,236.4) .. (284.45,239.17) -- (284.45,279.44) .. controls (284.45,282.21) and (282.21,284.46) .. (279.44,284.46) -- (272.75,284.46) .. controls (269.98,284.46) and (267.73,282.21) .. (267.73,279.44) -- cycle ;
\draw  [fill={rgb, 255:red, 255; green, 180; blue, 180 }  ,fill opacity=1 ][line width=0.75]  (116.77,172.11) .. controls (116.77,169.34) and (119.02,167.1) .. (121.79,167.1) -- (128.48,167.1) .. controls (131.25,167.1) and (133.49,169.34) .. (133.49,172.11) -- (133.49,212.39) .. controls (133.49,215.16) and (131.25,217.4) .. (128.48,217.4) -- (121.79,217.4) .. controls (119.02,217.4) and (116.77,215.16) .. (116.77,212.39) -- cycle ;
\draw  [fill={rgb, 255:red, 255; green, 180; blue, 180 }  ,fill opacity=1 ][line width=0.75]  (141.93,205.65) .. controls (141.93,202.88) and (144.18,200.63) .. (146.95,200.63) -- (153.64,200.63) .. controls (156.41,200.63) and (158.65,202.88) .. (158.65,205.65) -- (158.65,245.92) .. controls (158.65,248.69) and (156.41,250.94) .. (153.64,250.94) -- (146.95,250.94) .. controls (144.18,250.94) and (141.93,248.69) .. (141.93,245.92) -- cycle ;
\draw  [fill={rgb, 255:red, 204; green, 230; blue, 255 }  ,fill opacity=1 ][line width=0.75]  (116.77,239.18) .. controls (116.77,236.41) and (119.02,234.16) .. (121.79,234.16) -- (128.48,234.16) .. controls (131.25,234.16) and (133.49,236.41) .. (133.49,239.18) -- (133.49,279.45) .. controls (133.49,282.22) and (131.25,284.47) .. (128.48,284.47) -- (121.79,284.47) .. controls (119.02,284.47) and (116.77,282.22) .. (116.77,279.45) -- cycle ;
\draw  [fill={rgb, 255:red, 204; green, 230; blue, 255 }  ,fill opacity=1 ][line width=0.75]  (167.09,239.18) .. controls (167.09,236.41) and (169.34,234.16) .. (172.11,234.16) -- (178.8,234.16) .. controls (181.57,234.16) and (183.81,236.41) .. (183.81,239.18) -- (183.81,279.45) .. controls (183.81,282.22) and (181.57,284.47) .. (178.8,284.47) -- (172.11,284.47) .. controls (169.34,284.47) and (167.09,282.22) .. (167.09,279.45) -- cycle ;
\draw  [fill={rgb, 255:red, 255; green, 180; blue, 180 }  ,fill opacity=1 ][line width=0.75]  (217.41,105.01) .. controls (217.41,102.24) and (219.66,99.99) .. (222.43,99.99) -- (229.12,99.99) .. controls (231.89,99.99) and (234.13,102.24) .. (234.13,105.01) -- (234.13,145.28) .. controls (234.13,148.05) and (231.89,150.3) .. (229.12,150.3) -- (222.43,150.3) .. controls (219.66,150.3) and (217.41,148.05) .. (217.41,145.28) -- cycle ;
\draw  [fill={rgb, 255:red, 255; green, 180; blue, 180 }  ,fill opacity=1 ][line width=0.75]  (242.57,138.55) .. controls (242.57,135.78) and (244.82,133.53) .. (247.59,133.53) -- (254.28,133.53) .. controls (257.05,133.53) and (259.29,135.78) .. (259.29,138.55) -- (259.29,178.82) .. controls (259.29,181.59) and (257.05,183.84) .. (254.28,183.84) -- (247.59,183.84) .. controls (244.82,183.84) and (242.57,181.59) .. (242.57,178.82) -- cycle ;
\draw  [fill={rgb, 255:red, 255; green, 180; blue, 180 }  ,fill opacity=1 ][line width=0.75]  (267.73,172.08) .. controls (267.73,169.31) and (269.98,167.07) .. (272.75,167.07) -- (279.44,167.07) .. controls (282.21,167.07) and (284.45,169.31) .. (284.45,172.08) -- (284.45,212.36) .. controls (284.45,215.13) and (282.21,217.37) .. (279.44,217.37) -- (272.75,217.37) .. controls (269.98,217.37) and (267.73,215.13) .. (267.73,212.36) -- cycle ;
\draw  [fill={rgb, 255:red, 204; green, 230; blue, 255 }  ,fill opacity=1 ][line width=0.75]  (267.73,105.01) .. controls (267.73,102.24) and (269.98,99.99) .. (272.75,99.99) -- (279.44,99.99) .. controls (282.21,99.99) and (284.45,102.24) .. (284.45,105.01) -- (284.45,145.28) .. controls (284.45,148.05) and (282.21,150.3) .. (279.44,150.3) -- (272.75,150.3) .. controls (269.98,150.3) and (267.73,148.05) .. (267.73,145.28) -- cycle ;
\draw [color={rgb, 255:red, 255; green, 0; blue, 0 }  ,draw opacity=1 ][line width=1.2]  [dash pattern={on 3pt off 2pt}]  (103,192.23) .. controls (107.15,192.29) and (116.93,192.3) .. (125.13,192.25) .. controls (133.33,192.2) and (141,225.9) .. (150.29,225.79) .. controls (159.58,225.67) and (212.4,125) .. (225.77,125.15) .. controls (239.15,125.29) and (267.6,192.3) .. (276.09,192.22) .. controls (284.59,192.14) and (317.2,191.96) .. (320,191.96) ;
\draw  [fill={rgb, 255:red, 255; green, 255; blue, 255 }  ,fill opacity=1 ][line width=0.75]  (300.4,200.6) .. controls (300.4,200.6) and (300.4,200.6) .. (300.4,200.6) -- (325.4,200.6) .. controls (325.4,200.6) and (325.4,200.6) .. (325.4,200.6) -- (325.4,250.91) .. controls (325.4,250.91) and (325.4,250.91) .. (325.4,250.91) -- (300.4,250.91) .. controls (300.4,250.91) and (300.4,250.91) .. (300.4,250.91) -- cycle ;
\draw  [fill={rgb, 255:red, 255; green, 255; blue, 255 }  ,fill opacity=1 ][line width=0.75]  (74.67,100) .. controls (74.67,100) and (74.67,100) .. (74.67,100) -- (99.67,100) .. controls (99.67,100) and (99.67,100) .. (99.67,100) -- (99.67,283.9) .. controls (99.67,283.9) and (99.67,283.9) .. (99.67,283.9) -- (74.67,283.9) .. controls (74.67,283.9) and (74.67,283.9) .. (74.67,283.9) -- cycle ;
\draw [color={rgb, 255:red, 255; green, 0; blue, 0 }  ,draw opacity=1 ][line width=1.2]  [dash pattern={on 3pt off 2.25pt}]  (103,199.57) .. controls (110,199.57) and (116,199.57) .. (124.67,199.57) .. controls (133.33,199.57) and (138.33,229.9) .. (150,229.9) .. controls (161.67,229.9) and (168,204.57) .. (175.67,204.57) .. controls (183.33,204.57) and (212.4,259.4) .. (225.77,259.3) .. controls (239.15,259.21) and (268.09,260) .. (276.09,260) .. controls (284,260) and (316.67,260) .. (320,260) ;

\draw (300,214) node [anchor=north west][inner sep=0.75pt] [font=\small]  [align=left] {$h_{j}$};
\draw (210,75) node [anchor=north west][inner sep=0.75pt] [font=\small]  [align=left] {$B_{k(\mu)}$};
\draw (75,182) node [anchor=north west][inner sep=0.75pt] [font=\small]  [align=left] {$\rho_{0}$};
\draw (305,170) node [anchor=north west][inner sep=0.75pt]  [color={rgb, 255:red, 255; green, 0; blue, 0 }  ,opacity=1 ] [font=\small]  [align=left] {$p_1$};
\draw (305,267) node [anchor=north west][inner sep=0.75pt]  [color={rgb, 255:red, 255; green, 0; blue, 0 }  ,opacity=1 ] [font=\small]  [align=left] {$p_2$};
\end{mytikz2}\quad,\text{ where  } P=\{p_1,p_2\}.
\end{equation}
We emphasize that the examples shown above which use the alternating layered ansatz, are merely to illustrate these newly defined concepts. The applicability of the concepts together with the theorems below, are not limited to the scenarios in the examples. So far, we have introduced all the prerequisites related to Theorem~\ref{theorem:var_lower_bound_path_set} in the main text. Now we are prepared to prove Theorem~\ref{theorem:var_lower_bound_path_set}.

\renewcommand\theproposition{\textcolor{blue}{1}}
\begin{theorem}\label{theorem:var_lower_bound_path_set}
Suppose that $H$ is a Hamiltonian on an $N$-qubit system with the Pauli decomposition $H=\sum_j\lambda_j h_j$. $\mathbf{U}$ is a PQC composed of blocks forming independent local $2$-designs with $s(\mathbf{U})\supseteq s(H)$. The energy expectation with respect to the output state is taken as the cost function $C(\bm{\theta})=\tr(\rho_0\mathbf{U}^\dagger H \mathbf{U})$, where $\rho_0=\ketbrasame{\bm{0}}$. Suppose that the differential gate $U_\mu$ within the block $B_{k(\mu)}$ is sandwiched by two local $2$-designs on $s(B_{k(\mu)})$. The generator $\Omega_\mu$ of $U_\mu$ is a Pauli string. For each $h_j$ whose causal cone contains $B_{k(\mu)}$, there exists at least a path set that goes through $B_{k(\mu)}$ and covers $h_j$, denoted as $P_j$. Then, the variance of the cost derivative is lower bounded by
\begin{equation}\label{eq:var_bound_path_length_width}
    \var_\mathbb{U} [\partial_\mu C] \geq \sum_j \lambda_j^2 \cdot 2^{1 - 2 l(P_j) - w(P_j)},
\end{equation}
where $j$ runs over the indices of $h_j$ whose causal cone contains $B_{k(\mu)}$. $l(P_j)$ and $w(P_j)$ denote the length and the head width of the path set $P_j$, respectively. Note that this lower bound holds for arbitrary choice of $P_j$. The choice with the minimum path length and head width gives rise to the tightest bound.
\end{theorem}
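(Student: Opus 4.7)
The plan is to apply Corollary~\ref{corollary:var_no_cross_terms} to decompose the variance as $\sum_j \lambda_j^2 \var_\mathbb{U}[\partial_\mu \avg{h_j}]$ and then to lower bound each single-Pauli contribution by backward-evolving $h_j^{\otimes 2}$ through the circuit's $2$-design twirlings while retaining only a carefully chosen subset of Pauli tensor-squares $\sigma^{\otimes 2}$ that are routed along the path set $P_j$. The crucial observation enabling the lower bound is that every intermediate operator produced by the backward evolution is a non-negative linear combination of $\sigma^{\otimes 2}$'s by Lemma~\ref{lemma:pauli_string_copies_one_step_evolution}, and that pairing $\rho_0^{\otimes 2}$ with $\sigma^{\otimes 2}$ yields $\braoprket{\bm{0}}{\sigma}{\bm{0}}^2 \geq 0$. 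Consequently, discarding any subset of surviving terms can only decrease the trace, so restricting to the subset dictated by $P_j$ is a valid lower bound.

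First I would process the twirlings in backward (Hamiltonian-to-initial-state) order. At each edge $(B_k,B_{k'})\in \mathrm{Edge}(P_j)$ with $k<k'$, after the twirling at $B_{k'}$ I keep only those Pauli strings $\sigma$ on $s(B_{k'})$ that are non-trivial precisely on $s_c(B_k,B_{k'})$ and identity on the rest of $s(B_{k'})$. There are exactly $4^{|s_c(B_k,B_{k'})|}-1$ such strings, each of which is automatically non-trivial on the incoming support of the next twirling at $B_k$, so the induction continues via Lemma~\ref{lemma:pauli_string_copies_one_step_evolution}. Combined with the $1/(4^{|s(B_{k'})|}-1)$ prefactor from that lemma, each edge contributes the factor $(4^{|s_c(B_k,B_{k'})|}-1)/(4^{|s(B_{k'})|}-1) = 4^{-l(B_k,B_{k'})}$ by Eq.~\eqref{eq:def_edge_length}. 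When the backward evolution reaches the differential block $B_{k(\mu)}$, the sandwich structure handled by Lemma~\ref{lemma:commutator_channel} supplies an additional factor $2^{|2s(B_{k(\mu)})|+1}/(2^{|2s(B_{k(\mu)})|}-1)>2$. At every head block $B_{k_1}\in \mathrm{Head}(P_j)$, the retained $\sigma_1$ is non-trivial precisely on $s_f(B_{k_1})$, and the final trace against $\rho_0^{\otimes 2}$ evaluates to $\braoprket{\bm{0}}{\sigma_1|_{s_f(B_{k_1})}}{\bm{0}}^2$, which equals $1$ only for the $2^{|s_f(B_{k_1})|}-1$ non-trivial substrings built from $I$ and $Z$ alone and vanishes otherwise. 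Together with the last twirling prefactor this yields the per-head factor $(2^{|s_f(B_{k_1})|}-1)/(4^{|s_f(B_{k_1})|}-1)=2^{-w(B_{k_1})}$ by Eq.~\eqref{eq:def_forward_width}. Multiplying the edge, differential-block, and head factors and then summing over $j$ reproduces the prefactor $2^{1-2l(P_j)-w(P_j)}$ in Eq.~\eqref{eq:var_bound_path_length_width}.

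The main technical obstacle I anticipate is the bookkeeping when paths in $P_j$ merge at an internal node or share segments: the restricted Pauli strings must be defined consistently at every junction so that the per-edge counts $4^{|s_c|}-1$ remain multiplicatively independent across edges and the diagonality requirement at heads factorizes cleanly. I plan to handle this by assigning at each internal node the Pauli string whose non-identity support is the union of the incoming connecting supports drawn from each merged path; since distinct such unions correspond to distinct Pauli strings, the combinatorial counts combine as a product. Twirling channels for blocks outside $\mathrm{Node}(P_j)$ will either act trivially on the path-supported portion of the retained operator or further smear it with positive coefficients, so they can only enlarge the final non-negative trace and therefore cannot invalidate the lower bound.
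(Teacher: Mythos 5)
Your proposal is correct and follows essentially the same route as the paper's proof: decompose via Corollary~\ref{corollary:var_no_cross_terms}, backward-evolve $h_j^{\otimes 2}$ through the twirling channels while retaining (the paper formalizes this with explicit ``selection channels'') only the Pauli squares supported along $P_j$, accumulate the per-edge factor $4^{-l}$, the differential-block factor $>2$ from Lemma~\ref{lemma:commutator_channel}, and the per-head factor $2^{-w}$, justified by the non-negativity of all discarded contributions. Your handling of merges and splits also matches the paper's case analysis, including the fact that at a split node the single denominator $4^{|s(B_k)|}-1$ makes the retained fraction at least (not exactly) the product of the two edge factors, which is the direction needed for the bound.
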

\renewcommand{\theproposition}{S\arabic{proposition}}

\begin{proof}
According to Lemma~\ref{lemma:vanishing_average}, the average of the cost derivative equals to zero $\E_\mathbb{U}[\partial_\mu C]=0$. Thus we only need to estimate the first term in Eq.~\eqref{eq:var_expression_by_twirling}, which is exactly Eq.~\eqref{eq:var_expression_no_squared_average}. According to Corollary~\ref{corollary:var_no_cross_terms}, we only need to estimate the variances with respect to individual Pauli strings in the Hamiltonian
\begin{equation}\label{eq:var_hj}
    \var_\mathbb{U}\left[\partial_\mu \avg{h_j}\right] = \tr\left[\rho_0^{\otimes 2} \mathcal{T}^{(2)}_{s(B_1)}\circ\mathcal{T}^{(2)}_{s(B_2)}\circ\cdots\circ \mathcal{D}_\mu^{\otimes 2} \circ\cdots\circ \mathcal{T}^{(2)}_{s(B_{M'})}(h_j^{\otimes 2}) \right].
\end{equation}
and finally add them up via Eq.~\eqref{eq:var_no_cross_terms}. An important observation from Lemma~\ref{lemma:pauli_string_copies_one_step_evolution} is that a twirling channel just maps a doubled Pauli string into a sum of doubled Pauli strings, with the coefficients summed to one. Namely, the weight vector is normalized. Then, the subsequent twirling channels just map one sum of doubled Pauli strings to another sum of doubled Pauli strings, including the differential block. In the end, among the final sum of doubled Pauli strings, only those that contain only $Z$ and $I$ operators contribute to the expectation value with respect to the initial state $\ket{\bm{0}}$ because $\braoprket{0}{X}{0}=\braoprket{0}{Y}{0}=0$ while $\braoprket{0}{Z}{0}=\braoprket{0}{I}{0}=1$. So the variance equals the sum of the coefficients corresponding to all surviving doubled Pauli strings containing only $Z$ and $I$ in the output of the final channel. Our goal is to give a lower bound on the sum of these coefficients. We will omit the term ``doubled'' below without confusion.

For clarity, before introducing our method of bounding, we first analyze the action of the first two channels $\mathcal{T}^{(2)}_{s(B_{M'})}$ and $\mathcal{T}^{(2)}_{s(B_{M'-1})}$ in detail, which will be helpful for introducing our method of bounding later. We start from the action of $\mathcal{T}^{(2)}_{s(B_{M'})}$. Lemma~\ref{lemma:pauli_string_copies_one_step_evolution} tells us that
\begin{enumerate}[(i)]
    \item If $s(B_{M'})\cap s(h_j)=\varnothing$, i.e., the block $B_{M'}$ does not act on the support of the Pauli string $h_j$ at all, then the output of the corresponding twirling channel keeps the same as the input $\mathcal{T}^{(2)}_{s(B_{M'})}(h_j^{\otimes 2}) = h_j^{\otimes 2}$.
    \item If $s(B_{M'})\cap s(h_j)\neq\varnothing$, then $\mathcal{T}^{(2)}_{s(B_{M'})}$ keeps the sub-string $(h\vert_{\bar{s}(B_{M'})})^{\otimes 2}$ unchanged, but replaces the sub-string $(h\vert_{s(B_{M'})})^{\otimes 2}$ with a uniformly weighted sum of two copies of all non-trivial Pauli strings on $s(B_{M'})$, i.e., the elements in the set $\mathcal{P}'\vert_{s(B_{M'})}$. This kind of replacement is dubbed as ``smearing'' below. One can see that the qubit support union of the summed Pauli strings either grows or keeps the same after the action of $\mathcal{T}^{(2)}_{s(B_{M'})}$.
\end{enumerate}
Next we consider the action of $\mathcal{T}^{(2)}_{s(B_{M'-1})}$ under the condition $s(B_{M'})\cap s(h_j)\neq\varnothing$.
\begin{enumerate}[(i)]
    \item If $s(B_{M'-1})\cap s(h_j)=\varnothing$ and $s(B_{M'-1})\cap s(B_{M'})=\varnothing$, the output is the same as the input.
    \item If $s(B_{M'-1}) \supseteq s(B_{M'})$, the sub-strings on $s(B_{M'-1})$ are smeared, while the remaining part keeps unchanged. In this case, the effect of the block is $B_{M'}$ is invisible since it is fully covered by a $2$-design $B_{M'}$ with larger support.
    \item If $s(B_{M'-1})\cap [s(h_j)-s(B_{M'})]=\varnothing$ and $s(B_{M'-1})\cap s(B_{M'})\neq\varnothing$ and $s(B_{M'-1}) \not\supseteq s(B_{M'})$, the output might be not a uniformly weighted sum anymore. This is because $\mathcal{T}^{(2)}_{s(B_{M'-1})}$ only smears the Pauli strings that are non-trivial on $s(B_{M'-1})$, while not all summed Pauli strings in $\mathcal{T}^{(2)}_{s(B_{M'})}(h_j^{\otimes 2})$ satisfy this condition due to $s(B_{M'-1}) \not\supseteq s(B_{M'})$. For example, in the set of $2$-qubit non-trivial Pauli strings, the subset $\{X\otimes I, Y\otimes I, Z\otimes I\}$ would not be smeared by the twirling channel only covering the second qubit. The part of Pauli strings that are trivial on $s(B_{M'-1})$ keep their weights unchanged as $1/(4^{|s(B_{M'})|}-1)$, while the remaining part of strings that are non-trivial on $s(B_{M'-1})$ will be involved in the twirling channel. The latter forms the subset $\mathcal{P}\vert_{s(B_{M'})-s(B_{M'-1})} \otimes \mathcal{P}'\vert_{s(B_{M'})\cap s(B_{M'-1})} $, in which the number of elements is
    \begin{equation}
        \left(4^{|s(B_{M'}) \cap s(B_{M'-1})|} - 1\right) \times 4^{|s(B_{M'}) - s(B_{M'-1})|}.
    \end{equation}
    Each string will be smeared into $\mathcal{P}'\vert_{s(B_{M'-1})}$. After collecting the repeated terms that have the same sub-string on $s(B_{M'})-s(B_{M'-1})$, we obtain a summation over the set $\mathcal{P}\vert_{s(B_{M'})-s(B_{M'-1})} \otimes \mathcal{P}'\vert_{s(B_{M'-1})} $ with each term having a new uniform weight
    \begin{equation}\label{eq:new_weight}
        \frac{1}{4^{|s(B_{M'})|} - 1} \times \frac{1}{4^{|s(B_{M'-1})|} - 1} \times \left(4^{|s(B_{M'-1}) \cap s(B_{M'})|} - 1\right),
    \end{equation}
    where the last factor comes from collecting repeated terms. Plus those unchanged strings in $\mathcal{P}'\vert_{s(B_{M'})-s(B_{M'-1})} \otimes I\vert_{s(B_{M'-1})\cap s(B_{M'})} $ with the weight $1/(4^{|s(B_{M'})|}-1)$ for each term, we obtain the final output from $\mathcal{T}^{(2)}_{s(B_{M'-1})}$. One can check that these weights are still summed to one. Therefore, if $s(B_{M'-1})\subseteq s(B_{M'})$, then the weights are still uniform since $s(B_{M'-1})\cap s(B_{M'})=s(B_{M'-1})$. But if $s(B_{M'-1})\not\subseteq s(B_{M'})$, namely the two blocks have an overlap and also have their own exclusive supports, then the weights are indeed no more uniform.
    \item If $s(B_{M'-1})\cap [s(h_j)-s(B_{M'})]\neq\varnothing$ and $s(B_{M'-1})\cap s(B_{M'})=\varnothing$, the sub-strings are smeared uniformly with each summed Pauli string non-trivial on $s(B_{M'-1})$ and $s(B_{M'})$ simultaneously, i.e., the set $\mathcal{P}'\vert_{s(B_{M'-1})}\otimes \mathcal{P}'\vert_{s(B_{M'})}$.
    \item If $s(B_{M'-1})\cap [s(h_j)-s(B_{M'})]\neq\varnothing$ and $s(B_{M'-1})\cap s(B_{M'})\neq\varnothing$ and $s(B_{M'-1}) \not\supseteq s(B_{M'})$, the weights will also deviate from a uniform distribution. Although all summed strings in $\mathcal{T}^{(2)}_{s(B_{M'})}(h_j^{\otimes 2})$ will be involved in $\mathcal{T}^{(2)}_{s(B_{M'-1})}$ due to $s(B_{M'-1})\cap [s(h_j)-s(B_{M'})]\neq\varnothing$, the non-uniformity arises when collecting repeated terms. In the subset $\mathcal{P}'\vert_{s(B_{M'})-s(B_{M'-1})}\otimes \mathcal{P}\vert_{s(B_{M'})\cap s(B_{M'-1})}$, the strings that are the same on $s(B_{M'})-s(B_{M'-1})$ contribute the same, with a total number $4^{|s(B_{M'})\cap s(B_{M'-1})|}$, while in the subset $I\vert_{s(B_{M'})-s(B_{M'-1})}\otimes \mathcal{P}'\vert_{s(B_{M'})\cap s(B_{M'-1})}$, the strings that are the same on $s(B_{M'})-s(B_{M'-1})$ contribute the same, with a total number $\left(4^{|s(B_{M'})\cap s(B_{M'-1})|}-1\right)$. Thus the final output is a summation over $\mathcal{P}'\vert_{s(B_{M'})-s(B_{M'-1})} \otimes \mathcal{P}'\vert_{s(B_{M'-1})} $ with each term having a new weight
    \begin{equation}
        \frac{1}{4^{|s(B_{M'})|} - 1} \times \frac{1}{4^{|s(B_{M'-1})|} - 1} \times 4^{|s(B_{M'-1}) \cap s(B_{M'})|},
    \end{equation}
    plus a summation over $I\vert_{s(B_{M'})-s(B_{M'-1})} \otimes \mathcal{P}'\vert_{s(B_{M'-1})} $ with each term having a new weight as in Eq.~\eqref{eq:new_weight}.
\end{enumerate}
One can expect that the classified cases would proliferate exponentially with the number of twirling channels, and the corresponding weights are in general nonuniform. Thus it is hard to track the exact backward evolution of the Pauli string summation at each step in Eq.~\eqref{eq:var_hj}. Instead, noticing the fact that the weights are always non-negative, it is possible to track a part of these Pauli strings to give a lower bound in the end. 

According to Lemma~\ref{lemma:commutator_channel}, only those $h_j$ whose causal cone contains the differential block $B_{k(\mu)}$ can have a non-zero contribution to the variance $\var_{\mathbb{U}}[\partial_\mu C]$. So we only need to care about these Pauli strings in the Hamiltonian. For every such $h_j$, as supposed in the theorem, we have chosen a path set $P_j$ that goes through $B_{k(\mu)}$ and covers $h_j$. We first focus on the simplest scenario where $P_j$ only contains a single path $p_1$. By definition, we have $B_{k(\mu)}\in p_1$ and $s_b(p_1)\supseteq s(h_j)$. We first consider the case where $s_b(\mathrm{Tail}(p_1))\supseteq s(h_j)$. In this case, we aim to track the evolution of the Pauli strings that are non-trivial only along this path. That is to say, we will manually insert a selection channel $\mathcal{S}_{s(B_{k})}^{(2)}$ just after the twirling channel $\mathcal{T}_{s(B_{k})}^{(2)}$ of each block $B_k$ in the path
\begin{equation}
    \cdots \circ \mathcal{T}_{s(B_{k})}^{(2)} \circ \cdots \quad\rightarrow\quad \cdots \circ \mathcal{S}_{s(B_{k})}^{(2)} \circ \mathcal{T}_{s(B_{k})}^{(2)} \circ \cdots.
\end{equation}
A selection channel $\mathcal{S}_{s(B_{k})}^{(2)}$ on $s(B_{k})$ is defined as a linear projection channel with the following action. For two adjacent blocks $(B_{k'}, B_{k})$ in the path $p_1$, after acting $\mathcal{T}_{s(B_{k})}^{(2)}$, the Pauli strings whose sub-strings on $s(B_{k})$ are non-trivial only on $s_c(B_{k'}, B_{k})$ will be mapped to themselves by the selection channel while other strings are directly mapped to zero, i.e.,
\begin{equation}\label{eq:selection_channel}
    \mathcal{S}_{s(B_{k})}^{(2)} (\sigma^{\otimes 2}) = \begin{cases}
        \sigma^{\otimes 2} & ~\text{if}~ s(\sigma\vert_{s(B_{k})})\subseteq s_c(B_{k'}, B_{k}) \\
        0 & ~\text{else}~
    \end{cases}.
\end{equation}
where $\sigma$ is a Pauli string on the $N$ qubits. By doing so, the calculation result after inserting selection channels is always smaller than the original result since the contribution of the strings that are projected out is always non-negative. This can be utilized to establish a lower bound. Moreover, the key benefit of such a selection rule is that the tracked strings would not be influenced by the blocks outside the path since they are trivial outside, i.e., they are non-trivial only on
\begin{equation}\label{eq:track_single_path}
    s_c(p_1; B _{k}). 
\end{equation}
where we have used the notation $s_c(p_1; B_{k})=s_c(B_{k'}, B_{k})$ to represent the connecting support of $B_{k}$ and its predecessor $B_{k'}$ within the path $p_1$. This will be convenient below for the sake that we do not need to create a new symbol to represent the predecessor of $B_{k}$. As a consequence, the twirling channel corresponding to any block outside the path just reduces to the identity channel. Hence, the variance can be lower bounded by
\begin{equation}\label{eq:var_bound_selection_channel}
    \var_\mathbb{U}\left[\partial_\mu \avg{h_j}\right] \geq \tr\left[\rho_0^{\otimes 2} \prod_{B_k\in p_1}\left(\mathcal{S}_{s(B_{k})}^{(2)} \circ \mathcal{T}_{s(B_{k})}^{(2)} \right)(h_j^{\otimes 2}) \right],
\end{equation}
where the product of channels follows the increasing order of indices from left to right, and the differential channel in between is omitted for simplicity. At the first step of the backward evolution, $h_j^{\otimes 2}$ is transformed into a uniform weighted sum of two copies of Pauli strings that are non-trivial only on $s(\mathrm{Tail}(p_1))$. Then, for any two adjacent blocks $(B_k, B_{k'})$ in the path $p_1$, only a certain proportion of the summed strings in the output of $\mathcal{T}_{s(B_{k'})}^{(2)}$ satisfies our selection rule. This proportion is just the number of elements in $\mathcal{P}'\vert_{s_c(B_{k}, B_{k'})}$ over that in $\mathcal{P}'\vert_{s(B_{k'})}$, i.e.,
\begin{equation}\label{eq:proportion_along_path}
    \frac{4^{|s_c(B_{k}, B_{k'})|} - 1}{4^{|s(B_{k'})|} - 1}.
\end{equation}
Note that with our tracked strings as the input, the output of each twirling channel is still a uniform weighted summation because the contribution from other parts has been neglected. Therefore, this process can be carried out recursively. Normalizing the weights of the surviving Pauli strings will give rise to a factor not larger than $1$, which is just Eq.~\eqref{eq:proportion_along_path}. The normalized summation will enter the subsequent channels, while the normalization factor will contribute to our final lower bound. The normalization factor will be accumulated along the path until meeting the initial state, which contributes a total factor of
\begin{equation}
    \prod_{(B_{k},B_{k'})\in p_1}\frac{4^{|s_c(B_{k},B_{k'})|} - 1}{4^{|s(B_{k'})|} - 1} = 4^{-l(p_1)}.
\end{equation}
According to Lemma~\ref{lemma:commutator_channel}, the differential block just contributes a constant factor 
\begin{equation}
    \frac{2\cdot 4^{|s(B_{k(\mu)})|}}{4^{|s(B_{k(\mu)})|} - 1}.
\end{equation}
Finally, the tracked Pauli strings are non-trivial only on $s_f(\mathrm{Head}(p_1))$. Among these strings, only those that contain only the Pauli $Z$ operator and the identity operator $I$ will contribute to the variance by a base number of $1$, whose proportion is
\begin{equation}
    \frac{2^{|s_f(\mathrm{Head}(p_1))|}-1}{4^{|s_f(\mathrm{Head}(p_1))|}-1} = 2^{-w(p_1)}.
\end{equation}
To sum up, the final contribution of these selected Pauli strings to the variance is just
\begin{equation}\label{eq:lower_bound_single_path_single_hj}
    \frac{2\cdot 4^{|s(B_{k(\mu)})|}}{4^{|s(B_{k(\mu)})|} - 1} \times 4^{-l(p_1)} \times 2^{-w(p_1)} > 2\times 4^{-l(p_1)} \times 2^{-w(p_1)},
\end{equation}
where we have relaxed the first factor to $2$ for simplicity.

Next, we consider the case where $s_b(\mathrm{Tail}(p_1)) \not\supseteq s(h_j)$ but it still holds that $s_b(p_1) \supseteq s(h_j)$. In this case, the tracked strings are no longer only non-trivial along the path as in Eq.~\eqref{eq:track_single_path}, but also on the residual support of $h_j$, i.e.,
\begin{equation}
    s_c(p_1; B_{k}) \cup \left[ s(h_j) - \bigcup_{k'\geq k} s_b(B_{k'}) \right],
\end{equation}
because $h_j$ is not fully covered by the last block of $p_1$. However, this does not make a difference since we still have $s_b(p_1) \supseteq s(h_j)$, which means that sooner or later, the residual support of $h_j$ will be covered by subsequent blocks in $p_1$ before meeting the blocks outside the path. Thus, the inequality in Eq.~\eqref{eq:var_bound_selection_channel} still holds. The normalization factor of the tracked strings will remain the same as in Eq.~\eqref{eq:proportion_along_path} and hence the lower bound is the same as in Eq.~\eqref{eq:lower_bound_single_path_single_hj}.

Then, we consider the case where $s(h_j)$ is not fully covered by $s_b(p_1)$ so that more paths are needed to cover $s(h_j)$ jointly. We focus on the case of two paths first, i.e., $s_b(P_j)\supseteq s(h_j)$ with $P_j=\{p_1,p_2\}$. We first consider the following three irreducible configurations of the two paths. The front and back segment of a path $p$ regarding a block $B_k$ is denoted as $\mathrm{Seg}_{f}(p, B_k)=\{B_{k'}\in p\mid k'\leq k\}$ and $\mathrm{Seg}_{b}(p, B_k)=\{B_{k'}\in p\mid k'\geq k\}$, respectively.
\begin{enumerate}[(i)]
    \item If the two paths are separate from each other, i.e., $p_1\cap p_2=\varnothing$, we can generalize the selection channel in Eq.~\eqref{eq:selection_channel} to
    \begin{equation}\label{eq:selection_channel_separate}
        \mathcal{S}_{s(B_{k})}^{(2)} (\sigma^{\otimes 2}) = \begin{cases}
        \sigma^{\otimes 2} & ~\text{if}~ s(\sigma\vert_{s(B_{k})})\subseteq s_c(p_i; B_{k}) \\
        0 & ~\text{else}~
    \end{cases},
    \end{equation}
    where $p_i$ is the path that contains $B_{k}$, and hence the inequality in Eq.~\eqref{eq:var_bound_selection_channel} can be generalized to
    \begin{equation}\label{eq:var_bound_selection_channel_path_set}
        \var_\mathbb{U}\left[\partial_\mu \avg{h_j}\right] \geq \tr\left[\rho_0^{\otimes 2} \prod_{B_k\in \mathrm{Node}(P_j)}\left(\mathcal{S}_{s(B_{k})}^{(2)} \circ \mathcal{T}_{s(B_{k})}^{(2)} \right)(h_j^{\otimes 2}) \right].
    \end{equation}
    Based on Eq.~\eqref{eq:var_bound_selection_channel}, this inequality still holds because the selection regarding the two paths can proceed separately. That is to say, the selected strings for the block $B_{k}\in p_1$ by Eq.~\eqref{eq:selection_channel} are non-trivial only on
    \begin{equation}\label{eq:tracking_p1p2}
        s_c(p_1; B_{k}) \cup s_c(p_2; B_{k'}) \cup \left[ s(h_j) - \bigcup_{k''\geq k} s_b(B_{k''}) \right].
    \end{equation}
    where $B_{k'}=\mathrm{Head}(\mathrm{Seg}_b(p_2, B_k))$ represents the head block of the back segment of the path $p_2$ regarding the block $B_k$. The support of the selected strings for the block in $p_2$ can be written in a symmetric manner. Thus, the factor in Eq.~\eqref{eq:proportion_along_path} is accumulated along the two paths independently because the tracked Pauli string summation always keeps a tensor product structure. Therefore, the lower bound can be obtained just by counting the lengths and head widths of the two paths separately, i.e., generalizing $l(p_1)$ and $w(p_1)$ in Eq.~\eqref{eq:lower_bound_single_path_single_hj} to $l(P_j)$ and $w(P_j)$. 
    \item If the two paths merge together along the backward direction at some block $B_k$, i.e., $p_1\cap p_2=\mathrm{Seg}_f(p_1, B_k)=\mathrm{Seg}_f(p_2, B_k)$ and $B_k\in p_1\cap p_2$, then at the merging node where $(B_{k}, B_{k'})\in\mathrm{Edge}(p_1)$ and $(B_{k}, B_{k''})\in\mathrm{Edge}(p_2)$ with $k'\neq k''$, the input of $\mathcal{T}_{s(B_{k})}^{(2)}$ in Eq.~\eqref{eq:var_bound_selection_channel} is just the tensor product of two normalized uniform weighted summation of doubled Pauli strings over $\mathcal{P}'\vert_{s_c(B_k, B_{k'})}$ and $\mathcal{P}'\vert_{s_c(B_k, B_{k''})}$. The normalization factor is
    \begin{equation}
        \frac{4^{|s_c(B_{k}, B_{k'})|} - 1}{4^{|s(B_{k'})|} - 1} \times \frac{4^{|s_c(B_{k}, B_{k''})|} - 1}{4^{|s(B_{k''})|} - 1} = 4^{-l(B_{k}, B_{k'})-l(B_{k}, B_{k''})},
    \end{equation}
    which is just the exponentiated sum of the lengths of the two edges. Moreover, the accumulated factor does not need to be counted twice for the edges in $p_1\cap p_2$. This double-counting issue has already been considered in the definition of the length and head width of the path set in Eqs.~\eqref{eq:length_path_set} and \eqref{eq:width_path_set}. Therefore, the lower bound can be obtained again by generalizing $l(p_1)$ and $w(p_1)$ in Eq.~\eqref{eq:lower_bound_single_path_single_hj} to $l(P_j)$ and $w(P_j)$. 
    \item If the two paths first coincide and then split along the backward direction at some block $B_k$, i.e., $p_1\cap p_2=\mathrm{Seg}_b(p_1, B_k)=\mathrm{Seg}_b(p_2, B_k)$ and $B_k\in p_1\cap p_2$, then at the splitting node where $(B_{k'}, B_{k})\in\mathrm{Edge}(p_1)$ and $(B_{k''}, B_{k})\in\mathrm{Edge}(p_2)$ with $k'\neq k''$, the selection rule should be slightly modified since the Pauli strings in $\mathcal{P}'\vert_{s_c(B_{k'}, B_{k})}$ and $\mathcal{P}'\vert_{s_c(B_{k''}, B_{k})}$ should be selected simultaneously, i.e.,
    \begin{equation}\label{eq:selection_channel_split}
        \mathcal{S}_{s(B_{k})}^{(2)} (\sigma^{\otimes 2}) = \begin{cases}
        \sigma^{\otimes 2} & ~\text{if}~ s(\sigma\vert_{s(B_{k})})\subseteq s_c(p_1; B_{k}) \cup s_c(p_2; B_{k}) \\
        0 & ~\text{else}~
    \end{cases}.
    \end{equation}
    The selected result is just the tensor product of two normalized uniform weighted summations of doubled Pauli strings over $\mathcal{P}'\vert_{s_c(B_{k'}, B_{k})}$ and $\mathcal{P}'\vert_{s_c(B_{k''}, B_{k})}$, respectively. The normalization factor is
    \begin{equation}
        \frac{(4^{|s_c(B_{k'}, B_{k})|} - 1)\times (4^{|s_c(B_{k''}, B_{k})|} - 1)}{4^{|s(B_{k})|} - 1} \geq 4^{-l(B_{k'}, B_{k})-l(B_{k''}, B_{k})},
    \end{equation}
    which is not equal to but can be lower bounded by the exponentiated sum of the length of the two edges. Moreover, the accumulated factor again should be counted only once for the edges in $p_1\cap p_2$. Thus, the lower bound again can have the same form as in Eq.~\eqref{eq:lower_bound_single_path_single_hj} after generalizing $l(p_1)$ and $w(p_1)$ to $l(P_j)$ and $w(P_j)$.
\end{enumerate}
Other possible configurations of the two paths are just combinations of these three cases, i.e., separation, merger, and split. Therefore, the bound $2\times 4^{-l(P_j)}\times 2^{-w(P_j)}$ holds generally for arbitrary configurations of the two paths. Note that due to the tensor product structure inherited from $h_j$ and our selection rule, the two paths could start their backward evolution independently without causing additional factors.

If $P_j$ contains more than two paths, the lower bound can be obtained in a similar manner by updating the selection rule in Eq.~\eqref{eq:selection_channel} for the multi-split nodes as in Eq.~\eqref{eq:selection_channel_split}, i.e.,
\begin{equation}\label{eq:selection_channel_multi_split}
    \mathcal{S}_{s(B_{k})}^{(2)} (\sigma^{\otimes 2}) = \begin{cases}
    \sigma^{\otimes 2} & ~\text{if}~ s(\sigma\vert_{s(B_{k})})\subseteq \bigcup_i s_c(p_i; B_{k}) \\
    0 & ~\text{else}~
    \end{cases},
\end{equation}
where $p_i$ runs over the paths that contain the block $B_k$. This will give rise to the same lower bound that counts all the factors from the edges and head blocks together with the differential block, i.e., $2\times 4^{-l(P_j)}\times 2^{-w(P_j)}$. Adding the lower bounds corresponding to the individual Pauli strings $h_j$ in the Hamiltonian via Eq.~\eqref{eq:var_no_cross_terms}, we eventually arrive at the desired lower bound corresponding to the entire Hamiltonian.
\end{proof}

For better comprehension, we summarize the proof process above as follows. The assumption of local $2$-designs directly implies the vanishing average and the separable contribution form $\var_\mathbb{U} [\partial_\mu \avg{H}] = \sum_j \lambda_j^2 \var_\mathbb{U} [\partial_\mu \avg{h_j}]$. The contribution from $h_j$ can be expressed as the inner product between $\rho_0^{\otimes 2}$ and $h_j^{\otimes 2}$ after the backward evolution by the twirling channels corresponding to each block together with the differential channel. The twirling channel just maps a doubled Pauli string into a uniform weighted sum of strings. Hence, a lower bound can be obtained by only selecting the part of the summation that is non-trivial along the paths, which gives rise to the accumulated factor $4^{-l(P_j)}$. In the end, only the strings that contain only the Pauli $Z$ and the identity $I$ contribute, which leads to the fraction $2^{-w(P_j)}$. The differential channel restricts the path set must go through $B_{k(\mu)}$ and contributes the constant factor $2$.

We make several remarks on Theorem~\ref{theorem:var_lower_bound_path_set}. Firstly, we emphasize that Theorem~\ref{theorem:var_lower_bound_path_set} holds for any random quantum circuit composed of blocks forming independent local $2$-designs. No assumption is made regarding the number, sizes, or arrangement of these blocks, or the spatial dimension of the circuit. One can check that the lower bound in Theorem~\ref{theorem:var_lower_bound_path_set} is consistent with the previous literature on barren plateaus. For example, compared to the original work on barren plateaus~\cite{McClean2018} which considers a global $2$-design, the path set in Eq.~\eqref{eq:var_bound_path_length_width} is just this single global block and hence the head width grows linearly with the system size, i.e., $w(P_j)\sim N$, which leads to an exponential small lower bound, consistent with the exponential small upper bound proposed by Ref.~\cite{McClean2018}. For the alternating layered ansatz and local cost functions~\cite{Cerezo2021, Uvarov2021a}, the head width could be constant but the path length Eq.~\eqref{eq:var_bound_path_length_width} grows linearly with the system size, i.e., $l(P_j)\sim N$, which means that linear or even larger depth (deep) circuits encounter barren plateaus while finite or logarithmic depth (shallow) circuits do not, consistent with the lower bound in Ref.~\cite{Cerezo2021}. For global cost functions, even a single layer of blocks would require both the path length and head width to increase linearly with the system size, which leads to an exponentially small lower bound, again consistent with the exponentially small upper bound in Ref.~\cite{Cerezo2021}. Furthermore, in the quantum convolutional neural network (QCNN)~\cite{Pesah2021, Zhao2021a} and the multiscale entanglement renormalization ansatz (MERA)~\cite{Zhao2021a, Barthel2023, Miao2023, CerveroMartin2023}, the path length increases logarithmically with the system size, which causes a polynomially vanishing lower bound, consistent with the lower bound proved in Ref.~\cite{Pesah2021}. In summary, Theorem~\ref{theorem:var_lower_bound_path_set} unifies the known gradient scaling behaviors for all architectures of random quantum circuits composed of blocks forming $2$-designs. Two geometrical quantities play an important role in this lower bound, i.e., the path length and head width of the path set on the circuit. For a given setup of the variational quantum eigensolver with a certain circuit architecture, as long as there exists a term $h_j$ in $H$ with $\lambda_j\in\Omega(1)$ accompanied by a path set $P_j$ satisfying $l(P_j),w(P_j)\in\mathcal{O}(\log N)$, the variance of the cost derivative would vanish no faster than polynomially, implying that the corresponding setup does not exhibit barren plateaus.

In particular, we compare Theorem~\ref{theorem:var_lower_bound_path_set} with the lower bound proposed in Ref.~\cite{Uvarov2021a}, which also utilizes the backward evolution of Pauli strings in the proof. In the first place, Ref.~\cite{Uvarov2021a} focuses on the alternating layered ansatz, while our theorem can be applied to all ansatzes composed of local $2$-designs. Moreover, even the form of the causal-cone-width-dependent lower bound considered by Ref.~\cite{Uvarov2021a} has its own limitation. One can consider the case when the differential parameter is located at the last block in the finite local-depth circuit shown in Fig.~\textcolor{blue}{1} in the main text and the Hamiltonian is a single Pauli $Z$ operator on the support of the last block. In such a case, the width of the causal cone equals $N$, and hence the causal-cone-dependent bound like in Ref.~\cite{Uvarov2021a} would give a trivial exponential small lower bound, while in Theorem~\ref{theorem:var_lower_bound_path_set} we can just choose a short path with the head width not scaling with the system size, and provide a meaningful constant lower bound. The lower bound specific to finite local-depth circuits will be introduced in detail in Theorem~\textcolor{blue}{2} below.

Secondly, Theorem~\ref{theorem:var_lower_bound_path_set} defaults the initial state to $\ket{\bm{0}}$, so that the variance of the cost derivative equals the sum of weights of backward evolved doubled Pauli strings only containing $Z$ and $I$. In fact, the theorem can be generalized to the cases with the initial state being an arbitrary product state $\ket{\psi_0}=\bigotimes_{i=1}^{N} V_i \ket{\bm{0}}$ because $V_i$ can be absorbed into the block whose forward residual support covers $q_i$, based on the fact that a $t$-design times any unitary on the same support is still a $t$-design. More generally, as long as the initial state can be reduced back to $\ket{\bm{0}}$ by such an absorption procedure, Theorem~\ref{theorem:var_lower_bound_path_set} is applicable for the corresponding initial state.

Thirdly, Theorem~\ref{theorem:var_lower_bound_path_set} has assumed that the differential gate $U_\mu$ is sandwiched by two local $2$-designs on the support of the differential block $B_{k(\mu)}$, which in practice means that the theorem holds for the case where the differential parameter $\theta_\mu$ is located in the middle of the differential block, as shown in Eq.~\eqref{eq:LUR}. For the parameters located at the borders of blocks, it is indeed possible that the lower bound in Eq.~\eqref{eq:var_bound_path_length_width} fails. For example, if $U_\mu$ is a single-qubit rotation gate $R_z$ and is the first gate acting on some qubit initialized as $\ket{0}$, then the derivative will always equal to zero since $R_z$ only contributes a global phase to $\ket{0}$. Similar events occur also for the case where $U_\mu$ is the last gate acting on some qubit and $U_\mu$ commutes with $h_j$. However, taking these cases into consideration, the proof can still proceed if either of the following two conditions holds.
\begin{itemize}
    \item The left sub-block $L_{k\mu}$ forms a $2$-design and $R_{k\mu}$ is the identity. $s(U_\mu)$ intersects with the connecting support of $B_{k(\mu)}$ and its successor, or intersects with $s(h_j)\cap s_b(B_{k(\mu)})$ and $U_\mu$ does not commute with $h_j$, so that there must exist some selected Pauli strings surviving from the differential channel $\mathcal{D}_\mu^{\otimes 2}$ and hence can be recovered to a uniform weighted summation by the twirling channel of the left sub-block.
    \item The right sub-block $R_{k\mu}$ forms a $2$-design and $L_{k\mu}$ is the identity. $s(U_\mu)$ intersects with the connecting support of $B_{k(\mu)}$ and its predecessor, or intersects with $s_f(B_{k(\mu)})$ and $U_\mu$ does not commute with $\rho_0$, so that there must exist some surviving Pauli strings from $\mathcal{D}_\mu^{\otimes 2}$ still satisfying the selection rule.
\end{itemize}
Thus, we expect that the asymptotic behavior of the lower bound would not change drastically when the differential parameter is located at the borders of blocks. Taking a step back, this issue does not really matter in practice if we consider all trainable parameters simultaneously. That is to say, as long as some of the gradient components do not vanish, the optimization could still proceed successfully and no barren plateau really occurs.

In addition, Theorem~\ref{theorem:var_lower_bound_path_set} has assumed that the generator of the differential gate $\Omega_\mu$ is a Pauli string. This condition can also be relaxed to the case where $\Omega_\mu$ is a general Hermitian generator by considering the Pauli decomposition of $\Omega_\mu$. The new lower bound is just the linear combination of the lower bounds corresponding to the Pauli string components in the generator, after the cross terms caused by $\mathcal{D}_\mu^{\otimes 2}$ are eliminated by Lemma~\ref{lemma:cross_term_h1h2}. In addition, if the gate is commonly not parametrized as a rotation gate in the exponential form $e^{-i\Omega_\mu\theta_\mu}$, e.g., the controlled $R_y$ gate, then it can be seen as the composition of some basic rotation gates with correlating parameters, e.g., the controlled $R_y$ gate can be seen as the composition of two CNOT gates and two $R_y$ gates with opposite rotation angles, and hence the corresponding derivative can be obtained by the chain rule. This may cause the left and right sub-blocks in Eq.~\eqref{eq:LUR} to correlate with each other and be no longer independent $2$-designs. However, since the differential block only contributes a constant factor for non-trivial backward evolved Pauli strings in the proof, we expect the asymptotic behavior contributed from the path length and head width would not be distorted.

Finally, Theorem~\ref{theorem:var_lower_bound_path_set} requires $s(\mathbf{U})\supseteq s(H)$ so that Corollary~\ref{corollary:var_no_cross_terms} holds and one can only consider the support of the head of the path set when counting the Pauli strings that only contains $Z$ and $I$. As a counterexample, if $h_j$ contains a Pauli $X$ operator that is not covered by $\mathbf{U}$, then the variance of the cost derivative would directly vanish due to $\braoprket{0}{X}{0}=0$. But this issue could be easily fixed by multiplying the lower bound corresponding to $h_j$ by an extra factor, i.e., the expectation value of the sub-string of $h_j$ on $s(h_j)-s(\mathbf{U})$ with respect to the initial state.

It is worth mentioning that according to the strict equality Eq.~\eqref{eq:var_hj}, the lower bound in Eq.~\eqref{eq:var_bound_selection_channel_path_set} can be improved into a tighter path-integral-like form 
\begin{equation}
   \var_\mathbb{U}\left[\partial_\mu \avg{h_j}\right] \geq \sum_{P_j} \tr\left[\rho_0^{\otimes 2} \prod_{B_k\in \mathrm{Node}(P_j)}\left(\mathcal{S}_{s(B_{k})}^{(2)} \circ \mathcal{T}_{s(B_{k})}^{(2)} \right)(h_j^{\otimes 2}) \right].
\end{equation}
where the summation is taken over all legal path sets $P_j$ (i.e., with the right end covering $h_j$, and at least one of the paths passing through $B_{k(\mu)}$). Hence the final lower bound can be tightened as
\begin{equation}
\begin{aligned}
    \var_\mathbb{U}\left[\partial_\mu C\right] \geq &
    \sum_j \sum_{P_j} \lambda_j^2 \cdot 2^{1-2l(P_j)-w(P_j)}.
\end{aligned}
\end{equation}
In other words, this improved lower bound can be roughly written in plain text as
\begin{equation}\label{eq:contribution_paths}
\begin{aligned}
    \var_\mathbb{U}\left[\partial_\mu \avg{h_j}\right] \geq &
    \sum~[\text{contribution from all possible path sets}]\\
    =~& [\text{contribution from the chosen path set}] + [\text{contribution from other possible path sets}],
\end{aligned}
\end{equation}
where the first term is just the lower bound in Eq.~\eqref{eq:var_bound_selection_channel}. If we choose the shortest path set with the smallest head width, the first term will be the term of the leading order. The second term represents the sub-leading contributions from other possible path sets, which is directly discarded in our proof for clarity. The additivity of the contributions from different path sets without multiple counting is guaranteed by the orthogonality among different selection channels.

The proof method in Theorem~\ref{theorem:var_lower_bound_path_set} can be extended to bounding other quantities besides the variance of the cost derivative. For example, the variance of the primitive cost function could be lower bounded in a similar way by just removing the differential channel $\mathcal{D}_\mu^{\otimes 2}$ in Eqs.~\eqref{eq:var_expression_by_twirling} and \eqref{eq:var_bound_selection_channel}, which will lead to the same result as in Eq.~\eqref{eq:var_bound_path_length_width} except that the path set does not have to go through the differential block. In general, this path-set approach is promising to be used for estimating the average value of an arbitrary squared space-time correlator in random quantum circuits since the cost derivative in Eq.~\eqref{eq:cost_derivative_expression} itself can be seen as a two-point correlation function between the differential gate and the measured observable.

Up to this point, we have introduced all the contents on Theorem~\ref{theorem:var_lower_bound_path_set} concerning a general lower bound for arbitrary circuits composed of $2$-designs. In the following, we will first make clear some concepts in order to present Theorem~\ref{theorem:var_bound_local_depth}, which focuses on finite local-depth circuits.

\begin{definition}
The maximum block size of the PQC $\mathbf{U}$ is defined by the maximum value of block sizes of all blocks in the PQC, i.e., $\beta = \max\{|s(B_1)|, |s(B_2)|, \cdots, |s(B_{M'})|\}$.
\end{definition}

\begin{definition}
For a Hamiltonian $H$ with the Pauli decomposition $H = \sum_j \lambda_j h_j$, the maximum interaction range $r$ of the Hamiltonian is defined by the maximum value of the support sizes of all Pauli strings $h_j$, i.e., $r = \max\{|s(h_1)|, |s(h_2)|, |s(h_3)|, \cdots\}$. A $r$-local Hamiltonian refers to a Hamiltonian with the maximum interaction range $r$. Note that here the $r$ interacting qubits do not need to be close to each other spatially in the sense of the lattice geometry associated with the Hamiltonian.
\end{definition}

\begin{definition}
The supported block set of the qubit $q_i$ in the PQC contains the block $B_k$ acting on $q_i$, denoted as $b(q_i) = \{B_k \mid q_i\in s(B_k)\}$. The supported block set of a subset of qubits $s$ is the union of the supported sets of the qubits in $s$, i.e., $b(s) = \bigcup_{q_i\in s} b(q_i) = \{B_k \mid s\cap s(B_k)\neq \varnothing\}$. In particular, the supported block set of all the $N$ qubits is just the set of all the blocks in $\mathbf{U}$.
\end{definition}

\begin{definition}
The local depth (qubit-wise depth) of the PQC at a qubit $q_i$ can be defined by the number of elements in the supported block set $b(q_i)$, denoted as $\chi_i=|b(q_i)|$. The maximum local depth of the PQC is defined by the maximum value of local depths of all qubits, i.e., $\chi = \max\{\chi_1, \chi_2, \cdots, \chi_N\}$. Here we use the symbol $\chi$ to distinguish from the common circuit depth $D$, which is defined by the minimum number of layers where all blocks within each layer commute with each other.
\end{definition}

\begin{definition}
A finite local-depth circuit (FLDC) refers to a circuit whose maximum local depth $\chi$ does not scale with the system size $N$.
\end{definition}

The above definition of local depth regards blocks as the elementary units building the PQC. Alternatively, for general quantum circuits taking elementary gates as units, the local depth at the qubit $q_i$ can be defined by the number of non-commuting gates acting on $q_i$, which might be more complex to estimate because the detailed commutation relation of gates is involved. Given a block template like in Eq.~\eqref{eq:block_template}, these two definitions are supposed to differ only by a constant coefficient in general. In addition, we emphasize that the support sizes of the blocks constituting FLDCs are always finite, i.e., $\beta$ should not scale with the system size by the definition of ``circuit''. Correspondingly, a finite depth circuit (FDC) refers to a circuit whose global depth does not scale with the system size, while a general linear depth circuit (GLDC) refers to a circuit whose global depth scales linearly with the system size.

Note that as long as there is a finite local depth, regardless of how messy the gate configuration might be, the circuit can be considered as an FLDC. It is not necessary to arrange the gates uniformly and regularly as shown in Fig.~\textcolor{blue}{1} in the main text. Here we point out that FLDCs defined above are more general than the sequential quantum circuits (SQC) defined in Ref.~\cite{Chen2023a}, especially the instances listed in Ref.~\cite{Chen2023a} because we have not restricted the gates in the circuit as local gates, which is assumed by Ref.~\cite{Chen2023a}. In other words, we do not assume an underlying lattice or connectivity to define the spatial locality of gates. But when it comes to the content of the entanglement area law, a spatial lattice must be involved, and at that time, we will restrict our discussion on FLDCs to the circuits composed of spatially local gates as shown in Fig.~\textcolor{blue}{1} in the main text. With the concepts defined above, we are prepared to prove Theorem~\ref{theorem:var_bound_local_depth}.

\renewcommand\theproposition{\textcolor{blue}{2}}
\begin{theorem}\label{theorem:var_bound_local_depth}
Based on the assumptions in Theorem~\ref{theorem:var_lower_bound_path_set}, suppose that the maximum local depth of $\mathbf{U}$ is $\chi$ and the maximum block size of $\mathbf{U}$ is $\beta$. Then for any $r$-local Hamiltonian, the variance of the cost derivative is lower bounded by
\begin{equation}\label{eq:var_bound_local_depth}
    \var_\mathbb{U} [\partial_\mu C] ~\geq~ 4^{-r \chi \beta} \sum_j 2\lambda_j^2 ,
\end{equation}
where $j$ runs over the indices of the Pauli string $h_j$ that is non-trivial on $s(B_{k(\mu)})$. 
\end{theorem}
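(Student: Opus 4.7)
The plan is to invoke the general bound of Theorem~\ref{theorem:var_lower_bound_path_set} with a carefully chosen path set for each Pauli term, and then exploit the FLDC structure to bound its length and head width uniformly in $N$. First I would discard every $h_j$ whose sub-string on $s(B_{k(\mu)})$ is trivial: by Lemma~\ref{lemma:commutator_channel} those contributions vanish from $\var_\mathbb{U}[\partial_\mu \avg{h_j}]$, and since all remaining contributions to the bound in Theorem~\ref{theorem:var_lower_bound_path_set} are non-negative, dropping them preserves the inequality. This is precisely why the final sum only runs over $j$ such that $h_j$ is non-trivial on $s(B_{k(\mu)})$.

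For each surviving $h_j$ I would construct $P_j$ as the collection of ``straight wires'' through $h_j$: for every qubit $q_i \in s(h_j)$, let $p_i$ be the time-ordered sequence of all blocks in $b(q_i)$ acting on that qubit. Consecutive blocks in $p_i$ are connected through $q_i$, and the head (resp.\ tail) of $p_i$ has $q_i$ in its forward (resp.\ backward) residual support, so $p_i$ is a legal path. Since $h_j$ is non-trivial on $s(B_{k(\mu)})$, I can pick some $q_i \in s(h_j)\cap s(B_{k(\mu)})$, and then $B_{k(\mu)} \in b(q_i)\subseteq p_i$, so $P_j = \{p_i\}_{q_i \in s(h_j)}$ passes through $B_{k(\mu)}$ and covers $h_j$.

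The main technical step, and the one I expect to be the only nontrivial calculation, is bounding $2l(p_i) + w(p_i)$ per wire. By the FLDC hypothesis, $p_i$ contains at most $\chi$ blocks, hence at most $\chi-1$ interior edges plus the edge from $\rho_0$ to the head. Each interior edge satisfies $l(B_k, B_{k'}) \le \log_4 4^{|s(B_{k'})|} \le \beta$. For the initial edge one uses the factorization $4^n-1 = (2^n-1)(2^n+1)$ to combine the edge length with the head width, yielding the clean identity
\begin{equation}
2\,l(\rho_0,B_{k_1}) + w(\mathrm{Head}(p_i)) \;=\; \log_2\frac{4^{|s(B_{k_1})|}-1}{2^{|s_f(B_{k_1})|}-1} \;\le\; 2|s(B_{k_1})|\;\le\;2\beta.
\end{equation}
This cancellation between the initial-edge length and the head width is the key observation; without it one would only get $2l(p_i)+w(p_i) \lesssim (2\chi+1)\beta$, which would not match the claimed exponent. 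Summing the interior contributions and the combined head term gives $2l(p_i)+w(p_i) \le 2\chi\beta$.

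Finally, since edges and heads in $P_j$ are de-duplicated while the per-wire estimates are not, $l(P_j)\le \sum_i l(p_i)$ and $w(P_j)\le \sum_i w(\mathrm{Head}(p_i))$, and there are at most $|s(h_j)|\le r$ wires, so $2l(P_j)+w(P_j)\le 2r\chi\beta$. Substituting into Theorem~\ref{theorem:var_lower_bound_path_set} produces
\begin{equation}
\var_\mathbb{U}[\partial_\mu C] \;\ge\; \sum_j \lambda_j^2 \cdot 2^{\,1-2l(P_j)-w(P_j)} \;\ge\; \sum_j 2\lambda_j^2\cdot 4^{-r\chi\beta},
\end{equation}
with the sum running over $h_j$ non-trivial on $s(B_{k(\mu)})$, which is exactly \eqref{eq:var_bound_local_depth}.
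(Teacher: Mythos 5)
Your proposal is correct and follows essentially the same route as the paper's own proof: the same choice of path set as straight wires $P_j=\{b(q_i)\mid q_i\in s(h_j)\}$, the same bound of at most $\chi-1$ interior edges of length at most $\beta$ per wire, and the same key cancellation that combines the initial-edge length $2l(\rho_0,B_{k_1})$ with the head width into $\log_2\bigl[(4^{|s(B_{k_1})|}-1)/(2^{|s_f(B_{k_1})|}-1)\bigr]\leq 2\beta$, yielding $2l(P_j)+w(P_j)\leq 2r\chi\beta$. No gaps.
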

\renewcommand{\theproposition}{S\arabic{proposition}}
\begin{proof}
For each $h_j$, if $s(h_j)\cap s(B_{k(\mu)})= \varnothing$, we just neglect the contribution of $h_j$ in Eq.~\eqref{eq:var_bound_path_length_width}. Otherwise, we can choose the path set $P_j$ to be the straight wires on $s(h_j)$, i.e., $P_j = \{b(q_i) \mid q_i\in s(h_j) \}$. Note that the forward residual supports of the head blocks of the path set appear simultaneously in expressions for both length and head width in Eqs.~\eqref{eq:def_edge_length_head} and \eqref{eq:def_forward_width}, we will extract them and combine together before relaxation to obtain a tighter bound. We use $l_0$ to denote the length of the edges between $\rho_0$ and the head blocks of $P_j$. According to the definition in Eq.~\eqref{eq:def_edge_length}, the length of $P_j$ apart from $l_0$ is upper bounded by
\begin{equation}
    l(P_j)-l_0 \leq r \cdot (\chi-1) \cdot \log_4\left(\frac{4^\beta - 1}{4-1}\right) \leq r (\chi-1) \left(\beta - \frac{1}{2}\log_2 3\right) \leq r (\chi-1) \beta,
\end{equation}
where the sizes of the connecting supports are relaxed to one. Correspondingly, the head width of $P_j$ together with $2l_0$ is upper bounded by
\begin{equation}
    w(P_j) + 2l_0 = \sum_{B_k\in \mathrm{Head}(P_j)} \log_2\left( \frac{4^{|s_f(B_k)|}-1}{2^{|s_f(B_k)|}-1}\right) + \log_2\left(\frac{4^{|s(B_k)|}-1}{4^{|s_f(B_k)|}-1}\right) \leq r\times \log_2\left(\frac{4^\beta-1}{2-1}\right) \leq r\cdot 2\beta.
\end{equation}
According to Theorem~\ref{theorem:var_lower_bound_path_set}, the variance of the cost derivative can be lower bounded by
\begin{equation}
    \var_\mathbb{U} [\partial_\mu C] \geq \sum_j \lambda_j^2 \cdot 2^{1 - 2 l(P_j) - w(P_j)} \geq \sum_j \lambda_j^2 2^{1 - 2r(\chi-1)\beta - 2r\beta} =  4^{-r \chi \beta} \sum_j 2\lambda_j^2,
\end{equation}
where $j$ runs over the indices of $h_j$ that is non-trivial on $s(B_{k(\mu)})$. 
\end{proof}

Theorem~\ref{theorem:var_bound_local_depth} elegantly integrates the factors related to barren plateaus in an unexpectedly concise manner, i.e., the block locality $\beta$~\cite{McClean2018}, the Hamiltonian locality $r$~\cite{Cerezo2021, Uvarov2021a} and the circuit deepness $\chi$. It is vitally important to note that the relevant quantity characterizing the circuit deepness is the local depth $\chi$, instead of the common circuit depth, or say the global depth $D$. For the alternating layered ansatz, the local depth basically equals the global depth. But for general circuit structures, the local depth and the global depth are not equivalent and can differ significantly in specific scenarios, as illustrated in Fig.~\textcolor{blue}{1} in the main text. This implies that the class of circuit architectures that do not exhibit barren plateaus is larger than finite (or logarithmic) depth circuits (FDC)~\cite{Cerezo2021, Uvarov2021a}, which naturally give rise to the class of finite (or logarithmic) local-depth circuits. In this work, we mainly focus on finite local-depth circuits (FLDC), which have a clear physical correspondence to entanglement area-law states as we will introduce below. Logarithmic local-depth circuits may involve states beyond the area law such as gapless states with topological orders, which are also interesting to study in the future.

As a superclass, FLDCs have stronger expressibility than FDCs since they can generate long-range entanglement. To be specific, FDCs can only generate short-range entangled (SRE) states such as symmetry-protected topological (SPT) states like the Affleck-Kennedy-Lieb-Tasaki (AKLT) state. These SRE states are equivalent to the product states in the sense of entanglement renormalization. Those long-range entangled (LRE) area-law states such as the Greenberger–Horne–Zeilinger (GHZ) state and the topologically ordered states like the ground states of the toric code model, can not be generated from FDCs, while they can be generated from FLDCs with explicit forms~\cite{Satzinger2021, Chen2023a, Sun2023a}.

On the other hand, as a subclass, FLDCs have less expressibility than general linear depth circuits (GLDC), such as the alternating layered ansatz of linear depth. Given a spatial lattice of qubits, if we restrict that each block can act only on spatial adjacent qubits, then the output state of an FLDC should obey the entanglement area law, while GLDCs generate much more entanglement so that the output states typically exhibit the volume law~\cite{OrtizMarrero2021}. Here we give a simple demonstration to show why an FLDC composed of spatially local blocks generates area-law output states. Take $\beta=2$ for an example. The number of blocks acting on two adjacent qubits hence should be not larger than $\chi$. Suppose that the lattice is partitioned into two continuous regions $A$ and $\bar{A}$. The number of lattice edges cut by this bipartition is denoted by $|\partial A|$, which can also be seen as a measure of the boundary between $A$ and $\bar{A}$. Thus, the number of blocks acting across the boundary is not larger than $|\partial A|\cdot \chi$, so the Schmidt rank $r_A$ of the output state with respect to this bipartition is not larger than $2^{|\partial A|\cdot \chi}$. The entanglement entropy with respect to this bipartition could be upper bounded by $S_A = -\tr(\rho_A\log \rho_A) \leq \log r_A \leq \chi |\partial A|$, which means that for a fixed maximum local depth $\chi$, the entanglement entropy at most grows linearly with the size of the boundary, i.e., the output state satisfies the entanglement area law. Here the term ``area'', used by convention, refers to the boundary of a three-dimensional system.

Considering the two aspects mentioned above, FLDC constitutes an intermediate class of quantum circuits between FDC and GLDC, in the sense of both expressibility and trainability. FLDC possesses the favorable trainability of FDC and at the same time, partially exhibits the expressibility of GLDC in generating long-range entanglement. Under the condition that the target many-body Hamiltonian has a ground state with entanglement area law, FLDCs hold promise to serve as an appropriate ansatz in variation quantum eigensolvers to prepare ground states of gapped many-body systems on quantum devices. In other words, FLDCs composed of spatially local blocks can be regarded as the quantum circuit implementation of tensor network states, such as matrix product states (MPS) and projected entangled paired states (PEPS). Note that the ground states of gapped local Hamiltonians typically satisfy the entanglement area law, which is guaranteed by rigorous theorems in 1D systems~\cite{Hastings2007} and testified by practical experience in higher dimensions. Previously proposed quantum circuit implementation of tensor network states~\cite{Haghshenas2022, Zhou2021} can be seen as special cases of finite or logarithmic local-depth circuits.

The above discussion pertains to Hamiltonians with local interactions, i.e. $r\in\mathcal{O}(1)$, but in practice, it is possible to encounter Hamiltonians involving non-local interactions. For example, in condensed matter or quantum chemistry problems, fermionic statistics are usually encoded to qubit systems via the Jordan-Wigner transformation or the Bravyi-Kitaev transformation, where the support sizes of the resulting Pauli strings can scale with the interaction distance of fermions $R$ as $r\in\mathcal{O}(R)$ or $r\in\mathcal{O}(\log R)$, respectively. If $R\in\mathcal{O}(N)$, the Jordan-Wigner encoding method would lead to an exponentially small lower bound in Eq.~\eqref{eq:var_bound_local_depth} even with FLDCs due to $r\in\mathcal{O}(N)$, while the Bravyi-Kitaev encoding method may still persist with a polynomially vanishing lower bound with FLDCs due to $r\in\mathcal{O}(\log N)$.

In the end, we remark that one of the preconditions of the absence of barren plateaus from the lower bound in Eq.~\eqref{eq:var_bound_local_depth} is that, the sum of the coefficients $\sum_j\lambda_j^2$ does not vanish. That is to say, there is at least a term $h_j$ in the Hamiltonian that is non-trivial on the support of the differential block with $\lambda_j\sim\mathcal{O}(1)$. This precondition can be satisfied by common locally interacting quantum many-body systems, where the support of the Hamiltonian covers the whole system, i.e., $s(H) = s(\mathbf{U})$. However, if we do encounter cases where there is no term acting on $s(B_{k(\mu)})$, trivializing the lower bound in Theorem~\ref{theorem:var_bound_local_depth}, we need to revert back to Theorem~\ref{theorem:var_lower_bound_path_set} to consider the contribution from the terms acting elsewhere. For FLDCs composed of spatially local blocks, if there is a term in the Hamiltonian acting near $s(B_{k(\mu)})$, it is still possible to find a short path on the circuit connecting to $B_{k(\mu)}$, thereby obtaining a non-vanishing lower bound. However, if every term in the Hamiltonian acts far from $s(B_{k(\mu)})$, e.g., the distance grows linearly with the system size, then the bound in Theorem~\ref{theorem:var_lower_bound_path_set} would be exponentially small because the path length also grows linearly. To be specific, we can focus on the following ``ladder ansatz'' 
\begin{equation}
\begin{mytikz2}\label{eq:ladder_ansatz}
\draw    (100,132.27) -- (269.92,132.27) ;
\draw    (100,157.98) -- (269.92,157.97) ;
\draw    (100,106.58) -- (269.92,106.58) ;
\draw  [fill={rgb, 255:red, 204; green, 230; blue, 255 }  ,fill opacity=1 ] (123.34,106.31) .. controls (123.34,103.19) and (125.86,100.67) .. (128.98,100.67) -- (136.51,100.67) .. controls (139.62,100.67) and (142.15,103.19) .. (142.15,106.31) -- (142.15,156.77) .. controls (142.15,159.89) and (139.62,162.42) .. (136.51,162.42) -- (128.98,162.42) .. controls (125.86,162.42) and (123.34,159.89) .. (123.34,156.77) -- cycle ;
\draw    (100,209.37) -- (269.92,209.37) ;
\draw    (100,235.07) -- (269.92,235.07) ;
\draw    (100,183.67) -- (269.92,183.67) ;
\draw  [fill={rgb, 255:red, 204; green, 230; blue, 255 }  ,fill opacity=1 ] (157.21,132.27) .. controls (157.21,129.15) and (159.74,126.62) .. (162.85,126.62) -- (170.38,126.62) .. controls (173.5,126.62) and (176.02,129.15) .. (176.02,132.27) -- (176.02,182.73) .. controls (176.02,185.85) and (173.5,188.37) .. (170.38,188.37) -- (162.85,188.37) .. controls (159.74,188.37) and (157.21,185.85) .. (157.21,182.73) -- cycle ;
\draw  [fill={rgb, 255:red, 204; green, 230; blue, 255 }  ,fill opacity=1 ] (191.12,158) .. controls (191.12,154.88) and (193.65,152.35) .. (196.76,152.35) -- (204.29,152.35) .. controls (207.41,152.35) and (209.93,154.88) .. (209.93,158) -- (209.93,208.46) .. controls (209.93,211.58) and (207.41,214.1) .. (204.29,214.1) -- (196.76,214.1) .. controls (193.65,214.1) and (191.12,211.58) .. (191.12,208.46) -- cycle ;
\draw  [fill={rgb, 255:red, 204; green, 230; blue, 255 }  ,fill opacity=1 ] (225.03,183.21) .. controls (225.03,180.09) and (227.55,177.57) .. (230.67,177.57) -- (238.2,177.57) .. controls (241.31,177.57) and (243.84,180.09) .. (243.84,183.21) -- (243.84,233.68) .. controls (243.84,236.79) and (241.31,239.32) .. (238.2,239.32) -- (230.67,239.32) .. controls (227.55,239.32) and (225.03,236.79) .. (225.03,233.68) -- cycle ;
\end{mytikz2}\quad.
\end{equation}
and derive a tighter lower bound than Theorem~\ref{theorem:var_bound_local_depth} by considering the contribution from the Hamiltonian terms acting elsewhere, which is elaborated in Corollary~\ref{corollary:var_ladder}.

\begin{corollary}\label{corollary:var_ladder}
Based on the assumptions in Theorem~\ref{theorem:var_lower_bound_path_set}, suppose that $H$ is a $r$-local $1$-dimensional Hamiltonian and $\mathbf{U}$ is a ladder ansatz as in Eq.~\eqref{eq:ladder_ansatz} with block size $\beta$. Denote the block which last acts on $s(h_j)$ as $B_{k'(j)}$. The variance of the cost derivative is lower bounded by
\begin{equation}\label{eq:var_ladder}
\begin{aligned}
    \var_\mathbb{U} [\partial_\mu C] ~\geq~ & \sum_j \lambda_j^2 \cdot 2^{1- 2 \beta (\Delta_{\mu j}+r)}.
\end{aligned}
\end{equation}
where $j$ runs over the indices of $h_j$ satisfying $k'(j)\geq k(\mu)$ and $\Delta_{\mu j} = k'(j) - k(\mu)$.
\end{corollary}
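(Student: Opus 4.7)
The plan is to invoke Theorem~\ref{theorem:var_lower_bound_path_set} with a concrete choice of path set $P_j$ for each Hamiltonian term $h_j$ satisfying $k'(j)\geq k(\mu)$, exploiting two structural features of the ladder ansatz: every ladder block has non-empty forward residual support, so any block can serve as a path head, and any consecutive block-index range $\{B_{k_\star},B_{k_\star+1},\dots,B_{k''}\}$ is a valid connected path because neighboring ladder blocks share at least one qubit.

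Concretely, for each such $h_j$ I would take $P_j$ to be the single contiguous path from $B_{k_\star}$ to $B_{k'(j)}$, where $k_\star=\min\bigl(k(\mu),\,\min_{q\in s(h_j)} k_q\bigr)$ and $k_q$ is the index of the last block of $\mathbf{U}$ acting on $q$. By construction $B_{k(\mu)}\in P_j$. Moreover, for every $q\in s(h_j)$ one has $k_\star\leq k_q\leq k'(j)$, so $B_{k_q}\in P_j$ and $q\in s_b(B_{k_q})\subseteq s_b(P_j)$. Hence $P_j$ passes through the differential block and covers $h_j$ in backward residual support, satisfying the hypothesis of Theorem~\ref{theorem:var_lower_bound_path_set}.

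Next, I would bound $l(P_j)$ and $w(P_j)$ via Eqs.~\eqref{eq:def_edge_length}, \eqref{eq:def_edge_length_head}, and \eqref{eq:def_forward_width}. Each of the at most $k'(j)-k_\star+1$ edges contributes at most $\log_4[(4^\beta-1)/3]<\beta$ because every block has support size $\leq\beta$ and adjacent ladder blocks share at least one qubit. In the 1D ladder the indices $k_q$ with $q\in s(h_j)$ span a range of at most $r-1$, so $k'(j)-\min_q k_q\leq r-1$; combining the cases $k(\mu)\leq\min_q k_q$ (where $k_\star=k(\mu)$ and the path extends forward) and $k(\mu)>\min_q k_q$ (where $k_\star=\min_q k_q$ and the path extends backward from $B_{k(\mu)}$) yields $k'(j)-k_\star\leq\Delta_{\mu j}+r-1$ uniformly. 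The single head contributes $w(P_j)=\log_2[(4^{|s_f(B_{k_\star})|}-1)/(2^{|s_f(B_{k_\star})|}-1)]\leq\log_2(2^\beta+1)<\beta+1$, and the sharpness of $\log_4[(4^\beta-1)/3]$ against $\beta$ absorbs the remaining $+1$ corrections to give $2l(P_j)+w(P_j)\leq 2\beta(\Delta_{\mu j}+r)$. Substituting into Eq.~\eqref{eq:var_bound_path_length_width} and summing over $j$ delivers Eq.~\eqref{eq:var_ladder}.

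The main obstacle is uniformly reconciling the two regimes above in a single length estimate and verifying that the $\rho_0$-edge length from Eq.~\eqref{eq:def_edge_length_head} plus the head width at $B_{k_\star}$ fits inside the slack between $(k'(j)-k_\star+1)\beta$ and $\beta(\Delta_{\mu j}+r)$ provided by the strict inequalities on edge lengths and head widths. Everything else is elementary index counting tied directly to the sequential ladder layout.
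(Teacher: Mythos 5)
Your proposal is correct and follows the same overall strategy as the paper's proof---instantiate Theorem~\ref{theorem:var_lower_bound_path_set} with an explicit path set on the ladder and bound its length and head width---but your path set is genuinely different. The paper takes one path running from $B_{k(\mu)}$ to $B_{k'(j)}$ together with $r-1$ additional singleton paths, one per remaining qubit of $s(h_j)$, so it pays for $r$ heads (head width plus $2l_0$ at most $r\log_2(4^\beta-1)\leq 2r\beta$) and only $\Delta_{\mu j}$ internal edges; you instead use a single contiguous path from $B_{k_\star}$ to $B_{k'(j)}$, paying for one head but up to $\Delta_{\mu j}+r-1$ internal edges. The two choices trade head-width cost for path-length cost and land on the same exponent $2\beta(\Delta_{\mu j}+r)$. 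Your closing arithmetic does work, but only via the combined estimate $2\,l(\rho_0,B_{k_\star})+w(B_{k_\star})=\log_2\bigl[(4^{|s(B_{k_\star})|}-1)/(2^{|s_f(B_{k_\star})|}-1)\bigr]\leq\log_2(4^\beta-1)<2\beta$ that you flag as the main obstacle; bounding $w(P_j)<\beta+1$ and the $\rho_0$-edge by $\beta$ separately would overshoot by roughly $\beta+1$, so this combination is essential, and it is exactly the ``extract and combine before relaxation'' step the paper uses in Theorem~\ref{theorem:var_bound_local_depth}. One caveat: your step $k'(j)-\min_q k_q\leq r-1$ requires $s(h_j)$ to sit inside a window of $r$ consecutive qubits. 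That is the natural reading of ``$r$-local $1$-dimensional,'' but the paper's general definition of $r$-locality explicitly permits non-contiguous supports, and the paper's multi-path construction remains valid in that case while your single contiguous path would accrue length proportional to the spatial spread of $s(h_j)$ rather than to $r$. You should either state the contiguity assumption explicitly or fall back to singleton paths for the stray qubits.
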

\begin{proof}
If $k'(j) < k(\mu)$, then the differential block is outside the causal cone of $h_j$ and hence the contribution to the variance equals zero. If $k'(j)\geq k(\mu)$, we can choose the path set containing a single path from $B_{k(\mu)}$ to $B_{k'(j)}$ and other $r-1$ path covering other qubits acted by $h_j$. The path length apart from the length $l_0$ of the edges between $\rho_0$ and the head blocks is upper bounded by
\begin{equation}
    l(P_j) - l_0 \leq \Delta_{\mu j} \cdot \log_4\left(\frac{4^\beta -1}{4-1}\right) \leq \Delta_{\mu j}\beta.
\end{equation}
The head width of $P_j$ together with $2l_0$ is upper bounded by
\begin{equation}
    w(P_j) + 2l_0 \leq r\cdot \log_2\left(\frac{4^\beta - 1}{2-1}\right) \leq r\beta.
\end{equation}
According to Theorem~\ref{theorem:var_lower_bound_path_set}, the variance of the cost derivative can be lower bounded by
\begin{equation}
    \var_\mathbb{U} [\partial_\mu C] \geq \sum_j \lambda_j^2 \cdot 2^{1 - 2 l(P_j) - w(P_j)} \geq \sum_j 2\lambda_j^2 \cdot 4^{- \beta(\Delta_{\mu j}+r)},
\end{equation}
where $j$ runs over the indices of $h_j$ that is non-trivial on $s(B_{k(\mu)})$. 
\end{proof}

In Corollary~\ref{corollary:var_ladder}, one can explicitly see that the contribution in the lower bound indeed decays exponentially with the ``distance'' $\Delta_{\mu j}$ between the differential block $B_{k(\mu)}$ and the observable $h_j$. Similar conclusions can be drawn with respect to the circuit structures depicted in Fig.~\textcolor{blue}{1} in the main text, which can be seen as the circuit implementation of MPS and PEPS. This coincides with the results found in unitary embedding tensor network states recently~\cite{Liu2021a}. Again, taking a step back, this issue does not really matter in practice if all trainable parameters are considered simultaneously. Since there are always blocks on $s(h_j)$ due to $s(\mathbf{U})\supseteq s(H)$, as long as some of the gradient components do not vanish, the optimization could still proceed successfully and no barren plateau really occurs.

It is worth mentioning that the information spreading in the ladder ansatz shown in Eq.~\eqref{eq:ladder_ansatz} is somewhat one-way because the blocks acting on the qubits with larger indices than $q_i$ are outside the causal cone of $q_i$. This disadvantage could be easily fixed by the following two-way ladder ansatz which still falls in FLDC class
\begin{equation}
\begin{mytikz2}
\draw    (100,132.27) -- (370.2,132.27) ;
\draw    (100,157.98) -- (370.2,157.97) ;
\draw    (100,106.58) -- (370.2,106.58) ;
\draw  [fill={rgb, 255:red, 204; green, 230; blue, 255 }  ,fill opacity=1 ] (123.34,106.31) .. controls (123.34,103.19) and (125.86,100.67) .. (128.98,100.67) -- (136.51,100.67) .. controls (139.62,100.67) and (142.15,103.19) .. (142.15,106.31) -- (142.15,156.77) .. controls (142.15,159.89) and (139.62,162.42) .. (136.51,162.42) -- (128.98,162.42) .. controls (125.86,162.42) and (123.34,159.89) .. (123.34,156.77) -- cycle ;
\draw    (100,209.37) -- (370.2,209.37) ;
\draw    (100,235.07) -- (370.2,235.07) ;
\draw    (100,183.67) -- (370.2,183.67) ;
\draw  [fill={rgb, 255:red, 204; green, 230; blue, 255 }  ,fill opacity=1 ] (157.21,132.27) .. controls (157.21,129.15) and (159.74,126.62) .. (162.85,126.62) -- (170.38,126.62) .. controls (173.5,126.62) and (176.02,129.15) .. (176.02,132.27) -- (176.02,182.73) .. controls (176.02,185.85) and (173.5,188.37) .. (170.38,188.37) -- (162.85,188.37) .. controls (159.74,188.37) and (157.21,185.85) .. (157.21,182.73) -- cycle ;
\draw  [fill={rgb, 255:red, 204; green, 230; blue, 255 }  ,fill opacity=1 ] (191.12,158) .. controls (191.12,154.88) and (193.65,152.35) .. (196.76,152.35) -- (204.29,152.35) .. controls (207.41,152.35) and (209.93,154.88) .. (209.93,158) -- (209.93,208.46) .. controls (209.93,211.58) and (207.41,214.1) .. (204.29,214.1) -- (196.76,214.1) .. controls (193.65,214.1) and (191.12,211.58) .. (191.12,208.46) -- cycle ;
\draw  [fill={rgb, 255:red, 204; green, 230; blue, 255 }  ,fill opacity=1 ] (225.03,183.21) .. controls (225.03,180.09) and (227.55,177.57) .. (230.67,177.57) -- (238.2,177.57) .. controls (241.31,177.57) and (243.84,180.09) .. (243.84,183.21) -- (243.84,233.68) .. controls (243.84,236.79) and (241.31,239.32) .. (238.2,239.32) -- (230.67,239.32) .. controls (227.55,239.32) and (225.03,236.79) .. (225.03,233.68) -- cycle ;
\draw  [fill={rgb, 255:red, 204; green, 230; blue, 255 }  ,fill opacity=1 ] (258.9,158.04) .. controls (258.9,154.93) and (261.43,152.4) .. (264.54,152.4) -- (272.07,152.4) .. controls (275.19,152.4) and (277.71,154.93) .. (277.71,158.04) -- (277.71,208.51) .. controls (277.71,211.63) and (275.19,214.15) .. (272.07,214.15) -- (264.54,214.15) .. controls (261.43,214.15) and (258.9,211.63) .. (258.9,208.51) -- cycle ;
\draw  [fill={rgb, 255:red, 204; green, 230; blue, 255 }  ,fill opacity=1 ] (292.8,132.27) .. controls (292.8,129.15) and (295.33,126.62) .. (298.44,126.62) -- (305.97,126.62) .. controls (309.09,126.62) and (311.61,129.15) .. (311.61,132.27) -- (311.61,182.73) .. controls (311.61,185.85) and (309.09,188.37) .. (305.97,188.37) -- (298.44,188.37) .. controls (295.33,188.37) and (292.8,185.85) .. (292.8,182.73) -- cycle ;
\draw  [fill={rgb, 255:red, 204; green, 230; blue, 255 }  ,fill opacity=1 ] (326.7,106.34) .. controls (326.7,103.23) and (329.23,100.7) .. (332.34,100.7) -- (339.87,100.7) .. controls (342.99,100.7) and (345.51,103.23) .. (345.51,106.34) -- (345.51,156.81) .. controls (345.51,159.93) and (342.99,162.45) .. (339.87,162.45) -- (332.34,162.45) .. controls (329.23,162.45) and (326.7,159.93) .. (326.7,156.81) -- cycle ;
\end{mytikz2}\quad.
\end{equation}

\section{Additional Numerical Results and Technical Details}

In this section, we provide additional numerical results and technical details in the numerical experiments. In the experiment of gradient evaluation shown in Fig.~\textcolor{blue}{3} in the main text, we utilize the ladder ansatz like in Eq.~\eqref{eq:ladder_ansatz} with $\beta=2$. Each two-qubit block is parametrized by the Cartan decomposition as in Eq.~\eqref{eq:cartan}. The distance $\Delta k$ is just the difference between the indices of the differential block and the last block, i.e., $\Delta_{\mu j}$ in Eq.~\eqref{eq:var_ladder}. The legend in Fig.~\textcolor{blue}{3}(b) of $R_{y1}$, $R_{yy}$, and $R_{y2}$ refers to the following three red gates in the Cartan decomposition respectively.
\begin{equation}\label{eq:cartan_red}
\begin{mytikz2}
\draw  [fill={rgb, 255:red, 204; green, 230; blue, 255 }  ,fill opacity=0.35 ][dash pattern={on 3pt off 1.5pt}][line width=0.75]  (0.48,107.52) .. controls (0.48,97.84) and (8.32,90) .. (18,90) -- (542.56,90) .. controls (552.24,90) and (560.08,97.84) .. (560.08,107.52) -- (560.08,192.48) .. controls (560.08,202.16) and (552.24,210) .. (542.56,210) -- (18,210) .. controls (8.32,210) and (0.48,202.16) .. (0.48,192.48) -- cycle ;
\draw [line width=0.75]    (-15,180) -- (570,180) ;
\draw [line width=0.75]    (-15,120) -- (570,120) ;
\draw  [fill={rgb, 255:red, 204; green, 230; blue, 255 }  ,fill opacity=1 ][line width=0.75]  (140,100.4) .. controls (140,100.4) and (140,100.4) .. (140,100.4) -- (180,100.4) .. controls (180,100.4) and (180,100.4) .. (180,100.4) -- (180,140.4) .. controls (180,140.4) and (180,140.4) .. (180,140.4) -- (140,140.4) .. controls (140,140.4) and (140,140.4) .. (140,140.4) -- cycle ;
\draw  [fill={rgb, 255:red, 204; green, 230; blue, 255 }  ,fill opacity=1 ][line width=0.75]  (140,160.4) .. controls (140,160.4) and (140,160.4) .. (140,160.4) -- (180,160.4) .. controls (180,160.4) and (180,160.4) .. (180,160.4) -- (180,200.4) .. controls (180,200.4) and (180,200.4) .. (180,200.4) -- (140,200.4) .. controls (140,200.4) and (140,200.4) .. (140,200.4) -- cycle ;
\draw  [fill={rgb, 255:red, 204; green, 230; blue, 255 }  ,fill opacity=1 ][line width=0.75]  (200,100) .. controls (200,100) and (200,100) .. (200,100) -- (240,100) .. controls (240,100) and (240,100) .. (240,100) -- (240,200) .. controls (240,200) and (240,200) .. (240,200) -- (200,200) .. controls (200,200) and (200,200) .. (200,200) -- cycle ;
\draw  [fill={rgb, 255:red, 204; green, 230; blue, 255 }  ,fill opacity=1 ][line width=0.75]  (380,100) .. controls (380,100) and (380,100) .. (380,100) -- (420,100) .. controls (420,100) and (420,100) .. (420,100) -- (420,140) .. controls (420,140) and (420,140) .. (420,140) -- (380,140) .. controls (380,140) and (380,140) .. (380,140) -- cycle ;
\draw  [fill={rgb, 255:red, 204; green, 230; blue, 255 }  ,fill opacity=1 ][line width=0.75]  (380,160) .. controls (380,160) and (380,160) .. (380,160) -- (420,160) .. controls (420,160) and (420,160) .. (420,160) -- (420,200) .. controls (420,200) and (420,200) .. (420,200) -- (380,200) .. controls (380,200) and (380,200) .. (380,200) -- cycle ;
\draw  [fill={rgb, 255:red, 255; green, 180; blue, 180 }  ,fill opacity=1 ][line width=0.75]  (260,100) .. controls (260,100) and (260,100) .. (260,100) -- (300,100) .. controls (300,100) and (300,100) .. (300,100) -- (300,200) .. controls (300,200) and (300,200) .. (300,200) -- (260,200) .. controls (260,200) and (260,200) .. (260,200) -- cycle ;
\draw  [fill={rgb, 255:red, 204; green, 230; blue, 255 }  ,fill opacity=1 ][line width=0.75]  (320,100) .. controls (320,100) and (320,100) .. (320,100) -- (360,100) .. controls (360,100) and (360,100) .. (360,100) -- (360,200) .. controls (360,200) and (360,200) .. (360,200) -- (320,200) .. controls (320,200) and (320,200) .. (320,200) -- cycle ;
\draw  [fill={rgb, 255:red, 255; green, 180; blue, 180 }  ,fill opacity=1 ][line width=0.75]  (80,160) .. controls (80,160) and (80,160) .. (80,160) -- (120,160) .. controls (120,160) and (120,160) .. (120,160) -- (120,200) .. controls (120,200) and (120,200) .. (120,200) -- (80,200) .. controls (80,200) and (80,200) .. (80,200) -- cycle ;
\draw  [fill={rgb, 255:red, 204; green, 230; blue, 255 }  ,fill opacity=1 ][line width=0.75]  (20,160) .. controls (20,160) and (20,160) .. (20,160) -- (60,160) .. controls (60,160) and (60,160) .. (60,160) -- (60,200) .. controls (60,200) and (60,200) .. (60,200) -- (20,200) .. controls (20,200) and (20,200) .. (20,200) -- cycle ;
\draw  [fill={rgb, 255:red, 204; green, 230; blue, 255 }  ,fill opacity=1 ][line width=0.75]  (80,100) .. controls (80,100) and (80,100) .. (80,100) -- (120,100) .. controls (120,100) and (120,100) .. (120,100) -- (120,140) .. controls (120,140) and (120,140) .. (120,140) -- (80,140) .. controls (80,140) and (80,140) .. (80,140) -- cycle ;
\draw  [fill={rgb, 255:red, 204; green, 230; blue, 255 }  ,fill opacity=1 ][line width=0.75]  (20,100) .. controls (20,100) and (20,100) .. (20,100) -- (60,100) .. controls (60,100) and (60,100) .. (60,100) -- (60,140) .. controls (60,140) and (60,140) .. (60,140) -- (20,140) .. controls (20,140) and (20,140) .. (20,140) -- cycle ;
\draw  [fill={rgb, 255:red, 204; green, 230; blue, 255 }  ,fill opacity=1 ][line width=0.75]  (500,100.4) .. controls (500,100.4) and (500,100.4) .. (500,100.4) -- (540,100.4) .. controls (540,100.4) and (540,100.4) .. (540,100.4) -- (540,140.4) .. controls (540,140.4) and (540,140.4) .. (540,140.4) -- (500,140.4) .. controls (500,140.4) and (500,140.4) .. (500,140.4) -- cycle ;
\draw  [fill={rgb, 255:red, 204; green, 230; blue, 255 }  ,fill opacity=1 ][line width=0.75]  (440,100) .. controls (440,100) and (440,100) .. (440,100) -- (480,100) .. controls (480,100) and (480,100) .. (480,100) -- (480,140) .. controls (480,140) and (480,140) .. (480,140) -- (440,140) .. controls (440,140) and (440,140) .. (440,140) -- cycle ;
\draw  [fill={rgb, 255:red, 204; green, 230; blue, 255 }  ,fill opacity=1 ][line width=0.75]  (500,160.4) .. controls (500,160.4) and (500,160.4) .. (500,160.4) -- (540,160.4) .. controls (540,160.4) and (540,160.4) .. (540,160.4) -- (540,200.4) .. controls (540,200.4) and (540,200.4) .. (540,200.4) -- (500,200.4) .. controls (500,200.4) and (500,200.4) .. (500,200.4) -- cycle ;
\draw  [fill={rgb, 255:red, 255; green, 180; blue, 180 }  ,fill opacity=1 ][line width=0.75]  (440,160) .. controls (440,160) and (440,160) .. (440,160) -- (480,160) .. controls (480,160) and (480,160) .. (480,160) -- (480,200) .. controls (480,200) and (480,200) .. (480,200) -- (440,200) .. controls (440,200) and (440,200) .. (440,200) -- cycle ;

\draw (145,110) node [anchor=north west][inner sep=0.75pt]  [font=\normalsize] [align=left] {$R_{z}$};
\draw (145,170) node [anchor=north west][inner sep=0.75pt]  [font=\normalsize] [align=left] {$R_{z}$};
\draw (197,140) node [anchor=north west][inner sep=0.75pt]  [font=\normalsize] [align=left] {$R_{xx}$};
\draw (385,110) node [anchor=north west][inner sep=0.75pt]  [font=\normalsize] [align=left] {$R_{z}$};
\draw (385,170) node [anchor=north west][inner sep=0.75pt]  [font=\normalsize] [align=left] {$R_{z}$};
\draw (257,140) node [anchor=north west][inner sep=0.75pt]  [font=\normalsize] [align=left] {$R_{yy}$};
\draw (317,140) node [anchor=north west][inner sep=0.75pt]  [font=\normalsize] [align=left] {$R_{zz}$};
\draw (85,170) node [anchor=north west][inner sep=0.75pt]  [font=\normalsize] [align=left] {$R_{y}$};
\draw (25,170) node [anchor=north west][inner sep=0.75pt]  [font=\normalsize] [align=left] {$R_{z}$};
\draw (85,110) node [anchor=north west][inner sep=0.75pt]  [font=\normalsize] [align=left] {$R_{y}$};
\draw (25,110) node [anchor=north west][inner sep=0.75pt]  [font=\normalsize] [align=left] {$R_{z}$};
\draw (505,110) node [anchor=north west][inner sep=0.75pt]  [font=\normalsize] [align=left] {$R_{z}$};
\draw (445,110) node [anchor=north west][inner sep=0.75pt]  [font=\normalsize] [align=left] {$R_{y}$};
\draw (505,170) node [anchor=north west][inner sep=0.75pt]  [font=\normalsize] [align=left] {$R_{z}$};
\draw (445,170) node [anchor=north west][inner sep=0.75pt]  [font=\normalsize] [align=left] {$R_{y}$};
\end{mytikz2}\quad.
\end{equation}
In Fig.~\textcolor{blue}{3}(a), we just fix the location of the differential gate within the differential block as the red block of $R_{yy}$ above. Recall that $R_{xx}$, $R_{yy}$ and $R_{zz}$ denote the two-qubit rotation gates with the generators $X\otimes X$, $Y\otimes Y$ and $Z\otimes Z$ respectively as mentioned around Eq.~\eqref{eq:cartan}. $R_{y}$ and $R_{z}$ denote the single-qubit rotation gates with the generators $Y$ and $Z$ respectively.

In the experiment of VQE on the 2D toric code model shown in Fig.~\textcolor{blue}{4} in the main text, the model Hamiltonian reads
\begin{equation}
    H = (1 - h) \hat{H}_{0} - \sum_{j=1}^{N} \left(h^x X_{j} + h^y Y_{j} + h^z Z_{j}\right).
\end{equation}
The additional factor $(1-h)$ arises for the sake of properly approaching the infinite large field limit. In the two cases on which we focus below, we set either $(h^x, h^y, h^z)=(h,0,h)$ or $(h^x, h^y, h^z)=(0,0,h)$. Please do not confuse $(h^x, h^y, h^z)$ with the notation $h_j$ representing the Hamiltonian sub-terms above. The bare toric code Hamiltonian $H_0$ reads
\begin{equation}
    H_0 = - \sum_v A_v - \sum_p B_p,
\end{equation}
where $v$ and $p$ runs over all the vertices and plaquettes of the 2D square lattice. The operators $A_v$ and $B_p$ are defined as the products of four Pauli operators depicted below
\begin{equation}
\begin{mytikz2}
\draw [color={rgb, 255:red, 0; green, 0; blue, 0 }  ,draw opacity=1 ][line width=0.75]    (201.17,106.74) -- (100.3,106.74) ;
\draw [shift={(100.3,106.74)}, rotate = 180] [color={rgb, 255:red, 0; green, 0; blue, 0 }  ,draw opacity=1 ][fill={rgb, 255:red, 0; green, 0; blue, 0 }  ,fill opacity=1 ][line width=0.75]      (0, 0) circle [x radius= 1.74, y radius= 1.74]   ;
\draw [color={rgb, 255:red, 0; green, 0; blue, 0 }  ,draw opacity=1 ][line width=0.75]    (100.3,207.62) -- (100.3,106.74) ;
\draw [color={rgb, 255:red, 0; green, 0; blue, 0 }  ,draw opacity=1 ][line width=0.75]    (302.05,106.74) -- (201.17,106.74) ;
\draw [shift={(201.17,106.74)}, rotate = 180] [color={rgb, 255:red, 0; green, 0; blue, 0 }  ,draw opacity=1 ][fill={rgb, 255:red, 0; green, 0; blue, 0 }  ,fill opacity=1 ][line width=0.75]      (0, 0) circle [x radius= 1.74, y radius= 1.74]   ;
\draw [color={rgb, 255:red, 0; green, 0; blue, 0 }  ,draw opacity=1 ][line width=0.75]    (201.17,207.62) -- (201.17,106.74) ;
\draw [color={rgb, 255:red, 0; green, 0; blue, 0 }  ,draw opacity=1 ][line width=0.75]    (398.21,106.74) -- (297.34,106.74) ;
\draw [shift={(297.34,106.74)}, rotate = 180] [color={rgb, 255:red, 0; green, 0; blue, 0 }  ,draw opacity=1 ][fill={rgb, 255:red, 0; green, 0; blue, 0 }  ,fill opacity=1 ][line width=0.75]      (0, 0) circle [x radius= 1.74, y radius= 1.74]   ;
\draw [color={rgb, 255:red, 0; green, 0; blue, 0 }  ,draw opacity=1 ][line width=0.75]    (297.34,207.62) -- (297.34,106.74) ;
\draw [color={rgb, 255:red, 0; green, 0; blue, 0 }  ,draw opacity=1 ][line width=0.75]    (399.56,207.62) -- (399.56,106.74) ;
\draw [shift={(399.56,106.74)}, rotate = 270] [color={rgb, 255:red, 0; green, 0; blue, 0 }  ,draw opacity=1 ][fill={rgb, 255:red, 0; green, 0; blue, 0 }  ,fill opacity=1 ][line width=0.75]      (0, 0) circle [x radius= 1.74, y radius= 1.74]   ;
\draw [color={rgb, 255:red, 0; green, 0; blue, 0 }  ,draw opacity=1 ][line width=0.75]    (201.17,207.62) -- (100.3,207.62) ;
\draw [shift={(100.3,207.62)}, rotate = 180] [color={rgb, 255:red, 0; green, 0; blue, 0 }  ,draw opacity=1 ][fill={rgb, 255:red, 0; green, 0; blue, 0 }  ,fill opacity=1 ][line width=0.75]      (0, 0) circle [x radius= 1.74, y radius= 1.74]   ;
\draw [color={rgb, 255:red, 0; green, 0; blue, 0 }  ,draw opacity=1 ][line width=0.75]    (100.3,308.49) -- (100.3,207.62) ;
\draw [color={rgb, 255:red, 0; green, 0; blue, 0 }  ,draw opacity=1 ][line width=0.75]    (302.05,207.62) -- (201.17,207.62) ;
\draw [shift={(201.17,207.62)}, rotate = 180] [color={rgb, 255:red, 0; green, 0; blue, 0 }  ,draw opacity=1 ][fill={rgb, 255:red, 0; green, 0; blue, 0 }  ,fill opacity=1 ][line width=0.75]      (0, 0) circle [x radius= 1.74, y radius= 1.74]   ;
\draw [color={rgb, 255:red, 0; green, 0; blue, 0 }  ,draw opacity=1 ][line width=0.75]    (201.17,308.49) -- (201.17,207.62) ;
\draw [color={rgb, 255:red, 0; green, 0; blue, 0 }  ,draw opacity=1 ][line width=0.75]    (398.21,207.62) -- (297.34,207.62) ;
\draw [shift={(297.34,207.62)}, rotate = 180] [color={rgb, 255:red, 0; green, 0; blue, 0 }  ,draw opacity=1 ][fill={rgb, 255:red, 0; green, 0; blue, 0 }  ,fill opacity=1 ][line width=0.75]      (0, 0) circle [x radius= 1.74, y radius= 1.74]   ;
\draw [color={rgb, 255:red, 0; green, 0; blue, 0 }  ,draw opacity=1 ][line width=0.75]    (297.34,308.49) -- (297.34,207.62) ;
\draw [color={rgb, 255:red, 0; green, 0; blue, 0 }  ,draw opacity=1 ][line width=0.75]    (399.56,308.49) -- (399.56,207.62) ;
\draw [shift={(399.56,207.62)}, rotate = 270] [color={rgb, 255:red, 0; green, 0; blue, 0 }  ,draw opacity=1 ][fill={rgb, 255:red, 0; green, 0; blue, 0 }  ,fill opacity=1 ][line width=0.75]      (0, 0) circle [x radius= 1.74, y radius= 1.74]   ;
\draw [color={rgb, 255:red, 0; green, 0; blue, 0 }  ,draw opacity=1 ][line width=0.75]    (201.17,303.11) -- (100.3,303.11) ;
\draw [shift={(100.3,303.11)}, rotate = 180] [color={rgb, 255:red, 0; green, 0; blue, 0 }  ,draw opacity=1 ][fill={rgb, 255:red, 0; green, 0; blue, 0 }  ,fill opacity=1 ][line width=0.75]      (0, 0) circle [x radius= 1.74, y radius= 1.74]   ;
\draw [color={rgb, 255:red, 0; green, 0; blue, 0 }  ,draw opacity=1 ][line width=0.75]    (100.3,403.98) -- (100.3,303.11) ;
\draw [color={rgb, 255:red, 0; green, 0; blue, 0 }  ,draw opacity=1 ][line width=0.75]    (302.05,303.11) -- (201.17,303.11) ;
\draw [shift={(201.17,303.11)}, rotate = 180] [color={rgb, 255:red, 0; green, 0; blue, 0 }  ,draw opacity=1 ][fill={rgb, 255:red, 0; green, 0; blue, 0 }  ,fill opacity=1 ][line width=0.75]      (0, 0) circle [x radius= 1.74, y radius= 1.74]   ;
\draw [color={rgb, 255:red, 0; green, 0; blue, 0 }  ,draw opacity=1 ][line width=0.75]    (201.17,403.98) -- (201.17,303.11) ;
\draw [color={rgb, 255:red, 0; green, 0; blue, 0 }  ,draw opacity=1 ][line width=0.75]    (398.21,303.11) -- (297.34,303.11) ;
\draw [shift={(297.34,303.11)}, rotate = 180] [color={rgb, 255:red, 0; green, 0; blue, 0 }  ,draw opacity=1 ][fill={rgb, 255:red, 0; green, 0; blue, 0 }  ,fill opacity=1 ][line width=0.75]      (0, 0) circle [x radius= 1.74, y radius= 1.74]   ;
\draw [color={rgb, 255:red, 0; green, 0; blue, 0 }  ,draw opacity=1 ][line width=0.75]    (297.34,403.98) -- (297.34,303.11) ;
\draw [color={rgb, 255:red, 0; green, 0; blue, 0 }  ,draw opacity=1 ][line width=0.75]    (399.56,403.98) -- (399.56,303.11) ;
\draw [shift={(399.56,303.11)}, rotate = 270] [color={rgb, 255:red, 0; green, 0; blue, 0 }  ,draw opacity=1 ][fill={rgb, 255:red, 0; green, 0; blue, 0 }  ,fill opacity=1 ][line width=0.75]      (0, 0) circle [x radius= 1.74, y radius= 1.74]   ;
\draw [color={rgb, 255:red, 0; green, 0; blue, 0 }  ,draw opacity=1 ][line width=0.75]    (201.17,402.29) -- (100.3,402.29) ;
\draw [shift={(100.3,402.29)}, rotate = 180] [color={rgb, 255:red, 0; green, 0; blue, 0 }  ,draw opacity=1 ][fill={rgb, 255:red, 0; green, 0; blue, 0 }  ,fill opacity=1 ][line width=0.75]      (0, 0) circle [x radius= 1.74, y radius= 1.74]   ;
\draw [color={rgb, 255:red, 0; green, 0; blue, 0 }  ,draw opacity=1 ][line width=0.75]    (302.05,402.29) -- (201.17,402.29) ;
\draw [shift={(201.17,402.29)}, rotate = 180] [color={rgb, 255:red, 0; green, 0; blue, 0 }  ,draw opacity=1 ][fill={rgb, 255:red, 0; green, 0; blue, 0 }  ,fill opacity=1 ][line width=0.75]      (0, 0) circle [x radius= 1.74, y radius= 1.74]   ;
\draw [color={rgb, 255:red, 0; green, 0; blue, 0 }  ,draw opacity=1 ][line width=0.75]    (399.84,402.29) -- (297.34,402.29) ;
\draw [shift={(297.34,402.29)}, rotate = 180] [color={rgb, 255:red, 0; green, 0; blue, 0 }  ,draw opacity=1 ][fill={rgb, 255:red, 0; green, 0; blue, 0 }  ,fill opacity=1 ][line width=0.75]      (0, 0) circle [x radius= 1.74, y radius= 1.74]   ;
\draw [shift={(399.84,402.29)}, rotate = 180] [color={rgb, 255:red, 0; green, 0; blue, 0 }  ,draw opacity=1 ][fill={rgb, 255:red, 0; green, 0; blue, 0 }  ,fill opacity=1 ][line width=0.75]      (0, 0) circle [x radius= 1.74, y radius= 1.74]   ;
\draw  [color={rgb, 255:red, 208; green, 2; blue, 27 }  ,draw opacity=1 ][fill={rgb, 255:red, 255; green, 255; blue, 255 }  ,fill opacity=1 ][line width=1.5]  (138.47,207.62) .. controls (138.47,201.06) and (143.78,195.75) .. (150.34,195.75) .. controls (156.89,195.75) and (162.21,201.06) .. (162.21,207.62) .. controls (162.21,214.17) and (156.89,219.48) .. (150.34,219.48) .. controls (143.78,219.48) and (138.47,214.17) .. (138.47,207.62) -- cycle ;
\draw  [color={rgb, 255:red, 208; green, 2; blue, 27 }  ,draw opacity=1 ][fill={rgb, 255:red, 255; green, 255; blue, 255 }  ,fill opacity=1 ][line width=1.5]  (189.31,157.18) .. controls (189.31,150.62) and (194.62,145.31) .. (201.17,145.31) .. controls (207.73,145.31) and (213.04,150.62) .. (213.04,157.18) .. controls (213.04,163.73) and (207.73,169.05) .. (201.17,169.05) .. controls (194.62,169.05) and (189.31,163.73) .. (189.31,157.18) -- cycle ;
\draw  [color={rgb, 255:red, 208; green, 2; blue, 27 }  ,draw opacity=1 ][fill={rgb, 255:red, 255; green, 255; blue, 255 }  ,fill opacity=1 ][line width=1.5]  (239.74,207.62) .. controls (239.74,201.06) and (245.06,195.75) .. (251.61,195.75) .. controls (258.17,195.75) and (263.48,201.06) .. (263.48,207.62) .. controls (263.48,214.17) and (258.17,219.48) .. (251.61,219.48) .. controls (245.06,219.48) and (239.74,214.17) .. (239.74,207.62) -- cycle ;
\draw  [color={rgb, 255:red, 208; green, 2; blue, 27 }  ,draw opacity=1 ][fill={rgb, 255:red, 255; green, 255; blue, 255 }  ,fill opacity=1 ][line width=1.5]  (189.31,258.05) .. controls (189.31,251.5) and (194.62,246.18) .. (201.17,246.18) .. controls (207.73,246.18) and (213.04,251.5) .. (213.04,258.05) .. controls (213.04,264.61) and (207.73,269.92) .. (201.17,269.92) .. controls (194.62,269.92) and (189.31,264.61) .. (189.31,258.05) -- cycle ;
\draw  [color={rgb, 255:red, 74; green, 144; blue, 226 }  ,draw opacity=1 ][fill={rgb, 255:red, 255; green, 255; blue, 255 }  ,fill opacity=1 ][line width=1.5]  (285.47,353.55) .. controls (285.47,346.99) and (290.79,341.68) .. (297.34,341.68) .. controls (303.9,341.68) and (309.21,346.99) .. (309.21,353.55) .. controls (309.21,360.1) and (303.9,365.41) .. (297.34,365.41) .. controls (290.79,365.41) and (285.47,360.1) .. (285.47,353.55) -- cycle ;
\draw  [color={rgb, 255:red, 74; green, 144; blue, 226 }  ,draw opacity=1 ][fill={rgb, 255:red, 255; green, 255; blue, 255 }  ,fill opacity=1 ][line width=1.5]  (335.91,401.29) .. controls (335.91,394.74) and (341.22,389.42) .. (347.78,389.42) .. controls (354.33,389.42) and (359.65,394.74) .. (359.65,401.29) .. controls (359.65,407.85) and (354.33,413.16) .. (347.78,413.16) .. controls (341.22,413.16) and (335.91,407.85) .. (335.91,401.29) -- cycle ;
\draw  [color={rgb, 255:red, 74; green, 144; blue, 226 }  ,draw opacity=1 ][fill={rgb, 255:red, 255; green, 255; blue, 255 }  ,fill opacity=1 ][line width=1.5]  (387.69,353.55) .. controls (387.69,346.99) and (393,341.68) .. (399.56,341.68) .. controls (406.11,341.68) and (411.43,346.99) .. (411.43,353.55) .. controls (411.43,360.1) and (406.11,365.41) .. (399.56,365.41) .. controls (393,365.41) and (387.69,360.1) .. (387.69,353.55) -- cycle ;
\draw  [color={rgb, 255:red, 74; green, 144; blue, 226 }  ,draw opacity=1 ][fill={rgb, 255:red, 255; green, 255; blue, 255 }  ,fill opacity=1 ][line width=1.5]  (335.91,303.11) .. controls (335.91,296.55) and (341.22,291.24) .. (347.78,291.24) .. controls (354.33,291.24) and (359.65,296.55) .. (359.65,303.11) .. controls (359.65,309.66) and (354.33,314.98) .. (347.78,314.98) .. controls (341.22,314.98) and (335.91,309.66) .. (335.91,303.11) -- cycle ;
\draw  [color={rgb, 255:red, 208; green, 2; blue, 27 }  ,draw opacity=1 ][fill={rgb, 255:red, 255; green, 255; blue, 255 }  ,fill opacity=1 ][line width=1.5]  (335.91,106.74) .. controls (335.91,100.19) and (341.22,94.87) .. (347.78,94.87) .. controls (354.33,94.87) and (359.65,100.19) .. (359.65,106.74) .. controls (359.65,113.3) and (354.33,118.61) .. (347.78,118.61) .. controls (341.22,118.61) and (335.91,113.3) .. (335.91,106.74) -- cycle ;
\draw  [color={rgb, 255:red, 208; green, 2; blue, 27 }  ,draw opacity=1 ][fill={rgb, 255:red, 255; green, 255; blue, 255 }  ,fill opacity=1 ][line width=1.5]  (387.69,157.18) .. controls (387.69,150.62) and (393,145.31) .. (399.56,145.31) .. controls (406.11,145.31) and (411.43,150.62) .. (411.43,157.18) .. controls (411.43,163.73) and (406.11,169.05) .. (399.56,169.05) .. controls (393,169.05) and (387.69,163.73) .. (387.69,157.18) -- cycle ;

\draw (192,149) node [anchor=north west][inner sep=0.75pt]  [font=\scriptsize] [align=left] {$Z$};
\draw (242,199) node [anchor=north west][inner sep=0.75pt]  [font=\scriptsize] [align=left] {$Z$};
\draw (192,250) node [anchor=north west][inner sep=0.75pt]  [font=\scriptsize] [align=left] {$Z$};
\draw (140,199) node [anchor=north west][inner sep=0.75pt]  [font=\scriptsize] [align=left] {$Z$};
\draw (287,345.6) node [anchor=north west][inner sep=0.75pt]  [font=\scriptsize] [align=left] {$X$};
\draw (337,393.37) node [anchor=north west][inner sep=0.75pt]  [font=\scriptsize] [align=left] {$X$};
\draw (389,345.9) node [anchor=north west][inner sep=0.75pt]  [font=\scriptsize] [align=left] {$X$};
\draw (337,295.23) node [anchor=north west][inner sep=0.75pt]  [font=\scriptsize] [align=left] {$X$};
\draw (335,343.44) node [anchor=north west][inner sep=0.75pt]  [font=\normalsize] [align=left] {$\textcolor[rgb]{0.16,0.48,0.85}{B_{p}}$};
\draw (167,175.46) node [anchor=north west][inner sep=0.75pt]  [font=\normalsize] [align=left] {$\textcolor[rgb]{0.82,0.01,0.11}{A_{v}}$};
\draw (338,98.64) node [anchor=north west][inner sep=0.75pt]  [font=\scriptsize] [align=left] {$Z$};
\draw (389,148.85) node [anchor=north west][inner sep=0.75pt]  [font=\scriptsize] [align=left] {$Z$};
\draw (358,118.57) node [anchor=north west][inner sep=0.75pt]  [font=\normalsize] [align=left] {$\textcolor[rgb]{0.82,0.01,0.11}{A_{v'}}$};
\end{mytikz2}\quad.
\end{equation}
Note that under open boundary conditions, the vertex operator $A_{v'}$ at the boundary should be slightly modified to the products of two or three Pauli operators as depicted above. The numerical experiments are conducted on a $3\times 3$ lattice with $N=12$ qubits defined on the edges. The ground state of $H_0$ can be exactly solved and is topologically ordered, which is signified by the non-vanishing topological entanglement entropy, or say tripartite mutual information $S_{\text{topo}}(A:B:C)$ defined by
\begin{equation}
    S_{\text{topo}}(A:B:C) = S_A + S_B + S_C - S_{AB} - S_{BC} - S_{CA} + S_{ABC}.
\end{equation}
where $S_A=-\tr(\rho_A\ln\rho_A)$ represents the entanglement entropy of the subsystem $A$ with $\rho_A$ being the reduced density matrix on $A$. The sub-regions $A$, $B$ and $C$ used in our experiment are chosen as the following colored qubits respectively
\begin{equation}
\begin{mytikz2}
\draw [color={rgb, 255:red, 0; green, 0; blue, 0 }  ,draw opacity=1 ][line width=0.75]    (302.05,207.62) -- (201.17,207.62) ;
\draw [shift={(201.17,207.62)}, rotate = 180] [color={rgb, 255:red, 0; green, 0; blue, 0 }  ,draw opacity=1 ][fill={rgb, 255:red, 0; green, 0; blue, 0 }  ,fill opacity=1 ][line width=0.75]      (0, 0) circle [x radius= 1.34, y radius= 1.34]   ;
\draw [color={rgb, 255:red, 0; green, 0; blue, 0 }  ,draw opacity=1 ][line width=0.75]    (201.17,308.49) -- (201.17,207.62) ;
\draw [color={rgb, 255:red, 0; green, 0; blue, 0 }  ,draw opacity=1 ][line width=0.75]    (398.21,207.62) -- (297.34,207.62) ;
\draw [shift={(297.34,207.62)}, rotate = 180] [color={rgb, 255:red, 0; green, 0; blue, 0 }  ,draw opacity=1 ][fill={rgb, 255:red, 0; green, 0; blue, 0 }  ,fill opacity=1 ][line width=0.75]      (0, 0) circle [x radius= 1.34, y radius= 1.34]   ;
\draw [color={rgb, 255:red, 0; green, 0; blue, 0 }  ,draw opacity=1 ][line width=0.75]    (297.34,308.49) -- (297.34,207.62) ;
\draw [color={rgb, 255:red, 0; green, 0; blue, 0 }  ,draw opacity=1 ][line width=0.75]    (399.56,308.49) -- (399.56,207.62) ;
\draw [shift={(399.56,207.62)}, rotate = 270] [color={rgb, 255:red, 0; green, 0; blue, 0 }  ,draw opacity=1 ][fill={rgb, 255:red, 0; green, 0; blue, 0 }  ,fill opacity=1 ][line width=0.75]      (0, 0) circle [x radius= 1.34, y radius= 1.34]   ;
\draw [color={rgb, 255:red, 0; green, 0; blue, 0 }  ,draw opacity=1 ][line width=0.75]    (302.05,303.11) -- (201.17,303.11) ;
\draw [shift={(201.17,303.11)}, rotate = 180] [color={rgb, 255:red, 0; green, 0; blue, 0 }  ,draw opacity=1 ][fill={rgb, 255:red, 0; green, 0; blue, 0 }  ,fill opacity=1 ][line width=0.75]      (0, 0) circle [x radius= 1.34, y radius= 1.34]   ;
\draw [color={rgb, 255:red, 0; green, 0; blue, 0 }  ,draw opacity=1 ][line width=0.75]    (201.17,403.98) -- (201.17,303.11) ;
\draw [color={rgb, 255:red, 0; green, 0; blue, 0 }  ,draw opacity=1 ][line width=0.75]    (398.21,303.11) -- (297.34,303.11) ;
\draw [shift={(297.34,303.11)}, rotate = 180] [color={rgb, 255:red, 0; green, 0; blue, 0 }  ,draw opacity=1 ][fill={rgb, 255:red, 0; green, 0; blue, 0 }  ,fill opacity=1 ][line width=0.75]      (0, 0) circle [x radius= 1.34, y radius= 1.34]   ;
\draw [color={rgb, 255:red, 0; green, 0; blue, 0 }  ,draw opacity=1 ][line width=0.75]    (297.34,403.98) -- (297.34,303.11) ;
\draw [color={rgb, 255:red, 0; green, 0; blue, 0 }  ,draw opacity=1 ][line width=0.75]    (399.56,403.98) -- (399.56,303.11) ;
\draw [shift={(399.56,303.11)}, rotate = 270] [color={rgb, 255:red, 0; green, 0; blue, 0 }  ,draw opacity=1 ][fill={rgb, 255:red, 0; green, 0; blue, 0 }  ,fill opacity=1 ][line width=0.75]      (0, 0) circle [x radius= 1.34, y radius= 1.34]   ;
\draw [color={rgb, 255:red, 0; green, 0; blue, 0 }  ,draw opacity=1 ][line width=0.75]    (302.05,402.29) -- (201.17,402.29) ;
\draw [shift={(201.17,402.29)}, rotate = 180] [color={rgb, 255:red, 0; green, 0; blue, 0 }  ,draw opacity=1 ][fill={rgb, 255:red, 0; green, 0; blue, 0 }  ,fill opacity=1 ][line width=0.75]      (0, 0) circle [x radius= 1.34, y radius= 1.34]   ;
\draw [color={rgb, 255:red, 0; green, 0; blue, 0 }  ,draw opacity=1 ][line width=0.75]    (399.84,402.29) -- (297.34,402.29) ;
\draw [shift={(297.34,402.29)}, rotate = 180] [color={rgb, 255:red, 0; green, 0; blue, 0 }  ,draw opacity=1 ][fill={rgb, 255:red, 0; green, 0; blue, 0 }  ,fill opacity=1 ][line width=0.75]      (0, 0) circle [x radius= 1.34, y radius= 1.34]   ;
\draw [shift={(399.84,402.29)}, rotate = 180] [color={rgb, 255:red, 0; green, 0; blue, 0 }  ,draw opacity=1 ][fill={rgb, 255:red, 0; green, 0; blue, 0 }  ,fill opacity=1 ][line width=0.75]      (0, 0) circle [x radius= 1.34, y radius= 1.34]   ;

\draw  [color={rgb, 255:red, 0; green, 0; blue, 0 }  ,draw opacity=1 ][fill={rgb, 255:red, 255; green, 255; blue, 255 }  ,fill opacity=1 ][line width=0.75]  (239.74,207.62) .. controls (239.74,201.06) and (245.06,195.75) .. (251.61,195.75) .. controls (258.17,195.75) and (263.48,201.06) .. (263.48,207.62) .. controls (263.48,214.17) and (258.17,219.48) .. (251.61,219.48) .. controls (245.06,219.48) and (239.74,214.17) .. (239.74,207.62) -- cycle ;
\draw  [color={rgb, 255:red, 0; green, 0; blue, 0 }  ,draw opacity=1 ][fill={rgb, 255:red, 255; green, 255; blue, 255 }  ,fill opacity=1 ][line width=0.75]  (189.31,258.05) .. controls (189.31,251.5) and (194.62,246.18) .. (201.17,246.18) .. controls (207.73,246.18) and (213.04,251.5) .. (213.04,258.05) .. controls (213.04,264.61) and (207.73,269.92) .. (201.17,269.92) .. controls (194.62,269.92) and (189.31,264.61) .. (189.31,258.05) -- cycle ;
\draw  [color={rgb, 255:red, 0; green, 0; blue, 0 }  ,draw opacity=1 ][fill={rgb, 255:red, 204; green, 230; blue, 255 }  ,fill opacity=1 ][line width=0.75]  (285.47,353.55) .. controls (285.47,346.99) and (290.79,341.68) .. (297.34,341.68) .. controls (303.9,341.68) and (309.21,346.99) .. (309.21,353.55) .. controls (309.21,360.1) and (303.9,365.41) .. (297.34,365.41) .. controls (290.79,365.41) and (285.47,360.1) .. (285.47,353.55) -- cycle ;
\draw  [color={rgb, 255:red, 0; green, 0; blue, 0 }  ,draw opacity=1 ][fill={rgb, 255:red, 255; green, 255; blue, 255 }  ,fill opacity=1 ][line width=0.75]  (335.91,401.29) .. controls (335.91,394.74) and (341.22,389.42) .. (347.78,389.42) .. controls (354.33,389.42) and (359.65,394.74) .. (359.65,401.29) .. controls (359.65,407.85) and (354.33,413.16) .. (347.78,413.16) .. controls (341.22,413.16) and (335.91,407.85) .. (335.91,401.29) -- cycle ;
\draw  [color={rgb, 255:red, 0; green, 0; blue, 0 }  ,draw opacity=1 ][fill={rgb, 255:red, 255; green, 255; blue, 255 }  ,fill opacity=1 ][line width=0.75]  (387.69,353.55) .. controls (387.69,346.99) and (393,341.68) .. (399.56,341.68) .. controls (406.11,341.68) and (411.43,346.99) .. (411.43,353.55) .. controls (411.43,360.1) and (406.11,365.41) .. (399.56,365.41) .. controls (393,365.41) and (387.69,360.1) .. (387.69,353.55) -- cycle ;
\draw  [color={rgb, 255:red, 0; green, 0; blue, 0 }  ,draw opacity=1 ][fill={rgb, 255:red, 207; green, 232; blue, 176 }  ,fill opacity=1 ][line width=0.75]  (335.91,303.11) .. controls (335.91,296.55) and (341.22,291.24) .. (347.78,291.24) .. controls (354.33,291.24) and (359.65,296.55) .. (359.65,303.11) .. controls (359.65,309.66) and (354.33,314.98) .. (347.78,314.98) .. controls (341.22,314.98) and (335.91,309.66) .. (335.91,303.11) -- cycle ;
\draw  [color={rgb, 255:red, 0; green, 0; blue, 0 }  ,draw opacity=1 ][fill={rgb, 255:red, 255; green, 255; blue, 255 }  ,fill opacity=1 ][line width=0.75]  (335.91,207.62) .. controls (335.91,201.06) and (341.22,195.75) .. (347.78,195.75) .. controls (354.33,195.75) and (359.65,201.06) .. (359.65,207.62) .. controls (359.65,214.17) and (354.33,219.48) .. (347.78,219.48) .. controls (341.22,219.48) and (335.91,214.17) .. (335.91,207.62) -- cycle ;
\draw  [color={rgb, 255:red, 0; green, 0; blue, 0 }  ,draw opacity=1 ][fill={rgb, 255:red, 255; green, 180; blue, 180 }  ,fill opacity=1 ][line width=0.75]  (285.47,258.05) .. controls (285.47,251.5) and (290.79,246.18) .. (297.34,246.18) .. controls (303.9,246.18) and (309.21,251.5) .. (309.21,258.05) .. controls (309.21,264.61) and (303.9,269.92) .. (297.34,269.92) .. controls (290.79,269.92) and (285.47,264.61) .. (285.47,258.05) -- cycle ;
\draw  [color={rgb, 255:red, 0; green, 0; blue, 0 }  ,draw opacity=1 ][fill={rgb, 255:red, 255; green, 255; blue, 255 }  ,fill opacity=1 ][line width=0.75]  (387.69,258.05) .. controls (387.69,251.5) and (393,246.18) .. (399.56,246.18) .. controls (406.11,246.18) and (411.43,251.5) .. (411.43,258.05) .. controls (411.43,264.61) and (406.11,269.92) .. (399.56,269.92) .. controls (393,269.92) and (387.69,264.61) .. (387.69,258.05) -- cycle ;
\draw  [color={rgb, 255:red, 0; green, 0; blue, 0 }  ,draw opacity=1 ][fill={rgb, 255:red, 255; green, 180; blue, 180 }  ,fill opacity=1 ][line width=0.75]  (239.74,303.11) .. controls (239.74,296.55) and (245.06,291.24) .. (251.61,291.24) .. controls (258.17,291.24) and (263.48,296.55) .. (263.48,303.11) .. controls (263.48,309.66) and (258.17,314.98) .. (251.61,314.98) .. controls (245.06,314.98) and (239.74,309.66) .. (239.74,303.11) -- cycle ;
\draw  [color={rgb, 255:red, 0; green, 0; blue, 0 }  ,draw opacity=1 ][fill={rgb, 255:red, 255; green, 255; blue, 255 }  ,fill opacity=1 ][line width=0.75]  (189.31,353.55) .. controls (189.31,346.99) and (194.62,341.68) .. (201.17,341.68) .. controls (207.73,341.68) and (213.04,346.99) .. (213.04,353.55) .. controls (213.04,360.1) and (207.73,365.41) .. (201.17,365.41) .. controls (194.62,365.41) and (189.31,360.1) .. (189.31,353.55) -- cycle ;
\draw  [color={rgb, 255:red, 0; green, 0; blue, 0 }  ,draw opacity=1 ][fill={rgb, 255:red, 255; green, 255; blue, 255 }  ,fill opacity=1 ][line width=0.75]  (239.74,402.29) .. controls (239.74,395.74) and (245.06,390.42) .. (251.61,390.42) .. controls (258.17,390.42) and (263.48,395.74) .. (263.48,402.29) .. controls (263.48,408.85) and (258.17,414.16) .. (251.61,414.16) .. controls (245.06,414.16) and (239.74,408.85) .. (239.74,402.29) -- cycle ;

\draw (262,268.64) node [anchor=north west][inner sep=0.75pt]   [align=left] {$\displaystyle \textcolor[rgb]{0.82,0.01,0.11}{A}$};
\draw (258,342.24) node [anchor=north west][inner sep=0.75pt]   [align=left] {$\displaystyle \textcolor[rgb]{0.16,0.48,0.85}{B}$};
\draw (337,318.48) node [anchor=north west][inner sep=0.75pt]   [align=left] {$\displaystyle \textcolor[rgb]{0.25,0.46,0.02}{C}$};
\end{mytikz2}\quad.
\end{equation}
For the exact ground state at $h=0$, we have $S_{\text{topo}}=-\ln 2$. The FLDC ansatz used in our experiments is similar to the one shown in Fig.~\textcolor{blue}{1}(b). Namely, the blocks are applied plaquette-by-plaquette sequentially from left to right and then from top to bottom. Within each plaquette, the blocks are applied by the following ``claw-shape'' in our experiments corresponding to Fig.~\textcolor{blue}{4}
\begin{equation}
\begin{mytikz2}
\draw [color={rgb, 255:red, 211; green, 211; blue, 211 }  ,draw opacity=1 ]   (193.04,103.88) -- (106.83,103.88) ;
\draw [shift={(106.83,103.88)}, rotate = 180] [color={rgb, 255:red, 211; green, 211; blue, 211 }  ,draw opacity=1 ][fill={rgb, 255:red, 211; green, 211; blue, 211 }  ,fill opacity=1 ][line width=0.75]      (0, 0) circle [x radius= 1.34, y radius= 1.34]   ;
\draw [color={rgb, 255:red, 211; green, 211; blue, 211 }  ,draw opacity=1 ]   (106.83,190.1) -- (106.83,103.88) ;
\draw [color={rgb, 255:red, 211; green, 211; blue, 211 }  ,draw opacity=1 ]   (194.19,190.1) -- (194.19,103.88) ;
\draw [shift={(194.19,103.88)}, rotate = 270] [color={rgb, 255:red, 211; green, 211; blue, 211 }  ,draw opacity=1 ][fill={rgb, 255:red, 211; green, 211; blue, 211 }  ,fill opacity=1 ][line width=0.75]      (0, 0) circle [x radius= 1.34, y radius= 1.34]   ;
\draw [color={rgb, 255:red, 211; green, 211; blue, 211 }  ,draw opacity=1 ]   (193.04,187.8) -- (106.83,187.8) ;
\draw [shift={(106.83,187.8)}, rotate = 180] [color={rgb, 255:red, 211; green, 211; blue, 211 }  ,draw opacity=1 ][fill={rgb, 255:red, 211; green, 211; blue, 211 }  ,fill opacity=1 ][line width=0.75]      (0, 0) circle [x radius= 1.34, y radius= 1.34]   ;
\draw [shift={(193.04,187.8)}, rotate = 180] [color={rgb, 255:red, 211; green, 211; blue, 211 }  ,draw opacity=1 ][fill={rgb, 255:red, 211; green, 211; blue, 211 }  ,fill opacity=1 ][line width=0.75]      (0, 0) circle [x radius= 1.34, y radius= 1.34]   ;
\draw  [fill={rgb, 255:red, 215; green, 236; blue, 255 }  ,fill opacity=1 ] (164.06,177.99) .. controls (167.43,181.35) and (167.43,186.81) .. (164.06,190.18) -- (155.93,198.31) .. controls (152.57,201.67) and (147.11,201.67) .. (143.74,198.31) -- (94.97,149.54) .. controls (91.61,146.17) and (91.61,140.71) .. (94.97,137.35) -- (103.1,129.22) .. controls (106.47,125.85) and (111.93,125.85) .. (115.29,129.22) -- cycle ;
\draw  [fill={rgb, 255:red, 147; green, 206; blue, 255 }  ,fill opacity=1 ] (135.61,103.79) .. controls (135.61,99.03) and (139.47,95.17) .. (144.24,95.17) -- (155.73,95.17) .. controls (160.49,95.17) and (164.35,99.03) .. (164.35,103.79) -- (164.35,190.18) .. controls (164.35,194.94) and (160.49,198.8) .. (155.73,198.8) -- (144.24,198.8) .. controls (139.47,198.8) and (135.61,194.94) .. (135.61,190.18) -- cycle ;
\draw  [fill={rgb, 255:red, 35; green, 159; blue, 255 }  ,fill opacity=1 ] (184.38,129.22) .. controls (187.75,125.85) and (193.21,125.85) .. (196.58,129.22) -- (204.7,137.35) .. controls (208.07,140.71) and (208.07,146.17) .. (204.7,149.54) -- (155.93,198.31) .. controls (152.57,201.67) and (147.11,201.67) .. (143.74,198.31) -- (135.61,190.18) .. controls (132.25,186.81) and (132.25,181.35) .. (135.61,177.99) -- cycle ;
\end{mytikz2}\quad,
\end{equation}
where darker colors indicate later action orders. Thus, the entire circuit can be depicted as
\begin{equation}\label{eq:claw}
\begin{mytikz4}
\draw [color={rgb, 255:red, 211; green, 211; blue, 211 }  ,draw opacity=1 ]   (107.21,52.21) -- (51.97,52.21) ;
\draw [shift={(51.97,52.21)}, rotate = 180] [color={rgb, 255:red, 211; green, 211; blue, 211 }  ,draw opacity=1 ][fill={rgb, 255:red, 211; green, 211; blue, 211 }  ,fill opacity=1 ][line width=0.75]      (0, 0) circle [x radius= 1.34, y radius= 1.34]   ;
\draw [color={rgb, 255:red, 211; green, 211; blue, 211 }  ,draw opacity=1 ]   (51.97,107.45) -- (51.97,52.21) ;
\draw [color={rgb, 255:red, 211; green, 211; blue, 211 }  ,draw opacity=1 ]   (162.46,52.21) -- (107.21,52.21) ;
\draw [shift={(107.21,52.21)}, rotate = 180] [color={rgb, 255:red, 211; green, 211; blue, 211 }  ,draw opacity=1 ][fill={rgb, 255:red, 211; green, 211; blue, 211 }  ,fill opacity=1 ][line width=0.75]      (0, 0) circle [x radius= 1.34, y radius= 1.34]   ;
\draw [color={rgb, 255:red, 211; green, 211; blue, 211 }  ,draw opacity=1 ]   (107.21,107.45) -- (107.21,52.21) ;
\draw [color={rgb, 255:red, 211; green, 211; blue, 211 }  ,draw opacity=1 ]   (215.13,52.21) -- (159.88,52.21) ;
\draw [shift={(159.88,52.21)}, rotate = 180] [color={rgb, 255:red, 211; green, 211; blue, 211 }  ,draw opacity=1 ][fill={rgb, 255:red, 211; green, 211; blue, 211 }  ,fill opacity=1 ][line width=0.75]      (0, 0) circle [x radius= 1.34, y radius= 1.34]   ;
\draw [color={rgb, 255:red, 211; green, 211; blue, 211 }  ,draw opacity=1 ]   (159.88,107.45) -- (159.88,52.21) ;
\draw [color={rgb, 255:red, 211; green, 211; blue, 211 }  ,draw opacity=1 ]   (215.86,107.45) -- (215.86,52.21) ;
\draw [shift={(215.86,52.21)}, rotate = 270] [color={rgb, 255:red, 211; green, 211; blue, 211 }  ,draw opacity=1 ][fill={rgb, 255:red, 211; green, 211; blue, 211 }  ,fill opacity=1 ][line width=0.75]      (0, 0) circle [x radius= 1.34, y radius= 1.34]   ;
\draw [color={rgb, 255:red, 211; green, 211; blue, 211 }  ,draw opacity=1 ]   (107.21,107.45) -- (51.97,107.45) ;
\draw [shift={(51.97,107.45)}, rotate = 180] [color={rgb, 255:red, 211; green, 211; blue, 211 }  ,draw opacity=1 ][fill={rgb, 255:red, 211; green, 211; blue, 211 }  ,fill opacity=1 ][line width=0.75]      (0, 0) circle [x radius= 1.34, y radius= 1.34]   ;
\draw [color={rgb, 255:red, 211; green, 211; blue, 211 }  ,draw opacity=1 ]   (51.97,162.7) -- (51.97,107.45) ;
\draw [color={rgb, 255:red, 211; green, 211; blue, 211 }  ,draw opacity=1 ]   (162.46,107.45) -- (107.21,107.45) ;
\draw [shift={(107.21,107.45)}, rotate = 180] [color={rgb, 255:red, 211; green, 211; blue, 211 }  ,draw opacity=1 ][fill={rgb, 255:red, 211; green, 211; blue, 211 }  ,fill opacity=1 ][line width=0.75]      (0, 0) circle [x radius= 1.34, y radius= 1.34]   ;
\draw [color={rgb, 255:red, 211; green, 211; blue, 211 }  ,draw opacity=1 ]   (107.21,162.7) -- (107.21,107.45) ;
\draw [color={rgb, 255:red, 211; green, 211; blue, 211 }  ,draw opacity=1 ]   (215.13,107.45) -- (159.88,107.45) ;
\draw [shift={(159.88,107.45)}, rotate = 180] [color={rgb, 255:red, 211; green, 211; blue, 211 }  ,draw opacity=1 ][fill={rgb, 255:red, 211; green, 211; blue, 211 }  ,fill opacity=1 ][line width=0.75]      (0, 0) circle [x radius= 1.34, y radius= 1.34]   ;
\draw [color={rgb, 255:red, 211; green, 211; blue, 211 }  ,draw opacity=1 ]   (159.88,162.7) -- (159.88,107.45) ;
\draw [color={rgb, 255:red, 211; green, 211; blue, 211 }  ,draw opacity=1 ]   (215.86,162.7) -- (215.86,107.45) ;
\draw [shift={(215.86,107.45)}, rotate = 270] [color={rgb, 255:red, 211; green, 211; blue, 211 }  ,draw opacity=1 ][fill={rgb, 255:red, 211; green, 211; blue, 211 }  ,fill opacity=1 ][line width=0.75]      (0, 0) circle [x radius= 1.34, y radius= 1.34]   ;
\draw [color={rgb, 255:red, 211; green, 211; blue, 211 }  ,draw opacity=1 ]   (107.21,159.75) -- (51.97,159.75) ;
\draw [shift={(51.97,159.75)}, rotate = 180] [color={rgb, 255:red, 211; green, 211; blue, 211 }  ,draw opacity=1 ][fill={rgb, 255:red, 211; green, 211; blue, 211 }  ,fill opacity=1 ][line width=0.75]      (0, 0) circle [x radius= 1.34, y radius= 1.34]   ;
\draw [color={rgb, 255:red, 211; green, 211; blue, 211 }  ,draw opacity=1 ]   (51.97,215) -- (51.97,159.75) ;
\draw [color={rgb, 255:red, 211; green, 211; blue, 211 }  ,draw opacity=1 ]   (162.46,159.75) -- (107.21,159.75) ;
\draw [shift={(107.21,159.75)}, rotate = 180] [color={rgb, 255:red, 211; green, 211; blue, 211 }  ,draw opacity=1 ][fill={rgb, 255:red, 211; green, 211; blue, 211 }  ,fill opacity=1 ][line width=0.75]      (0, 0) circle [x radius= 1.34, y radius= 1.34]   ;
\draw [color={rgb, 255:red, 211; green, 211; blue, 211 }  ,draw opacity=1 ]   (107.21,215) -- (107.21,159.75) ;
\draw [color={rgb, 255:red, 211; green, 211; blue, 211 }  ,draw opacity=1 ]   (215.13,159.75) -- (159.88,159.75) ;
\draw [shift={(159.88,159.75)}, rotate = 180] [color={rgb, 255:red, 211; green, 211; blue, 211 }  ,draw opacity=1 ][fill={rgb, 255:red, 211; green, 211; blue, 211 }  ,fill opacity=1 ][line width=0.75]      (0, 0) circle [x radius= 1.34, y radius= 1.34]   ;
\draw [color={rgb, 255:red, 211; green, 211; blue, 211 }  ,draw opacity=1 ]   (159.88,215) -- (159.88,159.75) ;
\draw [color={rgb, 255:red, 211; green, 211; blue, 211 }  ,draw opacity=1 ]   (215.86,215) -- (215.86,159.75) ;
\draw [shift={(215.86,159.75)}, rotate = 270] [color={rgb, 255:red, 211; green, 211; blue, 211 }  ,draw opacity=1 ][fill={rgb, 255:red, 211; green, 211; blue, 211 }  ,fill opacity=1 ][line width=0.75]      (0, 0) circle [x radius= 1.34, y radius= 1.34]   ;
\draw [color={rgb, 255:red, 211; green, 211; blue, 211 }  ,draw opacity=1 ]   (107.21,213.53) -- (51.97,213.53) ;
\draw [shift={(51.97,213.53)}, rotate = 180] [color={rgb, 255:red, 211; green, 211; blue, 211 }  ,draw opacity=1 ][fill={rgb, 255:red, 211; green, 211; blue, 211 }  ,fill opacity=1 ][line width=0.75]      (0, 0) circle [x radius= 1.34, y radius= 1.34]   ;
\draw [color={rgb, 255:red, 211; green, 211; blue, 211 }  ,draw opacity=1 ]   (162.46,213.53) -- (107.21,213.53) ;
\draw [shift={(107.21,213.53)}, rotate = 180] [color={rgb, 255:red, 211; green, 211; blue, 211 }  ,draw opacity=1 ][fill={rgb, 255:red, 211; green, 211; blue, 211 }  ,fill opacity=1 ][line width=0.75]      (0, 0) circle [x radius= 1.34, y radius= 1.34]   ;
\draw [color={rgb, 255:red, 211; green, 211; blue, 211 }  ,draw opacity=1 ]   (215.13,213.53) -- (159.88,213.53) ;
\draw [shift={(159.88,213.53)}, rotate = 180] [color={rgb, 255:red, 211; green, 211; blue, 211 }  ,draw opacity=1 ][fill={rgb, 255:red, 211; green, 211; blue, 211 }  ,fill opacity=1 ][line width=0.75]      (0, 0) circle [x radius= 1.34, y radius= 1.34]   ;
\draw [shift={(215.13,213.53)}, rotate = 180] [color={rgb, 255:red, 211; green, 211; blue, 211 }  ,draw opacity=1 ][fill={rgb, 255:red, 211; green, 211; blue, 211 }  ,fill opacity=1 ][line width=0.75]      (0, 0) circle [x radius= 1.34, y radius= 1.34]   ;

\draw  [fill={rgb, 255:red, 236; green, 247; blue, 255 }  ,fill opacity=1 ] (91.32,104.13) .. controls (93.48,106.28) and (93.48,109.78) .. (91.32,111.94) -- (86.11,117.15) .. controls (83.95,119.31) and (80.46,119.31) .. (78.3,117.15) -- (47.05,85.9) .. controls (44.89,83.74) and (44.89,80.24) .. (47.05,78.08) -- (52.26,72.88) .. controls (54.41,70.72) and (57.91,70.72) .. (60.07,72.88) -- cycle ;
\draw  [fill={rgb, 255:red, 182; green, 220; blue, 255 }  ,fill opacity=1 ] (90.79,155.69) .. controls (92.95,157.85) and (92.95,161.35) .. (90.79,163.5) -- (85.58,168.71) .. controls (83.43,170.87) and (79.93,170.87) .. (77.77,168.71) -- (46.52,137.46) .. controls (44.36,135.3) and (44.36,131.8) .. (46.52,129.65) -- (51.73,124.44) .. controls (53.89,122.28) and (57.38,122.28) .. (59.54,124.44) -- cycle ;
\draw  [fill={rgb, 255:red, 130; green, 195; blue, 255 }  ,fill opacity=1 ] (90.18,206.24) .. controls (92.34,208.4) and (92.34,211.89) .. (90.18,214.05) -- (84.97,219.26) .. controls (82.81,221.42) and (79.32,221.42) .. (77.16,219.26) -- (45.91,188.01) .. controls (43.75,185.85) and (43.75,182.35) .. (45.91,180.2) -- (51.12,174.99) .. controls (53.27,172.83) and (56.77,172.83) .. (58.93,174.99) -- cycle ;
\draw  [fill={rgb, 255:red, 236; green, 247; blue, 255 }  ,fill opacity=1 ] (72.36,54.49) .. controls (72.36,51.44) and (74.83,48.96) .. (77.89,48.96) -- (85.25,48.96) .. controls (88.3,48.96) and (90.78,51.44) .. (90.78,54.49) -- (90.78,108.71) .. controls (90.78,111.76) and (88.3,114.24) .. (85.25,114.24) -- (77.89,114.24) .. controls (74.83,114.24) and (72.36,111.76) .. (72.36,108.71) -- cycle ;
\draw  [fill={rgb, 255:red, 236; green, 247; blue, 255 }  ,fill opacity=1 ] (104.34,72.88) .. controls (106.5,70.72) and (110,70.72) .. (112.15,72.88) -- (117.36,78.08) .. controls (119.52,80.24) and (119.52,83.74) .. (117.36,85.9) -- (86.11,117.15) .. controls (83.95,119.31) and (80.46,119.31) .. (78.3,117.15) -- (73.09,111.94) .. controls (70.93,109.78) and (70.93,106.28) .. (73.09,104.13) -- cycle ;
\draw  [fill={rgb, 255:red, 225; green, 241; blue, 255 }  ,fill opacity=1 ] (143.95,104.2) .. controls (146.11,106.36) and (146.11,109.86) .. (143.95,112.01) -- (138.74,117.22) .. controls (136.58,119.38) and (133.09,119.38) .. (130.93,117.22) -- (99.68,85.97) .. controls (97.52,83.81) and (97.52,80.32) .. (99.68,78.16) -- (104.89,72.95) .. controls (107.04,70.79) and (110.54,70.79) .. (112.7,72.95) -- cycle ;
\draw  [fill={rgb, 255:red, 182; green, 220; blue, 255 }  ,fill opacity=1 ] (72.36,107.14) .. controls (72.36,104.09) and (74.83,101.62) .. (77.89,101.62) -- (85.25,101.62) .. controls (88.3,101.62) and (90.78,104.09) .. (90.78,107.14) -- (90.78,160.27) .. controls (90.78,163.33) and (88.3,165.8) .. (85.25,165.8) -- (77.89,165.8) .. controls (74.83,165.8) and (72.36,163.33) .. (72.36,160.27) -- cycle ;
\draw  [fill={rgb, 255:red, 182; green, 220; blue, 255 }  ,fill opacity=1 ] (103.82,124.44) .. controls (105.97,122.28) and (109.47,122.28) .. (111.63,124.44) -- (116.84,129.65) .. controls (118.99,131.8) and (118.99,135.3) .. (116.84,137.46) -- (85.58,168.71) .. controls (83.43,170.87) and (79.93,170.87) .. (77.77,168.71) -- (72.56,163.5) .. controls (70.41,161.35) and (70.41,157.85) .. (72.56,155.69) -- cycle ;
\draw  [fill={rgb, 255:red, 163; green, 213; blue, 255 }  ,fill opacity=1 ] (143.42,155.76) .. controls (145.58,157.92) and (145.58,161.42) .. (143.42,163.58) -- (138.22,168.79) .. controls (136.06,170.94) and (132.56,170.94) .. (130.4,168.79) -- (99.15,137.53) .. controls (96.99,135.38) and (96.99,131.88) .. (99.15,129.72) -- (104.36,124.51) .. controls (106.52,122.35) and (110.02,122.35) .. (112.17,124.51) -- cycle ;
\draw  [fill={rgb, 255:red, 225; green, 241; blue, 255 }  ,fill opacity=1 ] (126.25,54.16) .. controls (126.25,51.11) and (128.72,48.64) .. (131.77,48.64) -- (139.14,48.64) .. controls (142.19,48.64) and (144.66,51.11) .. (144.66,54.16) -- (144.66,107.98) .. controls (144.66,111.04) and (142.19,113.51) .. (139.14,113.51) -- (131.77,113.51) .. controls (128.72,113.51) and (126.25,111.04) .. (126.25,107.98) -- cycle ;
\draw  [fill={rgb, 255:red, 225; green, 241; blue, 255 }  ,fill opacity=1 ] (156.97,72.95) .. controls (159.13,70.79) and (162.63,70.79) .. (164.79,72.95) -- (169.99,78.16) .. controls (172.15,80.32) and (172.15,83.81) .. (169.99,85.97) -- (138.74,117.22) .. controls (136.58,119.38) and (133.09,119.38) .. (130.93,117.22) -- (125.72,112.01) .. controls (123.56,109.86) and (123.56,106.36) .. (125.72,104.2) -- cycle ;
\draw  [fill={rgb, 255:red, 203; green, 230; blue, 255 }  ,fill opacity=1 ] (197.25,103.73) .. controls (199.41,105.88) and (199.41,109.38) .. (197.25,111.54) -- (192.04,116.75) .. controls (189.88,118.91) and (186.39,118.91) .. (184.23,116.75) -- (152.98,85.5) .. controls (150.82,83.34) and (150.82,79.84) .. (152.98,77.68) -- (158.19,72.48) .. controls (160.34,70.32) and (163.84,70.32) .. (166,72.48) -- cycle ;
\draw  [fill={rgb, 255:red, 203; green, 230; blue, 255 }  ,fill opacity=1 ] (178.68,53.04) .. controls (178.68,49.99) and (181.15,47.52) .. (184.2,47.52) -- (191.57,47.52) .. controls (194.62,47.52) and (197.09,49.99) .. (197.09,53.04) -- (197.09,107.51) .. controls (197.09,110.56) and (194.62,113.04) .. (191.57,113.04) -- (184.2,113.04) .. controls (181.15,113.04) and (178.68,110.56) .. (178.68,107.51) -- cycle ;
\draw  [fill={rgb, 255:red, 203; green, 230; blue, 255 }  ,fill opacity=1 ] (210.27,72.48) .. controls (212.43,70.32) and (215.93,70.32) .. (218.08,72.48) -- (223.29,77.68) .. controls (225.45,79.84) and (225.45,83.34) .. (223.29,85.5) -- (192.04,116.75) .. controls (189.88,118.91) and (186.39,118.91) .. (184.23,116.75) -- (179.02,111.54) .. controls (176.86,109.38) and (176.86,105.88) .. (179.02,103.73) -- cycle ;
\draw  [fill={rgb, 255:red, 163; green, 213; blue, 255 }  ,fill opacity=1 ] (125.32,105.92) .. controls (125.32,102.87) and (127.8,100.4) .. (130.85,100.4) -- (138.21,100.4) .. controls (141.27,100.4) and (143.74,102.87) .. (143.74,105.92) -- (143.74,159.95) .. controls (143.74,163) and (141.27,165.47) .. (138.21,165.47) -- (130.85,165.47) .. controls (127.8,165.47) and (125.32,163) .. (125.32,159.95) -- cycle ;
\draw  [fill={rgb, 255:red, 163; green, 213; blue, 255 }  ,fill opacity=1 ] (156.45,124.51) .. controls (158.6,122.35) and (162.1,122.35) .. (164.26,124.51) -- (169.47,129.72) .. controls (171.63,131.88) and (171.63,135.38) .. (169.47,137.53) -- (138.22,168.79) .. controls (136.06,170.94) and (132.56,170.94) .. (130.4,168.79) -- (125.19,163.58) .. controls (123.04,161.42) and (123.04,157.92) .. (125.19,155.76) -- cycle ;
\draw  [fill={rgb, 255:red, 147; green, 206; blue, 255 }  ,fill opacity=1 ] (197.72,154.29) .. controls (199.88,156.45) and (199.88,159.95) .. (197.72,162.1) -- (192.52,167.31) .. controls (190.36,169.47) and (186.86,169.47) .. (184.7,167.31) -- (153.45,136.06) .. controls (151.29,133.9) and (151.29,130.4) .. (153.45,128.25) -- (158.66,123.04) .. controls (160.82,120.88) and (164.32,120.88) .. (166.47,123.04) -- cycle ;
\draw  [fill={rgb, 255:red, 147; green, 206; blue, 255 }  ,fill opacity=1 ] (178.58,107.05) .. controls (178.58,104) and (181.05,101.52) .. (184.1,101.52) -- (191.47,101.52) .. controls (194.52,101.52) and (196.99,104) .. (196.99,107.05) -- (196.99,157.95) .. controls (196.99,161) and (194.52,163.47) .. (191.47,163.47) -- (184.1,163.47) .. controls (181.05,163.47) and (178.58,161) .. (178.58,157.95) -- cycle ;
\draw  [fill={rgb, 255:red, 147; green, 206; blue, 255 }  ,fill opacity=1 ] (210.75,123.04) .. controls (212.9,120.88) and (216.4,120.88) .. (218.56,123.04) -- (223.77,128.25) .. controls (225.92,130.4) and (225.92,133.9) .. (223.77,136.06) -- (192.52,167.31) .. controls (190.36,169.47) and (186.86,169.47) .. (184.7,167.31) -- (179.49,162.1) .. controls (177.34,159.95) and (177.34,156.45) .. (179.49,154.29) -- cycle ;
\draw  [fill={rgb, 255:red, 130; green, 195; blue, 255 }  ,fill opacity=1 ] (72.03,159.74) .. controls (72.02,156.69) and (74.49,154.21) .. (77.54,154.2) -- (84.91,154.19) .. controls (87.96,154.19) and (90.44,156.66) .. (90.44,159.71) -- (90.53,212.21) .. controls (90.54,215.26) and (88.07,217.73) .. (85.02,217.74) -- (77.65,217.75) .. controls (74.6,217.76) and (72.12,215.29) .. (72.12,212.24) -- cycle ;
\draw  [fill={rgb, 255:red, 130; green, 195; blue, 255 }  ,fill opacity=1 ] (103.2,175.39) .. controls (105.36,173.23) and (108.86,173.23) .. (111.01,175.39) -- (116.22,180.6) .. controls (118.38,182.75) and (118.38,186.25) .. (116.22,188.41) -- (84.97,219.66) .. controls (82.81,221.82) and (79.32,221.82) .. (77.16,219.66) -- (71.95,214.45) .. controls (69.79,212.29) and (69.79,208.8) .. (71.95,206.64) -- cycle ;
\draw  [fill={rgb, 255:red, 84; green, 174; blue, 255 }  ,fill opacity=1 ] (142.81,206.71) .. controls (144.97,208.87) and (144.97,212.37) .. (142.81,214.53) -- (137.6,219.73) .. controls (135.44,221.89) and (131.95,221.89) .. (129.79,219.73) -- (98.54,188.48) .. controls (96.38,186.33) and (96.38,182.83) .. (98.54,180.67) -- (103.75,175.46) .. controls (105.9,173.3) and (109.4,173.3) .. (111.56,175.46) -- cycle ;
\draw  [fill={rgb, 255:red, 84; green, 174; blue, 255 }  ,fill opacity=1 ] (124.31,159.36) .. controls (124.31,156.31) and (126.78,153.84) .. (129.83,153.84) -- (137.2,153.84) .. controls (140.25,153.84) and (142.72,156.31) .. (142.72,159.36) -- (142.72,212.1) .. controls (142.72,215.15) and (140.25,217.62) .. (137.2,217.62) -- (129.83,217.62) .. controls (126.78,217.62) and (124.31,215.15) .. (124.31,212.1) -- cycle ;
\draw  [fill={rgb, 255:red, 84; green, 174; blue, 255 }  ,fill opacity=1 ] (155.83,175.46) .. controls (157.99,173.3) and (161.49,173.3) .. (163.65,175.46) -- (168.85,180.67) .. controls (171.01,182.83) and (171.01,186.33) .. (168.85,188.48) -- (137.6,219.73) .. controls (135.44,221.89) and (131.95,221.89) .. (129.79,219.73) -- (124.58,214.53) .. controls (122.42,212.37) and (122.42,208.87) .. (124.58,206.71) -- cycle ;
\draw  [fill={rgb, 255:red, 35; green, 159; blue, 255 }  ,fill opacity=1 ] (197.11,205.24) .. controls (199.27,207.4) and (199.27,210.89) .. (197.11,213.05) -- (191.9,218.26) .. controls (189.74,220.42) and (186.25,220.42) .. (184.09,218.26) -- (152.84,187.01) .. controls (150.68,184.85) and (150.68,181.35) .. (152.84,179.2) -- (158.05,173.99) .. controls (160.2,171.83) and (163.7,171.83) .. (165.86,173.99) -- cycle ;
\draw  [fill={rgb, 255:red, 35; green, 159; blue, 255 }  ,fill opacity=1 ] (178.95,157.84) .. controls (178.95,154.79) and (181.43,152.32) .. (184.48,152.32) -- (191.84,152.32) .. controls (194.89,152.32) and (197.37,154.79) .. (197.37,157.84) -- (197.37,211) .. controls (197.37,214.05) and (194.89,216.52) .. (191.84,216.52) -- (184.48,216.52) .. controls (181.43,216.52) and (178.95,214.05) .. (178.95,211) -- cycle ;
\draw  [fill={rgb, 255:red, 35; green, 159; blue, 255 }  ,fill opacity=1 ] (210.13,173.99) .. controls (212.29,171.83) and (215.79,171.83) .. (217.94,173.99) -- (223.15,179.2) .. controls (225.31,181.35) and (225.31,184.85) .. (223.15,187.01) -- (191.9,218.26) .. controls (189.74,220.42) and (186.25,220.42) .. (184.09,218.26) -- (178.88,213.05) .. controls (176.72,210.89) and (176.72,207.4) .. (178.88,205.24) -- cycle ;


\end{mytikz4}\quad.
\end{equation}
The two-qubit block is parametrized as the Cartan decomposition as in Eq.~\eqref{eq:cartan}. As a comparison, we also conduct the same simulations with the instances from the FDC and GLDC classes. To compare fairly, we set the number of blocks in the FDC ansatz as the same as that in the FLDC ansatz above but arrange the blocks differently. To be specific, we apply the blocks located at the same position within the plaquette simultaneously for all plaquettes, as depicted below
\begin{equation}
\begin{mytikz2}
\draw [color={rgb, 255:red, 211; green, 211; blue, 211 }  ,draw opacity=1 ]   (194.84,140.51) -- (123.95,140.51) ;
\draw [shift={(123.95,140.51)}, rotate = 180] [color={rgb, 255:red, 211; green, 211; blue, 211 }  ,draw opacity=1 ][fill={rgb, 255:red, 211; green, 211; blue, 211 }  ,fill opacity=1 ][line width=0.75]      (0, 0) circle [x radius= 1.34, y radius= 1.34]   ;
\draw [color={rgb, 255:red, 211; green, 211; blue, 211 }  ,draw opacity=1 ]   (123.95,144.29) -- (123.95,73.4) ;
\draw  [fill={rgb, 255:red, 240; green, 247; blue, 255 }  ,fill opacity=1 ] (170.41,135.39) .. controls (173.18,138.16) and (173.18,142.65) .. (170.41,145.42) -- (163.73,152.1) .. controls (160.96,154.87) and (156.47,154.87) .. (153.7,152.1) -- (113.6,112) .. controls (110.83,109.23) and (110.83,104.74) .. (113.6,101.97) -- (120.28,95.29) .. controls (123.05,92.52) and (127.54,92.52) .. (130.31,95.29) -- cycle ;
\draw [color={rgb, 255:red, 211; green, 211; blue, 211 }  ,draw opacity=1 ]   (194.84,73.4) -- (123.95,73.4) ;
\draw [shift={(123.95,73.4)}, rotate = 180] [color={rgb, 255:red, 211; green, 211; blue, 211 }  ,draw opacity=1 ][fill={rgb, 255:red, 211; green, 211; blue, 211 }  ,fill opacity=1 ][line width=0.75]      (0, 0) circle [x radius= 1.34, y radius= 1.34]   ;
\draw [color={rgb, 255:red, 211; green, 211; blue, 211 }  ,draw opacity=1 ]   (262.42,73.4) -- (191.53,73.4) ;
\draw [shift={(191.53,73.4)}, rotate = 180] [color={rgb, 255:red, 211; green, 211; blue, 211 }  ,draw opacity=1 ][fill={rgb, 255:red, 211; green, 211; blue, 211 }  ,fill opacity=1 ][line width=0.75]      (0, 0) circle [x radius= 1.34, y radius= 1.34]   ;
\draw [color={rgb, 255:red, 211; green, 211; blue, 211 }  ,draw opacity=1 ]   (191.53,144.29) -- (191.53,73.4) ;
\draw [color={rgb, 255:red, 211; green, 211; blue, 211 }  ,draw opacity=1 ]   (263.37,144.29) -- (263.37,73.4) ;
\draw [shift={(263.37,73.4)}, rotate = 270] [color={rgb, 255:red, 211; green, 211; blue, 211 }  ,draw opacity=1 ][fill={rgb, 255:red, 211; green, 211; blue, 211 }  ,fill opacity=1 ][line width=0.75]      (0, 0) circle [x radius= 1.34, y radius= 1.34]   ;
\draw [color={rgb, 255:red, 211; green, 211; blue, 211 }  ,draw opacity=1 ]   (123.95,211.4) -- (123.95,140.51) ;
\draw [color={rgb, 255:red, 211; green, 211; blue, 211 }  ,draw opacity=1 ]   (262.42,140.51) -- (191.53,140.51) ;
\draw [shift={(191.53,140.51)}, rotate = 180] [color={rgb, 255:red, 211; green, 211; blue, 211 }  ,draw opacity=1 ][fill={rgb, 255:red, 211; green, 211; blue, 211 }  ,fill opacity=1 ][line width=0.75]      (0, 0) circle [x radius= 1.34, y radius= 1.34]   ;
\draw [color={rgb, 255:red, 211; green, 211; blue, 211 }  ,draw opacity=1 ]   (191.53,211.4) -- (191.53,140.51) ;
\draw [color={rgb, 255:red, 211; green, 211; blue, 211 }  ,draw opacity=1 ]   (263.37,211.4) -- (263.37,140.51) ;
\draw [shift={(263.37,140.51)}, rotate = 270] [color={rgb, 255:red, 211; green, 211; blue, 211 }  ,draw opacity=1 ][fill={rgb, 255:red, 211; green, 211; blue, 211 }  ,fill opacity=1 ][line width=0.75]      (0, 0) circle [x radius= 1.34, y radius= 1.34]   ;
\draw [color={rgb, 255:red, 211; green, 211; blue, 211 }  ,draw opacity=1 ]   (194.84,209.51) -- (123.95,209.51) ;
\draw [shift={(123.95,209.51)}, rotate = 180] [color={rgb, 255:red, 211; green, 211; blue, 211 }  ,draw opacity=1 ][fill={rgb, 255:red, 211; green, 211; blue, 211 }  ,fill opacity=1 ][line width=0.75]      (0, 0) circle [x radius= 1.34, y radius= 1.34]   ;
\draw [color={rgb, 255:red, 211; green, 211; blue, 211 }  ,draw opacity=1 ]   (262.42,209.51) -- (191.53,209.51) ;
\draw [shift={(191.53,209.51)}, rotate = 180] [color={rgb, 255:red, 211; green, 211; blue, 211 }  ,draw opacity=1 ][fill={rgb, 255:red, 211; green, 211; blue, 211 }  ,fill opacity=1 ][line width=0.75]      (0, 0) circle [x radius= 1.34, y radius= 1.34]   ;
\draw [shift={(262.42,209.51)}, rotate = 180] [color={rgb, 255:red, 211; green, 211; blue, 211 }  ,draw opacity=1 ][fill={rgb, 255:red, 211; green, 211; blue, 211 }  ,fill opacity=1 ][line width=0.75]      (0, 0) circle [x radius= 1.34, y radius= 1.34]   ;
\draw  [fill={rgb, 255:red, 240; green, 247; blue, 255 }  ,fill opacity=1 ] (237.25,135.39) .. controls (240.02,138.16) and (240.02,142.65) .. (237.25,145.42) -- (230.57,152.1) .. controls (227.8,154.87) and (223.31,154.87) .. (220.54,152.1) -- (180.44,112) .. controls (177.67,109.23) and (177.67,104.74) .. (180.44,101.97) -- (187.12,95.29) .. controls (189.89,92.52) and (194.38,92.52) .. (197.15,95.29) -- cycle ;
\draw  [fill={rgb, 255:red, 240; green, 247; blue, 255 }  ,fill opacity=1 ] (169.63,200.77) .. controls (172.39,203.54) and (172.39,208.03) .. (169.63,210.8) -- (162.94,217.48) .. controls (160.17,220.25) and (155.68,220.25) .. (152.92,217.48) -- (112.81,177.38) .. controls (110.04,174.61) and (110.04,170.12) .. (112.81,167.35) -- (119.5,160.67) .. controls (122.27,157.9) and (126.75,157.9) .. (129.52,160.67) -- cycle ;
\draw  [fill={rgb, 255:red, 240; green, 247; blue, 255 }  ,fill opacity=1 ] (236.45,200.44) .. controls (239.22,203.21) and (239.22,207.7) .. (236.45,210.47) -- (229.77,217.15) .. controls (227,219.92) and (222.51,219.92) .. (219.74,217.15) -- (179.64,177.05) .. controls (176.87,174.28) and (176.87,169.79) .. (179.64,167.03) -- (186.33,160.34) .. controls (189.09,157.57) and (193.58,157.57) .. (196.35,160.34) -- cycle ;
\draw  [fill={rgb, 255:red, 207; green, 228; blue, 255 }  ,fill opacity=1 ] (146,74.39) .. controls (146,70.47) and (149.17,67.3) .. (153.09,67.3) -- (162.54,67.3) .. controls (166.46,67.3) and (169.63,70.47) .. (169.63,74.39) -- (169.63,145.01) .. controls (169.63,148.93) and (166.46,152.1) .. (162.54,152.1) -- (153.09,152.1) .. controls (149.17,152.1) and (146,148.93) .. (146,145.01) -- cycle ;
\draw  [fill={rgb, 255:red, 207; green, 228; blue, 255 }  ,fill opacity=1 ] (212.62,74.39) .. controls (212.62,70.47) and (215.79,67.3) .. (219.71,67.3) -- (229.16,67.3) .. controls (233.08,67.3) and (236.25,70.47) .. (236.25,74.39) -- (236.25,145.01) .. controls (236.25,148.93) and (233.08,152.1) .. (229.16,152.1) -- (219.71,152.1) .. controls (215.79,152.1) and (212.62,148.93) .. (212.62,145.01) -- cycle ;
\draw  [fill={rgb, 255:red, 128; green, 185; blue, 255 }  ,fill opacity=1 ] (146,139.77) .. controls (146,135.85) and (149.17,132.68) .. (153.09,132.68) -- (162.54,132.68) .. controls (166.46,132.68) and (169.63,135.85) .. (169.63,139.77) -- (169.63,210.39) .. controls (169.63,214.3) and (166.46,217.48) .. (162.54,217.48) -- (153.09,217.48) .. controls (149.17,217.48) and (146,214.3) .. (146,210.39) -- cycle ;
\draw  [fill={rgb, 255:red, 52; green, 160; blue, 255 }  ,fill opacity=1 ] (186.33,160.67) .. controls (189.1,157.9) and (193.59,157.9) .. (196.36,160.67) -- (203.04,167.35) .. controls (205.81,170.12) and (205.81,174.61) .. (203.04,177.38) -- (162.94,217.48) .. controls (160.17,220.25) and (155.68,220.25) .. (152.92,217.48) -- (146.23,210.8) .. controls (143.46,208.03) and (143.46,203.54) .. (146.23,200.77) -- cycle ;
\draw  [fill={rgb, 255:red, 52; green, 160; blue, 255 }  ,fill opacity=1 ] (187.12,95.29) .. controls (189.89,92.52) and (194.38,92.52) .. (197.15,95.29) -- (203.83,101.97) .. controls (206.6,104.74) and (206.6,109.23) .. (203.83,112) -- (163.73,152.1) .. controls (160.96,154.87) and (156.47,154.87) .. (153.7,152.1) -- (147.02,145.42) .. controls (144.25,142.65) and (144.25,138.16) .. (147.02,135.39) -- cycle ;
\draw  [fill={rgb, 255:red, 128; green, 185; blue, 255 }  ,fill opacity=1 ] (211.66,140.44) .. controls (211.66,136.53) and (214.83,133.35) .. (218.74,133.35) -- (228.2,133.35) .. controls (232.11,133.35) and (235.29,136.53) .. (235.29,140.44) -- (235.29,211.06) .. controls (235.29,214.98) and (232.11,218.15) .. (228.2,218.15) -- (218.74,218.15) .. controls (214.83,218.15) and (211.66,214.98) .. (211.66,211.06) -- cycle ;
\draw  [fill={rgb, 255:red, 52; green, 160; blue, 255 }  ,fill opacity=1 ] (252.16,161.34) .. controls (254.93,158.57) and (259.42,158.57) .. (262.19,161.34) -- (268.87,168.03) .. controls (271.64,170.79) and (271.64,175.28) .. (268.87,178.05) -- (228.77,218.15) .. controls (226,220.92) and (221.51,220.92) .. (218.74,218.15) -- (212.06,211.47) .. controls (209.29,208.7) and (209.29,204.21) .. (212.06,201.44) -- cycle ;
\draw  [fill={rgb, 255:red, 52; green, 160; blue, 255 }  ,fill opacity=1 ] (252.95,95.96) .. controls (255.72,93.19) and (260.21,93.19) .. (262.98,95.96) -- (269.66,102.65) .. controls (272.43,105.42) and (272.43,109.9) .. (269.66,112.67) -- (229.56,152.78) .. controls (226.79,155.54) and (222.3,155.54) .. (219.53,152.78) -- (212.85,146.09) .. controls (210.08,143.32) and (210.08,138.83) .. (212.85,136.07) -- cycle ;
\end{mytikz2}\quad.
\end{equation}
The GLDC ansatz is just repeating the FDC ansatz above by $N=12$ times. In the process of training, we use the Adam optimizer with a learning rate of $0.01$. For each ansatz, there are $100$ independent VQE trials with different initialized parameters with each parameter randomly chosen from the uniform distribution over $[0, 2\pi)$. 

\begin{figure}
    \centering
    \includegraphics[width=0.8\linewidth]{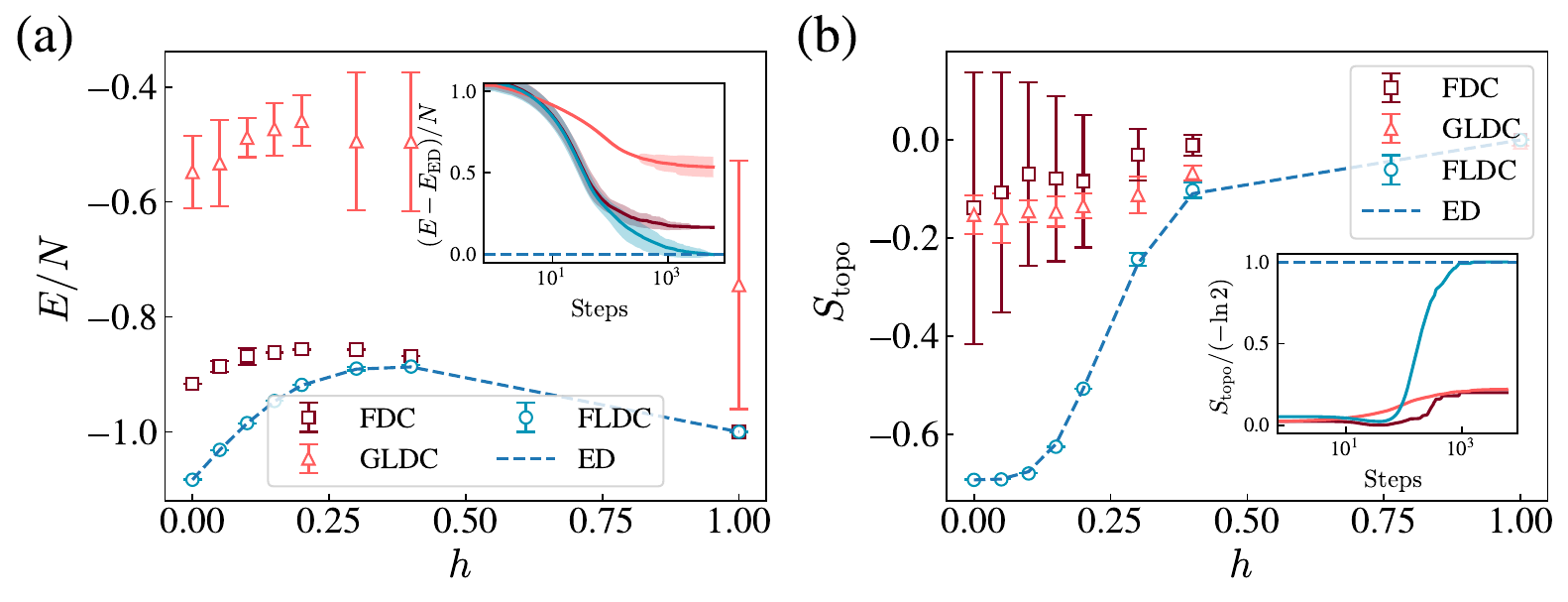}
    \caption{VQE performance comparison of the ``claw-shape'' FDC, FLDC, and GLDC ansatzes using the toric code model under the external field $(h^x, h^y, h^z)=(0,0,h)$ with $N=12$. The data is averaged over the best half of the $100$ training trajectories starting from different initializations. (a) shows the converged energy $E/N$ vs $h$. The inset depicts the energy training dynamics at $h=0$. The dashed lines represent the exact values obtained from ED. The (shaded) error bar represents the standard deviation. (b) shows the topological entanglement entropy $S_{\text{topo}}$ correspondingly.}
    \label{fig:tc_zfield}
\end{figure}

Besides the numerical results shown in the main text, we also investigate the case of $(h^x, h^y, h^z)=(0,0,h)$ and other FLDC ansatzes, where the conclusion remains the same as in the main text, i.e., the FLDCs offers the best performance. The numerical results for the case of $(h^x, h^y, h^z)=(0,0,h)$ are shown in Fig.~\ref{fig:tc_zfield}. The numerical results corresponding to a different FLDC ansatz where the blocks are applied by the following ``U-shape'' as follows
\begin{equation}\label{eq:u_shape}
\begin{mytikz2}
\draw [color={rgb, 255:red, 211; green, 211; blue, 211 }  ,draw opacity=1 ]   (193.04,103.88) -- (106.83,103.88) ;
\draw [shift={(106.83,103.88)}, rotate = 180] [color={rgb, 255:red, 211; green, 211; blue, 211 }  ,draw opacity=1 ][fill={rgb, 255:red, 211; green, 211; blue, 211 }  ,fill opacity=1 ][line width=0.75]      (0, 0) circle [x radius= 1.34, y radius= 1.34]   ;
\draw [color={rgb, 255:red, 211; green, 211; blue, 211 }  ,draw opacity=1 ]   (106.83,190.1) -- (106.83,103.88) ;
\draw [color={rgb, 255:red, 211; green, 211; blue, 211 }  ,draw opacity=1 ]   (194.19,190.1) -- (194.19,103.88) ;
\draw [shift={(194.19,103.88)}, rotate = 270] [color={rgb, 255:red, 211; green, 211; blue, 211 }  ,draw opacity=1 ][fill={rgb, 255:red, 211; green, 211; blue, 211 }  ,fill opacity=1 ][line width=0.75]      (0, 0) circle [x radius= 1.34, y radius= 1.34]   ;
\draw [color={rgb, 255:red, 211; green, 211; blue, 211 }  ,draw opacity=1 ]   (193.04,187.8) -- (106.83,187.8) ;
\draw [shift={(106.83,187.8)}, rotate = 180] [color={rgb, 255:red, 211; green, 211; blue, 211 }  ,draw opacity=1 ][fill={rgb, 255:red, 211; green, 211; blue, 211 }  ,fill opacity=1 ][line width=0.75]      (0, 0) circle [x radius= 1.34, y radius= 1.34]   ;
\draw [shift={(193.04,187.8)}, rotate = 180] [color={rgb, 255:red, 211; green, 211; blue, 211 }  ,draw opacity=1 ][fill={rgb, 255:red, 211; green, 211; blue, 211 }  ,fill opacity=1 ][line width=0.75]      (0, 0) circle [x radius= 1.34, y radius= 1.34]   ;
\draw  [fill={rgb, 255:red, 215; green, 236; blue, 255 }  ,fill opacity=1 ] (164.06,177.99) .. controls (167.43,181.35) and (167.43,186.81) .. (164.06,190.18) -- (155.93,198.31) .. controls (152.57,201.67) and (147.11,201.67) .. (143.74,198.31) -- (94.97,149.54) .. controls (91.61,146.17) and (91.61,140.71) .. (94.97,137.35) -- (103.1,129.22) .. controls (106.47,125.85) and (111.93,125.85) .. (115.29,129.22) -- cycle ;
\draw  [fill={rgb, 255:red, 145; green, 206; blue, 255 }  ,fill opacity=1 ] (184.38,129.22) .. controls (187.75,125.85) and (193.21,125.85) .. (196.58,129.22) -- (204.7,137.35) .. controls (208.07,140.71) and (208.07,146.17) .. (204.7,149.54) -- (155.93,198.31) .. controls (152.57,201.67) and (147.11,201.67) .. (143.74,198.31) -- (135.61,190.18) .. controls (132.25,186.81) and (132.25,181.35) .. (135.61,177.99) -- cycle ;
\draw  [fill={rgb, 255:red, 35; green, 159; blue, 255 }  ,fill opacity=1 ] (204.7,137.35) .. controls (208.07,140.71) and (208.07,146.17) .. (204.7,149.54) -- (196.58,157.67) .. controls (193.21,161.03) and (187.75,161.03) .. (184.38,157.67) -- (135.61,108.9) .. controls (132.25,105.53) and (132.25,100.07) .. (135.61,96.71) -- (143.74,88.58) .. controls (147.11,85.21) and (152.57,85.21) .. (155.93,88.58) -- cycle ;
\end{mytikz2}\quad.
\end{equation}
within each plaquatte are shown in Fig.~\ref{fig:tc_zfield_Ushape} and Fig.~\ref{fig:tc_xzfield_Ushape}. The notations are the same as in Fig.~\textcolor{blue}{4} in the main text. The error bar represents the standard deviation over all trials and the shaded error bar in the inset is the standard deviation over all trajectories from different random initializations. Finally, we point out that the practical performance of the FLDC ansatzes may depend on the specific sequence of the blocks in each plaquette. For example, if the two-qubit blocks are applied clockwise in the ``U-shape'' instead of anticlockwise as in Eq.~\eqref{eq:u_shape}, the energy would not converge well to the exact value because the exact ground state of the bare toric code model is not in the space of this FLDC ansatz at all. Therefore, in practice, it is also important to determine the specific structures of FLDC ansatzes appropriate for the target ground state.

\begin{figure}
    \centering
    \includegraphics[width=0.8\linewidth]{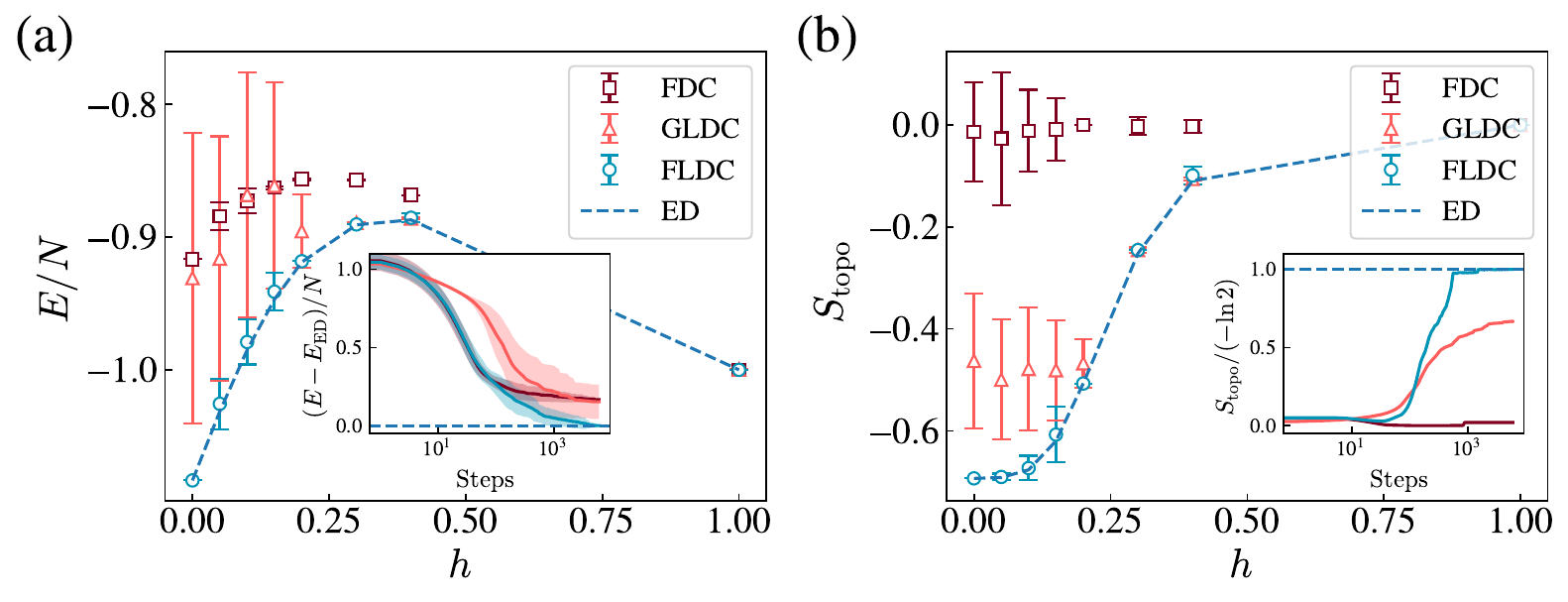}
    \caption{VQE performance comparison of the ``U-shape'' FDC, FLDC, and GLDC ansatzes using the toric code model under the external field $(h^x, h^y, h^z)=(0,0,h)$ with $N=12$. The data is averaged over the best half of the $100$ training trajectories starting from different initializations. (a) shows the converged energy $E/N$ vs $h$. The inset depicts the energy training dynamics at $h=0$. The dashed lines represent the exact values obtained from ED. The (shaded) error bar represents the standard deviation. (b) shows the topological entanglement entropy $S_{\text{topo}}$ correspondingly.}
    \label{fig:tc_zfield_Ushape}
\end{figure}

\begin{figure}
    \centering
    \includegraphics[width=0.8\linewidth]{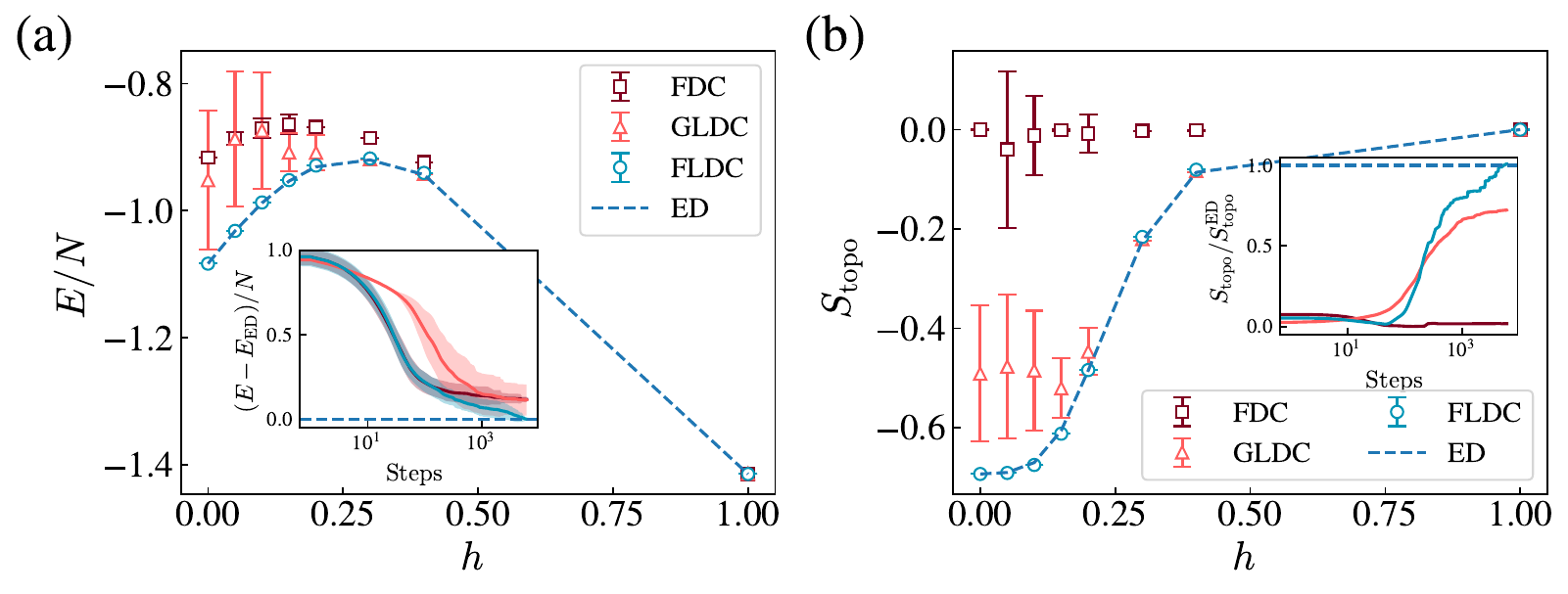}
    \caption{VQE performance comparison of the ``U-shape'' FDC, FLDC, and GLDC ansatzes using the toric code model under the external field $(h^x, h^y, h^z)=(h,0,h)$ with $N=12$. The data is averaged over the best half of the $100$ training trajectories starting from different initializations. (a) shows the converged energy $E/N$ vs $h$. The inset depicts the energy training dynamics at $h=0.1$. The dashed lines represent the exact values obtained from ED. The (shaded) error bar represents the standard deviation. (b) shows the topological entanglement entropy $S_{\text{topo}}$ correspondingly.}
    \label{fig:tc_xzfield_Ushape}
\end{figure}

\section{Classical Simulability of Finite Local-Depth Circuits}

Previous results on the absence of barren plateaus in certain circuit architectures~\cite{Cerezo2021, Uvarov2021a, Pesah2021, Zhao2021a, Barthel2023, Miao2023, CerveroMartin2023, Liu2021a, Liu2023c} mainly focus on constant or logarithmic depth circuits, which can be efficiently simulated classically to estimate local observables using the known methods due to the small causal cone and the small treewidths~\cite{Markov2008}. This is a matter of recent concern~\cite{Cerezo2023}, i.e., the classical simulability of the tasks with the provable absence of barren plateaus (BP). In the following, we will show that the known related classical methods in tensor networks~\cite{Banuls2008, Zaletel2020, Soejima2020, Haferkamp2020, Bravyi2021, Haller2023, Liu2023} cannot efficiently simulate finite local-depth circuits (FLDC) in general for the ground state preparation task (FLDC in 2 dimensions and above specifically). In other words, to the best of our knowledge, FLDC is the first circuit class that is proven to be BP-free and at the same time cannot be efficiently simulated by existing classical methods. We will provide a detailed discussion in this section and explain why the FDLC class contains a path to quantum advantage. 

First of all, we would like to clarify a fact to facilitate subsequent discussions. The task of ground state preparation is closely tied to the estimation of observables. Merely storing the ground state wavefunction in a classical or quantum memory cannot be considered as accomplishing a meaningful task completely. This is attributed to three reasons: (a) The classical and quantum outputs of ground state preparation take different forms and are not directly comparable. To establish comparability, it is necessary to designate the final output as a measurement result, such as the ground state energy. (b) In practical physical or chemical applications, the ground state wavefunction and even the ground state energy are often considered just as intermediate results. The ultimate concern is to know the properties of the ground state by measuring other observables on the ground state, such as order parameters, correlation functions, entanglement entropy, and so forth~\cite{Liu2023, Zheng2017, Qin2020, Huang2017, Liu2022d}. (c) In practice, the process of optimizing a variational ansatz to obtain the ground state inherently involves the estimation of observables. For instance, in VQE, the optimization is based on estimating the energy of the ansatz state, i.e., the expectation value of the Hamiltonian, which is usually obtained by measuring each summed term in the Hamiltonian. Therefore, to discuss the classical simulability clearly, we will investigate the following two cases respectively:
\begin{enumerate}[(i)]
    \item Preparing the ground state and estimating the ground state energy (or say the cost function in variational quantum algorithms). Here we focus on spatially local Hamiltonians so that only local observables need to be estimated.
    \item Preparing the ground state and estimating various quantities of interest, including the ground state energy, order parameters, correlation functions, and so on. Note that in this case, the observables that need to be estimated can be spatially local, or spatially non-local but still few-body, or even many-body, etc. These features will further increase the difficulty of classical simulation.
\end{enumerate}

We first consider the case (i). We will categorize the discussion in terms of circuit depth. In advance, we clarify that when mentioning the spatial dimension of a quantum circuit, we actually assume the existence of gate locality. In other words, we assume that there is a qubit connectivity graph such that the gates in the circuit only act on a few neighboring qubits. We refer to the circuits with gate locality as ``local circuits''~\cite{Chen2023a}. If the connectivity graph is a $D$-dimensional lattice (or say grid), we say the circuit has a spatial dimension of $D$. The local observables mentioned in case (i) mean that the support of the observables only involves a few neighboring qubits in the connectivity graph. Please note the distinction between the concepts of ``local'' and ``few-body''. Local naturally implies few-body (the support of the operator does not scale with the system size), together with the condition that the corresponding few qubits must be neighboring, while few-body does not necessarily imply local, which means that the involved few qubits can be far apart in the connectivity graph. In the subsequent discussions, we assume that quantum gates are all few-body, and when talking about circuits with spatial dimensions, gates are all local, unless otherwise stated.
\begin{itemize}
    \item For shallow circuits of constant or logarithmic depth~\cite{Cerezo2021, Uvarov2021a, Pesah2021, Zhao2021a, Barthel2023, Miao2023, CerveroMartin2023} such as the shallow brickwall circuits~\cite{Cerezo2021, Uvarov2021a}, the quantum convolutional neural networks (QCNN)~\cite{Pesah2021}, and the multi-scale entanglement renormalization ansatzes (MERA)~\cite{Vidal2008, Zhao2021a, Barthel2023, Miao2023, CerveroMartin2023}, local observables are easy to estimate~\cite{Cerezo2023} even regardless of gate locality. This can be easily understood by the fact that the backward causal cone of the local observable in the circuit can either be truncated by the constant depth like (the orange square indicate the location of the local observable and the orange shaded area indicate the causal cone)
    \begin{equation}\label{eq:causal_cone_fdc}
    \begin{mytikz2}
    \draw [line width=0.75]    (124.97,83.82) -- (200.19,84.03) ;
    \draw [line width=0.75]    (124.95,117.36) -- (200.17,117.57) ;
    \draw [line width=0.75]    (124.98,50.28) -- (200.2,50.49) ;
    \draw  [fill={rgb, 255:red, 204; green, 230; blue, 255 }  ,fill opacity=1 ][line width=0.75]  (141.81,46.93) .. controls (141.81,44.16) and (144.06,41.92) .. (146.83,41.92) -- (153.52,41.93) .. controls (156.29,41.93) and (158.53,44.18) .. (158.53,46.95) -- (158.48,87.22) .. controls (158.48,89.99) and (156.23,92.24) .. (153.46,92.23) -- (146.78,92.23) .. controls (144.01,92.22) and (141.76,89.98) .. (141.77,87.21) -- cycle ;
    \draw [line width=0.75]    (124.93,184.83) -- (200.15,185.04) ;
    \draw [line width=0.75]    (124.91,218.37) -- (200.13,218.58) ;
    \draw [line width=0.75]    (124.94,150.89) -- (200.16,151.1) ;
    \draw  [fill={rgb, 255:red, 204; green, 230; blue, 255 }  ,fill opacity=1 ][line width=0.75]  (166.99,80.49) .. controls (166.99,77.72) and (169.24,75.48) .. (172.01,75.48) -- (178.7,75.49) .. controls (181.47,75.49) and (183.71,77.74) .. (183.71,80.51) -- (183.67,120.79) .. controls (183.66,123.56) and (181.41,125.8) .. (178.65,125.8) -- (171.96,125.79) .. controls (169.19,125.79) and (166.95,123.54) .. (166.95,120.77) -- cycle ;
    \draw  [fill={rgb, 255:red, 204; green, 230; blue, 255 }  ,fill opacity=1 ][line width=0.75]  (141.79,114.03) .. controls (141.8,111.26) and (144.05,109.01) .. (146.82,109.02) -- (153.5,109.02) .. controls (156.27,109.03) and (158.52,111.27) .. (158.51,114.04) -- (158.47,154.32) .. controls (158.47,157.09) and (156.22,159.33) .. (153.45,159.33) -- (146.76,159.32) .. controls (143.99,159.32) and (141.75,157.07) .. (141.75,154.3) -- cycle ;
    \draw  [fill={rgb, 255:red, 204; green, 230; blue, 255 }  ,fill opacity=1 ][line width=0.75]  (166.92,147.59) .. controls (166.92,144.82) and (169.17,142.58) .. (171.94,142.58) -- (178.63,142.59) .. controls (181.4,142.59) and (183.64,144.84) .. (183.64,147.61) -- (183.6,187.88) .. controls (183.59,190.65) and (181.34,192.89) .. (178.57,192.89) -- (171.89,192.88) .. controls (169.12,192.88) and (166.87,190.63) .. (166.88,187.86) -- cycle ;
    \draw  [fill={rgb, 255:red, 204; green, 230; blue, 255 }  ,fill opacity=1 ][line width=0.75]  (141.72,181.09) .. controls (141.73,178.32) and (143.98,176.08) .. (146.75,176.08) -- (153.43,176.09) .. controls (156.2,176.09) and (158.45,178.34) .. (158.44,181.11) -- (158.4,221.38) .. controls (158.4,224.15) and (156.15,226.39) .. (153.38,226.39) -- (146.69,226.38) .. controls (143.92,226.38) and (141.68,224.13) .. (141.68,221.36) -- cycle ;
    \draw [line width=0.75]    (124.96,252.26) -- (200.18,252.47) ;
    \draw [line width=0.75]    (124.95,285.8) -- (200.17,286.01) ;
    \draw  [fill={rgb, 255:red, 204; green, 230; blue, 255 }  ,fill opacity=1 ][line width=0.75]  (167.02,215.02) .. controls (167.02,212.25) and (169.27,210) .. (172.04,210.01) -- (178.72,210.01) .. controls (181.49,210.02) and (183.74,212.26) .. (183.73,215.03) -- (183.69,255.31) .. controls (183.69,258.08) and (181.44,260.32) .. (178.67,260.32) -- (171.98,260.31) .. controls (169.21,260.31) and (166.97,258.06) .. (166.97,255.29) -- cycle ;
    \draw  [fill={rgb, 255:red, 204; green, 230; blue, 255 }  ,fill opacity=1 ][line width=0.75]  (141.82,248.52) .. controls (141.82,245.75) and (144.07,243.5) .. (146.84,243.51) -- (153.53,243.51) .. controls (156.3,243.52) and (158.54,245.77) .. (158.54,248.54) -- (158.5,288.81) .. controls (158.49,291.58) and (156.24,293.82) .. (153.47,293.82) -- (146.79,293.81) .. controls (144.02,293.81) and (141.77,291.56) .. (141.78,288.79) -- cycle ;
    \draw  [fill={rgb, 255:red, 245; green, 166; blue, 35 }  ,fill opacity=1 ][line width=0.75]  (200.02,244) .. controls (200.02,244) and (200.02,244) .. (200.02,244) -- (216.74,244.02) .. controls (216.74,244.02) and (216.74,244.02) .. (216.74,244.02) -- (216.72,261.44) .. controls (216.72,261.44) and (216.72,261.44) .. (216.72,261.44) -- (200,261.42) .. controls (200,261.42) and (200,261.42) .. (200,261.42) -- cycle ;
    \draw  [color={rgb, 255:red, 0; green, 0; blue, 0 }  ,draw opacity=0.16 ][fill={rgb, 255:red, 245; green, 166; blue, 35 }  ,fill opacity=0.2 ][dash pattern={on 4.5pt off 4.5pt}] (124.7,156.43) -- (194.61,216) -- (194.68,254.6) -- (125,314.43) -- cycle ;
    \draw [line width=0.75]    (124.95,318.86) -- (200.17,319.07) ;
    \draw [line width=0.75]    (124.94,352.39) -- (200.16,352.6) ;
    \draw  [fill={rgb, 255:red, 204; green, 230; blue, 255 }  ,fill opacity=1 ][line width=0.75]  (166.99,281.99) .. controls (166.99,279.22) and (169.24,276.98) .. (172.01,276.98) -- (178.7,276.99) .. controls (181.47,276.99) and (183.71,279.24) .. (183.71,282.01) -- (183.67,322.29) .. controls (183.66,325.06) and (181.41,327.3) .. (178.65,327.3) -- (171.96,327.29) .. controls (169.19,327.29) and (166.95,325.04) .. (166.95,322.27) -- cycle ;
    \draw  [fill={rgb, 255:red, 204; green, 230; blue, 255 }  ,fill opacity=1 ][line width=0.75]  (141.79,315.53) .. controls (141.8,312.76) and (144.05,310.51) .. (146.82,310.52) -- (153.5,310.52) .. controls (156.27,310.53) and (158.52,312.77) .. (158.51,315.54) -- (158.47,355.82) .. controls (158.47,358.59) and (156.22,360.83) .. (153.45,360.83) -- (146.76,360.82) .. controls (143.99,360.82) and (141.75,358.57) .. (141.75,355.8) -- cycle ;
    
    \end{mytikz2}\quad,
    \end{equation}
    or only have a small treewidth in log-depth circuits~\cite{Cerezo2023, Markov2008}. Note that the treewidth characterizes how close a graph is to a tree~\cite{Markov2008}, and tree tensor networks can be efficiently contracted because the degrees of the tensors would not increase when contracting from the leaves to the root. It is known that the computational overhead for contracting a tensor network (TN) grows exponentially with the treewidth in the worst case~\cite{Markov2008}. Thus, estimating local observables in shallow circuits can be simulated classically within polynomial time by determining the causal cone and contracting the corresponding tensor networks. However, we also point out that in some practical cases, due to potential large constant factors and power exponents in the complexity scaling, classical computers may still require very long times to contract the corresponding tensor networks, such as MERA in higher spatial dimensions.
    
    \item For FLDCs, the spatial dimension is of great importance and needs to be handled carefully. We will categorize and discuss them. We remark that a significant feature of local FLDCs is that the generated quantum states satisfy the entanglement area law (or say boundary law) because the number of gates acting across any simple partition boundary entangling the two sides can be upper bounded by the local depth times the size of the boundary. This feature makes them form a subclass of the projected entangled paired states (PEPS)~\cite{Verstraete2006, Schollwock2011} of the corresponding spatial dimension. 1D PEPS is just the class of matrix product states (MPS)~\cite{Schollwock2011}.
    \begin{itemize}
        \item[$\circ$] 1D FLDCs are classically easy to simulate using MPS. The technical reason is that 1D FLDCs have chain-like structures and hence have constant treewidths. To be specific, unlike in shallow circuits, the causal cone can be large in a 1D FLDC, especially for the local observable near the gates that act later in the sequence, as depicted below
        \begin{equation}
        \begin{mytikz2}
        \draw [line width=0.75]    (269.65,84.25) -- (470.15,84.46) ;
        \draw [line width=0.75]    (269.62,117.79) -- (470.12,117.99) ;
        \draw [line width=0.75]    (269.69,50.71) -- (470.19,50.92) ;
        \draw  [fill={rgb, 255:red, 204; green, 230; blue, 255 }  ,fill opacity=1 ][line width=0.75]  (286.41,47.36) .. controls (286.41,44.59) and (288.66,42.35) .. (291.43,42.35) -- (298.12,42.36) .. controls (300.89,42.36) and (303.13,44.61) .. (303.13,47.38) -- (303.08,87.65) .. controls (303.08,90.42) and (300.83,92.66) .. (298.06,92.66) -- (291.38,92.65) .. controls (288.61,92.65) and (286.36,90.4) .. (286.37,87.63) -- cycle ;
        \draw [line width=0.75]    (269.55,184.86) -- (470.05,185.07) ;
        \draw [line width=0.75]    (269.51,218.4) -- (470.01,218.6) ;
        \draw [line width=0.75]    (269.58,151.32) -- (470.08,151.53) ;
        \draw  [fill={rgb, 255:red, 204; green, 230; blue, 255 }  ,fill opacity=1 ][line width=0.75]  (311.59,80.92) .. controls (311.59,78.15) and (313.84,75.91) .. (316.61,75.91) -- (323.3,75.92) .. controls (326.07,75.92) and (328.31,78.17) .. (328.31,80.94) -- (328.27,121.21) .. controls (328.26,123.98) and (326.01,126.23) .. (323.25,126.22) -- (316.56,126.22) .. controls (313.79,126.21) and (311.55,123.97) .. (311.55,121.2) -- cycle ;
        \draw  [fill={rgb, 255:red, 204; green, 230; blue, 255 }  ,fill opacity=1 ][line width=0.75]  (336.72,114.49) .. controls (336.72,111.72) and (338.97,109.47) .. (341.74,109.48) -- (348.42,109.48) .. controls (351.19,109.48) and (353.44,111.73) .. (353.43,114.5) -- (353.39,154.78) .. controls (353.39,157.55) and (351.14,159.79) .. (348.37,159.79) -- (341.68,159.78) .. controls (338.91,159.78) and (336.67,157.53) .. (336.67,154.76) -- cycle ;
        \draw  [fill={rgb, 255:red, 204; green, 230; blue, 255 }  ,fill opacity=1 ][line width=0.75]  (361.84,148.05) .. controls (361.84,145.28) and (364.09,143.04) .. (366.86,143.04) -- (373.55,143.05) .. controls (376.32,143.05) and (378.56,145.3) .. (378.56,148.07) -- (378.52,188.34) .. controls (378.51,191.11) and (376.27,193.35) .. (373.5,193.35) -- (366.81,193.34) .. controls (364.04,193.34) and (361.8,191.09) .. (361.8,188.32) -- cycle ;
        \draw  [fill={rgb, 255:red, 204; green, 230; blue, 255 }  ,fill opacity=1 ][line width=0.75]  (386.97,181.61) .. controls (386.97,178.84) and (389.22,176.6) .. (391.99,176.6) -- (398.67,176.61) .. controls (401.44,176.61) and (403.69,178.86) .. (403.68,181.63) -- (403.64,221.9) .. controls (403.64,224.67) and (401.39,226.92) .. (398.62,226.91) -- (391.93,226.91) .. controls (389.16,226.9) and (386.92,224.66) .. (386.92,221.89) -- cycle ;
        \draw [line width=0.75]    (269.64,252.29) -- (470.14,252.5) ;
        \draw [line width=0.75]    (269.61,285.83) -- (470.11,286.03) ;
        \draw  [fill={rgb, 255:red, 204; green, 230; blue, 255 }  ,fill opacity=1 ][line width=0.75]  (412.26,215.53) .. controls (412.26,212.76) and (414.51,210.52) .. (417.28,210.52) -- (423.96,210.53) .. controls (426.73,210.53) and (428.98,212.78) .. (428.97,215.55) -- (428.93,255.82) .. controls (428.93,258.59) and (426.68,260.83) .. (423.91,260.83) -- (417.22,260.82) .. controls (414.45,260.82) and (412.21,258.57) .. (412.21,255.8) -- cycle ;
        \draw  [fill={rgb, 255:red, 204; green, 230; blue, 255 }  ,fill opacity=1 ][line width=0.75]  (437.38,249.09) .. controls (437.38,246.32) and (439.63,244.08) .. (442.4,244.08) -- (449.09,244.09) .. controls (451.86,244.09) and (454.1,246.34) .. (454.1,249.11) -- (454.06,289.38) .. controls (454.05,292.15) and (451.81,294.4) .. (449.04,294.39) -- (442.35,294.39) .. controls (439.58,294.38) and (437.34,292.14) .. (437.34,289.37) -- cycle ;
        \draw  [fill={rgb, 255:red, 245; green, 166; blue, 35 }  ,fill opacity=1 ][line width=0.75]  (469.22,243.6) .. controls (469.22,243.6) and (469.22,243.6) .. (469.22,243.6) -- (485.94,243.62) .. controls (485.94,243.62) and (485.94,243.62) .. (485.94,243.62) -- (485.92,261.04) .. controls (485.92,261.04) and (485.92,261.04) .. (485.92,261.04) -- (469.2,261.02) .. controls (469.2,261.02) and (469.2,261.02) .. (469.2,261.02) -- cycle ;
        \draw  [color={rgb, 255:red, 0; green, 0; blue, 0 }  ,draw opacity=0.16 ][fill={rgb, 255:red, 245; green, 166; blue, 35 }  ,fill opacity=0.2 ][dash pattern={on 4.5pt off 4.5pt}] (275.04,4.36) -- (460.2,247.6) -- (460.24,297.56) -- (274.64,296.76) -- cycle ;
        
        \end{mytikz2}\quad.
        \end{equation}
        Nonetheless, a chain-like circuit structure arises thanks to the 1D geometry and the constant local-depth. This chain-like circuit further gives rise to a chain-like TN structure for the expectation value $\braoprket{\psi}{O}{\psi}$ 
                \begin{equation}\label{eq:1d_fdlc_chain}
        \begin{mytikz2}
        \draw [line width=0.75]    (596.94,49.86) -- (550.94,49.86) ;
        \draw [line width=0.75]    (597.44,84.36) -- (551.44,84.36) ;
        \draw [line width=0.75]    (596.94,116.36) -- (550.94,116.36) ;
        \draw [line width=0.75]    (597.44,150.86) -- (551.44,150.86) ;
        \draw [line width=0.75]    (597.44,184.86) -- (551.44,184.86) ;
        \draw [line width=0.75]    (597.94,219.36) -- (551.94,219.36) ;
        \draw [line width=0.75]    (598.44,49.86) -- (598.8,252.83) ;
        \draw [line width=0.75]    (550.44,49.86) -- (550.2,253.15) ;
        \draw [line width=0.75]    (20.99,83.58) -- (221.49,83.79) ;
        \draw [line width=0.75]    (20.95,117.12) -- (221.45,117.33) ;
        \draw [line width=0.75]    (21.02,50.04) -- (221.52,50.25) ;
        \draw  [fill={rgb, 255:red, 204; green, 230; blue, 255 }  ,fill opacity=1 ][line width=0.75]  (37.74,46.69) .. controls (37.75,43.92) and (39.99,41.68) .. (42.76,41.68) -- (49.45,41.69) .. controls (52.22,41.69) and (54.46,43.94) .. (54.46,46.71) -- (54.42,86.98) .. controls (54.42,89.75) and (52.17,92) .. (49.4,91.99) -- (42.71,91.99) .. controls (39.94,91.98) and (37.7,89.74) .. (37.7,86.97) -- cycle ;
        \draw [line width=0.75]    (20.88,184.19) -- (221.38,184.4) ;
        \draw [line width=0.75]    (20.85,217.74) -- (221.35,217.94) ;
        \draw [line width=0.75]    (20.92,150.65) -- (221.42,150.86) ;
        \draw  [fill={rgb, 255:red, 204; green, 230; blue, 255 }  ,fill opacity=1 ][line width=0.75]  (62.92,80.26) .. controls (62.93,77.49) and (65.17,75.24) .. (67.94,75.25) -- (74.63,75.25) .. controls (77.4,75.26) and (79.64,77.5) .. (79.64,80.27) -- (79.6,120.55) .. controls (79.6,123.32) and (77.35,125.56) .. (74.58,125.56) -- (67.89,125.55) .. controls (65.12,125.55) and (62.88,123.3) .. (62.88,120.53) -- cycle ;
        \draw  [fill={rgb, 255:red, 204; green, 230; blue, 255 }  ,fill opacity=1 ][line width=0.75]  (88.05,113.82) .. controls (88.05,111.05) and (90.3,108.81) .. (93.07,108.81) -- (99.76,108.82) .. controls (102.53,108.82) and (104.77,111.07) .. (104.77,113.84) -- (104.72,154.11) .. controls (104.72,156.88) and (102.47,159.12) .. (99.7,159.12) -- (93.02,159.11) .. controls (90.25,159.11) and (88,156.86) .. (88.01,154.09) -- cycle ;
        \draw  [fill={rgb, 255:red, 204; green, 230; blue, 255 }  ,fill opacity=1 ][line width=0.75]  (113.17,147.38) .. controls (113.18,144.61) and (115.42,142.37) .. (118.19,142.37) -- (124.88,142.38) .. controls (127.65,142.38) and (129.89,144.63) .. (129.89,147.4) -- (129.85,187.67) .. controls (129.85,190.44) and (127.6,192.69) .. (124.83,192.68) -- (118.14,192.68) .. controls (115.37,192.67) and (113.13,190.43) .. (113.13,187.66) -- cycle ;
        \draw  [fill={rgb, 255:red, 204; green, 230; blue, 255 }  ,fill opacity=1 ][line width=0.75]  (138.3,180.94) .. controls (138.3,178.17) and (140.55,175.93) .. (143.32,175.93) -- (150.01,175.94) .. controls (152.78,175.94) and (155.02,178.19) .. (155.02,180.96) -- (154.97,221.24) .. controls (154.97,224.01) and (152.72,226.25) .. (149.95,226.25) -- (143.27,226.24) .. controls (140.5,226.24) and (138.25,223.99) .. (138.26,221.22) -- cycle ;
        \draw [line width=0.75]    (19.98,251.62) -- (220.48,251.83) ;
        \draw [line width=0.75]    (19.94,285.16) -- (220.44,285.37) ;
        \draw  [fill={rgb, 255:red, 204; green, 230; blue, 255 }  ,fill opacity=1 ][line width=0.75]  (163.59,214.86) .. controls (163.59,212.09) and (165.84,209.85) .. (168.61,209.85) -- (175.3,209.86) .. controls (178.07,209.86) and (180.31,212.11) .. (180.31,214.88) -- (180.26,255.15) .. controls (180.26,257.92) and (178.01,260.17) .. (175.24,260.16) -- (168.56,260.16) .. controls (165.79,260.16) and (163.54,257.91) .. (163.55,255.14) -- cycle ;
        \draw  [fill={rgb, 255:red, 204; green, 230; blue, 255 }  ,fill opacity=1 ][line width=0.75]  (188.71,248.43) .. controls (188.72,245.66) and (190.97,243.41) .. (193.74,243.42) -- (200.42,243.42) .. controls (203.19,243.43) and (205.43,245.67) .. (205.43,248.44) -- (205.39,288.72) .. controls (205.39,291.49) and (203.14,293.73) .. (200.37,293.73) -- (193.68,293.72) .. controls (190.91,293.72) and (188.67,291.47) .. (188.67,288.7) -- cycle ;
        \draw  [fill={rgb, 255:red, 222; green, 241; blue, 198 }  ,fill opacity=1 ] (7.67,49.87) -- (21.19,39.61) -- (21.05,60.33) -- cycle ;
        \draw  [fill={rgb, 255:red, 222; green, 241; blue, 198 }  ,fill opacity=1 ] (7.34,83.54) -- (20.86,73.28) -- (20.71,94) -- cycle ;
        \draw  [fill={rgb, 255:red, 222; green, 241; blue, 198 }  ,fill opacity=1 ] (7.34,117.54) -- (20.86,107.28) -- (20.71,128) -- cycle ;
        \draw  [fill={rgb, 255:red, 222; green, 241; blue, 198 }  ,fill opacity=1 ] (7.01,151.21) -- (20.53,140.94) -- (20.38,161.66) -- cycle ;
        \draw  [fill={rgb, 255:red, 222; green, 241; blue, 198 }  ,fill opacity=1 ] (7.01,183.87) -- (20.53,173.61) -- (20.38,194.33) -- cycle ;
        \draw  [fill={rgb, 255:red, 222; green, 241; blue, 198 }  ,fill opacity=1 ] (6.67,217.54) -- (20.19,207.28) -- (20.05,228) -- cycle ;
        \draw  [fill={rgb, 255:red, 222; green, 241; blue, 198 }  ,fill opacity=1 ] (6.67,251.54) -- (20.19,241.28) -- (20.05,262) -- cycle ;
        \draw  [fill={rgb, 255:red, 222; green, 241; blue, 198 }  ,fill opacity=1 ] (6.34,285.21) -- (19.86,274.94) -- (19.71,295.66) -- cycle ;
        \draw [line width=0.75]    (419.05,83.58) -- (218.84,83.79) ;
        \draw [line width=0.75]    (419.08,117.12) -- (218.87,117.33) ;
        \draw [line width=0.75]    (419.01,50.04) -- (218.8,50.25) ;
        \draw  [fill={rgb, 255:red, 204; green, 230; blue, 255 }  ,fill opacity=1 ][line width=0.75]  (402.32,46.68) .. controls (402.31,43.92) and (400.07,41.68) .. (397.3,41.68) -- (390.63,41.69) .. controls (387.86,41.69) and (385.62,43.94) .. (385.62,46.7) -- (385.67,86.99) .. controls (385.67,89.76) and (387.91,92) .. (390.68,91.99) -- (397.36,91.99) .. controls (400.12,91.98) and (402.36,89.74) .. (402.36,86.97) -- cycle ;
        \draw [line width=0.75]    (419.15,184.19) -- (218.94,184.4) ;
        \draw [line width=0.75]    (419.19,217.74) -- (218.98,217.94) ;
        \draw [line width=0.75]    (419.12,150.65) -- (218.91,150.86) ;
        \draw  [fill={rgb, 255:red, 204; green, 230; blue, 255 }  ,fill opacity=1 ][line width=0.75]  (377.17,80.25) .. controls (377.17,77.48) and (374.92,75.24) .. (372.16,75.25) -- (365.48,75.25) .. controls (362.71,75.26) and (360.48,77.5) .. (360.48,80.27) -- (360.52,120.55) .. controls (360.52,123.32) and (362.77,125.56) .. (365.53,125.56) -- (372.21,125.55) .. controls (374.98,125.55) and (377.22,123.3) .. (377.21,120.54) -- cycle ;
        \draw  [fill={rgb, 255:red, 204; green, 230; blue, 255 }  ,fill opacity=1 ][line width=0.75]  (352.08,113.81) .. controls (352.08,111.05) and (349.83,108.81) .. (347.07,108.81) -- (340.39,108.82) .. controls (337.63,108.82) and (335.39,111.06) .. (335.39,113.83) -- (335.43,154.12) .. controls (335.43,156.88) and (337.68,159.12) .. (340.44,159.12) -- (347.12,159.11) .. controls (349.89,159.11) and (352.13,156.87) .. (352.12,154.1) -- cycle ;
        \draw  [fill={rgb, 255:red, 204; green, 230; blue, 255 }  ,fill opacity=1 ][line width=0.75]  (326.99,147.37) .. controls (326.99,144.61) and (324.75,142.37) .. (321.98,142.37) -- (315.3,142.38) .. controls (312.54,142.38) and (310.3,144.63) .. (310.3,147.39) -- (310.34,187.68) .. controls (310.34,190.45) and (312.59,192.69) .. (315.35,192.68) -- (322.03,192.68) .. controls (324.8,192.67) and (327.04,190.43) .. (327.04,187.66) -- cycle ;
        \draw  [fill={rgb, 255:red, 204; green, 230; blue, 255 }  ,fill opacity=1 ][line width=0.75]  (301.9,180.94) .. controls (301.9,178.17) and (299.66,175.93) .. (296.89,175.93) -- (290.21,175.94) .. controls (287.45,175.94) and (285.21,178.19) .. (285.21,180.96) -- (285.25,221.24) .. controls (285.25,224.01) and (287.5,226.25) .. (290.27,226.25) -- (296.94,226.24) .. controls (299.71,226.24) and (301.95,223.99) .. (301.95,221.23) -- cycle ;
        \draw [line width=0.75]    (420.06,251.62) -- (219.84,251.83) ;
        \draw [line width=0.75]    (420.09,285.16) -- (219.88,285.37) ;
        \draw  [fill={rgb, 255:red, 204; green, 230; blue, 255 }  ,fill opacity=1 ][line width=0.75]  (276.65,214.86) .. controls (276.65,212.09) and (274.4,209.85) .. (271.64,209.85) -- (264.96,209.86) .. controls (262.19,209.86) and (259.95,212.11) .. (259.96,214.87) -- (260,255.16) .. controls (260,257.93) and (262.24,260.17) .. (265.01,260.16) -- (271.69,260.16) .. controls (274.45,260.16) and (276.69,257.91) .. (276.69,255.14) -- cycle ;
        \draw  [fill={rgb, 255:red, 204; green, 230; blue, 255 }  ,fill opacity=1 ][line width=0.75]  (251.56,248.42) .. controls (251.56,245.65) and (249.31,243.41) .. (246.55,243.42) -- (239.87,243.42) .. controls (237.1,243.43) and (234.86,245.67) .. (234.87,248.44) -- (234.91,288.73) .. controls (234.91,291.49) and (237.16,293.73) .. (239.92,293.73) -- (246.6,293.72) .. controls (249.36,293.72) and (251.6,291.47) .. (251.6,288.71) -- cycle ;
        \draw  [fill={rgb, 255:red, 222; green, 241; blue, 198 }  ,fill opacity=1 ] (432.34,49.87) -- (418.84,39.61) -- (418.99,60.33) -- cycle ;
        \draw  [fill={rgb, 255:red, 222; green, 241; blue, 198 }  ,fill opacity=1 ] (432.68,83.54) -- (419.17,73.28) -- (419.32,94) -- cycle ;
        \draw  [fill={rgb, 255:red, 222; green, 241; blue, 198 }  ,fill opacity=1 ] (432.68,117.54) -- (419.17,107.28) -- (419.32,128) -- cycle ;
        \draw  [fill={rgb, 255:red, 222; green, 241; blue, 198 }  ,fill opacity=1 ] (433.01,151.21) -- (419.51,140.94) -- (419.66,161.66) -- cycle ;
        \draw  [fill={rgb, 255:red, 222; green, 241; blue, 198 }  ,fill opacity=1 ] (433.01,183.87) -- (419.51,173.61) -- (419.66,194.33) -- cycle ;
        \draw  [fill={rgb, 255:red, 222; green, 241; blue, 198 }  ,fill opacity=1 ] (433.34,217.54) -- (419.84,207.28) -- (419.99,228) -- cycle ;
        \draw  [fill={rgb, 255:red, 222; green, 241; blue, 198 }  ,fill opacity=1 ] (433.34,251.54) -- (419.84,241.28) -- (419.99,262) -- cycle ;
        \draw  [fill={rgb, 255:red, 222; green, 241; blue, 198 }  ,fill opacity=1 ] (433.67,285.21) -- (420.17,274.94) -- (420.32,295.66) -- cycle ;
        \draw  [fill={rgb, 255:red, 245; green, 166; blue, 35 }  ,fill opacity=1 ][line width=0.75]  (212.55,242.94) .. controls (212.55,242.94) and (212.55,242.94) .. (212.55,242.94) -- (229.27,242.95) .. controls (229.27,242.95) and (229.27,242.95) .. (229.27,242.95) -- (229.25,260.37) .. controls (229.25,260.37) and (229.25,260.37) .. (229.25,260.37) -- (212.54,260.36) .. controls (212.54,260.36) and (212.54,260.36) .. (212.54,260.36) -- cycle ;
        \draw [line width=0.75]    (551.8,252.83) -- (599.3,252.83) ;
        \draw [line width=0.75]    (553,286.37) -- (600.3,286.37) ;
        \draw  [fill={rgb, 255:red, 204; green, 230; blue, 255 }  ,fill opacity=1 ][line width=0.75]  (544.59,216.01) .. controls (544.59,213.24) and (546.84,211) .. (549.6,211.01) -- (556.3,211.01) .. controls (559.07,211.02) and (561.31,213.26) .. (561.31,216.03) -- (561.3,222.71) .. controls (561.3,225.47) and (559.05,227.72) .. (556.28,227.71) -- (549.59,227.71) .. controls (546.82,227.7) and (544.58,225.46) .. (544.58,222.69) -- cycle ;
        \draw  [fill={rgb, 255:red, 204; green, 230; blue, 255 }  ,fill opacity=1 ][line width=0.75]  (544.71,249.43) .. controls (544.72,246.66) and (546.97,244.41) .. (549.74,244.42) -- (556.42,244.42) .. controls (559.19,244.43) and (561.43,246.67) .. (561.43,249.44) -- (561.39,289.72) .. controls (561.39,292.49) and (559.14,294.73) .. (556.37,294.73) -- (549.68,294.72) .. controls (546.91,294.72) and (544.67,292.47) .. (544.67,289.7) -- cycle ;
        \draw  [fill={rgb, 255:red, 204; green, 230; blue, 255 }  ,fill opacity=1 ][line width=0.75]  (607.56,249.42) .. controls (607.56,246.65) and (605.31,244.41) .. (602.55,244.42) -- (595.87,244.42) .. controls (593.1,244.43) and (590.86,246.67) .. (590.87,249.44) -- (590.91,289.73) .. controls (590.91,292.49) and (593.16,294.73) .. (595.92,294.73) -- (602.6,294.72) .. controls (605.36,294.72) and (607.6,292.47) .. (607.6,289.71) -- cycle ;
        \draw  [fill={rgb, 255:red, 245; green, 166; blue, 35 }  ,fill opacity=1 ][line width=0.75]  (568.55,243.94) .. controls (568.55,243.94) and (568.55,243.94) .. (568.55,243.94) -- (585.27,243.95) .. controls (585.27,243.95) and (585.27,243.95) .. (585.27,243.95) -- (585.25,261.37) .. controls (585.25,261.37) and (585.25,261.37) .. (585.25,261.37) -- (568.54,261.36) .. controls (568.54,261.36) and (568.54,261.36) .. (568.54,261.36) -- cycle ;
        \draw  [fill={rgb, 255:red, 204; green, 230; blue, 255 }  ,fill opacity=1 ][line width=0.75]  (590.59,216.01) .. controls (590.59,213.24) and (592.84,211) .. (595.6,211.01) -- (602.3,211.01) .. controls (605.07,211.02) and (607.31,213.26) .. (607.31,216.03) -- (607.3,222.71) .. controls (607.3,225.47) and (605.05,227.72) .. (602.28,227.71) -- (595.59,227.71) .. controls (592.82,227.7) and (590.58,225.46) .. (590.58,222.69) -- cycle ;
        \draw  [fill={rgb, 255:red, 204; green, 230; blue, 255 }  ,fill opacity=1 ][line width=0.75]  (544.09,181.51) .. controls (544.09,178.74) and (546.34,176.5) .. (549.1,176.51) -- (555.8,176.51) .. controls (558.57,176.52) and (560.81,178.76) .. (560.81,181.53) -- (560.8,188.21) .. controls (560.8,190.97) and (558.55,193.22) .. (555.78,193.21) -- (549.09,193.21) .. controls (546.32,193.2) and (544.08,190.96) .. (544.08,188.19) -- cycle ;
        \draw  [fill={rgb, 255:red, 204; green, 230; blue, 255 }  ,fill opacity=1 ][line width=0.75]  (590.09,181.51) .. controls (590.09,178.74) and (592.34,176.5) .. (595.1,176.51) -- (601.8,176.51) .. controls (604.57,176.52) and (606.81,178.76) .. (606.81,181.53) -- (606.8,188.21) .. controls (606.8,190.97) and (604.55,193.22) .. (601.78,193.21) -- (595.09,193.21) .. controls (592.32,193.2) and (590.08,190.96) .. (590.08,188.19) -- cycle ;
        \draw  [fill={rgb, 255:red, 204; green, 230; blue, 255 }  ,fill opacity=1 ][line width=0.75]  (543.59,147.01) .. controls (543.59,144.24) and (545.84,142) .. (548.6,142.01) -- (555.3,142.01) .. controls (558.07,142.02) and (560.31,144.26) .. (560.31,147.03) -- (560.3,153.71) .. controls (560.3,156.47) and (558.05,158.72) .. (555.28,158.71) -- (548.59,158.71) .. controls (545.82,158.7) and (543.58,156.46) .. (543.58,153.69) -- cycle ;
        \draw  [fill={rgb, 255:red, 204; green, 230; blue, 255 }  ,fill opacity=1 ][line width=0.75]  (589.59,147.01) .. controls (589.59,144.24) and (591.84,142) .. (594.6,142.01) -- (601.3,142.01) .. controls (604.07,142.02) and (606.31,144.26) .. (606.31,147.03) -- (606.3,153.71) .. controls (606.3,156.47) and (604.05,158.72) .. (601.28,158.71) -- (594.59,158.71) .. controls (591.82,158.7) and (589.58,156.46) .. (589.58,153.69) -- cycle ;
        \draw  [fill={rgb, 255:red, 204; green, 230; blue, 255 }  ,fill opacity=1 ][line width=0.75]  (543.09,112.51) .. controls (543.09,109.74) and (545.34,107.5) .. (548.1,107.51) -- (554.8,107.51) .. controls (557.57,107.52) and (559.81,109.76) .. (559.81,112.53) -- (559.8,119.21) .. controls (559.8,121.97) and (557.55,124.22) .. (554.78,124.21) -- (548.09,124.21) .. controls (545.32,124.2) and (543.08,121.96) .. (543.08,119.19) -- cycle ;
        \draw  [fill={rgb, 255:red, 204; green, 230; blue, 255 }  ,fill opacity=1 ][line width=0.75]  (589.09,112.51) .. controls (589.09,109.74) and (591.34,107.5) .. (594.1,107.51) -- (600.8,107.51) .. controls (603.57,107.52) and (605.81,109.76) .. (605.81,112.53) -- (605.8,119.21) .. controls (605.8,121.97) and (603.55,124.22) .. (600.78,124.21) -- (594.09,124.21) .. controls (591.32,124.2) and (589.08,121.96) .. (589.08,119.19) -- cycle ;
        \draw  [fill={rgb, 255:red, 204; green, 230; blue, 255 }  ,fill opacity=1 ][line width=0.75]  (542.59,81.01) .. controls (542.59,78.24) and (544.84,76) .. (547.6,76.01) -- (554.3,76.01) .. controls (557.07,76.02) and (559.31,78.26) .. (559.31,81.03) -- (559.3,87.71) .. controls (559.3,90.47) and (557.05,92.72) .. (554.28,92.71) -- (547.59,92.71) .. controls (544.82,92.7) and (542.58,90.46) .. (542.58,87.69) -- cycle ;
        \draw  [fill={rgb, 255:red, 204; green, 230; blue, 255 }  ,fill opacity=1 ][line width=0.75]  (588.59,81.01) .. controls (588.59,78.24) and (590.84,76) .. (593.6,76.01) -- (600.3,76.01) .. controls (603.07,76.02) and (605.31,78.26) .. (605.31,81.03) -- (605.3,87.71) .. controls (605.3,90.47) and (603.05,92.72) .. (600.28,92.71) -- (593.59,92.71) .. controls (590.82,92.7) and (588.58,90.46) .. (588.58,87.69) -- cycle ;
        \draw  [fill={rgb, 255:red, 204; green, 230; blue, 255 }  ,fill opacity=1 ][line width=0.75]  (542.09,46.51) .. controls (542.09,43.74) and (544.34,41.5) .. (547.1,41.51) -- (553.8,41.51) .. controls (556.57,41.52) and (558.81,43.76) .. (558.81,46.53) -- (558.8,53.21) .. controls (558.8,55.97) and (556.55,58.22) .. (553.78,58.21) -- (547.09,58.21) .. controls (544.32,58.2) and (542.08,55.96) .. (542.08,53.19) -- cycle ;
        \draw  [fill={rgb, 255:red, 204; green, 230; blue, 255 }  ,fill opacity=1 ][line width=0.75]  (588.09,46.51) .. controls (588.09,43.74) and (590.34,41.5) .. (593.1,41.51) -- (599.8,41.51) .. controls (602.57,41.52) and (604.81,43.76) .. (604.81,46.53) -- (604.8,53.21) .. controls (604.8,55.97) and (602.55,58.22) .. (599.78,58.21) -- (593.09,58.21) .. controls (590.32,58.2) and (588.08,55.96) .. (588.08,53.19) -- cycle ;
        \draw  [color={rgb, 255:red, 114; green, 114; blue, 114 }  ,draw opacity=0 ][fill={rgb, 255:red, 214; green, 214; blue, 214 }  ,fill opacity=1 ] (500,148.55) -- (518.05,169.58) -- (500,191.05) -- (500,177) -- (470.3,177) -- (470.3,162) -- (500,162) -- cycle ;
        
        \end{mytikz2}\quad,
        \end{equation}
        which has a constant treewidth and hence can be efficiently contracted~\cite{Markov2008}. The underlying physical reason is that the 1D entanglement boundary law is trivial, i.e., the entanglement entropy for any simple partition can be upper bounded by a constant (i.e., the boundaries of lines are just points). The variational optimization and observable estimation in MPS, also known as the famed density-matrix renormalization group (DMRG) algorithm~\cite{White1992}, can be accomplished within linear time. Even certain FLDCs of linear depth with ancilla qubits can be exactly seen as MPS of the canonical form~\cite{Schon2005, Liu2021a, Barthel2023, Miao2023, Malz2023}. 1D ground states that violate the entanglement area law, such as in gapless critical systems, can still be effectively simulated classically by MPS with polynomially large bond dimension or 1D MERA~\cite{Vidal2007, Markov2008}. To sum up, for 1D ground states, classical simulations have already provided efficient solutions~\cite{Landau2015}. In fact, it has been proven that there is no topological order in 1D qubit systems~\cite{Chen2011, Chen2011a, Schuch2011, Zeng2019} and log-depth circuits are enough to generate MPS~\cite{Malz2023}, which implies that 1D systems cannot be the suitable domain to showcase the advantage of FLDC in principle. This is also why the numerical experiments in this work use a 2D model. 
        
        \item[$\circ$] 2D FLDCs do not have the chain-like structures as in 1D anymore (i.e., strings can form loops in 2D and above). The TNs for the expectation values in 2D FLDCs may have very large treewidth that grows polynomially with the system size. A toy example of $2\times 2$ qubits with $2$-qubit blocks is depicted below (darker colors indicate later action orders)
        \begin{equation}\label{eq:2by2_loop}
        \begin{mytikz2}
        \draw    (594.51,230.51) .. controls (580.07,274.83) and (539.53,256.23) .. (538.8,175.2) .. controls (538.07,94.17) and (580.07,74.17) .. (594.84,121.51) ;
        \draw    (643.84,230.84) .. controls (660.4,274.83) and (698.73,254.5) .. (699.4,174.5) .. controls (700.07,94.5) and (660.4,73.83) .. (644.51,121.51) ;
        \draw [line width=0.75]    (345.25,228.76) -- (345.3,127.43) ;
        \draw [line width=0.75]    (296.58,228.76) -- (296.6,128.1) ;
        \draw [line width=0.75]    (671.61,121.36) -- (570.3,121.36) ;
        \draw [line width=0.75]    (671.61,160.36) -- (569.8,160.36) ;
        \draw [line width=0.75]    (672.47,112.36) -- (672.47,197.83) ;
        \draw [line width=0.75]    (565.87,112.36) -- (565.87,198.15) ;
        \draw [line width=0.75]    (199.63,127.52) -- (320.71,127.73) ;
        \draw [line width=0.75]    (199.61,161.07) -- (320.69,161.27) ;
        \draw  [fill={rgb, 255:red, 204; green, 230; blue, 255 }  ,fill opacity=1 ][line width=0.75]  (210.63,124.28) .. controls (210.64,121.51) and (212.88,119.27) .. (215.65,119.27) -- (222.34,119.27) .. controls (225.11,119.28) and (227.35,121.53) .. (227.35,124.3) -- (227.31,164.57) .. controls (227.3,167.34) and (225.06,169.58) .. (222.29,169.58) -- (215.6,169.57) .. controls (212.83,169.57) and (210.59,167.32) .. (210.59,164.55) -- cycle ;
        \draw [line width=0.75]    (199.09,194.95) -- (320.17,195.16) ;
        \draw [line width=0.75]    (199.07,228.5) -- (320.15,228.7) ;
        \draw  [fill={rgb, 255:red, 204; green, 230; blue, 255 }  ,fill opacity=1 ][line width=0.75]  (235.92,158.2) .. controls (235.93,155.43) and (238.17,153.18) .. (240.94,153.19) -- (247.63,153.19) .. controls (250.4,153.2) and (252.64,155.44) .. (252.64,158.21) -- (252.6,198.49) .. controls (252.6,201.26) and (250.35,203.5) .. (247.58,203.5) -- (240.89,203.49) .. controls (238.12,203.49) and (235.88,201.24) .. (235.88,198.47) -- cycle ;
        \draw  [fill={rgb, 255:red, 204; green, 230; blue, 255 }  ,fill opacity=1 ][line width=0.75]  (261.05,191.76) .. controls (261.05,188.99) and (263.3,186.75) .. (266.07,186.75) -- (272.76,186.76) .. controls (275.53,186.76) and (277.77,189.01) .. (277.77,191.78) -- (277.72,232.05) .. controls (277.72,234.82) and (275.47,237.06) .. (272.7,237.06) -- (266.02,237.05) .. controls (263.25,237.05) and (261,234.8) .. (261.01,232.03) -- cycle ;
        \draw  [fill={rgb, 255:red, 222; green, 241; blue, 198 }  ,fill opacity=1 ] (185.67,127.21) -- (199.19,116.94) -- (199.05,137.66) -- cycle ;
        \draw  [fill={rgb, 255:red, 222; green, 241; blue, 198 }  ,fill opacity=1 ] (185.34,160.87) -- (198.86,150.61) -- (198.71,171.33) -- cycle ;
        \draw  [fill={rgb, 255:red, 222; green, 241; blue, 198 }  ,fill opacity=1 ] (185.34,194.87) -- (198.86,184.61) -- (198.71,205.33) -- cycle ;
        \draw  [fill={rgb, 255:red, 222; green, 241; blue, 198 }  ,fill opacity=1 ] (185.01,228.54) -- (198.53,218.28) -- (198.38,239) -- cycle ;
        \draw [line width=0.75]    (439.17,127.52) -- (318.26,127.73) ;
        \draw [line width=0.75]    (439.19,161.07) -- (318.28,161.27) ;
        \draw  [fill={rgb, 255:red, 204; green, 230; blue, 255 }  ,fill opacity=1 ][line width=0.75]  (429.24,124.27) .. controls (429.23,121.5) and (426.99,119.27) .. (424.22,119.27) -- (417.55,119.28) .. controls (414.78,119.28) and (412.54,121.52) .. (412.54,124.29) -- (412.59,164.58) .. controls (412.59,167.34) and (414.83,169.58) .. (417.6,169.58) -- (424.28,169.57) .. controls (427.04,169.57) and (429.28,167.33) .. (429.28,164.56) -- cycle ;
        \draw [line width=0.75]    (439.71,194.95) -- (318.81,195.16) ;
        \draw [line width=0.75]    (439.73,228.5) -- (318.83,228.7) ;
        \draw  [fill={rgb, 255:red, 204; green, 230; blue, 255 }  ,fill opacity=1 ][line width=0.75]  (403.98,158.19) .. controls (403.98,155.42) and (401.73,153.18) .. (398.97,153.19) -- (392.29,153.19) .. controls (389.53,153.2) and (387.29,155.44) .. (387.29,158.21) -- (387.33,198.5) .. controls (387.33,201.26) and (389.58,203.5) .. (392.34,203.5) -- (399.02,203.49) .. controls (401.79,203.49) and (404.03,201.24) .. (404.02,198.48) -- cycle ;
        \draw  [fill={rgb, 255:red, 204; green, 230; blue, 255 }  ,fill opacity=1 ][line width=0.75]  (378.89,191.75) .. controls (378.89,188.99) and (376.65,186.75) .. (373.88,186.75) -- (367.2,186.76) .. controls (364.44,186.76) and (362.2,189) .. (362.2,191.77) -- (362.24,232.06) .. controls (362.24,234.82) and (364.49,237.06) .. (367.25,237.06) -- (373.93,237.05) .. controls (376.7,237.05) and (378.94,234.81) .. (378.94,232.04) -- cycle ;
        \draw  [fill={rgb, 255:red, 222; green, 241; blue, 198 }  ,fill opacity=1 ] (452.68,127.21) -- (439.17,116.94) -- (439.32,137.66) -- cycle ;
        \draw  [fill={rgb, 255:red, 222; green, 241; blue, 198 }  ,fill opacity=1 ] (453.01,160.87) -- (439.51,150.61) -- (439.66,171.33) -- cycle ;
        \draw  [fill={rgb, 255:red, 222; green, 241; blue, 198 }  ,fill opacity=1 ] (453.01,194.87) -- (439.51,184.61) -- (439.66,205.33) -- cycle ;
        \draw  [fill={rgb, 255:red, 222; green, 241; blue, 198 }  ,fill opacity=1 ] (453.34,228.54) -- (439.84,218.28) -- (439.99,239) -- cycle ;
        \draw  [fill={rgb, 255:red, 245; green, 166; blue, 35 }  ,fill opacity=1 ][line width=0.75]  (311.8,219.98) .. controls (311.8,219.98) and (311.8,219.98) .. (311.8,219.98) -- (328.51,220) .. controls (328.51,220) and (328.51,220) .. (328.51,220) -- (328.5,237.42) .. controls (328.5,237.42) and (328.5,237.42) .. (328.5,237.42) -- (311.78,237.4) .. controls (311.78,237.4) and (311.78,237.4) .. (311.78,237.4) -- cycle ;
        \draw [line width=0.75]    (570.3,197.83) -- (672.97,197.83) ;
        \draw [line width=0.75]    (571.8,231.37) -- (673.97,231.37) ;
        \draw  [fill={rgb, 255:red, 204; green, 230; blue, 255 }  ,fill opacity=1 ][line width=0.75]  (558.26,157.01) .. controls (558.26,154.24) and (560.5,152) .. (563.27,152.01) -- (569.97,152.01) .. controls (572.74,152.02) and (574.98,154.26) .. (574.97,157.03) -- (574.97,163.71) .. controls (574.96,166.47) and (572.72,168.72) .. (569.95,168.71) -- (563.25,168.71) .. controls (560.49,168.7) and (558.25,166.46) .. (558.25,163.69) -- cycle ;
        \draw  [fill={rgb, 255:red, 204; green, 230; blue, 255 }  ,fill opacity=1 ][line width=0.75]  (558.38,194.43) .. controls (558.38,191.66) and (560.63,189.41) .. (563.4,189.42) -- (570.09,189.42) .. controls (572.86,189.43) and (575.1,191.67) .. (575.1,194.44) -- (575.06,234.72) .. controls (575.05,237.49) and (572.81,239.73) .. (570.04,239.73) -- (563.35,239.72) .. controls (560.58,239.72) and (558.34,237.47) .. (558.34,234.7) -- cycle ;
        \draw  [fill={rgb, 255:red, 204; green, 230; blue, 255 }  ,fill opacity=1 ][line width=0.75]  (681.23,194.42) .. controls (681.22,191.65) and (678.98,189.41) .. (676.21,189.42) -- (669.54,189.42) .. controls (666.77,189.43) and (664.53,191.67) .. (664.53,194.44) -- (664.57,234.73) .. controls (664.58,237.49) and (666.82,239.73) .. (669.59,239.73) -- (676.27,239.72) .. controls (679.03,239.72) and (681.27,237.47) .. (681.27,234.71) -- cycle ;
        \draw  [fill={rgb, 255:red, 245; green, 166; blue, 35 }  ,fill opacity=1 ][line width=0.75]  (610.22,222.6) .. controls (610.22,222.6) and (610.22,222.6) .. (610.22,222.6) -- (626.94,222.62) .. controls (626.94,222.62) and (626.94,222.62) .. (626.94,222.62) -- (626.92,240.04) .. controls (626.92,240.04) and (626.92,240.04) .. (626.92,240.04) -- (610.2,240.02) .. controls (610.2,240.02) and (610.2,240.02) .. (610.2,240.02) -- cycle ;
        \draw  [fill={rgb, 255:red, 204; green, 230; blue, 255 }  ,fill opacity=1 ][line width=0.75]  (664.26,157.01) .. controls (664.26,154.24) and (666.5,152) .. (669.27,152.01) -- (675.97,152.01) .. controls (678.74,152.02) and (680.98,154.26) .. (680.97,157.03) -- (680.97,163.71) .. controls (680.96,166.47) and (678.72,168.72) .. (675.95,168.71) -- (669.25,168.71) .. controls (666.49,168.7) and (664.25,166.46) .. (664.25,163.69) -- cycle ;
        \draw  [fill={rgb, 255:red, 204; green, 230; blue, 255 }  ,fill opacity=1 ][line width=0.75]  (557.26,118.01) .. controls (557.26,115.24) and (559.5,113) .. (562.27,113.01) -- (568.97,113.01) .. controls (571.74,113.02) and (573.98,115.26) .. (573.97,118.03) -- (573.97,124.71) .. controls (573.96,127.47) and (571.72,129.72) .. (568.95,129.71) -- (562.25,129.71) .. controls (559.49,129.7) and (557.25,127.46) .. (557.25,124.69) -- cycle ;
        \draw  [fill={rgb, 255:red, 204; green, 230; blue, 255 }  ,fill opacity=1 ][line width=0.75]  (663.26,118.01) .. controls (663.26,115.24) and (665.5,113) .. (668.27,113.01) -- (674.97,113.01) .. controls (677.74,113.02) and (679.98,115.26) .. (679.97,118.03) -- (679.97,124.71) .. controls (679.96,127.47) and (677.72,129.72) .. (674.95,129.71) -- (668.25,129.71) .. controls (665.49,129.7) and (663.25,127.46) .. (663.25,124.69) -- cycle ;
        \draw  [color={rgb, 255:red, 114; green, 114; blue, 114 }  ,draw opacity=0 ][fill={rgb, 255:red, 214; green, 214; blue, 214 }  ,fill opacity=1 ] (500.33,145.88) -- (518.38,166.91) -- (500.33,188.38) -- (500.33,174.33) -- (470.63,174.33) -- (470.63,159.33) -- (500.33,159.33) -- cycle ;
        \draw  [fill={rgb, 255:red, 204; green, 230; blue, 255 }  ,fill opacity=1 ][line width=0.75]  (353.59,225.44) .. controls (353.59,222.69) and (351.36,220.47) .. (348.61,220.47) -- (341.87,220.48) .. controls (339.12,220.48) and (336.9,222.71) .. (336.9,225.46) -- (336.91,232.09) .. controls (336.91,234.84) and (339.14,237.06) .. (341.89,237.06) -- (348.63,237.05) .. controls (351.38,237.05) and (353.6,234.82) .. (353.6,232.07) -- cycle ;
        \draw  [fill={rgb, 255:red, 204; green, 230; blue, 255 }  ,fill opacity=1 ][line width=0.75]  (304.93,225.44) .. controls (304.93,222.69) and (302.7,220.47) .. (299.95,220.47) -- (293.2,220.48) .. controls (290.46,220.48) and (288.23,222.71) .. (288.23,225.46) -- (288.24,232.09) .. controls (288.24,234.84) and (290.47,237.06) .. (293.22,237.06) -- (299.96,237.05) .. controls (302.71,237.05) and (304.94,234.82) .. (304.94,232.07) -- cycle ;
        \draw  [fill={rgb, 255:red, 204; green, 230; blue, 255 }  ,fill opacity=1 ][line width=0.75]  (304.94,124.77) .. controls (304.94,122.02) and (302.71,119.8) .. (299.96,119.8) -- (293.22,119.81) .. controls (290.47,119.81) and (288.24,122.04) .. (288.24,124.79) -- (288.25,131.42) .. controls (288.25,134.17) and (290.48,136.4) .. (293.23,136.39) -- (299.97,136.39) .. controls (302.72,136.38) and (304.95,134.15) .. (304.95,131.41) -- cycle ;
        \draw  [fill={rgb, 255:red, 204; green, 230; blue, 255 }  ,fill opacity=1 ][line width=0.75]  (353.64,124.11) .. controls (353.64,121.36) and (351.41,119.13) .. (348.66,119.14) -- (341.92,119.14) .. controls (339.17,119.15) and (336.94,121.38) .. (336.94,124.12) -- (336.95,130.76) .. controls (336.95,133.51) and (339.18,135.73) .. (341.93,135.73) -- (348.67,135.72) .. controls (351.42,135.72) and (353.65,133.49) .. (353.65,130.74) -- cycle ;
        \draw  [fill={rgb, 255:red, 204; green, 230; blue, 255 }  ,fill opacity=1 ][line width=0.75]  (586.48,118.16) .. controls (586.49,115.39) and (588.73,113.15) .. (591.5,113.16) -- (598.2,113.16) .. controls (600.96,113.17) and (603.2,115.41) .. (603.2,118.18) -- (603.19,124.86) .. controls (603.19,127.63) and (600.95,129.87) .. (598.18,129.86) -- (591.48,129.86) .. controls (588.71,129.85) and (586.47,127.61) .. (586.48,124.84) -- cycle ;
        \draw  [fill={rgb, 255:red, 204; green, 230; blue, 255 }  ,fill opacity=1 ][line width=0.75]  (636.15,118.16) .. controls (636.15,115.39) and (638.4,113.15) .. (641.17,113.16) -- (647.86,113.16) .. controls (650.63,113.17) and (652.87,115.41) .. (652.87,118.18) -- (652.86,124.86) .. controls (652.86,127.63) and (650.61,129.87) .. (647.85,129.86) -- (641.15,129.86) .. controls (638.38,129.85) and (636.14,127.61) .. (636.14,124.84) -- cycle ;
        \draw  [fill={rgb, 255:red, 204; green, 230; blue, 255 }  ,fill opacity=1 ][line width=0.75]  (586.15,227.16) .. controls (586.15,224.39) and (588.4,222.15) .. (591.17,222.16) -- (597.86,222.16) .. controls (600.63,222.17) and (602.87,224.41) .. (602.87,227.18) -- (602.86,233.86) .. controls (602.86,236.63) and (600.61,238.87) .. (597.85,238.86) -- (591.15,238.86) .. controls (588.38,238.85) and (586.14,236.61) .. (586.14,233.84) -- cycle ;
        \draw  [fill={rgb, 255:red, 204; green, 230; blue, 255 }  ,fill opacity=1 ][line width=0.75]  (635.48,227.49) .. controls (635.49,224.73) and (637.73,222.49) .. (640.5,222.49) -- (647.2,222.5) .. controls (649.96,222.5) and (652.2,224.75) .. (652.2,227.51) -- (652.19,234.19) .. controls (652.19,236.96) and (649.95,239.2) .. (647.18,239.2) -- (640.48,239.19) .. controls (637.71,239.19) and (635.47,236.94) .. (635.48,234.17) -- cycle ;
        \draw  [color={rgb, 255:red, 114; green, 114; blue, 114 }  ,draw opacity=0 ][fill={rgb, 255:red, 214; green, 214; blue, 214 }  ,fill opacity=1 ] (149.67,149.22) -- (167.72,170.25) -- (149.67,191.72) -- (149.67,177.67) -- (119.97,177.67) -- (119.97,162.67) -- (149.67,162.67) -- cycle ;
        \draw  [fill={rgb, 255:red, 223; green, 244; blue, 254 }  ,fill opacity=1 ][line width=0.75]  (89.69,182.07) .. controls (93.14,182.07) and (95.93,184.87) .. (95.93,188.32) -- (95.93,196.65) .. controls (95.93,200.11) and (93.14,202.9) .. (89.69,202.9) -- (39.5,202.9) .. controls (36.05,202.9) and (33.25,200.11) .. (33.25,196.65) -- (33.25,188.32) .. controls (33.25,184.87) and (36.05,182.07) .. (39.5,182.07) -- cycle ;
        \draw  [fill={rgb, 255:red, 204; green, 230; blue, 255 }  ,fill opacity=1 ][line width=0.75]  (75.1,146.47) .. controls (75.1,143.02) and (77.9,140.22) .. (81.35,140.22) -- (89.69,140.22) .. controls (93.14,140.22) and (95.93,143.02) .. (95.93,146.47) -- (95.93,196.65) .. controls (95.93,200.11) and (93.14,202.9) .. (89.69,202.9) -- (81.35,202.9) .. controls (77.9,202.9) and (75.1,200.11) .. (75.1,196.65) -- cycle ;
        \draw  [fill={rgb, 255:red, 166; green, 209; blue, 255 }  ,fill opacity=1 ][line width=0.75]  (89.75,140.22) .. controls (93.2,140.22) and (96,143.02) .. (96,146.47) -- (96,154.8) .. controls (96,158.25) and (93.2,161.05) .. (89.75,161.05) -- (39.57,161.05) .. controls (36.11,161.05) and (33.32,158.25) .. (33.32,154.8) -- (33.32,146.47) .. controls (33.32,143.02) and (36.11,140.22) .. (39.57,140.22) -- cycle ;
        \draw  [fill={rgb, 255:red, 113; green, 186; blue, 255 }  ,fill opacity=1 ][line width=0.75]  (33.31,146.46) .. controls (33.31,143.01) and (36.11,140.22) .. (39.57,140.22) -- (47.9,140.23) .. controls (51.35,140.23) and (54.15,143.03) .. (54.14,146.49) -- (54.09,196.67) .. controls (54.09,200.12) and (51.28,202.92) .. (47.83,202.91) -- (39.5,202.9) .. controls (36.05,202.9) and (33.25,200.1) .. (33.26,196.65) -- cycle ;
        \draw  [fill={rgb, 255:red, 245; green, 166; blue, 35 }  ,fill opacity=1 ][line width=0.75]  (37.62,143.22) .. controls (37.62,143.22) and (37.62,143.22) .. (37.62,143.22) -- (50.9,143.23) .. controls (50.9,143.23) and (50.9,143.23) .. (50.9,143.23) -- (50.88,157.06) .. controls (50.88,157.06) and (50.88,157.06) .. (50.88,157.06) -- (37.61,157.05) .. controls (37.61,157.05) and (37.61,157.05) .. (37.61,157.05) -- cycle ;
        
        \end{mytikz2}\quad,
        \end{equation}
        which has a treewidth larger than that of the strict 1D case in Eq.~\eqref{eq:1d_fdlc_chain} due to an additional loop structure. While this is just an example of minimal size (even topologically equivalent to a 1D ring), it reveals a simple fact: a loop on the block pattern within the causal cone [the left-hand side of Eq.~\eqref{eq:2by2_loop}] can correspond to a loop structure in the TN of the expectation value of the observable [the right-hand side of Eq.~\eqref{eq:2by2_loop}]. Hence, every time the $2$-qubit block pattern within the causal cone on the qubit connectivity graph forms a closed loop, the corresponding TN is expected to deviate further from being a tree, resulting in a larger treewidth. In addition, a $\beta$-qubit block with $\beta>2$ can be considered as containing loops of $2$-qubit blocks in itself. If there are extensive loop configurations on the block pattern within the causal cone like
        \begin{equation}
        \begin{mytikz4}
        \draw  [fill={rgb, 255:red, 219; green, 234; blue, 255 }  ,fill opacity=1 ][line width=0.75]  (189.85,175.41) .. controls (193.3,175.41) and (196.1,178.2) .. (196.1,181.66) -- (196.1,189.99) .. controls (196.1,193.44) and (193.3,196.24) .. (189.85,196.24) -- (139.67,196.24) .. controls (136.22,196.24) and (133.42,193.44) .. (133.42,189.99) -- (133.42,181.66) .. controls (133.42,178.2) and (136.22,175.41) .. (139.67,175.41) -- cycle ;
        \draw  [fill={rgb, 255:red, 219; green, 234; blue, 255 }  ,fill opacity=1 ][line width=0.75]  (176,140.47) .. controls (176,137.02) and (178.8,134.22) .. (182.25,134.22) -- (190.58,134.22) .. controls (194.03,134.22) and (196.83,137.02) .. (196.83,140.47) -- (196.83,190.65) .. controls (196.83,194.11) and (194.03,196.9) .. (190.58,196.9) -- (182.25,196.9) .. controls (178.8,196.9) and (176,194.11) .. (176,190.65) -- cycle ;
        \draw  [fill={rgb, 255:red, 219; green, 234; blue, 255 }  ,fill opacity=1 ][line width=0.75]  (189.92,134.22) .. controls (193.37,134.22) and (196.17,137.02) .. (196.17,140.47) -- (196.17,148.8) .. controls (196.17,152.25) and (193.37,155.05) .. (189.92,155.05) -- (139.73,155.05) .. controls (136.28,155.05) and (133.48,152.25) .. (133.48,148.8) -- (133.48,140.47) .. controls (133.48,137.02) and (136.28,134.22) .. (139.73,134.22) -- cycle ;
        \draw  [fill={rgb, 255:red, 219; green, 234; blue, 255 }  ,fill opacity=1 ][line width=0.75]  (133.48,140.46) .. controls (133.48,137.01) and (136.28,134.22) .. (139.73,134.22) -- (148.07,134.23) .. controls (151.52,134.23) and (154.31,137.03) .. (154.31,140.49) -- (154.26,190.67) .. controls (154.25,194.12) and (151.45,196.92) .. (148,196.91) -- (139.67,196.9) .. controls (136.22,196.9) and (133.42,194.1) .. (133.42,190.65) -- cycle ;
        \draw  [fill={rgb, 255:red, 198; green, 221; blue, 255 }  ,fill opacity=1 ][line width=0.75]  (148.35,174.91) .. controls (151.8,174.91) and (154.6,177.7) .. (154.6,181.16) -- (154.6,189.49) .. controls (154.6,192.94) and (151.8,195.74) .. (148.35,195.74) -- (98.17,195.74) .. controls (94.72,195.74) and (91.92,192.94) .. (91.92,189.49) -- (91.92,181.16) .. controls (91.92,177.7) and (94.72,174.91) .. (98.17,174.91) -- cycle ;
        \draw  [fill={rgb, 255:red, 198; green, 221; blue, 255 }  ,fill opacity=1 ][line width=0.75]  (133.77,139.97) .. controls (133.77,136.52) and (136.57,133.72) .. (140.02,133.72) -- (148.35,133.72) .. controls (151.8,133.72) and (154.6,136.52) .. (154.6,139.97) -- (154.6,190.15) .. controls (154.6,193.61) and (151.8,196.4) .. (148.35,196.4) -- (140.02,196.4) .. controls (136.57,196.4) and (133.77,193.61) .. (133.77,190.15) -- cycle ;
        \draw  [fill={rgb, 255:red, 198; green, 221; blue, 255 }  ,fill opacity=1 ][line width=0.75]  (148.42,133.72) .. controls (151.87,133.72) and (154.67,136.52) .. (154.67,139.97) -- (154.67,148.3) .. controls (154.67,151.75) and (151.87,154.55) .. (148.42,154.55) -- (98.23,154.55) .. controls (94.78,154.55) and (91.98,151.75) .. (91.98,148.3) -- (91.98,139.97) .. controls (91.98,136.52) and (94.78,133.72) .. (98.23,133.72) -- cycle ;
        \draw  [fill={rgb, 255:red, 198; green, 221; blue, 255 }  ,fill opacity=1 ][line width=0.75]  (91.98,139.96) .. controls (91.98,136.51) and (94.78,133.72) .. (98.23,133.72) -- (106.57,133.73) .. controls (110.02,133.73) and (112.81,136.53) .. (112.81,139.99) -- (112.76,190.17) .. controls (112.75,193.62) and (109.95,196.42) .. (106.5,196.41) -- (98.17,196.4) .. controls (94.72,196.4) and (91.92,193.6) .. (91.92,190.15) -- cycle ;
        \draw  [fill={rgb, 255:red, 178; green, 213; blue, 255 }  ,fill opacity=1 ][line width=0.75]  (106.35,174.91) .. controls (109.8,174.91) and (112.6,177.7) .. (112.6,181.16) -- (112.6,189.49) .. controls (112.6,192.94) and (109.8,195.74) .. (106.35,195.74) -- (56.17,195.74) .. controls (52.72,195.74) and (49.92,192.94) .. (49.92,189.49) -- (49.92,181.16) .. controls (49.92,177.7) and (52.72,174.91) .. (56.17,174.91) -- cycle ;
        \draw  [fill={rgb, 255:red, 178; green, 213; blue, 255 }  ,fill opacity=1 ][line width=0.75]  (91.77,139.97) .. controls (91.77,136.52) and (94.57,133.72) .. (98.02,133.72) -- (106.35,133.72) .. controls (109.8,133.72) and (112.6,136.52) .. (112.6,139.97) -- (112.6,190.15) .. controls (112.6,193.61) and (109.8,196.4) .. (106.35,196.4) -- (98.02,196.4) .. controls (94.57,196.4) and (91.77,193.61) .. (91.77,190.15) -- cycle ;
        \draw  [fill={rgb, 255:red, 178; green, 213; blue, 255 }  ,fill opacity=1 ][line width=0.75]  (106.42,133.72) .. controls (109.87,133.72) and (112.67,136.52) .. (112.67,139.97) -- (112.67,148.3) .. controls (112.67,151.75) and (109.87,154.55) .. (106.42,154.55) -- (56.23,154.55) .. controls (52.78,154.55) and (49.98,151.75) .. (49.98,148.3) -- (49.98,139.97) .. controls (49.98,136.52) and (52.78,133.72) .. (56.23,133.72) -- cycle ;
        \draw  [fill={rgb, 255:red, 178; green, 213; blue, 255 }  ,fill opacity=1 ][line width=0.75]  (50.13,139.96) .. controls (50.14,136.51) and (52.94,133.72) .. (56.39,133.72) -- (64.72,133.73) .. controls (68.17,133.73) and (70.97,136.53) .. (70.97,139.99) -- (70.91,190.17) .. controls (70.91,193.62) and (68.11,196.42) .. (64.66,196.41) -- (56.32,196.4) .. controls (52.87,196.4) and (50.08,193.6) .. (50.08,190.15) -- cycle ;
        \draw  [fill={rgb, 255:red, 151; green, 201; blue, 255 }  ,fill opacity=1 ][line width=0.75]  (190.85,133.91) .. controls (194.3,133.91) and (197.1,136.7) .. (197.1,140.16) -- (197.1,148.49) .. controls (197.1,151.94) and (194.3,154.74) .. (190.85,154.74) -- (140.67,154.74) .. controls (137.22,154.74) and (134.42,151.94) .. (134.42,148.49) -- (134.42,140.16) .. controls (134.42,136.7) and (137.22,133.91) .. (140.67,133.91) -- cycle ;
        \draw  [fill={rgb, 255:red, 151; green, 201; blue, 255 }  ,fill opacity=1 ][line width=0.75]  (176,98.97) .. controls (176,95.52) and (178.8,92.72) .. (182.25,92.72) -- (190.58,92.72) .. controls (194.03,92.72) and (196.83,95.52) .. (196.83,98.97) -- (196.83,149.15) .. controls (196.83,152.61) and (194.03,155.4) .. (190.58,155.4) -- (182.25,155.4) .. controls (178.8,155.4) and (176,152.61) .. (176,149.15) -- cycle ;
        \draw  [fill={rgb, 255:red, 151; green, 201; blue, 255 }  ,fill opacity=1 ][line width=0.75]  (190.92,92.72) .. controls (194.37,92.72) and (197.17,95.52) .. (197.17,98.97) -- (197.17,107.3) .. controls (197.17,110.75) and (194.37,113.55) .. (190.92,113.55) -- (140.73,113.55) .. controls (137.28,113.55) and (134.48,110.75) .. (134.48,107.3) -- (134.48,98.97) .. controls (134.48,95.52) and (137.28,92.72) .. (140.73,92.72) -- cycle ;
        \draw  [fill={rgb, 255:red, 151; green, 201; blue, 255 }  ,fill opacity=1 ][line width=0.75]  (133.48,98.96) .. controls (133.48,95.51) and (136.28,92.72) .. (139.73,92.72) -- (148.07,92.73) .. controls (151.52,92.73) and (154.31,95.53) .. (154.31,98.99) -- (154.26,149.17) .. controls (154.25,152.62) and (151.45,155.42) .. (148,155.41) -- (139.67,155.4) .. controls (136.22,155.4) and (133.42,152.6) .. (133.42,149.15) -- cycle ;
        \draw  [fill={rgb, 255:red, 137; green, 195; blue, 255 }  ,fill opacity=1 ][line width=0.75]  (148.35,133.41) .. controls (151.8,133.41) and (154.6,136.2) .. (154.6,139.66) -- (154.6,147.99) .. controls (154.6,151.44) and (151.8,154.24) .. (148.35,154.24) -- (98.17,154.24) .. controls (94.72,154.24) and (91.92,151.44) .. (91.92,147.99) -- (91.92,139.66) .. controls (91.92,136.2) and (94.72,133.41) .. (98.17,133.41) -- cycle ;
        \draw  [fill={rgb, 255:red, 137; green, 195; blue, 255 }  ,fill opacity=1 ][line width=0.75]  (133.77,98.47) .. controls (133.77,95.02) and (136.57,92.22) .. (140.02,92.22) -- (148.35,92.22) .. controls (151.8,92.22) and (154.6,95.02) .. (154.6,98.47) -- (154.6,148.65) .. controls (154.6,152.11) and (151.8,154.9) .. (148.35,154.9) -- (140.02,154.9) .. controls (136.57,154.9) and (133.77,152.11) .. (133.77,148.65) -- cycle ;
        \draw  [fill={rgb, 255:red, 137; green, 195; blue, 255 }  ,fill opacity=1 ][line width=0.75]  (148.42,92.22) .. controls (151.87,92.22) and (154.67,95.02) .. (154.67,98.47) -- (154.67,106.8) .. controls (154.67,110.25) and (151.87,113.05) .. (148.42,113.05) -- (98.23,113.05) .. controls (94.78,113.05) and (91.98,110.25) .. (91.98,106.8) -- (91.98,98.47) .. controls (91.98,95.02) and (94.78,92.22) .. (98.23,92.22) -- cycle ;
        \draw  [fill={rgb, 255:red, 137; green, 195; blue, 255 }  ,fill opacity=1 ][line width=0.75]  (91.98,98.46) .. controls (91.98,95.01) and (94.78,92.22) .. (98.23,92.22) -- (106.57,92.23) .. controls (110.02,92.23) and (112.81,95.03) .. (112.81,98.49) -- (112.76,148.67) .. controls (112.75,152.12) and (109.95,154.92) .. (106.5,154.91) -- (98.17,154.9) .. controls (94.72,154.9) and (91.92,152.1) .. (91.92,148.65) -- cycle ;
        \draw  [fill={rgb, 255:red, 121; green, 186; blue, 253 }  ,fill opacity=1 ][line width=0.75]  (106.85,133.41) .. controls (110.3,133.41) and (113.1,136.2) .. (113.1,139.66) -- (113.1,147.99) .. controls (113.1,151.44) and (110.3,154.24) .. (106.85,154.24) -- (56.67,154.24) .. controls (53.22,154.24) and (50.42,151.44) .. (50.42,147.99) -- (50.42,139.66) .. controls (50.42,136.2) and (53.22,133.41) .. (56.67,133.41) -- cycle ;
        \draw  [fill={rgb, 255:red, 121; green, 186; blue, 253 }  ,fill opacity=1 ][line width=0.75]  (92.27,98.47) .. controls (92.27,95.02) and (95.07,92.22) .. (98.52,92.22) -- (106.85,92.22) .. controls (110.3,92.22) and (113.1,95.02) .. (113.1,98.47) -- (113.1,148.65) .. controls (113.1,152.11) and (110.3,154.9) .. (106.85,154.9) -- (98.52,154.9) .. controls (95.07,154.9) and (92.27,152.11) .. (92.27,148.65) -- cycle ;
        \draw  [fill={rgb, 255:red, 121; green, 186; blue, 253 }  ,fill opacity=1 ][line width=0.75]  (106.92,92.22) .. controls (110.37,92.22) and (113.17,95.02) .. (113.17,98.47) -- (113.17,106.8) .. controls (113.17,110.25) and (110.37,113.05) .. (106.92,113.05) -- (56.73,113.05) .. controls (53.28,113.05) and (50.48,110.25) .. (50.48,106.8) -- (50.48,98.47) .. controls (50.48,95.02) and (53.28,92.22) .. (56.73,92.22) -- cycle ;
        \draw  [fill={rgb, 255:red, 121; green, 186; blue, 253 }  ,fill opacity=1 ][line width=0.75]  (50.06,98.16) .. controls (50.06,94.7) and (52.86,91.91) .. (56.31,91.91) -- (64.65,91.92) .. controls (68.1,91.92) and (70.89,94.73) .. (70.89,98.18) -- (70.84,148.36) .. controls (70.83,151.81) and (68.03,154.61) .. (64.58,154.6) -- (56.25,154.6) .. controls (52.8,154.59) and (50,151.79) .. (50.01,148.34) -- cycle ;
        \draw  [fill={rgb, 255:red, 76; green, 163; blue, 255 }  ,fill opacity=1 ][line width=0.75]  (190.35,92.41) .. controls (193.8,92.41) and (196.6,95.2) .. (196.6,98.66) -- (196.6,106.99) .. controls (196.6,110.44) and (193.8,113.24) .. (190.35,113.24) -- (140.17,113.24) .. controls (136.72,113.24) and (133.92,110.44) .. (133.92,106.99) -- (133.92,98.66) .. controls (133.92,95.2) and (136.72,92.41) .. (140.17,92.41) -- cycle ;
        \draw  [fill={rgb, 255:red, 76; green, 163; blue, 255 }  ,fill opacity=1 ][line width=0.75]  (176,57.47) .. controls (176,54.02) and (178.8,51.22) .. (182.25,51.22) -- (190.58,51.22) .. controls (194.03,51.22) and (196.83,54.02) .. (196.83,57.47) -- (196.83,107.65) .. controls (196.83,111.11) and (194.03,113.9) .. (190.58,113.9) -- (182.25,113.9) .. controls (178.8,113.9) and (176,111.11) .. (176,107.65) -- cycle ;
        \draw  [fill={rgb, 255:red, 76; green, 163; blue, 255 }  ,fill opacity=1 ][line width=0.75]  (190.42,51.22) .. controls (193.87,51.22) and (196.67,54.02) .. (196.67,57.47) -- (196.67,65.8) .. controls (196.67,69.25) and (193.87,72.05) .. (190.42,72.05) -- (140.23,72.05) .. controls (136.78,72.05) and (133.98,69.25) .. (133.98,65.8) -- (133.98,57.47) .. controls (133.98,54.02) and (136.78,51.22) .. (140.23,51.22) -- cycle ;
        \draw  [fill={rgb, 255:red, 76; green, 163; blue, 255 }  ,fill opacity=1 ][line width=0.75]  (133.98,57.46) .. controls (133.98,54.01) and (136.78,51.22) .. (140.23,51.22) -- (148.57,51.23) .. controls (152.02,51.23) and (154.81,54.03) .. (154.81,57.49) -- (154.76,107.67) .. controls (154.75,111.12) and (151.95,113.92) .. (148.5,113.91) -- (140.17,113.9) .. controls (136.72,113.9) and (133.92,111.1) .. (133.92,107.65) -- cycle ;
        \draw  [fill={rgb, 255:red, 58; green, 153; blue, 255 }  ,fill opacity=1 ][line width=0.75]  (148.85,91.91) .. controls (152.3,91.91) and (155.1,94.7) .. (155.1,98.16) -- (155.1,106.49) .. controls (155.1,109.94) and (152.3,112.74) .. (148.85,112.74) -- (98.67,112.74) .. controls (95.22,112.74) and (92.42,109.94) .. (92.42,106.49) -- (92.42,98.16) .. controls (92.42,94.7) and (95.22,91.91) .. (98.67,91.91) -- cycle ;
        \draw  [fill={rgb, 255:red, 58; green, 153; blue, 255 }  ,fill opacity=1 ][line width=0.75]  (134.27,56.97) .. controls (134.27,53.52) and (137.07,50.72) .. (140.52,50.72) -- (148.85,50.72) .. controls (152.3,50.72) and (155.1,53.52) .. (155.1,56.97) -- (155.1,107.15) .. controls (155.1,110.61) and (152.3,113.4) .. (148.85,113.4) -- (140.52,113.4) .. controls (137.07,113.4) and (134.27,110.61) .. (134.27,107.15) -- cycle ;
        \draw  [fill={rgb, 255:red, 58; green, 153; blue, 255 }  ,fill opacity=1 ][line width=0.75]  (148.92,50.72) .. controls (152.37,50.72) and (155.17,53.52) .. (155.17,56.97) -- (155.17,65.3) .. controls (155.17,68.75) and (152.37,71.55) .. (148.92,71.55) -- (98.73,71.55) .. controls (95.28,71.55) and (92.48,68.75) .. (92.48,65.3) -- (92.48,56.97) .. controls (92.48,53.52) and (95.28,50.72) .. (98.73,50.72) -- cycle ;
        \draw  [fill={rgb, 255:red, 58; green, 153; blue, 255 }  ,fill opacity=1 ][line width=0.75]  (92.48,56.96) .. controls (92.48,53.51) and (95.28,50.72) .. (98.73,50.72) -- (107.07,50.73) .. controls (110.52,50.73) and (113.31,53.53) .. (113.31,56.99) -- (113.26,107.17) .. controls (113.25,110.62) and (110.45,113.42) .. (107,113.41) -- (98.67,113.4) .. controls (95.22,113.4) and (92.42,110.6) .. (92.42,107.15) -- cycle ;
        \draw  [fill={rgb, 255:red, 50; green, 152; blue, 255 }  ,fill opacity=1 ][line width=0.75]  (107.35,91.91) .. controls (110.8,91.91) and (113.6,94.7) .. (113.6,98.16) -- (113.6,106.49) .. controls (113.6,109.94) and (110.8,112.74) .. (107.35,112.74) -- (57.17,112.74) .. controls (53.72,112.74) and (50.92,109.94) .. (50.92,106.49) -- (50.92,98.16) .. controls (50.92,94.7) and (53.72,91.91) .. (57.17,91.91) -- cycle ;
        \draw  [fill={rgb, 255:red, 50; green, 152; blue, 255 }  ,fill opacity=1 ][line width=0.75]  (92.77,56.97) .. controls (92.77,53.52) and (95.57,50.72) .. (99.02,50.72) -- (107.35,50.72) .. controls (110.8,50.72) and (113.6,53.52) .. (113.6,56.97) -- (113.6,107.15) .. controls (113.6,110.61) and (110.8,113.4) .. (107.35,113.4) -- (99.02,113.4) .. controls (95.57,113.4) and (92.77,110.61) .. (92.77,107.15) -- cycle ;
        \draw  [fill={rgb, 255:red, 50; green, 152; blue, 255 }  ,fill opacity=1 ][line width=0.75]  (107.42,50.72) .. controls (110.87,50.72) and (113.67,53.52) .. (113.67,56.97) -- (113.67,65.3) .. controls (113.67,68.75) and (110.87,71.55) .. (107.42,71.55) -- (57.23,71.55) .. controls (53.78,71.55) and (50.98,68.75) .. (50.98,65.3) -- (50.98,56.97) .. controls (50.98,53.52) and (53.78,50.72) .. (57.23,50.72) -- cycle ;
        \draw  [fill={rgb, 255:red, 50; green, 152; blue, 255 }  ,fill opacity=1 ][line width=0.75]  (50.12,56.96) .. controls (50.12,53.51) and (52.92,50.72) .. (56.37,50.72) -- (64.71,50.73) .. controls (68.16,50.73) and (70.95,53.53) .. (70.95,56.99) -- (70.9,107.17) .. controls (70.89,110.62) and (68.09,113.42) .. (64.64,113.41) -- (56.31,113.4) .. controls (52.86,113.4) and (50.06,110.6) .. (50.07,107.15) -- cycle ;
        \draw  [fill={rgb, 255:red, 245; green, 166; blue, 35 }  ,fill opacity=1 ][line width=0.75]  (54,54.88) .. controls (54,54.88) and (54,54.88) .. (54,54.88) -- (67.28,54.9) .. controls (67.28,54.9) and (67.28,54.9) .. (67.28,54.9) -- (67.26,68.73) .. controls (67.26,68.73) and (67.26,68.73) .. (67.26,68.73) -- (53.99,68.72) .. controls (53.99,68.72) and (53.99,68.72) .. (53.99,68.72) -- cycle ;
        
        \end{mytikz4}\quad,
        \end{equation}
        or like in Fig.~\textcolor{blue}{1}(b) of the manuscript, such that the treewidth scales polynomially with the system size [A strict result in graph theory is that the treewidth of a 1D grid is $1$, while that of a 2D grid of size $L\times L$ is $L$, and that of a 3D grid of size $L\times L\times L$ is $\operatorname{\Omega}(L^2)$], the time complexity of contracting TNs of 2D FLDCs is expected to scale exponentially with the system size in the worst case~\cite{Markov2008}. We numerically verify this by conducting classical simulations for a 2D FLDC leveraging the optimized TN contraction library \href{https://cotengra.readthedocs.io/en/latest/advanced.html\#objective}{\sf{cotengra}}~\cite{Gray2021}. The circuit used here is exactly the same as that in our numerical experiment on the generalized toric code model, i.e., sequentially applying the ``claw-shape'' block on each plaquette of the $L\times L$ 2D lattice. The local observable is chosen as a single Pauli operator within the support of the last gate (so that the causal cone can cover the whole system). For clarity, we depict the circuit together with the location of the local observable below
        \begin{equation}\label{eq:claw_observable}
        \begin{mytikz4}
        \draw [color={rgb, 255:red, 211; green, 211; blue, 211 }  ,draw opacity=1 ]   (107.21,52.21) -- (51.97,52.21) ;
        \draw [shift={(51.97,52.21)}, rotate = 180] [color={rgb, 255:red, 211; green, 211; blue, 211 }  ,draw opacity=1 ][fill={rgb, 255:red, 211; green, 211; blue, 211 }  ,fill opacity=1 ][line width=0.75]      (0, 0) circle [x radius= 1.34, y radius= 1.34]   ;
        \draw [color={rgb, 255:red, 211; green, 211; blue, 211 }  ,draw opacity=1 ]   (51.97,107.45) -- (51.97,52.21) ;
        \draw [color={rgb, 255:red, 211; green, 211; blue, 211 }  ,draw opacity=1 ]   (162.46,52.21) -- (107.21,52.21) ;
        \draw [shift={(107.21,52.21)}, rotate = 180] [color={rgb, 255:red, 211; green, 211; blue, 211 }  ,draw opacity=1 ][fill={rgb, 255:red, 211; green, 211; blue, 211 }  ,fill opacity=1 ][line width=0.75]      (0, 0) circle [x radius= 1.34, y radius= 1.34]   ;
        \draw [color={rgb, 255:red, 211; green, 211; blue, 211 }  ,draw opacity=1 ]   (107.21,107.45) -- (107.21,52.21) ;
        \draw [color={rgb, 255:red, 211; green, 211; blue, 211 }  ,draw opacity=1 ]   (215.13,52.21) -- (159.88,52.21) ;
        \draw [shift={(159.88,52.21)}, rotate = 180] [color={rgb, 255:red, 211; green, 211; blue, 211 }  ,draw opacity=1 ][fill={rgb, 255:red, 211; green, 211; blue, 211 }  ,fill opacity=1 ][line width=0.75]      (0, 0) circle [x radius= 1.34, y radius= 1.34]   ;
        \draw [color={rgb, 255:red, 211; green, 211; blue, 211 }  ,draw opacity=1 ]   (159.88,107.45) -- (159.88,52.21) ;
        \draw [color={rgb, 255:red, 211; green, 211; blue, 211 }  ,draw opacity=1 ]   (215.86,107.45) -- (215.86,52.21) ;
        \draw [shift={(215.86,52.21)}, rotate = 270] [color={rgb, 255:red, 211; green, 211; blue, 211 }  ,draw opacity=1 ][fill={rgb, 255:red, 211; green, 211; blue, 211 }  ,fill opacity=1 ][line width=0.75]      (0, 0) circle [x radius= 1.34, y radius= 1.34]   ;
        \draw [color={rgb, 255:red, 211; green, 211; blue, 211 }  ,draw opacity=1 ]   (107.21,107.45) -- (51.97,107.45) ;
        \draw [shift={(51.97,107.45)}, rotate = 180] [color={rgb, 255:red, 211; green, 211; blue, 211 }  ,draw opacity=1 ][fill={rgb, 255:red, 211; green, 211; blue, 211 }  ,fill opacity=1 ][line width=0.75]      (0, 0) circle [x radius= 1.34, y radius= 1.34]   ;
        \draw [color={rgb, 255:red, 211; green, 211; blue, 211 }  ,draw opacity=1 ]   (51.97,162.7) -- (51.97,107.45) ;
        \draw [color={rgb, 255:red, 211; green, 211; blue, 211 }  ,draw opacity=1 ]   (162.46,107.45) -- (107.21,107.45) ;
        \draw [shift={(107.21,107.45)}, rotate = 180] [color={rgb, 255:red, 211; green, 211; blue, 211 }  ,draw opacity=1 ][fill={rgb, 255:red, 211; green, 211; blue, 211 }  ,fill opacity=1 ][line width=0.75]      (0, 0) circle [x radius= 1.34, y radius= 1.34]   ;
        \draw [color={rgb, 255:red, 211; green, 211; blue, 211 }  ,draw opacity=1 ]   (107.21,162.7) -- (107.21,107.45) ;
        \draw [color={rgb, 255:red, 211; green, 211; blue, 211 }  ,draw opacity=1 ]   (215.13,107.45) -- (159.88,107.45) ;
        \draw [shift={(159.88,107.45)}, rotate = 180] [color={rgb, 255:red, 211; green, 211; blue, 211 }  ,draw opacity=1 ][fill={rgb, 255:red, 211; green, 211; blue, 211 }  ,fill opacity=1 ][line width=0.75]      (0, 0) circle [x radius= 1.34, y radius= 1.34]   ;
        \draw [color={rgb, 255:red, 211; green, 211; blue, 211 }  ,draw opacity=1 ]   (159.88,162.7) -- (159.88,107.45) ;
        \draw [color={rgb, 255:red, 211; green, 211; blue, 211 }  ,draw opacity=1 ]   (215.86,162.7) -- (215.86,107.45) ;
        \draw [shift={(215.86,107.45)}, rotate = 270] [color={rgb, 255:red, 211; green, 211; blue, 211 }  ,draw opacity=1 ][fill={rgb, 255:red, 211; green, 211; blue, 211 }  ,fill opacity=1 ][line width=0.75]      (0, 0) circle [x radius= 1.34, y radius= 1.34]   ;
        \draw [color={rgb, 255:red, 211; green, 211; blue, 211 }  ,draw opacity=1 ]   (107.21,159.75) -- (51.97,159.75) ;
        \draw [shift={(51.97,159.75)}, rotate = 180] [color={rgb, 255:red, 211; green, 211; blue, 211 }  ,draw opacity=1 ][fill={rgb, 255:red, 211; green, 211; blue, 211 }  ,fill opacity=1 ][line width=0.75]      (0, 0) circle [x radius= 1.34, y radius= 1.34]   ;
        \draw [color={rgb, 255:red, 211; green, 211; blue, 211 }  ,draw opacity=1 ]   (51.97,215) -- (51.97,159.75) ;
        \draw [color={rgb, 255:red, 211; green, 211; blue, 211 }  ,draw opacity=1 ]   (162.46,159.75) -- (107.21,159.75) ;
        \draw [shift={(107.21,159.75)}, rotate = 180] [color={rgb, 255:red, 211; green, 211; blue, 211 }  ,draw opacity=1 ][fill={rgb, 255:red, 211; green, 211; blue, 211 }  ,fill opacity=1 ][line width=0.75]      (0, 0) circle [x radius= 1.34, y radius= 1.34]   ;
        \draw [color={rgb, 255:red, 211; green, 211; blue, 211 }  ,draw opacity=1 ]   (107.21,215) -- (107.21,159.75) ;
        \draw [color={rgb, 255:red, 211; green, 211; blue, 211 }  ,draw opacity=1 ]   (215.13,159.75) -- (159.88,159.75) ;
        \draw [shift={(159.88,159.75)}, rotate = 180] [color={rgb, 255:red, 211; green, 211; blue, 211 }  ,draw opacity=1 ][fill={rgb, 255:red, 211; green, 211; blue, 211 }  ,fill opacity=1 ][line width=0.75]      (0, 0) circle [x radius= 1.34, y radius= 1.34]   ;
        \draw [color={rgb, 255:red, 211; green, 211; blue, 211 }  ,draw opacity=1 ]   (159.88,215) -- (159.88,159.75) ;
        \draw [color={rgb, 255:red, 211; green, 211; blue, 211 }  ,draw opacity=1 ]   (215.86,215) -- (215.86,159.75) ;
        \draw [shift={(215.86,159.75)}, rotate = 270] [color={rgb, 255:red, 211; green, 211; blue, 211 }  ,draw opacity=1 ][fill={rgb, 255:red, 211; green, 211; blue, 211 }  ,fill opacity=1 ][line width=0.75]      (0, 0) circle [x radius= 1.34, y radius= 1.34]   ;
        \draw [color={rgb, 255:red, 211; green, 211; blue, 211 }  ,draw opacity=1 ]   (107.21,213.53) -- (51.97,213.53) ;
        \draw [shift={(51.97,213.53)}, rotate = 180] [color={rgb, 255:red, 211; green, 211; blue, 211 }  ,draw opacity=1 ][fill={rgb, 255:red, 211; green, 211; blue, 211 }  ,fill opacity=1 ][line width=0.75]      (0, 0) circle [x radius= 1.34, y radius= 1.34]   ;
        \draw [color={rgb, 255:red, 211; green, 211; blue, 211 }  ,draw opacity=1 ]   (162.46,213.53) -- (107.21,213.53) ;
        \draw [shift={(107.21,213.53)}, rotate = 180] [color={rgb, 255:red, 211; green, 211; blue, 211 }  ,draw opacity=1 ][fill={rgb, 255:red, 211; green, 211; blue, 211 }  ,fill opacity=1 ][line width=0.75]      (0, 0) circle [x radius= 1.34, y radius= 1.34]   ;
        \draw [color={rgb, 255:red, 211; green, 211; blue, 211 }  ,draw opacity=1 ]   (215.13,213.53) -- (159.88,213.53) ;
        \draw [shift={(159.88,213.53)}, rotate = 180] [color={rgb, 255:red, 211; green, 211; blue, 211 }  ,draw opacity=1 ][fill={rgb, 255:red, 211; green, 211; blue, 211 }  ,fill opacity=1 ][line width=0.75]      (0, 0) circle [x radius= 1.34, y radius= 1.34]   ;
        \draw [shift={(215.13,213.53)}, rotate = 180] [color={rgb, 255:red, 211; green, 211; blue, 211 }  ,draw opacity=1 ][fill={rgb, 255:red, 211; green, 211; blue, 211 }  ,fill opacity=1 ][line width=0.75]      (0, 0) circle [x radius= 1.34, y radius= 1.34]   ;
        
        \draw  [fill={rgb, 255:red, 236; green, 247; blue, 255 }  ,fill opacity=1 ] (91.32,104.13) .. controls (93.48,106.28) and (93.48,109.78) .. (91.32,111.94) -- (86.11,117.15) .. controls (83.95,119.31) and (80.46,119.31) .. (78.3,117.15) -- (47.05,85.9) .. controls (44.89,83.74) and (44.89,80.24) .. (47.05,78.08) -- (52.26,72.88) .. controls (54.41,70.72) and (57.91,70.72) .. (60.07,72.88) -- cycle ;
        \draw  [fill={rgb, 255:red, 182; green, 220; blue, 255 }  ,fill opacity=1 ] (90.79,155.69) .. controls (92.95,157.85) and (92.95,161.35) .. (90.79,163.5) -- (85.58,168.71) .. controls (83.43,170.87) and (79.93,170.87) .. (77.77,168.71) -- (46.52,137.46) .. controls (44.36,135.3) and (44.36,131.8) .. (46.52,129.65) -- (51.73,124.44) .. controls (53.89,122.28) and (57.38,122.28) .. (59.54,124.44) -- cycle ;
        \draw  [fill={rgb, 255:red, 130; green, 195; blue, 255 }  ,fill opacity=1 ] (90.18,206.24) .. controls (92.34,208.4) and (92.34,211.89) .. (90.18,214.05) -- (84.97,219.26) .. controls (82.81,221.42) and (79.32,221.42) .. (77.16,219.26) -- (45.91,188.01) .. controls (43.75,185.85) and (43.75,182.35) .. (45.91,180.2) -- (51.12,174.99) .. controls (53.27,172.83) and (56.77,172.83) .. (58.93,174.99) -- cycle ;
        \draw  [fill={rgb, 255:red, 236; green, 247; blue, 255 }  ,fill opacity=1 ] (72.36,54.49) .. controls (72.36,51.44) and (74.83,48.96) .. (77.89,48.96) -- (85.25,48.96) .. controls (88.3,48.96) and (90.78,51.44) .. (90.78,54.49) -- (90.78,108.71) .. controls (90.78,111.76) and (88.3,114.24) .. (85.25,114.24) -- (77.89,114.24) .. controls (74.83,114.24) and (72.36,111.76) .. (72.36,108.71) -- cycle ;
        \draw  [fill={rgb, 255:red, 236; green, 247; blue, 255 }  ,fill opacity=1 ] (104.34,72.88) .. controls (106.5,70.72) and (110,70.72) .. (112.15,72.88) -- (117.36,78.08) .. controls (119.52,80.24) and (119.52,83.74) .. (117.36,85.9) -- (86.11,117.15) .. controls (83.95,119.31) and (80.46,119.31) .. (78.3,117.15) -- (73.09,111.94) .. controls (70.93,109.78) and (70.93,106.28) .. (73.09,104.13) -- cycle ;
        \draw  [fill={rgb, 255:red, 225; green, 241; blue, 255 }  ,fill opacity=1 ] (143.95,104.2) .. controls (146.11,106.36) and (146.11,109.86) .. (143.95,112.01) -- (138.74,117.22) .. controls (136.58,119.38) and (133.09,119.38) .. (130.93,117.22) -- (99.68,85.97) .. controls (97.52,83.81) and (97.52,80.32) .. (99.68,78.16) -- (104.89,72.95) .. controls (107.04,70.79) and (110.54,70.79) .. (112.7,72.95) -- cycle ;
        \draw  [fill={rgb, 255:red, 182; green, 220; blue, 255 }  ,fill opacity=1 ] (72.36,107.14) .. controls (72.36,104.09) and (74.83,101.62) .. (77.89,101.62) -- (85.25,101.62) .. controls (88.3,101.62) and (90.78,104.09) .. (90.78,107.14) -- (90.78,160.27) .. controls (90.78,163.33) and (88.3,165.8) .. (85.25,165.8) -- (77.89,165.8) .. controls (74.83,165.8) and (72.36,163.33) .. (72.36,160.27) -- cycle ;
        \draw  [fill={rgb, 255:red, 182; green, 220; blue, 255 }  ,fill opacity=1 ] (103.82,124.44) .. controls (105.97,122.28) and (109.47,122.28) .. (111.63,124.44) -- (116.84,129.65) .. controls (118.99,131.8) and (118.99,135.3) .. (116.84,137.46) -- (85.58,168.71) .. controls (83.43,170.87) and (79.93,170.87) .. (77.77,168.71) -- (72.56,163.5) .. controls (70.41,161.35) and (70.41,157.85) .. (72.56,155.69) -- cycle ;
        \draw  [fill={rgb, 255:red, 163; green, 213; blue, 255 }  ,fill opacity=1 ] (143.42,155.76) .. controls (145.58,157.92) and (145.58,161.42) .. (143.42,163.58) -- (138.22,168.79) .. controls (136.06,170.94) and (132.56,170.94) .. (130.4,168.79) -- (99.15,137.53) .. controls (96.99,135.38) and (96.99,131.88) .. (99.15,129.72) -- (104.36,124.51) .. controls (106.52,122.35) and (110.02,122.35) .. (112.17,124.51) -- cycle ;
        \draw  [fill={rgb, 255:red, 225; green, 241; blue, 255 }  ,fill opacity=1 ] (126.25,54.16) .. controls (126.25,51.11) and (128.72,48.64) .. (131.77,48.64) -- (139.14,48.64) .. controls (142.19,48.64) and (144.66,51.11) .. (144.66,54.16) -- (144.66,107.98) .. controls (144.66,111.04) and (142.19,113.51) .. (139.14,113.51) -- (131.77,113.51) .. controls (128.72,113.51) and (126.25,111.04) .. (126.25,107.98) -- cycle ;
        \draw  [fill={rgb, 255:red, 225; green, 241; blue, 255 }  ,fill opacity=1 ] (156.97,72.95) .. controls (159.13,70.79) and (162.63,70.79) .. (164.79,72.95) -- (169.99,78.16) .. controls (172.15,80.32) and (172.15,83.81) .. (169.99,85.97) -- (138.74,117.22) .. controls (136.58,119.38) and (133.09,119.38) .. (130.93,117.22) -- (125.72,112.01) .. controls (123.56,109.86) and (123.56,106.36) .. (125.72,104.2) -- cycle ;
        \draw  [fill={rgb, 255:red, 203; green, 230; blue, 255 }  ,fill opacity=1 ] (197.25,103.73) .. controls (199.41,105.88) and (199.41,109.38) .. (197.25,111.54) -- (192.04,116.75) .. controls (189.88,118.91) and (186.39,118.91) .. (184.23,116.75) -- (152.98,85.5) .. controls (150.82,83.34) and (150.82,79.84) .. (152.98,77.68) -- (158.19,72.48) .. controls (160.34,70.32) and (163.84,70.32) .. (166,72.48) -- cycle ;
        \draw  [fill={rgb, 255:red, 203; green, 230; blue, 255 }  ,fill opacity=1 ] (178.68,53.04) .. controls (178.68,49.99) and (181.15,47.52) .. (184.2,47.52) -- (191.57,47.52) .. controls (194.62,47.52) and (197.09,49.99) .. (197.09,53.04) -- (197.09,107.51) .. controls (197.09,110.56) and (194.62,113.04) .. (191.57,113.04) -- (184.2,113.04) .. controls (181.15,113.04) and (178.68,110.56) .. (178.68,107.51) -- cycle ;
        \draw  [fill={rgb, 255:red, 203; green, 230; blue, 255 }  ,fill opacity=1 ] (210.27,72.48) .. controls (212.43,70.32) and (215.93,70.32) .. (218.08,72.48) -- (223.29,77.68) .. controls (225.45,79.84) and (225.45,83.34) .. (223.29,85.5) -- (192.04,116.75) .. controls (189.88,118.91) and (186.39,118.91) .. (184.23,116.75) -- (179.02,111.54) .. controls (176.86,109.38) and (176.86,105.88) .. (179.02,103.73) -- cycle ;
        \draw  [fill={rgb, 255:red, 163; green, 213; blue, 255 }  ,fill opacity=1 ] (125.32,105.92) .. controls (125.32,102.87) and (127.8,100.4) .. (130.85,100.4) -- (138.21,100.4) .. controls (141.27,100.4) and (143.74,102.87) .. (143.74,105.92) -- (143.74,159.95) .. controls (143.74,163) and (141.27,165.47) .. (138.21,165.47) -- (130.85,165.47) .. controls (127.8,165.47) and (125.32,163) .. (125.32,159.95) -- cycle ;
        \draw  [fill={rgb, 255:red, 163; green, 213; blue, 255 }  ,fill opacity=1 ] (156.45,124.51) .. controls (158.6,122.35) and (162.1,122.35) .. (164.26,124.51) -- (169.47,129.72) .. controls (171.63,131.88) and (171.63,135.38) .. (169.47,137.53) -- (138.22,168.79) .. controls (136.06,170.94) and (132.56,170.94) .. (130.4,168.79) -- (125.19,163.58) .. controls (123.04,161.42) and (123.04,157.92) .. (125.19,155.76) -- cycle ;
        \draw  [fill={rgb, 255:red, 147; green, 206; blue, 255 }  ,fill opacity=1 ] (197.72,154.29) .. controls (199.88,156.45) and (199.88,159.95) .. (197.72,162.1) -- (192.52,167.31) .. controls (190.36,169.47) and (186.86,169.47) .. (184.7,167.31) -- (153.45,136.06) .. controls (151.29,133.9) and (151.29,130.4) .. (153.45,128.25) -- (158.66,123.04) .. controls (160.82,120.88) and (164.32,120.88) .. (166.47,123.04) -- cycle ;
        \draw  [fill={rgb, 255:red, 147; green, 206; blue, 255 }  ,fill opacity=1 ] (178.58,107.05) .. controls (178.58,104) and (181.05,101.52) .. (184.1,101.52) -- (191.47,101.52) .. controls (194.52,101.52) and (196.99,104) .. (196.99,107.05) -- (196.99,157.95) .. controls (196.99,161) and (194.52,163.47) .. (191.47,163.47) -- (184.1,163.47) .. controls (181.05,163.47) and (178.58,161) .. (178.58,157.95) -- cycle ;
        \draw  [fill={rgb, 255:red, 147; green, 206; blue, 255 }  ,fill opacity=1 ] (210.75,123.04) .. controls (212.9,120.88) and (216.4,120.88) .. (218.56,123.04) -- (223.77,128.25) .. controls (225.92,130.4) and (225.92,133.9) .. (223.77,136.06) -- (192.52,167.31) .. controls (190.36,169.47) and (186.86,169.47) .. (184.7,167.31) -- (179.49,162.1) .. controls (177.34,159.95) and (177.34,156.45) .. (179.49,154.29) -- cycle ;
        \draw  [fill={rgb, 255:red, 130; green, 195; blue, 255 }  ,fill opacity=1 ] (72.03,159.74) .. controls (72.02,156.69) and (74.49,154.21) .. (77.54,154.2) -- (84.91,154.19) .. controls (87.96,154.19) and (90.44,156.66) .. (90.44,159.71) -- (90.53,212.21) .. controls (90.54,215.26) and (88.07,217.73) .. (85.02,217.74) -- (77.65,217.75) .. controls (74.6,217.76) and (72.12,215.29) .. (72.12,212.24) -- cycle ;
        \draw  [fill={rgb, 255:red, 130; green, 195; blue, 255 }  ,fill opacity=1 ] (103.2,175.39) .. controls (105.36,173.23) and (108.86,173.23) .. (111.01,175.39) -- (116.22,180.6) .. controls (118.38,182.75) and (118.38,186.25) .. (116.22,188.41) -- (84.97,219.66) .. controls (82.81,221.82) and (79.32,221.82) .. (77.16,219.66) -- (71.95,214.45) .. controls (69.79,212.29) and (69.79,208.8) .. (71.95,206.64) -- cycle ;
        \draw  [fill={rgb, 255:red, 84; green, 174; blue, 255 }  ,fill opacity=1 ] (142.81,206.71) .. controls (144.97,208.87) and (144.97,212.37) .. (142.81,214.53) -- (137.6,219.73) .. controls (135.44,221.89) and (131.95,221.89) .. (129.79,219.73) -- (98.54,188.48) .. controls (96.38,186.33) and (96.38,182.83) .. (98.54,180.67) -- (103.75,175.46) .. controls (105.9,173.3) and (109.4,173.3) .. (111.56,175.46) -- cycle ;
        \draw  [fill={rgb, 255:red, 84; green, 174; blue, 255 }  ,fill opacity=1 ] (124.31,159.36) .. controls (124.31,156.31) and (126.78,153.84) .. (129.83,153.84) -- (137.2,153.84) .. controls (140.25,153.84) and (142.72,156.31) .. (142.72,159.36) -- (142.72,212.1) .. controls (142.72,215.15) and (140.25,217.62) .. (137.2,217.62) -- (129.83,217.62) .. controls (126.78,217.62) and (124.31,215.15) .. (124.31,212.1) -- cycle ;
        \draw  [fill={rgb, 255:red, 84; green, 174; blue, 255 }  ,fill opacity=1 ] (155.83,175.46) .. controls (157.99,173.3) and (161.49,173.3) .. (163.65,175.46) -- (168.85,180.67) .. controls (171.01,182.83) and (171.01,186.33) .. (168.85,188.48) -- (137.6,219.73) .. controls (135.44,221.89) and (131.95,221.89) .. (129.79,219.73) -- (124.58,214.53) .. controls (122.42,212.37) and (122.42,208.87) .. (124.58,206.71) -- cycle ;
        \draw  [fill={rgb, 255:red, 35; green, 159; blue, 255 }  ,fill opacity=1 ] (197.11,205.24) .. controls (199.27,207.4) and (199.27,210.89) .. (197.11,213.05) -- (191.9,218.26) .. controls (189.74,220.42) and (186.25,220.42) .. (184.09,218.26) -- (152.84,187.01) .. controls (150.68,184.85) and (150.68,181.35) .. (152.84,179.2) -- (158.05,173.99) .. controls (160.2,171.83) and (163.7,171.83) .. (165.86,173.99) -- cycle ;
        \draw  [fill={rgb, 255:red, 35; green, 159; blue, 255 }  ,fill opacity=1 ] (178.95,157.84) .. controls (178.95,154.79) and (181.43,152.32) .. (184.48,152.32) -- (191.84,152.32) .. controls (194.89,152.32) and (197.37,154.79) .. (197.37,157.84) -- (197.37,211) .. controls (197.37,214.05) and (194.89,216.52) .. (191.84,216.52) -- (184.48,216.52) .. controls (181.43,216.52) and (178.95,214.05) .. (178.95,211) -- cycle ;
        \draw  [fill={rgb, 255:red, 35; green, 159; blue, 255 }  ,fill opacity=1 ] (210.13,173.99) .. controls (212.29,171.83) and (215.79,171.83) .. (217.94,173.99) -- (223.15,179.2) .. controls (225.31,181.35) and (225.31,184.85) .. (223.15,187.01) -- (191.9,218.26) .. controls (189.74,220.42) and (186.25,220.42) .. (184.09,218.26) -- (178.88,213.05) .. controls (176.72,210.89) and (176.72,207.4) .. (178.88,205.24) -- cycle ;
        \draw  [fill={rgb, 255:red, 245; green, 166; blue, 35 }  ,fill opacity=1 ] (210.86,182.38) .. controls (210.86,182.38) and (210.86,182.38) .. (210.86,182.38) -- (220.86,182.38) .. controls (220.86,182.38) and (220.86,182.38) .. (220.86,182.38) -- (220.86,192.38) .. controls (220.86,192.38) and (220.86,192.38) .. (220.86,192.38) -- (210.86,192.38) .. controls (210.86,192.38) and (210.86,192.38) .. (210.86,192.38) -- cycle ;
        \draw    (51.2,230.13) -- (215.2,230.13) ;
        \draw [shift={(215.2,230.13)}, rotate = 180] [color={rgb, 255:red, 0; green, 0; blue, 0 }  ][line width=0.75]    (0,5.59) -- (0,-5.59)(10.93,-4.9) .. controls (6.95,-2.3) and (3.31,-0.67) .. (0,0) .. controls (3.31,0.67) and (6.95,2.3) .. (10.93,4.9)   ;
        \draw [shift={(51.2,230.13)}, rotate = 0] [color={rgb, 255:red, 0; green, 0; blue, 0 }  ][line width=0.75]    (0,5.59) -- (0,-5.59)(10.93,-4.9) .. controls (6.95,-2.3) and (3.31,-0.67) .. (0,0) .. controls (3.31,0.67) and (6.95,2.3) .. (10.93,4.9)   ;
        
        \draw (127.33,234) node [anchor=north west][inner sep=0.75pt]   [align=left] {$\displaystyle L$};
        
        \end{mytikz4}\quad.
        \end{equation}
        One can see there are loops of 2-qubit gates formed among plaquettes. The data in Figs.~\ref{fig:flops} shows an exponential growth in both time and space complexity with the side length. 
        \begin{figure}
            \centering
            \includegraphics[width=0.85\linewidth]{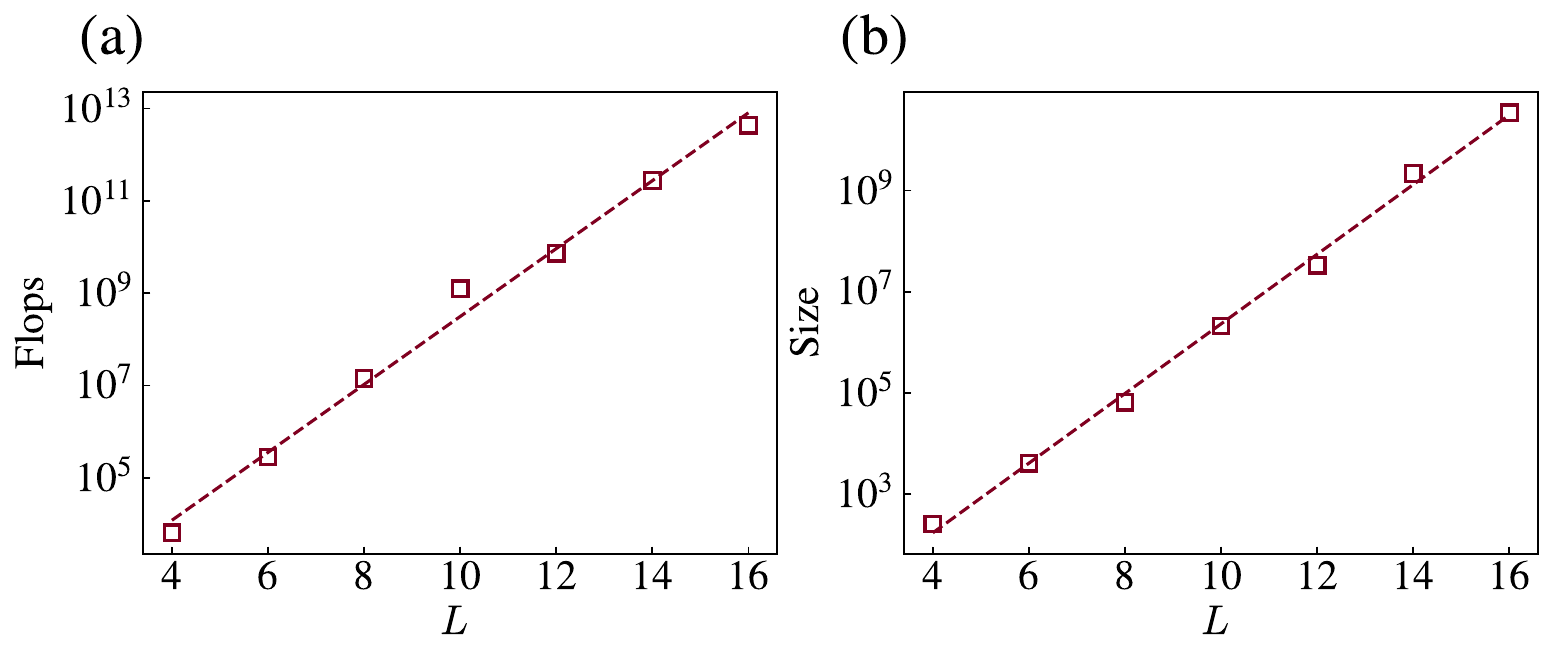}
            \caption{(a) and (b) depict the time and space cost, respectively, for contracting the TN of the expectation value of a single-qubit Pauli operator regarding the 2D FLDC in Eq.~\eqref{eq:claw} versus the side length of the $L\times L$ 2D grid. Here the term ``flops'' represents the total number of scalar operations during the tensor contraction, and ``size'' represents the size of the largest intermediate tensor during the tensor contraction. See the documentation of \href{https://cotengra.readthedocs.io/en/latest/advanced.html\#objective}{\sf{cotengra}}~\cite{Gray2021} for more information.}
            \label{fig:flops}
        \end{figure}
        This indicates that even with state-of-the-art optimized contraction strategies~\cite{Gray2021}, general 2D FLDCs still cannot be simulated efficiently via contracting TNs to estimate local observables. To further corroborate this point, we \textbf{review the subclasses of 2D PEPS} that have been studied previously~\cite{Banuls2008, Zaletel2020, Soejima2020, Haferkamp2020, Bravyi2021, Haller2023, Liu2023}, and explain why some of them can be efficiently simulated. Here, we start with cases that have theoretical guarantees and do not delve into ambiguous situations, such as heuristic algorithms without controllable errors~\cite{Anshu2016, Schwarz2017, Ran2020a, Orus2019}. 
        \begin{itemize}
            \item Sequential generated states (SGS)~\cite{Banuls2008} refer to those states generated by acting $2$-qubit gates sequentially on a series of parallel MPS along the direction perpendicular to the MPS. Alternatively, SGS can be seen as generated by FLDC under the condition that the $2$-qubit gates along the $\hat{y}$-direction are all applied first, and then those along the $\hat{x}$-direction start to be applied. Thus, local observables can be efficiently estimated classically~\cite{Banuls2008}. This can be understood by the fact that the causal cone only involves few $\hat{x}$-slices, which prevents from forming many loops mentioned above, making the corresponding TN tree-like.
            \item Isometric PEPS (isoPEPS)~\cite{Zaletel2020, Liu2023, Slattery2021} restricts each local tensor in PEPS as isometries. Only the local observables close to the so-called orthogonality hypersurface can be computed efficiently (where the ``causal cone'' is tree-like), and those not close can only be computed by uncontrollable approximation techniques~\cite{Zaletel2020}. 
            \item Plaquette PEPS (P-PEPS)~\cite{Wei2022} applies a linear number of unitaries acting on plaquettes of overlapping regions sequentially. Only the local observables at special locations (near the gates that act earlier in the sequence) whose causal cone is small can be efficiently estimated. No efficient method is known for estimating local observables in P-PEPS for general cases~\cite{Wei2022}.
        \end{itemize}
        One can see that the classical simulability of these subclasses is consistent with the ``loop argument'' mentioned below Eq.~\eqref{eq:2by2_loop}, i.e., the classically simulable cases always have a tree-like causal cone. That is to say, after canceling the trivial unitaries or isometries with their Hemitian conjugates, the remaining TN structure has a treewidth that does not scale with the system size. Moreover, it has been proved~\cite{Wei2022} that these circuit subclasses form a hierarchy $\text{SGS}\subset\text{isoPEPS} \subset\text{P-PEPS}$ with the increasing complexity of classical simulation. In addition, P-PEPS has a finite local depth by definition and hence we have $\text{P-PEPS}\subset\text{FLDC}$ (note that compared to P-PEPS, FLDC has the freedom to include non-local but few-qubit gates). Physical ground states in iso-PEPS include string-net states~\cite{Soejima2020, Liu2022d} and those in P-PEPS include certain fracton-ordered states~\cite{Chen2023a}. For the sake of the ability of FLDC to apply non-local gates (which could be experimentally relevant with the reconfigurable neutral atom platform \cite{Bluvstein2022, Bluvstein2023}), FLDC may include chiral topologically ordered states~\cite{Kitaev2006, Chu2023} and the ground states (code space) for the quantum low-density parity-check (qLDPC) code stabilizer Hamiltonians which contain few-body but non-local Pauli string terms \cite{Anshu2023, Bravyi2023, Xu2023}. However, determining precisely which physical states belong to which classes and which do not remains an open problem. Up to this point, we see that for iso-PEPS and especially P-PEPS, estimating local observables generally has no known efficient method with controllable error~\cite{Wei2022}. As a superset of P-PEPS, FLDC naturally does not either. Therefore, we conclude that to the best of our knowledge, in general, the existing classical TN methods~\cite{Banuls2008, Zaletel2020, Soejima2020, Haferkamp2020, Bravyi2021, Haller2023, Liu2023} cannot efficiently simulate 2D FLDCs even in case (i).
        
        \item[$\circ$] For 3D FLDCs and those of higher dimensions and even the scenarios without gate locality (i.e., arbitrary connectivity graphs such as in big chemical molecules), TN contraction can only be more challenging than 2D. Meanwhile, our proof of the absence of barren plateaus still holds for these higher-dimensional and non-local FLDCs.
        
    \end{itemize}
    To sum up, in case (i), 1D FLDCs can be efficiently simulated by TN methods classically while 2D FLDCs and beyond in general can not. On top of that, our theorem also indicates that a superclass of FLDC, i.e., logarithmic local-depth circuits (Log-LDC) which are beyond entanglement-area-law states, is also BP-free. Log-LDC can be included by generalized PEPS with polynomially increasing bond dimensions. Hence, the corresponding classical simulation can only be more challenging.
    
    \item For general linear depth circuits (GLDC) or polynomial depth circuits beyond FLDC and Log-LDC, barren plateaus are proven starting to occur~\cite{McClean2018}. At the same time, it comes to the regime of general quantum simulation~\cite{Feynman1982, Pan2020, Pan2022}, which is not solvable on a classical computer in polynomial time in general unless $\mathsf{BQP}=\mathsf{BPP}$. It is worth noting that the difficulty in simulating general quantum circuits through contraction TNs also arises from extensive loops~\cite{Markov2008}, albeit not in the pure spatial domains but rather in the space-time domains.
\end{itemize}

To conclude, among the hierarchy listed above, the critical point for BP lies between FDLC and GLDC, while the critical point of confirmed classical simulability lies between 1D FDLC and 2D FDLC. Hence, there exists a region that is BP-free and not efficiently classically simulable with known methods, i.e., FDLC in 2D and above, as depicted below.
\begin{equation}  
\begin{mytikz3}

\draw  [draw opacity=0][fill={rgb, 255:red, 237; green, 237; blue, 237 }  ,fill opacity=1 ] (6.8,58.67) -- (510.5,58.67) -- (510.5,162) -- (6.8,162) -- cycle ;
\draw  [draw opacity=0][fill={rgb, 255:red, 224; green, 238; blue, 255 }  ,fill opacity=1 ] (6.8,162) -- (186.13,162) -- (186.13,214.67) -- (6.8,214.67) -- cycle ;
\draw  [draw opacity=0][fill={rgb, 255:red, 254; green, 212; blue, 212 }  ,fill opacity=1 ] (186.13,162) -- (510.13,162) -- (510.13,214.67) -- (186.13,214.67) -- cycle ;
\draw  [draw opacity=0][fill={rgb, 255:red, 224; green, 238; blue, 255 }  ,fill opacity=1 ] (6.8,15.23) -- (335.47,15.23) -- (335.47,58.67) -- (6.8,58.67) -- cycle ;
\draw  [draw opacity=0][fill={rgb, 255:red, 254; green, 212; blue, 212 }  ,fill opacity=1 ] (335.47,15.23) -- (510.13,15.23) -- (510.13,58.67) -- (335.47,58.67) -- cycle ;
\draw [line width=1.5]    (6.8,108.9) -- (529.8,108.9) ;
\draw [shift={(533.8,108.9)}, rotate = 180] [fill={rgb, 255:red, 0; green, 0; blue, 0 }  ][line width=0.08]  [draw opacity=0] (13.4,-6.43) -- (0,0) -- (13.4,6.44) -- (8.9,0) -- cycle    ;
\draw [shift={(6.8,108.9)}, rotate = 0] [color={rgb, 255:red, 0; green, 0; blue, 0 }  ][fill={rgb, 255:red, 0; green, 0; blue, 0 }  ][line width=1.5]      (0, 0) circle [x radius= 4.36, y radius= 4.36]   ;
\draw   (332.27,127.33) .. controls (332.27,122.66) and (329.94,120.33) .. (325.27,120.33) -- (255.6,120.33) .. controls (248.93,120.33) and (245.6,118) .. (245.6,113.33) .. controls (245.6,118) and (242.27,120.33) .. (235.6,120.33)(238.6,120.33) -- (165.93,120.33) .. controls (161.26,120.33) and (158.93,122.66) .. (158.93,127.33) ;
\draw  [dash pattern={on 3.75pt off 2.25pt}]  (335.47,15.23) -- (336.5,108.05) ;
\draw  [dash pattern={on 3.75pt off 2.25pt}]  (186.13,108.55) -- (186.13,214.67) ;

\draw (22,84) node [anchor=north west][inner sep=0.75pt]   [align=left] {Finite/Log Depth};
\draw (164,84) node [anchor=north west][inner sep=0.75pt]   [align=left] {Finite/Log Local-Depth};
\draw (350,84) node [anchor=north west][inner sep=0.75pt]   [align=left] {General Linear Depth};
\draw (158,130) node [anchor=north west][inner sep=0.75pt]   [align=left] {\small 1D};
\draw (199,130) node [anchor=north west][inner sep=0.75pt]   [align=left] {\small 2D, 3D, ..., non-local};
\draw (192,27) node [anchor=north west][inner sep=0.75pt]   [align=left] {Barren-Plateau-free};
\draw (347,27) node [anchor=north west][inner sep=0.75pt]   [align=left] {Barren Plateaus};
\draw (25.6,170) node [anchor=north west][inner sep=0.75pt]   [align=left] {\begin{minipage}[lt]{98.52pt}\setlength\topsep{0pt}
\begin{center}
\small Classically Simulable\\(Local Observable)
\end{center}

\end{minipage}};
\draw (195.33,170) node [anchor=north west][inner sep=0.75pt]   [align=left] {\begin{minipage}[lt]{169.96pt}\setlength\topsep{0pt}
\begin{center}
\small Not Classically Simulable in General \\with Known Methods
\end{center}
\end{minipage}};

\end{mytikz3}
\end{equation}
In particular, it is worth mentioning that the fact that running FLDC is in $\mathsf{BQP}$ has no contradiction with the fact that contracting PEPS is $\#\mathsf{P}\text{-}\mathsf{complete}$~\cite{Schuch2007} which is believed not in $\mathsf{BQP}$, because the states generated by local FLDCs just form a subclass of all PEPS. 

After we complete this work, we notice that a very recent paper~\cite{Malz2024} rigorously proves that computing expectation values of local observables in isoPEPS (referred to as isometric tensor network states in Ref.~\cite{Malz2024}) is $\mathsf{BQP}$-complete, i.e., is hard to simulate classically unless $\mathsf{BQP}=\mathsf{BPP}$. Since isoPEPS is a subclass of FLDC, this result directly proves that FLDC is also hard to simulate classically in general unless $\mathsf{BQP}=\mathsf{BPP}$. That is to say, FLDC can be seen as a typical counterexample of the strong version of the conjecture in Ref.~\cite{Cerezo2023}. In other words, our results demonstrate that the provable absence of barren plateaus does not necessarily imply classical simulability in the full parameter space. Nevertheless, we also remark that demonstrating the quantum advantage of barren-plateau-free circuits in practical problems of interest still requires further research.

\vspace{10pt}

Next, we consider case (ii), which arises from the need to solve practical problems in quantum physics. In this case, we not only need to measure local observables in the Hamiltonian during the training process but also need to measure other observables of interest on the ground state after training, which can involve non-local and many-body observables. Similar to case (i), we categorize the discussion in terms of circuit depth. (One may think that the measurements after training can be separated from the training process, i.e., classically simulating the training and then loading the circuit on quantum devices, so that the quantum advantage in subsequent measurements has nothing to do with the training. However, we remark that the classical simulation of the training process can also be affected by barren plateaus~\cite{Cerezo2021, Liu2021a, Liu2023c}. Thus, whether the training process is classical or quantum, a ground state ansatz of circuit form without barren plateaus is always a necessary condition for achieving quantum advantage in subsequent measurements.)
\begin{itemize}
    \item For few-body but non-local observables, shallow circuits can be efficiently simulated due to the same reason for local observables mentioned above. For many-body observables, Ref.~\cite{Bravyi2021} has proved that 2D finite-depth circuits (FDC) can be efficiently simulated to estimate tensor-product observables, which is realized by leveraging the fact that the individual small causal cones allow for reducing the corresponding 2D TN into a series of quasi-1D structures, enabling efficient simulations. Of course, it is also true for 1D FDC due to the efficiency of MPS. However, no efficient method is known for 3D and beyond~\cite{Bravyi2021}. For logarithmic depth circuits, it has been known that the 1D case can be classically simulated within polynomial time even for many-body observables because the treewidth of the 1D local circuit can be upper bounded by the logarithmic depth~\cite{Markov2008}.
    \item For 1D FLDCs, many-body observables can be estimated efficiently again due to the efficiency of MPS. For the subclasses in 2D FLDCs, it is known that few-body observables can be estimated efficiently in SGS~\cite{Banuls2008}, but generally not in iso-PEPS~\cite{Zaletel2020} and P-PEPS~\cite{Wei2022}. When it comes to many-body observables, none of these subclasses is known to be classically simulable~\cite{Banuls2008, Zaletel2020, Wei2022}. For general 2D FLDCs and those of higher dimensions and the scenarios without gate locality, the simulation can only be more challenging.
    \item For general linear or polynomial depth circuits, classical simulation estimating non-local observables can only be more difficult than estimating local observables.
\end{itemize}

In particular, we consider an extreme task in case (ii): estimating the so-called ``dynamical correlation'' at zero temperature (i.e., on the ground state)~\cite{Paeckel2019, Huang2017}, which is particularly of physical interest as it encodes the important information of elementary excitations in the system. The expression of the dynamical correlation, or say propagator / non-equal-time correlation is of the form
\begin{equation}
    C(t) = \langle G | O_2(t) O_1(0) |G\rangle = \langle G | e^{iHt} O_2 e^{-iHt} O_1 |G\rangle,
\end{equation}
where $|G\rangle$ is the ground state of the Hamiltonian $H$. $O_1$ and $O_2$ are two given observables separated by a period of time evolution $e^{-iHt}$. $O_j(t)=e^{iHt} O_j e^{-iHt}$ represents the time-evolved operator under the Heisenberg picture. One can see that estimating $C(t)$ involves both the ground state preparation and the Hamiltonian simulation as its subroutines. The Hamiltonian simulation~\cite{Feynman1982} is hard not only for TN contractions but also for quantum Monte Carlo techniques, which in general cannot be classically simulated unless $\mathsf{BQP}=\mathsf{BPP}$. To enable the quantum advantage in estimating $C(t)$, a trainable and expressive ground state ansatz of quantum circuit form, especially those expressive enough to contain quantum states of physical interest such as topologically ordered states (i.e., FLDC), becomes a necessary condition. This further underscores the significance of the results in our manuscript.

To conclude, in case (ii), the scope of efficient classical simulation becomes even smaller than that in case (i). Especially, FLDCs of 2D and above are not classically simulable and the simulable special cases within them become fewer than those in case (i). In other words, if one considers measuring non-local observables of interest as a part of the task, the possibility that the FLDC class contains a path to quantum advantage becomes larger than merely preparing the ground states of spatially local Hamiltonians.

\section{Measurement-assisted Approach to Generate Long-Range Entanglement}

Apart from purely unitary quantum circuits, we notice that if assisted by intermediate measurement and non-local classical feedback, constant depth circuits can also generate long-range entanglement, which is one of the recent hot topics in the field of quantum many-body physics~\cite{Raussendorf2005, Piroli2021, Verresen2021, Tantivasadakarn2021, Bravyi2022, Tantivasadakarn2023, Tantivasadakarn2023a}. We will give a brief discussion on this measurement-assisted approach for comparison. 

To discuss more concretely in the following, we take the pure toric code model $H=-\sum_v A_v-\sum_p B_p$ for an example~\cite{Tantivasadakarn2023}. Starting from the initial state $\ket{0}^{\otimes N}$ which is already the eigenstate state of $A_v$, we can obtain one of the eigenstates of the Hamiltonian by just simply measuring all $B_p$ terms (note that reducing the measurements of $B_p$ to single-qubit measurements requires a constant depth circuit composed of CNOTs) so that the state will be projected into one of the eigenstates. The measurement results of $B_p$ can be $+1$ or $-1$ randomly, where the plaquettes with $-1$ can be regarded as the positions of anyon excitations. The ground state can be obtained by annihilating these anyon excitations in pairs via a single classical feedback layer of commuting string operators~\cite{Tantivasadakarn2023}, analogous to what one does in quantum error correction. Through this approach, one can see that certain quantum states with long-range entanglement can indeed be prepared by constant depth circuits assisted by constant depth measurements and classical feedback, which circumvents the linear bounds~\cite{Bravyi2006} thanks to the non-unitarity of measurement. Below, we further discuss the classical simulability of FLDC and long-range entangled (LRE) states regarding the context of this measurement-assisted approach.
\begin{itemize}
    \item Classically simulating constant depth circuits assisted with measurement and classical feedback, is not as easy as simulating only constant depth circuits. The efficiency in circuit depth of this measurement-assisted approach comes from the non-unitarity of measurement, but precisely because of the non-unitarity, the difficulty of classical simulation is significantly increased. To be specific, as discussed above, estimating local observables for constant depth circuits is easy to simulate classically for the sake of the small causal cone~\cite{Bravyi2021, Cerezo2023}, i.e., the unitaries outside the causal cone are eliminated by their Hermitian conjugations so that they do not need to be computed at all. Nevertheless, when there are non-unitary operators within the circuit (such as the projectors in projective measurements), especially when there are many of them covering the whole system (such as the pure toric code example above), the small causal cone does not exist anymore in general, so that one must contract all operators in the circuit to obtain the final result, which will make classical simulation extremely hard in 2D and above. In addition, it is worth mentioning that PEPS, short for ``Projected Entangled Paired State''~\cite{Verstraete2006}, itself can be viewed as obtained by projections onto states from constant depth circuits, which further highlights the difficulties of classical simulation introduced by the non-unitary factors like projections.
    \item There exist some long-range entangled ground states thought to be beyond the reach of this measurement-assisted approach of constant depth, such as the famous Fibonacci topologically ordered states~\cite{Tantivasadakarn2023a} which plays an important role in topological quantum computing, while these states are still within the reach of the subclasses of FLDC~\cite{Liu2022d, Chen2023a}.
    \item At the current stage, this measurement-assisted approach seems to strongly rely on our complete knowledge of the quantum state to be prepared, or say the exact solution of the given model, which makes the approach hard to incorporate into the framework of VQE. To be specific, in the common case of VQE with unitary circuits, in each iteration, we need to prepare the ansatz state many times to measure the energy expectation value for subsequent optimization. In terms of the measurement-assisted approach, for the exactly solvable models like the pure toric code model, we can also prepare the same state as many times as we want because we know exactly how to perform classical feedback for the different measurement outcomes (e.g., annihilating anyon excitations in pairs) to correct the resulting state. Nevertheless, for an arbitrary state, especially for the ground states of the non-exactly solvable models like the GTC model we used in the manuscript, one cannot even prepare the same state each time because one does not know what classical feedback to apply such that the different resulting states after the measurement can be corrected to the same state. Therefore, when considering it as a VQE ansatz, implementing this measurement-assisted approach poses extra challenges compared to directly running unitary quantum circuits like FLDC.
\end{itemize}

\end{document}